  \lstdefinestyle{tinyc}{
    basicstyle=\scriptsize\ttfamily,
    keywordstyle=\color{blue}
  }
  \lstdefinestyle{normalc}{
    basicstyle=\ttfamily,
    numbers=none,
    keywordstyle=\color{blue}
  }
  \lstdefinestyle{inlinec}{
    basicstyle=\ttfamily
  }
\tikzstyle{every picture}+=[remember picture]
\newcommand{\RM}[1]{}
\newcommand{\gz}[1]{}
\newcommand{\pb}[1]{}
\newcommand{\rst}[1]{}
\newcommand{\N}{\mathbb{N}}
\DeclareDocumentCommand{\DCPS}{O{}}{\mathsf{DCPS}\ifthenelse{\equal{#1}{}}{}{[#1]}}
\DeclareDocumentCommand{\SRP}{O{}}{\mathsf{SRP}\ifthenelse{\equal{#1}{}}{}{[#1]}  }
\DeclareDocumentCommand{\HP}{O{}}{\mathsf{HP}\ifthenelse{\equal{#1}{}}{}{[#1]}} %
\newclass{\TWOEXPSPACE}{2EXPSPACE}
\tikzstyle{every place}=[minimum size=5mm]
\tikzstyle{every transition}=[minimum size=5mm]
\DeclareMathSymbol{\mdot}{\mathord}{symbols}{"01}
\def\@envspa{\hspace{0.3em}}
\def\@sa{\hspace{-0.2em}}
\def\@sb{\hspace{0.5em}}
\def\@sc{\hspace{-0.1em}}
\mathchardef\mhyphen="2D
\def\set#1{{\{ #1 \}}}
\def\tuple#1{{\langle #1 \rangle }}
\def\norm#1{{|\!| #1 |\!| }}
\def\multi#1{{[\![ #1 ]\!]}}
\def\emulti{\mathbf{0}}
\def\nats{{\mathbb{N}}}
\def\integs{{\mathbb{Z}}}
\def\parikh{{\mathsf{Parikh}}}
\def\mmap{\mathbf{m}}
\def\nmap{\mathbf{n}}
\def\kmap{\mathbf{k}}
\def\bfz{\mathbf{0}}
\def\card#1{\lvert {#1} \rvert}
\def \subword{\sqsubseteq}
\newcommand{\dclosure}[1]{#1\mathord{\downarrow}}
\newcommand{\multiset}[1]{{\mathbb{M}[ #1 ]}}
\def\prod{\mathcal{P}}
\newcommand{\cC}{\mathcal{C}}
\newcommand{\cA}{\mathcal{A}}
\newcommand{\cB}{\mathcal{B}}
\newcommand{\cT}{\mathcal{T}}
\DeclareDocumentCommand{\langof}{O{} m}{%
  \mathsf{L}_{#1}(#2)%
  }
\DeclareDocumentCommand{\autstep}{O{}}{%
        \xrightarrow{#1}%
        }
\DeclareDocumentCommand{\autsteps}{O{}}{%
        \xrightarrow{#1}^*%
      }
\DeclareDocumentCommand{\powerset}{O{} m}{
  \mathbb{P}_{#1}(#2)
}
\mathchardef\mhyphen="2D
\DeclareDocumentCommand{\EXPTIME}{O{}}{%
  \ifthenelse{\equal{#1}{}}{%
    \mathsf{EXPTIME}%
  }{%
    {#1}\mhyphen\mathsf{EXPTIME}%
  }%
}
\DeclareDocumentCommand{\NTERM}{O{}}{\mathsf{NTERM}\ifthenelse{
\equal{#1}{}}{}{[#1]}  }
\DeclareDocumentCommand{\UBOUND}{O{}}{\mathsf{UBOUND}\ifthenelse{
\equal{#1}{}}{}{[#1]}  }
\DeclareDocumentCommand{\REACH}{O{}}{\mathsf{REACH}\ifthenelse{
\equal{#1}{}}{}{[#1]}  }
\DeclareDocumentCommand{\FAIRNTERM}{O{}}{\mathsf{FNTERM}
\ifthenelse{
\equal{#1}{}}{}{[#1]}  }
\DeclareDocumentCommand{\STARV}{O{}}{\mathsf{STARV}\ifthenelse{
\equal{#1}{}}{}{[#1]}  }
\DeclareDocumentCommand{\SREACH}{O{}}{\mathsf{SREACH}\ifthenelse{
\equal{#1}{}}{}{[#1]}  }
\def\Deltac{{\Delta_{\mathsf{c}}}}
\def\Deltai{{\Delta_{\mathsf{i}}}}
\def\Deltar{{\Delta_{\mathsf{r}}}}
\def\Deltau{{\Delta_{\mathsf{u}}}}
\def\Deltat{{\Delta_{\mathsf{t}}}}
\def \cP{\mathcal{P}}
\def \DCFS{\mathsf{DCFS}}
\def \seqA{S_{\cA}}
\def \bbq{\Omega}
\def \bbp{\Phi}
\def\balloonQ{{\bbq}}
\def\balloonP{{\bbp}}
\def \barq{\sigma}
\def \barp{\pi}
\def \barz{\bar{\mathbf{0}}}
\newcommand{\ver}{\mathsf{ver}}
\newcommand{\vmap}{\mathbf{v}}
\def\bmap{\mathbf{b}}
\newcommand{\cM}{\mathcal{M}}
\newcommand{\wmap}{\mathbf{w}}
\def\sup{\mathsf{supp}}
\def\bs{\mathsf{bs}}
\def\cS{\mathcal{S}}
\def\typeseq{\mathsf{typeseq}}
\def \min{\mathsf{min}}
\def \max{\mathsf{max}}
\def \check{\mathsf{check}}
\def \col{\mathsf{color}}
\newcommand{\ltr}[1]{\mathsf{#1}}
\def \lgreen{\mathsf{green}}
\def \lred{\mathsf{red}}
\DeclareDocumentCommand\PDA{}{\mathsf{PDA}}
\DeclareDocumentCommand\VASS{}{\mathsf{VASS}}
\DeclareDocumentCommand\VASSB{}{\mathsf{VASSB}}
\newcommand{\op}{\mathit{op}}
\newcommand{\newb}{\mathsf{inflate}}
\newcommand{\deflateb}{\mathsf{deflate}}
\newcommand{\burstb}{\mathsf{burst}}
\title[Context-Bounded Liveness-Verification for Multithreaded Shared-Memory Programs]{Context-Bounded Verification of Liveness Properties for Multithreaded Shared-Memory Programs}
\newcommand{\OurInstitution}{Max Planck Institute for Software Systems (MPI-SWS)}
\newcommand{\OurStreet}{Paul-Ehrlich-Stra{\ss}e, Building G26}
\newcommand{\OurCity}{Kaiserslautern}
\newcommand{\OurPostcode}{67663}
\newcommand{\OurCountry}{Germany}
\author{Pascal Baumann}
\affiliation{
  \institution{\OurInstitution}            %
  \streetaddress{\OurStreet}
  \city{\OurCity}
  \postcode{\OurPostcode}
  \country{\OurCountry}                    %
}
\email{pbaumann@mpi-sws.org}          %
\author{Rupak Majumdar}
\affiliation{
  \institution{\OurInstitution}            %
  \streetaddress{\OurStreet}
  \city{\OurCity}
  \postcode{\OurPostcode}
  \country{\OurCountry}                    %
}
\email{rupak@mpi-sws.org}          %
\author{Ramanathan S. Thinniyam}
\affiliation{
  \institution{\OurInstitution}            %
  \streetaddress{\OurStreet}
  \city{\OurCity}
  \postcode{\OurPostcode}
  \country{\OurCountry}                    %
}
\email{thinniyam@mpi-sws.org}          %
\author{Georg Zetzsche}
\affiliation{
  \institution{\OurInstitution}            %
  \streetaddress{\OurStreet}
  \city{\OurCity}
  \postcode{\OurPostcode}
  \country{\OurCountry}                    %
}
\email{georg@mpi-sws.org}          %
\begin{abstract}
We study context-bounded verification of liveness properties of multi-threaded, shared-memory programs,
where each thread can spawn additional threads.
Our main result shows that context-bounded fair termination is decidable for the model;
context-bounded implies that each spawned thread can be context switched a fixed constant number of times.
Our proof is technical, since fair termination requires reasoning about the composition of
unboundedly many threads each with unboundedly large stacks.
In fact, techniques for related problems, which depend crucially on replacing the pushdown threads
with finite-state threads, are not applicable.
Instead, we introduce an extension of vector addition systems with states (VASS), called VASS with balloons (VASSB), as an intermediate model;
it is an infinite-state model of independent interest.
A VASSB allows tokens that are themselves markings (balloons).
We show that context bounded fair termination reduces to fair
termination for VASSB.
We show the latter problem is decidable by showing a series of reductions: from fair termination to configuration
reachability for VASSB and thence to the reachability problem for VASS.
For a lower bound, fair termination is known to be non-elementary already in the special case where threads run to 
completion (no context switches).

We also show that the simpler problem of context-bounded
termination is $\TWOEXPSPACE$-complete, matching the complexity bound---and indeed 
the techniques---for safety verification.
Additionally, we show the related problem of \emph{fair starvation}, which checks if some thread can
be starved along a fair run, is also decidable in the context-bounded case.
The decidability employs an intricate reduction from fair starvation
to fair termination.
Like fair termination, this problem is also non-elementary.
\end{abstract}
\begin{document}

\maketitle

\sloppy %

\section{Introduction}
\label{sec:intro}

We study decision problems related to liveness verification of shared-memory multithreaded programs.
In a shared-memory multithreaded program, a number of \emph{threads}
execute concurrently.
Each thread executes possibly recursive sequential code, and can spawn new 
threads for concurrent execution.
The threads communicate through shared global variables that they can read and write.
The execution of the program is guided by a non-deterministic \emph{scheduler} that
picks one of the spawned threads to execute in each time step.
If the scheduler replaces the currently executing thread with a different one,
we say the current active thread is \emph{context switched}.

Shared-memory multithreaded programming is ubiquitous and static verification of
safety or liveness properties of such programs is a cornerstone of formal verification
research.
Indeed, there is a vast research literature on the problem---from a foundational understanding 
of the computability and complexity of (subclasses of) models,
to program logics, and to efficient tools for analysis of real systems.

In this paper, we focus on \emph{decidability} issues for \emph{liveness} verification for multithreaded
shared memory programs with the ability to spawn threads.
Liveness properties, intuitively, specify that ``something good'' happens when a program executes.
A simple example of a liveness property is \emph{termination}: the property that a program eventually
terminates.
In fact, termination is a ``canonical'' liveness property:
for a very general class of liveness properties, through monitor constructions, verifying liveness properties reduces to verifying termination \cite{AptOlderog,Vardi91}.

Unfortunately, under the usual notion of non-deterministic schedulers, some programs may fail
to terminate for uninteresting reasons.
Consider the following program:

\begin{lstlisting}[style=tinyc,name=cfl]
global bit := 1;
main() { spawn foo; spawn bar; } 
foo() { if bit = 1 then spawn foo; } 
bar() { bit := 0; }
\end{lstlisting}

A main thread spawns two additional threads $\mathtt{foo}$ and $\mathtt{bar}$.
The thread $\mathtt{foo}$ checks if a global bit is set and, if so, re-spawns itself.
The thread $\mathtt{bar}$ resets the global bit.
There is a non-terminating execution of this program in which $\mathtt{bar}$ is never scheduled.
However, a scheduler that never schedules a thread that is ready to run would be considered unfair.
Instead, one formulates the problem of \emph{fair termination}: termination under a \emph{fair} non-deterministic scheduler.
We abstract away from the exact mechanism of the scheduler, and only require that
every spawned thread that is infinitely often ready to run is eventually scheduled.
Then, every fair run of the above program is terminating: 
eventually $\mathtt{bar}$ is scheduled, after which $\mathtt{foo}$ does not spawn a new thread.

Fair termination of concurrent programs is highly undecidable.
A celebrated result by \citet{Harel1986} shows that fair termination is $\Pi_1^1$-complete;
in fact, the problem is already $\Pi_1^1$-complete when the global state is finite and 
there are a finite number of recursive threads.\footnote{
	Recall that the class $\Pi_1^1$ in the analytic hierarchy is the class of all relations on $\nats$ 
	that can be defined by a universal \emph{second-order} number-theoretic formula.
	}
In contrast, safety verification, modeled as state reachability, is merely $\Sigma_1^0$-complete.

Since the high undecidability relies on an unbounded exchange of information among threads,
a recent and apposite approach to verifying concurrent recursive programs is to explore only a representative
subset of program behaviors by limiting the number of inter-thread interactions \cite{QR05,MusuvathiQadeer}.
This approach, called {context bounding} by \citet{QR05}, considers the verification problem 
as a family of problems, one for each $K$.
The $K$-context bounded instance, for any fixed $K\geq 0$, considers only those executions where each thread is context switched
at most $K$ times by the scheduler.
In the limit as $K\rightarrow \infty$, the $K$-bounded approach explores all behaviors where each
thread runs a finite number of times. 
In practice, bounded explorations with small values of $K$ have proved to be 
effective to uncover many safety and liveness bugs in real systems.

In this paper, we prove the following results.
We first show that $K$-context bounded termination for multithreaded recursive programs with spawns is decidable and $\TWOEXPSPACE$-complete
when $K\geq 1$.
Then, we show that $K$-context bounded fair termination is decidable but non-elementary.
Our result implies fair termination is $\Pi_1^0$-complete when each thread is context-switched a finite number of times.
(Note that this does not contradict the $\Pi_1^1$-completeness of the general problem, in which a thread can be context switched infinitely often.)
We also study a stronger notion of fairness called \emph{fair
non-starvation}, where threads are given unique identities in order
to distinguish threads with the same local configuration, and show
that fair non-starvation is also decidable.

Our results generalize the special case of $K=0$ studied by \citet{GantyM12} as \emph{asynchronous programs}.
When $K=0$, each thread executes to completion without being interrupted in the middle.
Ganty and Majumdar show the decidability of safety and liveness verification for this model.
In particular, they prove safety and termination are both $\EXPSPACE$-complete and fair termination 
and fair starvation are decidable but non-elementary.\footnote{
	Their result shows a polynomial-time equivalence between fair termination and reachability in vector addition
	systems with states (VASS, a.k.a.\ Petri nets).
	The complexity bounds follow from our current knowledge of the complexity of VASS reachability \cite{CzerwinskiLLLM19}.
	}
Their proof depends on the observation that, since threads are not interrupted, one can replace the pushdown
automata for each thread by finite automata that accept Parikh-equivalent languages.
Unfortunately, their technique does not generalize when context switches are allowed.

For $K\geq 1$, \citet{AtigBQ2009} showed that the safety verification problem is
decidable in $\TWOEXPSPACE$.
Ten years later, a matching lower bound was shown by \citet{BaumannMajumdarThinniyamZetzsche2020a}.
The key observation in the decision procedure is that safety is preserved under downward closures:
one can analyze a related program where some spawned threads are ``forgotten.''
Since the downward closure of a context free language is effectively regular, one can replace
the pushdown automaton for each thread by a finite automaton accepting the downward closure.
In fact, our proof of termination also follows easily from this observation, as termination is also preserved
by downward closures.

Unfortunately, fair termination and fair non-starvation are not preserved under downward closures.
Thus, we cannot apply the preceding techniques to replace pushdown automata by finite automata in our construction.
Thus, our proof is more intricate and requires several insights into the computational model.

The key difficulty in our decision procedure is to maintain a finite representation for \emph{unboundedly} many active threads, 
each with \emph{unboundedly} large local stacks and potentially spawning \emph{unboundedly} many new
threads, and to compose their context-switched executions into a global execution.
In order to maintain and compose such configurations, we introduce a new model, called \emph{VASS with balloons} ($\VASSB$),
that extends the usual model of a vector addition systems with states (VASS) with ``balloons'': 
a token in a $\VASSB$ can be a usual VASS token or a balloon token that is itself a vector.
Intuitively, balloon tokens represent the possible new threads a thread can spawn along one of its execution segments.

We show through a series of constructions that the fair termination problem reduces to the fair termination
problem for $\VASSB$, and thence to the configuration reachability problem for $\VASSB$.
Finally, we show that configuration reachability for VASSB is decidable by a reduction to the reachability problem for VASS.
This puts $\VASSB$ in the rare class of infinite-state systems which generalize VASS and yet maintain a decidable reachability
(not just coverability!) problem.

Finally, we show a reduction from the fair starvation problem to
fair termination.  The reduction relies on two
combinatorial insights. The first is that if a program has an infinite
fair run, then it has one in which there exists a bound on the number
of threads spawned by each thread. The second is a novel pumping argument
based on Ramsey's theorem; it implies that it suffices to track a
finite amount of data about each thread to determine whether some
thread can be starved.

In conclusion, we prove decidability of liveness verification for multithreaded shared memory programs with the ability
to dynamically spawn threads, 
an extremely expressive model of multithreaded programming.
This model sits at the boundary of decidability and subsumes many other models studied before.

\smallskip
\noindent\textbf{Related Work.}
Safety verification for concurrent recursive
programs is already undecidable with just two threads and finite global store \cite{Ramalingam}.
Many results on context-bounded safety verification consider a model with a \emph{fixed} number of threads, without spawns.
The complexity of safety verification for this model is well understood at this point.
The key idea underlying the best algorithms reduce the problem to analyzing a sequential pushdown system \cite{LalReps}
by guessing the bounded sequences of context switches for each thread and using the finite state to ensure the
sequential runs can be stitched together.
 
When the model allows \emph{spawning} of new threads, as ours does, existing decision procedures are significantly
more complex, both in their technicalities and in computational cost.
There are relatively few results on decidability of liveness properties of infinite-state systems.
\citet{AtigBEL12} show a sufficient condition for fair termination for context-bounded executions
of a fixed number of threads, where they look for ultimately periodic executions, in which each thread
is context switched at most $K$ times in the loop.
They show that the search for such ultimately periodic executions can be reduced to safety verification.
In our model, fair infinite runs may involve
unboundedly many threads with unbounded stacks and need not be periodic---for example, 
there can always be more and more newly spawned threads.

Multi-pushdown systems model multithreaded programs with a fixed number of threads. 
Many decision procedures are known when the executions of such systems are restricted through different
bounds such as context, scope, or phase \cite{AtigBKS12,TorreNP16,AtigBKS17}, 
and also through limitations on communication patterns \cite{LalTKR08}.
These problems are orthogonal to us, either in the modeling capabilities or in the properties verified.

Decidability of linear temporal logic is known for weaker models of multithreaded recursive programs,
such as symmetric parameterized programs \cite{Kahlon08}
or leader-follower programs with non-atomic reads and writes \cite{Durand-Gasselin17,FortinMW17,MuschollSW17}.
These programs cannot perform compare-and-swap operations, and therefore, their computational power is quite
limited (in fact, LTL model checking is PSPACE-complete).
A number of heuristic approaches to fair termination of multithreaded programs provide sound but incomplete
algorithms, but for a more general class of programs involving infinite-state data variables \cite{CookPR11,PadonHLPSS18,FarzanKP16,CookPR07,KraglEHMQ20}.
The goal there is to provide a sound proof rule for verification but not to prove a decidability result.

In terms of fair termination problems for VASS, the theme of
computational hardness continues.  For example, the classical notion
of \emph{fair runs}, in which an infinitely activated transition has
to be fired infinitely often, leads to
undecidability~\cite{carstensen1987decidability} and even
$\Sigma_1^1$-completeness~\cite{howell1991taxonomy}. However,
\emph{weakly fair} termination, where only those transitions that are
almost always activated have to be fired infinitely often, is
decidable~\cite{janvcar1990decidability}. A rich taxonomy of fairness
notions with corresponding decidability results can be found in
\cite{howell1991taxonomy}. However, all of these notions appear to be
incomparable with our fairness notion for VASSB.

Our model of $\VASSB$ treads the boundary of models that generalize
VASS for which reachability can be proved to be decidable.  We note
that there are several closely related models, VASS with a
stack~\cite{leroux2015coverability} and branching
VASS~\cite{VermaGoubaultLarrecq2005}, for which decidability of
reachability is a long-standing open problem, and others, nested Petri
nets \cite{DBLP:conf/ershov/LomazovaS99}, for which reachability is
undecidable.

\section{Dynamic Networks of Concurrent Pushdown Systems ($\DCPS$)} %
\label{sec:preliminaries}

\subsection{Preliminary Definitions}

\subsubsection*{Multisets}
A \emph{multiset} $\mmap\colon S\rightarrow\nats$ over a set $S$ maps each
element of $S$ to a natural number.
Let $\multiset{S}$ be the set of all multisets over $S$.
We treat sets as a special case of multisets 
where each element is mapped onto $0$ or $1$.
We sometimes write
$\mmap=\multi{a_1,a_1,a_3}$ for the multiset
$\mmap\in\multiset{S}$ such that $\mmap(a_1)=2$, $\mmap(a_3)=1$, and $\mmap(a) = 0$ for each $a \in S\backslash\set{a_1,a_3}$. 
The empty multiset is denoted \(\emptyset\).
The size of a multiset $\mmap$, denoted $\card{\mmap}$, is
given by $\sum_{a\in S}\mmap(a)$.
This definition applies to sets as well.
 
Given two multisets $\mmap,\mmap'\in\multiset{S}$ we define $\mmap +
\mmap'\in\multiset{S}$ to be a multiset such that for all $a\in S$,
we have $(\mmap + \mmap')(a)=\mmap(a)+\mmap'(a)$.
For $c\in\nats$, we define $c\mmap$ as the multiset that maps each $a\in S$ to $c\cdot \mmap(a)$.
We also define the natural order
$\preceq$ on $\multiset{S}$ as follows: $\mmap\preceq\mmap'$ if{}f there
exists $\mmap^{\Delta}\in\multiset{S}$ such that
$\mmap + \mmap^{\Delta}=\mmap'$. 
We also define $\mmap - \mmap'$ for $\mmap' \preceq \mmap$ analogously: for all $a\in S$,
we have $(\mmap - \mmap')(a)=\mmap(a)-\mmap'(a)$. 

\subsubsection*{Pushdown Automata}
A \emph{pushdown automaton ($\PDA$)}
$\cP_{(g,\gamma)} = (Q, \Sigma, \Gamma, E, q_0, \gamma_0, Q_F)$ consists of 
a finite set of \emph{states} $Q$, 
a finite input alphabet $\Sigma$, 
a finite alphabet of \emph{stack symbols} $\Gamma$, 
an \emph{initial state} $q_0 \in Q$, 
an \emph{initial stack symbol} $\gamma_0 \in \Gamma$, 
a set of \emph{final states} $Q_F \subseteq Q$,
and a transition relation $E \subseteq (Q \times \Gamma) \times \Sigma_\varepsilon \times (Q \times \Gamma^{\leq 2})$, where $\Sigma_\varepsilon = \Sigma \cup \{\varepsilon\}$ and $\Gamma^{\leq 2} = \{\varepsilon\} \cup \Gamma \cup \Gamma^2$. For $((q,\gamma),a,(q',w)) \in E$ we also write $q \xrightarrow{a|\gamma/w} q'$.

The set of \emph{configurations} of $\mathcal{P}$ is $Q \times \Gamma^*$. The \emph{initial configuration} is $(q_0, \gamma_0)$. The set of \emph{final configurations} is $Q_F \times \Gamma^*$. For each $a \in \Sigma \cup \{\varepsilon\}$, the relation $\xRightarrow{a}$ on configurations of $\mathcal{P}$ is defined as follows: $(q,\gamma w) \xRightarrow{a} (q',w'w)$ for all $w \in \Gamma^*$ iff (1) there is a transition $q \xrightarrow{a|\gamma/w'} q' \in E$, or (2) there is a transition $q \xrightarrow{a|\varepsilon} q' \in E$ and $\gamma = w' = \varepsilon$.

For two configurations $c, c'$ of $\mathcal{P}$, we write $c \Rightarrow c'$ if $c \xRightarrow{a} c'$ for some $a$. Furthermore, we write $c \xRightarrow{\smash{u}}^* c'$ for some $u \in \Sigma^*$ if there is a sequence of configurations $c_0$ to $c_n$ with 
\[
  c = c_0 \xRightarrow{a_1} c_1 \xRightarrow{a_1} c_2 \cdots c_{n-1} \xRightarrow{a_n} c_n = c', 
\]
such that $a_1 \ldots a_n = u$. We then call this sequence a \emph{run} of $\mathcal{P}$ over $u$. We also write $c \Rightarrow^* c'$ if the word $u$ does not matter. A run of $\mathcal{P}$ is \emph{accepting} if $c$ is initial and $c'$ is final.
The \emph{language} accepted by $\mathcal{P}$, denoted $L(\mathcal{P})$
is the set of words $\in \Sigma^*$, over which there is an accepting
run of $\mathcal{P}$.

Given two configurations $c, c'$ of $\mathcal{P}$ with $c \Rightarrow^* c'$, we say that $c'$ is \emph{reachable} from $c$ and 
that $c$ is \emph{backwards-reachable} from $c'$. If $c$ is the initial configuration, we simply say that $c'$ is reachable.

\subsubsection*{Parikh Images and Semi-linear Sets}
The Parikh image of a word $u \in \Sigma^*$ is a function $\parikh(u): \Sigma \rightarrow \nats$
such that, for every $a\in \Sigma$, we have $\parikh(u)(a) = |u|_a$, where $|u|_a$ denotes the number of occurrences
of $a$ in $u$. 
We extend the definition to the Parikh image of a language $L \subseteq \Sigma^*$:
$\parikh(L) = \set{\parikh(u) \mid u\in L}$.
We associate the natural isomorphism between $\nats^\Sigma$ and $\nats^{|\Sigma|}$ and
consider the functions as vectors of natural numbers.

A subset of $\multiset{S}$ is \emph{linear} if it is of the form
$\set{\mmap_0 + t_1 \mmap_1 + \ldots + t_n \mmap_n \mid t_1,\ldots,t_n \in \nats}$
for some multisets $\mmap_0, \mmap_1,\ldots, \mmap_n \in \multiset{S}$. 
We call $\mmap_0$ the \emph{base vector} and $\mmap_1,\ldots,\mmap_n$ the \emph{period vectors}.
A linear set has a finite representation based on its base and period vectors.
A \emph{semi-linear} set is a finite union of linear sets.

\begin{theorem}[\cite{Parikh66}]\label{th:parikh}
For any context-free language $L$, the set $\parikh(L)$ is semi-linear.
A representation of the semi-linear set $\parikh(L)$ can be effectively constructed from a $\PDA$ for $L$.
\end{theorem}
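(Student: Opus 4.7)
The plan is to prove Parikh's theorem via an analysis of derivation trees of a context-free grammar equivalent to the given $\PDA$. First I would convert the $\PDA$ into an equivalent context-free grammar $G = (V, \Sigma, P, S)$ using the standard triple construction $[q, A, q']$; since this conversion is effective, the theorem's effectiveness reduces to an effective construction from $G$. For convenience I would put $G$ into Chomsky normal form so that derivation trees are binary and the combinatorics is clean.

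The key combinatorial idea is to separate each derivation tree into a \emph{skeleton} and a bag of \emph{pumps}. Call a derivation tree \emph{small} if no nonterminal repeats along any root-to-leaf path; a small tree then has depth at most $|V|$, so there are only finitely many small trees up to isomorphism. For a general derivation tree $T$, whenever a nonterminal $A$ repeats along a path, the subtree rooted at the higher occurrence of $A$ properly contains the subtree rooted at the lower one, and one may excise the portion between the two copies. This excision corresponds to a partial derivation $A \Rightarrow^{*} u A v$ which I will call an \emph{$A$-pump}; its contribution to the Parikh image is $\parikh(u) + \parikh(v) \in \nats^{\Sigma}$, and such a pump can be freely re-inserted at any node in a tree labelled by $A$, any number of times, without breaking derivability. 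Restricting to pumps whose internal structure is itself small (in the same sense), there are only finitely many pumps per nonterminal.

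Putting this together, I would show by induction on tree size that every derivation tree of $G$ yielding a terminal string $w$ arises from some small derivation tree $T_0$ by repeatedly inserting small pumps at nodes of the skeleton. Hence
\[
  \parikh(L(G)) \;=\; \bigcup_{T_0} \Bigl( \parikh(\mathrm{yield}(T_0)) + \sum_{i=1}^{k(T_0)} \nats \cdot v_i \Bigr),
\]
where $T_0$ ranges over small derivation trees yielding a terminal word and $v_1,\ldots,v_{k(T_0)}$ are the Parikh images of the yields of the small pumps associated with nonterminals occurring in $T_0$. This is a finite union of linear sets, hence semi-linear, and is effectively computable because both the small trees and the small pumps can be enumerated directly from $G$ (and thus from the $\PDA$).

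The main obstacle will be making precise the decomposition lemma: that every derivation tree admits a decomposition into a small skeleton plus a multiset of small pumps, and conversely that any such re-insertion yields a valid derivation. The forward direction is an induction that repeatedly excises a pump along any path witnessing a repeated nonterminal until no repetition remains; the converse requires checking that inserting a small $A$-pump at an $A$-labelled internal node preserves the context-free derivation relation, which follows immediately from the definition of an $A$-pump as a derivation $A \Rightarrow^{*} u A v$. Once this is in place, semi-linearity and effectiveness are straightforward readings of the finite enumeration above.
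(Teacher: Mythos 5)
The paper does not prove this statement---it is quoted from \citet{Parikh66}---so the only question is whether your argument is sound. It is not, as stated: the equality you write down is false. The flaw is that after deflation, the root nonterminal of an excised pump need not occur in the final skeleton $T_0$, because it may occur only \emph{inside other pumps}; dually, when re-inflating, a pump may have to be inserted at a node that only comes into existence after some other pump has been inserted. Your formula only admits period vectors for pumps rooted at nonterminals of $T_0$ itself, so it misses these. Concretely, take the Chomsky-normal-form grammar with productions $S \to CZ$, $C \to AC \mid c$, $A \to XA \mid e$, $X \to a$, $Z \to z$, generating $L = (a^*e)^*cz$. Every tree using $C \to AC$ repeats $C$ on a path and every tree using $A \to XA$ repeats $A$, so the unique small derivation tree is the one for $cz$, whose nonterminals are $\{S,C,Z\}$; in particular $A$ occurs in no small skeleton. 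The only small $C$-pump is $C \Rightarrow AC \Rightarrow eC$ (any $C$-pump producing an $a$ must use $A \to XA$ internally and hence repeats $A$), so your right-hand side is $\parikh(cz) + \nats\cdot\parikh(e)$, which does not contain $\parikh(aecz)$ even though $aecz \in L$. The $A$-pump contributing the letter $a$ is exactly the orphaned pump: it is rooted at a nonterminal that lives only inside the $C$-pump, not in the skeleton.

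The standard repair is to index the linear sets not by small skeletons but by the set $U$ of nonterminals used \emph{anywhere} in the derivation: one shows $\parikh(L) = \bigcup_U \parikh(L_U)$, where $L_U$ is the set of words with a derivation using exactly the nonterminals in $U$, and that each $\parikh(L_U)$ is linear with base vectors coming from derivations of bounded size using exactly $U$ and period vectors coming from bounded-size pumps over nonterminals in $U$. The deflation lemma then needs more care than ``excise any repeated-nonterminal pump'': you must remove pumps without destroying the last occurrence of any nonterminal of $U$, and the insertion direction must allow a pump to attach to a node created by a previously inserted pump. This is essentially Parikh's original argument; if you want to avoid the tree surgery altogether, the automaton-based proof of Esparza, Ganty, Kiefer, and Luttenberger gives a cleaner route to the same effective semi-linear representation.
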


\subsection{Dynamic Networks of Concurrent Pushdown Systems} \label{sec:dcps}

A \emph{Dynamic Network of Concurrent Pushdown Systems ($\DCPS$)} 
$\cA = (G,\Gamma,\Delta,g_0,\gamma_0)$ 
consists of 
a finite set of \emph{(global) states} $G$, 
a finite alphabet of \emph{stack symbols} $\Gamma$, 
an \emph{initial state} $g_0 \in G$, 
an \emph{initial stack symbol} $\gamma_0 \in \Gamma$, and 
a finite set of \emph{transition rules} $\Delta$. 
The set of transition rules $\Delta$ is partitioned into four kinds of rules:
\emph{creation rules} $\Deltac$, \emph{interruption rules} $\Deltai$, \emph{resumption rules} $\Deltar$, and
\emph{termination rules} $\Deltat$.
Elements of $\Deltac$ have one of two forms:
(1) $g|\gamma \hookrightarrow g'|w'$, or 
(2) $g|\gamma \hookrightarrow g'|w' \triangleright \gamma'$, where $g,g' \in G$, $\gamma,\gamma' \in \Gamma$, $w' \in \Gamma^*$, and $|w'| \leq 2$. 
Rules of type~(1) allow the $\DCPS$ to take a single step in one of its threads.
Rules of type~(2) additionally spawn a new thread with top of stack $\gamma'$.
Elements of $\Deltai$ have the form $g|\gamma \mapsto g'|w'$, where $g, g'\in G$, $\gamma\in\Gamma$, and $w'\in\Gamma^*$ with $1\le |w'|\leq 2$.
Elements of $\Deltar$ have the form $g\mapsto g'\lhd \gamma$, where $g, g'\in G$ and $\gamma\in\Gamma$. 
Elements of $\Deltat$ have the form $g\mapsto g'$, where $g,g'\in G$.

The \emph{size} $|\cA|$ of $\cA$ is defined as $|G| + |\Gamma| + |\Delta|$:
the number of symbols needed to describe the global states, the stack alphabet, and the transition rules.

The set of configurations of $\cA$ is $G \times \big((\Gamma^*
\times \nats) \cup \set{\#}\big) \times \multiset{\Gamma^* \times \nats}$. 
Given a configuration $\langle g, (w,i), \mmap \rangle$, we call $g$ the 
\emph{(global) state}, $(w,i)$ the \emph{local configuration} of the 
\emph{active thread}, 
and $\mmap$ the multiset of the \emph{local
configurations} of the \emph{inactive threads}.
In a configuration $\langle g,\#,\mmap \rangle$, we call $\#$ a \emph{schedule point}.

The initial configuration of $\cA$ is
$\langle g_0, \#, \multi{(\gamma_0,0)}\rangle$. For a configuration
$c$ of $\cA$, we will sometimes write $c.g$ for the state of $c$ and
$c.\mmap$ for the multiset of threads of $c$ (both active and
inactive).  The \emph{size} of a configuration
$c = \langle g, (w,i), \mmap \rangle$ is defined as
$|c| = |w| + \sum_{(w',j) \in \mmap}|w'|$.

Intuitively, a $\DCPS$ represents a multi-threaded, shared memory program. 
The global states $G$ represent the shared memory. 
Each thread is potentially recursive.
It maintains its own stack $w$ over the 
stack alphabet $\Gamma$ and uses the transition rules in $\Deltac$ to 
manipulate the global state and its stack.
It can additionally spawn new threads using rules of type (2) in $\Deltac$.
In a local configuration, the natural number $i$ keeps track of how many times a thread 
has already been context switched by the underlying scheduler.
Any newly spawned thread has its context switch number set to $0$.

The steps of a single thread defines the following
\emph{thread step} relation $\rightarrow$ on
configurations of $\cA$: 
we have $\langle g, (\gamma w,i), \mmap \rangle \rightarrow \langle g', (w'w,i), \mmap' \rangle$ 
for all $w \in \Gamma^*$ iff (1) there is a 
rule $g|\gamma \hookrightarrow g'|w'$ in  $\Deltac$ and $\mmap' = \mmap$
or (2) there is a rule $g|\gamma \hookrightarrow g'|w' \triangleright 
\gamma'$ in $\Deltac$ and $\mmap' = \mmap + \multi{(\gamma',0)}$. 
We extend the \emph{thread step} relation $\rightarrow^+$ to be the irreflexive-transitive closure of 
$\rightarrow$; thus $c \rightarrow^+c'$ if there is a sequence $c
\rightarrow c_1 \rightarrow \ldots c_k \rightarrow c'$ for some $k\geq 0$.

A non-deterministic scheduler switches between concurrent threads.
The active thread is the one currently being executed and the
multiset $\mmap$ keeps all other partially executed threads in the system.
Any spawned thread is put in $\mmap$ for future execution with an initial context switch number $0$.
The scheduler may interrupt a thread based on the interruption rules and non-deterministically
resume a thread based on the resumption rules.

The actions of the scheduler define the
\emph{scheduler step} relation $\mapsto$ on configurations of $\cA$: 
\[
\small
\inferrule[Swap]{
g|\gamma\mapsto g'|w' \in \Deltai
}{
\tuple{g, (\gamma w, i), \mmap} \mapsto \tuple{g', \#, \mmap+ \multi{w'w,i+1}}
}
\;
\inferrule[Resume]{
g\mapsto g'\lhd \gamma\in \Deltar
}{
\tuple{g, \#, \mmap + \multi{\gamma w, i}} \mapsto \tuple{g', (\gamma w, i), \mmap}
}
\;
\inferrule[Term]{
  g\mapsto g'\in\Deltat
}{
  \tuple{g, (\varepsilon, i), \mmap} \mapsto \tuple{g', \#, \mmap}
}
\]
If a thread can be interrupted, then {\sc Swap} swaps it out and increases the
context switch number of the thread.
The rule {\sc Resume} picks a thread that is ready to run based on the current
global state and its top of stack symbol and makes it active. 
The rule {\sc Term} removes a thread on termination (empty stack). 

A \emph{run} of a $\DCPS$ is a finite or infinite sequence of
alternating thread execution and scheduler step relations 
\[
  c_0 \rightarrow^+ c_0' \mapsto c_1 \rightarrow^+ c_1' \mapsto \ldots 
\]
such that $c_0$ is the initial configuration.
The run is \emph{$K$-context switch bounded} if, moreover, for each $j\geq 0$, the
configuration $c_j = (g, (w, i), \mmap)$
satisfies $i \leq K$.
In a $K$-context switch bounded run, each thread is context switched
at most $K$ times and the scheduler never schedules a
thread that has already been context switched $K+1$ times.
When the distinction between thread and scheduler steps is not
important, we write a run as a sequence $c_0 \Rightarrow c_1 \ldots$.

\subsection{Identifiers and the Run of a Thread}

Our definition of $\DCPS$ does not have thread identifiers associated with a thread.
However, it is convenient to be able to identify the run of a single thread along the execution.
This can be done by decorating local configurations with unique identifiers and modifying the
thread step for $g|\gamma \hookrightarrow g'|w\triangleright \gamma'$ to add a thread $(\ell,\gamma',0)$
to the multiset of inactive threads, where $\ell$ is a fresh identifier. 
By decorating any run with identifiers, we can freely talk about the run of a single thread,
the multiset of threads spawned by a thread, etc.

Let us focus on the run of a specific thread,
that starts executing from some global state $g$ with an initial stack
symbol $\gamma$.
In the course of its run, the thread updates its own local stack and spawns new threads,
but it also gets swapped out and swapped back in.

We show that the run of a thread corresponds to the run of an associated $\PDA$ that
can be extracted from $\cA$.
This $\PDA$ updates the global state and the stack based on the rules
in $\Deltac$, but additionally
(1) makes visible as the input alphabet the initial symbols (from
$\Gamma$) of the spawned threads, and
(2) non-deterministically guesses jumps between global states
corresponding to the effect of context switches.
There are two kinds of jumps.
A jump $(g_1, \gamma, g_2)$ in the $\PDA$ corresponds to the thread being
switched out leading to global state $g_1$ and later resuming at global state
$g_2$ with $\gamma$ on top of its stack (without being active in the interim).
A jump $(g,\bot)$ corresponds to the last time the $\PDA$ is swapped out
(leading to global state $g$).
We also make these guessed jumps visible as part of the input alphabet.
Thus, the input alphabet of the $\PDA$ is $\Gamma \cup G \times \Gamma \times G \cup G \times \set{\bot}$.

For any $g\in G$ and $\gamma \in \Gamma$, we define the
$\PDA$ $\cP_{(g,\gamma)} = (Q, \Sigma,
\Gamma_\bot, E, \mathsf{init}, \bot, \set{\mathsf{init},\mathsf{end}})$,
where
$Q = G \cup G \times \Gamma \cup \set{\mathsf{init}, \mathsf{end}}$,
$\Gamma_\bot$ = $\Gamma \cup \set{\bot}$,
$\Sigma =\Gamma \cup G\times \Gamma \times G \cup G \times \set{\bot}$,
$E$ is the smallest transition relation such that
\begin{enumerate}
\item There is a transition $\mathsf{init}\xrightarrow{\varepsilon|\bot/\gamma\bot}g$ in $E$,
\item For every $g_1|\gamma_1 \hookrightarrow g_2|w \in \Delta_{\mathsf{c}}$ there is a transition 
  $g_1\xrightarrow{\varepsilon|\gamma_1/w}g_2$ in $E$,
\item For every $g_1|\gamma_1 \hookrightarrow g_2|w\triangleright \gamma_2 \in \Delta_{\mathsf{c}}$ there is a transition 
  $g_1\xrightarrow{\gamma_2|\gamma_1/w}g_2$ in $E$,
\item For every $g_1|\gamma_1 \hookrightarrow g_2|w \in \Delta_{\mathsf{i}}$, $g_3 \in G$, and $\gamma_2 \in \Gamma$ there is a transition 
  $g_1 \xrightarrow{(g_2,\gamma_2,g_3)|\gamma_1/w} (g_3, \gamma_2)$, and a transition
  $(g_3, \gamma_2) \xrightarrow{\varepsilon|\gamma_2/\gamma_2} g_3$ in $E$,
\item For every $g_1|\gamma_1 \hookrightarrow g_2|w \in \Delta_{\mathsf{i}}$ and every $\gamma_2 \in \Gamma_\bot$ there is a transition
  $g_1 \xrightarrow{\varepsilon|\gamma_1/w} (g_2, \bot)$, and a transition
  $(g_2,\bot) \xrightarrow{(g_2,\bot)|\gamma_2/\gamma_2} \mathsf{end}$ in $E$, \label{tPDA:tomod}
\item For every $g_1 \hookrightarrow g_2 \in \Deltat$ there is a transition
  $g_1 \xrightarrow{(g_2,\bot)|\bot/\bot} \mathsf{end}$ in $E$.
\end{enumerate}
The set of behaviors of the $\PDA$ $\cP_{(g,\gamma)}$ which
correspond to a thread execution with precisely $i$ ($i \leq K+1$)
context switches is given by the following language:
\[
 L^{(i)}_{(g,\gamma)} = L(\cP_{(g, \gamma)}) \cap ((\Gamma^*\cdot G\times
 \Gamma \times G)^{i-1}(\Gamma^*\cdot G \times \set{\bot}))\
\]
The language $L^{(i)}_{(g,\gamma)}$ is a context-free language.
In the definition, we use the end of stack symbol $\bot$ to recognize when the stack is empty.

\subsection{Decision Problems and Main Results}

\subsubsection*{Previous Work: Safety}
The \emph{reachability problem} for $\DCPS$ asks, given a global state $g$ of $\cA$,
if there is a run $c_0 \Rightarrow c_1 \ldots \Rightarrow c_n$ such
that $c_n.g = g$.
It is well-known that reachability is undecidable (e.g., one can reduce the emptiness problem for intersection of context free languages).
Therefore, it is customary to consider \emph{context-switch bounded} decision questions.
Given $K \in \nats$, a state $g$ of $\cA$ is $K$-context switch bounded
reachable if there is a $K$-context switch bounded run $c_0
\Rightarrow \ldots \Rightarrow c_n$ with $c_n.g = g$.
For a fixed $K$, the \emph{$K$-bounded state reachability problem} ($\SRP[K]$) for a $\DCPS$ is defined as follows:
\begin{description}
  \item[Given] A $\DCPS$ $\cA$ and a global state $g$
  \item[Question] Is $g$ $K$-context switch bounded reachable in $\cA$?
\end{description}
This problem is known to be decidable; the $\TWOEXPSPACE$ upper bound for each $K$ was proved by \citet{AtigBQ2009} and a matching
lower bound for $K \geq 1$ by \citet{BaumannMajumdarThinniyamZetzsche2020a}.
In case $K=0$, the problem is known to be $\EXPSPACE$-complete \cite{GantyM12}.

\subsubsection*{This Paper: Liveness}
We now turn to context-bounded liveness specifications.
The simplest liveness specification is \emph{(non-)termination}: does a program
halt?
For a fixed $K\in \nats$, the \emph{$K$-bounded non-termination problem} $\NTERM[K]$ 
is defined as follows:
	\begin{description}
  \item[Given] A $\DCPS$ $\cA$. 
  \item[Question] Is there an infinite $K$-context switch bounded run?
\end{description}
When $K=0$, the non-termination problem is known to be $\EXPSPACE$-complete \cite{GantyM12}.
We show the following result.

\begin{theorem}[Termination]
\label{thm:term}
For each $K \geq 1$, the problem $\NTERM[K]$ is $\TWOEXPSPACE$-complete.
\end{theorem}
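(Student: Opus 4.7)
The plan is to prove matching $\TWOEXPSPACE$ upper and lower bounds for $\NTERM[K]$.

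For the $\TWOEXPSPACE$ upper bound, I would adapt the safety-verification construction of Atig, Bouajjani, and Qadeer, using the observation (alluded to in the introduction) that non-termination is preserved when each per-thread language $L^{(i)}_{(g,\gamma)}$ is replaced by its downward closure. The forward direction is immediate since the downward-closed system strictly enlarges the set of thread behaviors. For the backward direction, I would invoke monotonicity: given an infinite run of the downward-closed system, I would lift each thread's witnessed subword to a dominating word in the original $\PDA$ language and argue that the extra spawned threads introduced by the lift can safely remain inactive in the multiset without disturbing the infinite progress of the threads that actually drive the run. Since downward closures of context-free languages are effectively regular and computable with at most doubly-exponential blow-up on the $\PDA$s $\cP_{(g,\gamma)}$, the composition of the resulting finite-state threads becomes a vector addition system with states (VASS) of doubly-exponential size, in which each counter records how many inactive threads sit in a given finite-automaton state and context-switch index. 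Non-termination for VASS is decidable in $\EXPSPACE$ via a reduction to coverability (a self-covering path can be pumped to an infinite run), and composing the two bounds yields $\TWOEXPSPACE$.

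For the $\TWOEXPSPACE$ lower bound, I would reduce from $\SRP[K]$, which is $\TWOEXPSPACE$-hard for $K\geq 1$ by the result of Baumann, Majumdar, Thinniyam, and Zetzsche. Given an instance $(\cA, g)$, construct $\cA'$ by introducing a fresh global state $g_\top$, creation rules $g|\gamma \hookrightarrow g_\top|\gamma$ for every $\gamma\in\Gamma$, and self-loops $g_\top|\gamma \hookrightarrow g_\top|\gamma$. Any run of $\cA$ reaching $g$ then extends to an infinite run of $\cA'$ spinning at $g_\top$ forever, and since the additions only fire at (or past) state $g$, this step uses no extra context switches and preserves the parameter $K$.

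The main obstacle is ensuring that $\cA$ itself is terminating, so that the only source of non-termination in $\cA'$ is the inserted self-loop. I would handle this either by inspecting the Baumann--Majumdar--Thinniyam--Zetzsche hardness reduction and observing that the intermediate $\DCPS$ it produces already realizes a bounded-length simulation of a doubly-exponential-space Turing machine (and hence always terminates when $g$ is unreachable), or, more robustly, by instrumenting $\cA$ with a succinctly-encoded doubly-exponential step counter that aborts every run exceeding a fixed budget; this instrumentation costs only polynomial additional space and preserves $\TWOEXPSPACE$-hardness.
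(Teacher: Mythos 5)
Your proposal follows essentially the same route as the paper for both bounds: downward closures turn each thread's $\PDA$ into a finite automaton (the paper's reduction from $\DCPS$ to $\DCFS$), the resulting finite-state system is compiled into a $\VASS$ whose counters track inactive threads by top-of-stack and context-switch index, Rackoff's $\EXPSPACE$ bound for $\VASS$ non-termination finishes the upper bound, and the lower bound reduces from $\SRP[K]$ using the observation that the known $\TWOEXPSPACE$-hardness already holds for terminating $\DCPS$ (your first option for handling termination is exactly what the paper does).

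One quantitative slip needs fixing: you state that the downward closure of a context-free language incurs an ``at most doubly-exponential blow-up,'' yielding a $\VASS$ of doubly-exponential size, and then claim that composing this with the $\EXPSPACE$ bound for $\VASS$ non-termination gives $\TWOEXPSPACE$. Those two claims are inconsistent --- $\EXPSPACE$ applied to a doubly-exponential-size instance gives $3\mathsf{EXPSPACE}$. The correct bound (due to Bachmeier, Luttenberger, and Schlund, and used by the paper) is that an NFA for the downward closure of a context-free language is at most \emph{singly} exponential in the size of the $\PDA$; with that, the $\DCFS$ and hence the $\VASS$ are exponential-size and the composition does yield $\TWOEXPSPACE$. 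A minor further point: your non-termination gadget fires creation rules $g|\gamma\hookrightarrow g_\top|\gamma$ from the active thread, so you should also handle the case where the target state $g$ is witnessed at a schedule point or with an empty stack, e.g.\ by additionally spawning a dedicated spinner thread resumable only in $g_\top$; this costs no extra context switches.
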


\subsubsection*{Fairness}

An infinite run is \emph{fair} if, intuitively, 
any thread that can be executed is eventually executed by the scheduler. 
Fairness is used as a way to rule out non-termination due to uninteresting scheduler choices.

We say a thread $t= (\gamma w,i)$ is \emph{ready} at a configuration $c=(g,\#,\mmap)$
if $t\in\mmap$ and there is some rule $g \mapsto g' \lhd \gamma$ in $\Deltar$.
A thread $t$ is \emph{scheduled} at $c$ if the scheduler
step makes $t$ the active thread. 
A run is unfair to thread $t$ if it is ready infinitely often but never scheduled.
A \emph{fair} run $\rho$ is one which is not unfair to any thread.
Restricting our attention to $K$-context switch bounded runs gives us the
corresponding notion of fair $K$-context switch bounded runs. 

For fixed $K\in\nats$, the \emph{$K$-context bounded fair non-termination problem} $\FAIRNTERM[K]$ asks:
\begin{description}
  \item[Given] A $\DCPS$ $\cA$. 
  \item[Question] Is there an infinite, fair $K$-context switch bounded run?
\end{description}

Note that since our model does not have individual thread identifiers, fairness is defined only 
over equivalence classes of threads that have the same stack $w$ and the same context switch number~$i$.
The reason for our taking into account stacks and context switch numbers is the following.
It is a simple observation that there exists an infinite fair run in our sense
if and only if there exists a run in the corresponding system \emph{with thread identifiers}--that is
fair to each individual thread.
This is because an angelic scheduler could always pick the earliest spawned thread among
those with the same stack and context switch number.
Therefore, our results allow us to reason about multi-threaded systems with identifiers.

This raises the question of whether there are runs that are fair in our
sense, but where a non-angelic scheduler would still yield unfairness for some thread identity. In other words, is it possible that
 a fair run \emph{starves} a specific thread. 
For example, consider a program in which the main thread spawns two copies of a thread $\mathtt{foo}$.
Each thread $\mathtt{foo}$, when scheduled, simply spawns another copy of $\mathtt{foo}$ and terminates.
Here is a fair run of the program (we omit the global state as it is not relevant), where we have decorated the threads with
identifiers:
\begin{align*}
(\#, \multi{(\mathtt{main},0)^0}) \xRightarrow{*} & 
((\mathtt{main}, 0)^0, \multi{}) \xRightarrow{*}
(\#, \multi{(\mathtt{foo},0)^1,(\mathtt{foo},0)^2}) \xRightarrow{*} 
((\mathtt{foo},0)^2, \multi{(\mathtt{foo},0)^1}) \xRightarrow{*} \\
& 
(\#, \multi{(\mathtt{foo},0)^1,(\mathtt{foo},0)^3}) \xRightarrow{*}
((\mathtt{foo},0)^3, \multi{(\mathtt{foo},0)^1}) \xRightarrow{*} \ldots
\end{align*}
The run is fair, but a specific thread marked with identifier $1$ is never picked.

Formally, a thread $t=(w,i)$ is \emph{starved} in an infinite fair run $\rho = c_0 \Rightarrow c_1 \Rightarrow \ldots$ iff
there is some $j$ such that $c_{i}.\mmap(t) \geq 1$ for all $i \geq j$ and whenever $t$ is resumed at $c_k$ for $k \geq j$,
we have $c.\mmap(t) \geq 2$. 

For fixed $K\in\nats$, the \emph{$K$-bounded fair starvation problem} $\STARV[K]$ is
defined as follows:
\begin{description}
  \item[Given] A $\DCPS$ $\cA$. 
  \item[Question] Is there an infinite, fair $K$-context switch bounded run that starves some thread?
\end{description}

We show the following results.

\begin{theorem}[Fair Non-Termination]
\label{thm:fair-nterm}
For each $K\in \nats$, the problem $\FAIRNTERM[K]$ is decidable.
\end{theorem}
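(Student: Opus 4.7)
\medskip
\noindent\textbf{Proof plan.} The plan is to introduce an auxiliary infinite-state model, \emph{VASS with balloons} ($\VASSB$), in which each token is either an ordinary counter marker or is itself a VASS marking (a ``balloon''), and then prove decidability through a chain of three reductions:
(i) $\FAIRNTERM[K]$ for $\DCPS$ reduces to fair non-termination for $\VASSB$;
(ii) fair non-termination for $\VASSB$ reduces to configuration reachability for $\VASSB$;
(iii) configuration reachability for $\VASSB$ reduces to the ordinary reachability problem for $\VASS$, which is known to be decidable. Assembling the chain gives decidability of $\FAIRNTERM[K]$.

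The first reduction is where the $K$-context bound is exploited. For each pair $(g,\gamma)$ and each $1\le i\le K+1$, the language $L^{(i)}_{(g,\gamma)}$ of $\cP_{(g,\gamma)}$ is context-free, so by Theorem~\ref{th:parikh} its Parikh image is an effectively constructible semilinear set over the alphabet consisting of spawned top-of-stack symbols and of the guessed context-switch jumps $G\times\Gamma\times G \cup G\times\{\bot\}$. I would encode one ``run-segment'' of a thread, between two scheduler steps, by a $\VASSB$ transition whose effect on the ordinary counters records the jump information (``swapped out in state $g_1$, to be resumed in state $g_2$ with top-of-stack $\gamma$''), and whose balloon is precisely the Parikh image of the spawned threads produced in that segment, represented again as a multiset of future balloons. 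Because $K$ is fixed, a thread produces at most $K+1$ segments in total, so a finite control in the $\VASSB$ (essentially the state space $G\cup G\times\Gamma$) can stitch segments belonging to the same thread together. Fairness of the original $\DCPS$ translates into a fairness notion on $\VASSB$ transitions that fire segment rules whenever the corresponding ``ready'' counter and state are enabled.

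For step~(ii), I would follow the standard Valk--Jan\v{c}ar style reasoning: an infinite fair $\VASSB$ run can be ``compressed'' to a witness consisting of a finite prefix reaching some configuration $c$, followed by a cycle back to a configuration $c'\succeq c$ which, additionally, strictly increases the counters corresponding to every place whose enabledness must be honoured by the fairness condition. Guessing $c$ and $c'$ and asking for this strengthened reachability can be phrased as an instance of configuration reachability in an augmented $\VASSB$ that tracks the required monotone increase. Step~(iii) is the most delicate: I would show how to simulate a $\VASSB$ by a plain $\VASS$. The idea is that because balloons themselves are markings of a $\VASS$, and balloon tokens cannot interact with one another, the ``life of a balloon'' is a self-contained $\VASS$ computation which can be internalised into a global $\VASS$ by multiplexing the balloon's counters into their own copies and using end-to-end reachability witnesses for each balloon that is created and later consumed. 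Care is required to ensure that balloon creation and ``bursting'' (consumption) events are correctly paired up, but since only finitely many balloon-schemata appear in a $\VASSB$ description, the resulting $\VASS$ is of elementary size in the input.

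The main obstacle I anticipate is step~(iii): a naive simulation would need to track all currently live balloons simultaneously, and since their number is unbounded, this cannot be done by a fixed-dimensional $\VASS$. The trick must therefore be to \emph{linearise} the balloon computations in time, interleaving their internal steps with the outer $\VASS$ steps and using a sufficient set of ``commitment'' counters to guarantee that every balloon is eventually resolved---exploiting the fact that balloon computations are non-interacting, so their reorderings preserve global behaviour. Showing that this linearisation is sound and complete with respect to $\VASSB$ reachability is the crux of the proof and is where the decidability of $\VASS$ reachability is finally invoked.
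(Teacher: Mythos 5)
Your three-step architecture ($\DCPS\to\VASSB\to\VASSB$-reachability$\to\VASS$-reachability) is exactly the one the paper follows, and your step~(i) encoding of thread segments via Parikh images as balloons is essentially the paper's construction. However, there are two genuine gaps in the places where the real work happens.

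First, in step~(ii) your Valk--Jan\v{c}ar-style self-covering witness $c\autsteps c'$ with $c'\succeq c$ does not directly work for $\VASSB$: the natural order $\leq$ on configurations requires embedding the multiset of balloons of $c$ into that of $c'$, where each balloon itself carries a multiset, so verifying $c\le c'$ means matching up an \emph{unbounded} collection of pairs of balloons and comparing their contents --- something that cannot be phrased as a configuration-reachability query. The paper overcomes this with a \emph{token-shifting surgery}: it first normalises the $\VASSB$ to be zero-base and typed, then shows that any progressive run can be rearranged (moving all tokens of later balloons of the same type into one designated earlier balloon) so that infinitely many configurations contain \emph{only empty balloons}; at those configurations it suffices to compare pseudoconfigurations (counting balloons per state while ignoring contents), which is finite data and yields the $A,B$-witness that reachability can check. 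Without some such normalisation your witness cannot be verified.

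Second, your step~(iii) ``linearisation'' does not resolve the obstacle you yourself identify. A balloon's deflate operations are interleaved with the main run at specific points where the transferred tokens may be needed to enable subsequent transitions, so you cannot serialise each balloon's life into a self-contained end-to-end episode; and the number of simultaneously live non-empty balloons is a priori unbounded, so a fixed number of ``commitment'' counters does not suffice. The paper's actual key lemma is a \emph{quantitative} one: every reachability witness can be transformed into one that inflates at most $N$ non-empty balloons, with $N\le O(\exp_4(|\cV|))$, proved by an id-switching surgery whose applicability is guaranteed by Ramsey's theorem (colouring pairs of same-type balloons by the relative order of their deflates) plus a pigeonhole argument on balloon states. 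Only with this bound in hand can one allocate $N$ copies of the balloon places in a plain $\VASS$. This combinatorial bound is the crux of the decidability proof and is absent from your proposal. (A smaller omission: the paper also first strengthens fairness to ``progressiveness'' via a preprocessing of the $\DCPS$, which is what makes the subsequent witness characterisation clean; your fairness-to-$\VASSB$ translation glosses over threads that get permanently stuck.)
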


\begin{theorem}[Fair Starvation]
\label{thm:fair-starv}
For each $K\in \nats$, the problem $\STARV[K]$ is decidable.
\end{theorem}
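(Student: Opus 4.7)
The plan is to decide $\STARV[K]$ by reducing it to $\FAIRNTERM[K]$, which is decidable by \Cref{thm:fair-nterm}. Given a $\DCPS$ $\cA$, I will construct a $\DCPS$ $\cA'$ with the property that $\cA'$ has an infinite fair $K$-bounded run iff $\cA$ has an infinite fair $K$-bounded run that starves some thread. The key difficulty is that $\DCPS$ is identity-free: fairness and starvation are both defined modulo the equivalence that identifies threads with the same stack and context-switch number, so we cannot directly mark and follow ``one particular copy'' of a thread. To circumvent this, at a non-deterministically chosen step, $\cA'$ tags exactly one thread of its current multiset, where the tag becomes a new component of that thread's local configuration (so the tagged copy lives in a different equivalence class than the untagged copies with the same stack). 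After tagging, $\cA'$ is modified so that (a) the tagged thread can never be resumed, and (b) whenever the simulated scheduler would resume the untagged class $(w,i)$ of the tagged thread, $\cA'$ insists that an untagged twin with the same $(w,i)$ is present and is scheduled instead. Fairness of $\cA'$ then forces the tagged copy to remain present forever while its untagged class is picked infinitely often via twins---exactly the starvation condition of \cA.

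The naive version of this construction fails because the stack $w$ of the tagged thread can be arbitrarily long, so the twin-availability check cannot be implemented by the finite control of a $\DCPS$. This is where the two combinatorial insights mentioned in the introduction enter. First, in any infinite fair $K$-bounded run one can assume a bound on the number of threads each thread spawns; consequently only finitely many equivalence classes can serve as ``perpetually eligible witnesses'' for twins, so the tag needs only a bounded amount of summary information. Second, a Ramsey-style pumping argument: color pairs of indices along an infinite fair starving run by a finite invariant summarizing the tagged thread's stack-top, context-switch number, and the availability of twins for its class; extracting a monochromatic infinite subsequence yields a starving run in which only a finite number of equivalence-class summaries is ever relevant. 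Consequently $\cA'$ need only remember, beyond the tag, a finite abstraction of the tagged thread's class and the twin inventory, and the whole construction fits inside the $\DCPS$ formalism.

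The main obstacle will be the Ramsey argument itself. The coloring must be rich enough that a monochromatic infinite subsequence can actually be realized, via pumping, as a legitimate infinite fair run of $\cA'$ in which twin-availability holds at \emph{every} resumption of the tagged class (not merely infinitely often), and in which the overall run remains fair with respect to \emph{all} other classes in the system. Finding an invariant that is rich enough to certify starvation yet coarse enough to pump is the crux of the proof; the first insight (bounding per-thread spawns) is what keeps the coloring space finite, and hence what makes the Ramsey argument applicable. Once this pumping lemma is in place, the forward direction of the reduction $\cA \leadsto \cA'$ follows by tagging the witness copy of the starved thread at the stabilization point $j$ supplied by the starvation definition, while the backward direction follows by projecting away the tag. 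Decidability of $\STARV[K]$ is then immediate from \Cref{thm:fair-nterm}.
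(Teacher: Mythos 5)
There is a genuine gap. Your high-level architecture (reduce to fair/progressive termination, exploit bounded spawns, invoke Ramsey) matches the ingredients the paper advertises, but the crux of the proof is exactly the step you leave unspecified: how to certify, with finite data, that infinitely many witness executions all pass through \emph{one common, unboundedly long} stack content $w$ at context-switch number $i$. Your tagging scheme requires the construction to check, at each resumption of the starved class, that an untagged twin has the \emph{same stack} $w$ as the tagged thread; you correctly note this cannot be done by finite control, but your proposed repair does not actually address it. Saying that ``only finitely many equivalence classes can serve as perpetually eligible witnesses'' does not help, because the obstacle is not the number of classes but that a single class is indexed by an unbounded word. The paper never compares stacks at runtime at all. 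Instead it abstracts each witness execution by its \emph{type} and its \emph{production} truncated at a computable bound $B$, and proves (\cref{abstraction-consistency}) that this abstraction already determines whether a common $w$ exists, via a pumping lemma (\cref{pump-dcl}) for the rational sets $S_t\subseteq\Gamma^*\times\multiset{\Lambda}$ of \cref{stack-vector-rational}. Crucially, the Ramsey argument there is \emph{finite} Ramsey applied to marked positions \emph{inside a single stack word} $w$, producing an infix $y$ with $xyz\le_\frakS xy^\ell z$; it is not, as you propose, infinite Ramsey on pairs of indices along the run, which would do nothing about the unboundedness of $w$. The existence of $w$ is then checked offline by a regular-language intersection (\cref{consistency-decidable}) on a guessed finite tuple $\fraku$, and the runtime construction only verifies membership of each execution's abstraction in $\fraku$.

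Two further points where your sketch would break down. First, when a detected run is converted back into a genuinely starving run, each witness execution must be \emph{replaced} by one that really visits $w$; for the surrounding run to survive this surgery the replacement must produce a multiset that dominates the original \emph{and has the same support} (the order $\preceq_1$), otherwise progressiveness can be destroyed by newly introduced thread kinds --- your proposal has no analogue of this. The converse direction of \cref{abstraction-consistency} moreover only holds for \emph{finite} tuples (see the counterexample $S=\{(\mathtt{a}^n,n\cdot\multi{\mathtt{b}})\}$), which is why the paper needs shallow runs via \cref{dcps-spawn-bounded}; this, not ``keeping the coloring space finite,'' is the role of the bounded-spawn insight. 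Second, a single permanently tagged, never-resumable thread interacts badly with fairness and with the requirement that the starved class actually be resumed with a spare copy each time; the paper instead uses a \emph{rotating} frozen thread (\cref{liveness-freezing}): unfreezing $e_j$ simultaneously freezes $e_{j+1}$, which witnesses the chain of executions in the reformulation of \cref{starving-consistent} without ever comparing two stacks. Without the consistency framework and the word-level pumping lemma, your reduction to $\FAIRNTERM[K]$ cannot be completed.
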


Previously, decidability results were only known when $K=0$ \cite{GantyM12}.
Recall that a decision problem is \emph{nonelementary} if it is not in $\bigcup_{k\geq 0} k\mhyphen\EXPTIME$.
Our algorithms are nonelementary: they involve an (elementary) reduction to the reachability problem 
for vector addition systems with states (VASS).
This is unavoidable: already for $K=0$, the fair non-termination and fair starvation problems 
are non-elementary, because there is a reduction from the reachability problem for VASS~\cite{GantyM12}, which is non-elementary~\cite{CzerwinskiLLLM19}.

In the rest of the paper, we prove Theorems~\ref{thm:term}, \ref{thm:fair-nterm}, and \ref{thm:fair-starv}.

%

%
%
%
%

%

\section{Warm-Up: Non-termination} %
\label{sec:non_termination_and_boundedness}

In this section, we prove Theorem~\ref{thm:term}.
The theorem follows easily from previous
results for safety verification \cite{AtigBQ2009j,BaumannMajumdarThinniyamZetzsche2020a}.
We recall the main ideas as a step toward the more complex proof for
\emph{fair} termination. 

\subsection*{Downward Closures: From $\DCPS$ to $\DCFS$}
A $\DCPS$
$\cA = (G,\Gamma,\Delta,g_0,\gamma_0)$
is called a dynamic network of concurrent \emph{finite} systems ($\DCFS$) if
in each transition rule in $\Deltac\cup\Deltai$, we have $|w'| \leq 1$.
Intuitively, a $\DCFS$ corresponds to the special case where each
thread is a finite-state process (and each stack is bounded by $1$).

We reduce the $K$-bounded non-termination problem for $\DCPS$
to the non-termination problem for $\DCFS$.
Fix $K\in\nats$ and a $\DCPS$ $\cA = (G, \Gamma, \Delta, g_0,
\gamma_0)$.
The crucial observation of \citet{AtigBQ2009j} is that answer to the $K$-bounded
reachability problem remains unchanged if we allow threads to
``drop'' some spawned threads.
That is, for every $g|\gamma\hookrightarrow g'|w'\triangleright
\gamma'$, we also add the rule $g|\gamma\hookrightarrow g'|w'$ to $\Delta_{\mathsf{c}}$.
Informally, the ``forgotten'' spawned thread $\gamma'$ is never
scheduled.
Clearly, a global state is reachable in the original $\DCPS$ iff it
is reachable in the new $\DCPS$.

We observe that this transformation also preserves non-termination:
if there is a ($K$-bounded) non-terminating run in the original
$\DCPS$, there is one in the new one.

The ability to forget spawned tasks allows us to transform the
language $L^{(i)}_{(g,\gamma)}$ of each thread into a \emph{regular} language by
taking \emph{downward closures}.

We need some definitions.
For any alphabet $\Sigma$, 
define the subword relation $\subword \subseteq
\Sigma^*\times\Sigma^*$ as follows:
for every $u, v \in\Sigma^*$, we have $u\subword v$ iff
$u$ can be obtained from $v$
by deleting some letters from $v$. For example,
$acbba \subword bacbacbac$ but $abba \not \subword baba$.
The \textit{downward closure} $\dclosure{w}$ with respect to the
subword order of a word $w\in\Sigma^*$ is defined as $\dclosure{w} :=
\{w'\in\Sigma^* \mid w' \subword w\}$. The downward closure $\dclosure{L}$ of a language
$L\subseteq\Sigma^*$ is given by $\dclosure{L}:=\{ w'\in\Sigma^* \mid \exists w \in L \colon w' \subword
w\}$.
Recall that the downward closure $\dclosure{L}$ of any language $L$ is a regular
language \cite{haines1969free}.
Moreover, a finite automaton accepting the downward closure of a
context-free language can be effectively constructed \cite{Courcelle}.
The size of the resulting automaton is at most exponential in the size of
the $\PDA$ for the context-free language \cite{BachmeierLS2015}.

Now consider the following language:
\[
  \hat{L}_{(g,\gamma)} = \bigcup_{i=1}^{K+1} \bigg(
  \dclosure{L^{(i)}_{(g,\gamma)}} \cap \Big(\big(\Gamma^*\cdot (G\times \Gamma \times G)\big)^{i-1}(\Gamma^*\cdot G \times \set{\bot})\Big)\bigg)
\]
This language is regular and can be effectively constructed from the
$\PDA$ $\cP_{(g,\gamma)}$.
It accepts all behaviors of a thread that is context switched at
most $K+1$ times such that, by adding additional spawned tasks, one
gets back a run of the original thread in $\cA$.

The $\DCFS$ simulates the downward closure of the $\DCPS$ by simulating the
composition of the automata for each downward closure. 
The construction is identical to \cite[Lemma 5.3]{AtigBQ2009j}.
Thus, we can conclude with the following lemma.

\begin{lemma}\label{lem:dcps-to-dcfs}
      The $K$-bounded non-termination problem for $\DCPS$ can be
      reduced in exponential time to the non-termination problem for
      $\DCFS$.
      The resulting $\DCFS$ is of size at most exponential in the size
      of the $\DCPS$.
\end{lemma}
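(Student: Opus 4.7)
The plan is to follow the downward-closure strategy of Atig--Bouajjani--Qadeer, lifted from safety to $K$-bounded non-termination. First, I would enlarge $\Deltac$ with the ``forgetful'' creation rules: for every $g|\gamma\hookrightarrow g'|w'\triangleright \gamma'$, also add $g|\gamma\hookrightarrow g'|w'$. This transformation clearly preserves non-termination in both directions, because any ``dropped'' spawn can equivalently be reinstated in the original $\DCPS$ and then simply never scheduled, which is a legitimate scheduler choice and does not affect the infiniteness of the run. Hence $K$-bounded non-termination is invariant under taking downward closures of the per-thread trace languages, and we may replace each $L^{(i)}_{(g,\gamma)}$ by a subword-closed overapproximation without loss.

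Next, for every $(g,\gamma)\in G\times \Gamma$ I would use the fact that $\cP_{(g,\gamma)}$ has size polynomial in $|\cA|$, that $L^{(i)}_{(g,\gamma)}$ is context-free, and hence by Courcelle's theorem together with the size bound of Bachmeier--Luttenberger--Schlund that $\dclosure{L^{(i)}_{(g,\gamma)}}$ is accepted by an NFA $\cB^{(i)}_{(g,\gamma)}$ of size at most single-exponential in $|\cA|$. Intersecting with the regular shape constraint enforcing exactly $i-1$ swap-letters followed by a final $G\times\{\bot\}$-letter stays regular and single-exponential, and yields an NFA for $\hat{L}_{(g,\gamma)}$. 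Its letters are precisely the observations the surrounding $\DCPS$ needs to see: a symbol from $\Gamma$ represents the initial stack symbol of a newly spawned thread, and a letter from $G\times\Gamma\times G$ or $G\times\{\bot\}$ records a guessed context-switch jump.

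The $\DCFS$ $\cA'$ is then assembled by ``plugging'' these NFAs into the schedule discipline of $\cA$. Its stack alphabet is the disjoint union of the NFA state spaces; creation rules of $\cA'$ replay one transition of the currently active NFA, and whenever such a transition reads a $\Gamma$-letter $\gamma'$ it additionally spawns a new thread starting in the initial state of $\cB^{(1)}_{(g,\gamma')}$; interruption rules of $\cA'$ consume a $G\times\Gamma\times G$-letter, update the global state, and push the guessed resume target as the sole new top-of-stack of the swapped-out thread; resumption rules mirror those of $\cA$; termination rules fire on a $G\times\{\bot\}$-letter. Since every thread in $\cA'$ carries at most one stack symbol at a time, the $|w'|\le 1$ restriction of $\DCFS$ is respected. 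A routine polynomial computation on top of the single-exponential NFAs gives the claimed single-exponential size and construction time.

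For correctness, the easy direction projects an infinite $K$-bounded run of $\cA$ thread-by-thread: each thread's trace lies in some $L^{(i)}_{(g,\gamma)} \subseteq \dclosure{L^{(i)}_{(g,\gamma)}}$, so the NFAs can simulate these traces step-by-step, producing an infinite run of $\cA'$. The harder direction takes an infinite run of $\cA'$, recovers for each participating thread a word in $\hat{L}_{(g,\gamma)}$, picks a pre-closure witness in $L^{(i)}_{(g,\gamma)}$, and inserts the missing spawns using the forgetful rules while never scheduling those spawns. The main obstacle is that this second direction involves potentially infinitely many threads whose individual witnesses must be stitched into one coherent infinite $\DCPS$ run: I would handle this by a König-style construction, observing that any finite prefix of the $\cA'$-run mentions only finitely many threads, each of whose witness is independent of the others, so the finite lifts extend consistently to an infinite $K$-bounded run of $\cA$.
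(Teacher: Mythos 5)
Your proposal is correct and follows essentially the same route as the paper: add the forgetful spawn rules, observe that non-termination (like safety) is preserved under dropping spawns, replace each $L^{(i)}_{(g,\gamma)}$ by the exponential-size NFA for its downward closure intersected with the context-switch shape constraint, and plug these NFAs into a $\DCFS$ exactly as in the construction of \citet{AtigBQ2009j} that the paper cites. The extra detail you supply for the backward direction (recovering per-thread witnesses and never scheduling the reinstated spawns) is precisely what the cited construction provides, so there is no gap.
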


\subsection*{From $\DCFS$ Non-Termination to $\VASS$ Non-Termination}
A \emph{vector addition system with states} ($\VASS$) is a tuple
$V = (Q, P, E)$ where
$Q$ is a finite set of \emph{states}, $P$ is a finite set of
\emph{places}, 
and
$E$ is a finite set of edges of the form $q \autstep[\delta] q'$ where
$\delta \in \integs^P$.
A \emph{configuration} of the $\VASS$ is a pair $(q,u) \in Q \times \multiset{P}$.
The edges in $E$ induce a transition relation on configurations: 
there is a transition $(q, u)\autstep[\delta](q', u')$ if
there is an edge $q\autstep[\delta] q'$ in $E$ such that
$u'(p) = u(p) +\delta(p)$ for all $p\in P$.
A \emph{run} of the $\VASS$ is a finite or infinite sequence of
configurations $c_0 \autstep[\delta_0] c_1 \autstep[\delta_1]\ldots$.
The \emph{non-termination} problem for $\VASS$ asks, given a $\VASS$ and an initial configuration $c_0$, 
is there an infinite run starting from $c_0$. 

\begin{lemma}\label{lem:dcfs-to-vass}
The $K$-context bounded non-termination problem for $\DCFS$ can be reduced in polynomial
time to non-termination problem for $\VASS$.
\end{lemma}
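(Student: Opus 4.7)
\medskip

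\noindent\textbf{Proof plan.} The plan is to directly simulate a $\DCFS$ by a $\VASS$, exploiting the fact that in a $\DCFS$ every thread's local configuration has a stack of size at most $1$, and the context-switch counter is bounded by $K$. Hence the set of possible local configurations of a \emph{single} thread is drawn from the finite set $T = \Gamma \times \{0,1,\ldots,K\}$ (creation rules of the form $g|\gamma\hookrightarrow g'|\varepsilon$ produce an empty-stack configuration which is handled while the thread is active, so inactive threads always have exactly one stack symbol). This makes it possible to track the multiset of inactive threads as a $\VASS$ marking.

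Concretely, I would build the $\VASS$ $V=(Q,P,E)$ with place set $P=T$ and state set
\[
  Q \;=\; \bigl(G\times\{\#\}\bigr)\;\cup\;\bigl(G\times(\Gamma\cup\{\varepsilon\})\times\{0,\ldots,K\}\bigr),
\]
so that the $\VASS$ state records the global state together with either a schedule-point marker $\#$ or the local configuration $(\gamma,i)$ of the currently active thread, while the marking records the multiset of inactive thread local configurations. The initial $\VASS$ configuration is $((g_0,\#),\,\mathbf{1}_{(\gamma_0,0)})$, matching the initial $\DCPS$ configuration $\langle g_0,\#,\multi{(\gamma_0,0)}\rangle$.

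The edges of $E$ mirror each of the four rule families of the $\DCFS$:
\begin{itemize}
  \item a creation rule $g|\gamma\hookrightarrow g'|w'$ (with $|w'|\le 1$) yields an edge from $(g,\gamma,i)$ to $(g',w',i)$ with zero effect on $P$, for every $i\in\{0,\ldots,K\}$; if the rule additionally spawns $\gamma'$, the same edge has effect $+\mathbf{1}_{(\gamma',0)}$;
  \item an interruption rule $g|\gamma\mapsto g'|w'$ (with $|w'|=1$) yields, for each $i<K+1$, an edge from $(g,\gamma,i)$ to $(g',\#)$ with effect $+\mathbf{1}_{(w',i+1)}$; when $i+1>K$ there is no such edge, which is exactly what enforces the $K$-context bound;
  \item a resumption rule $g\mapsto g'\lhd\gamma$ yields, for each $i$, an edge from $(g,\#)$ to $(g',\gamma,i)$ with effect $-\mathbf{1}_{(\gamma,i)}$;
  \item a termination rule $g\mapsto g'$ yields an edge from $(g,\varepsilon,i)$ to $(g',\#)$ with zero effect.
\end{itemize}
A straightforward bisimulation between reachable configurations of the $\DCFS$ and those of $V$ (pairing the multiset $\mmap$ with the marking and the global/active component of the $\DCPS$ configuration with the $\VASS$ state) shows that there is an infinite $K$-context-switch bounded run of the $\DCFS$ from its initial configuration if and only if there is an infinite $\VASS$ run from $((g_0,\#),\mathbf{1}_{(\gamma_0,0)})$. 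Since $|Q|,|P|$ and $|E|$ are all polynomial in $|\cA|$ and $K$, the reduction is polynomial.

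There is essentially no hard step here: the only thing to be careful about is that both reductions preserve the $K$-context switch restriction. In the forward direction this is immediate because the $\VASS$ construction simply omits the edges that would increment $i$ beyond $K$; in the backward direction, the state component $(g,\gamma,i)$ of the $\VASS$ certifies that the active thread has been context-switched at most $i\le K$ times, and only resumption of inactive threads in places $(\gamma,i)$ with $i\le K$ is possible by construction. No further argument is needed.
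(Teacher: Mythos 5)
Your overall approach is the same as the paper's: a direct product construction in which the $\VASS$ state carries the global state together with the active thread's (bounded) local configuration, and the marking counts inactive threads by their stack symbol and context-switch number. However, there is a genuine error in how you enforce the context bound. You drop the interruption edge when the active thread has context-switch number $i=K$, claiming this ``is exactly what enforces the $K$-context bound.'' It is not: the $K$-bound only constrains the context-switch number of the \emph{active} (i.e.\ resumed) thread. A thread that is active at context-switch number $K$ may still be swapped out; it then sits in the bag with number $K+1$ forever, unresumable but harmless, and crucially the {\sc Swap} still changes the global state. The paper's construction therefore uses places $(\Gamma\cup\{\varepsilon\})\times\{0,\ldots,K+1\}$, keeps the interruption edges for all $i\in\{0,\ldots,K\}$ (depositing tokens at level $i+1$, possibly $K+1$), and simply has no resumption edge consuming tokens at level $K+1$.

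This is not cosmetic: your reduction gives the wrong answer on concrete instances. Take $K=0$, $G=\{g_0,g_1,g_2\}$, $\Gamma=\{a,b\}$, $\gamma_0=a$, with rules $g_0\mapsto g_0\lhd a$ and $g_1\mapsto g_1\lhd b$ in $\Deltar$, $g_0|a\hookrightarrow g_2|a\triangleright b$ and $g_1|b\hookrightarrow g_1|b$ in $\Deltac$, and $g_2|a\mapsto g_1|a$ in $\Deltai$. The $\DCFS$ has an infinite $0$-bounded run: resume $a$, spawn $b$, swap $a$ out (it lands at context-switch number $1$ and is never resumed, but the state becomes $g_1$), then resume $b$, which loops forever. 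Every configuration with an active thread has $i=0$, so the run is $0$-bounded. In your $\VASS$ there are no interruption edges at all when $K=0$, so state $(g_1,\#)$ is unreachable and the only run dead-ends in $(g_2,a,0)$; your $\VASS$ has no infinite run. The fix is exactly the paper's: extend the places to context-switch number $K+1$, keep the interruption edges for $i\le K$, and rely on the absence of resumption edges from level $K+1$ to enforce the bound.
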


\begin{proof}
  Let $\cA = (G, \Gamma, \delta, g_0,\gamma_0)$ be a $\DCFS$.
  We define a $\VASS$ $V(\cA) = (G\times (\Gamma\times \set{0,\ldots, K}\cup \set{\#}), (\Gamma \cup \set{\varepsilon})\times\set{0,\ldots, K+1}, E)$.
  Intuitively, a configuration $((g,\gamma, i), u)$ of the $\VASS$ represents a configuration of
  the $\DCFS$ where the global state is $g$, the active thread has stack $\gamma$ and has been previously context switched $i$ times,
  and for each $\gamma'\in\Gamma$ and $i\in\set{0,\ldots, K+1}$, the value $u(\gamma',i)$ represents the number of
  pending threads with stack $\gamma'$ which have each been context switched $i$ times.
  A global state $(g,\#)$ indicates a state where the scheduler picks a new thread.
  The edges in $E$ update the configurations to simulate the steps of the $\DCFS$. 
  
  For each transition $g|\gamma\hookrightarrow g'|\gamma' \in \Deltac$ and for each $i\in\set{0,\ldots,K}$, the $\VASS$ has
  a transition that changes $(g,\gamma,i)$ to $(g',\gamma',i)$.
  For each transition  $g|\gamma\hookrightarrow g'|\gamma'\triangleright\gamma'' \in \Deltac$ and for each $i\in\set{0,\ldots,K}$, the $\VASS$ has
  a transition that changes $(g,\gamma,i)$ to $(g',\gamma',i)$ and puts a token in $(\gamma'',0)$.
  For each transition $g|\gamma \mapsto g'|\gamma' \in \Deltai$ and for each $i\in\set{0,\ldots, K}$, the $\VASS$ has a transition that changes $g$ to $(g', \#)$
  while putting a token into $(\gamma',i+1)$. 
  For each $g\mapsto g'\lhd \gamma \in \Deltar$, there is a transition $(g,\#)$ to $(g', \gamma, i)$ that takes a token from $(\gamma,i)$.
  For each $g\mapsto g' \in \Deltat$, there is a transition $(g,\varepsilon,i)$ to $(g', \#)$.

Clearly, there is a bijection between the runs of
$\cA$ and the runs of the $\VASS$ from $((g_0,\#),\multi{\gamma_0,0})$.
Thus, there is an infinite run in $\cA$ iff there is an
infinite run in $V(\cA)$ from $((g_0,\#),\multi{\gamma_0,0})$.
\end{proof}

\subsection*{Proof of Theorem~\ref{thm:term}}
The $\TWOEXPSPACE$ upper bound follows by combining
Lemmas~\ref{lem:dcps-to-dcfs}, \ref{lem:dcfs-to-vass}, and the
$\EXPSPACE$ upper bound for the non-termination problem for $\VASS$ \cite{Rackoff78}. 

The $\TWOEXPSPACE$ lower bound follows from the observation made 
already in \cite{BaumannMajumdarThinniyamZetzsche2020a} that the $\TWOEXPSPACE$-hardness of 
$K$-bounded reachability already holds for \emph{terminating}
$\DCPS$, in which every run is terminating.
It is now a simple reduction to take an instance of the $K$-bounded
state reachability problem for a terminating $\DCPS$ and add a
``gadget'' that produces an infinite run whenever the target global
state is reached.

\section{Fair Non-Termination}

We now turn to proving Theorem~\ref{thm:fair-nterm}.
Unfortunately, fair termination is not preserved under downward closure.
The example in Section~\ref{sec:intro} has no fair infinite run, since eventually 
(under fairness), \texttt{bit} is set to $1$ by the instance of \texttt{bar} and the program terminates.
However, the downward closure that omits \texttt{bar} has a fair infinite run.
Thus, we cannot replace the $\PDA$s for each thread with finite-state automata and there is no
obvious reduction to $\VASS$.

Our proof is more complicated.
First, we introduce an extension, VASS with \emph{balloons} ($\VASSB$), of $\VASS$ (Section~\ref{sec:vassb}).
A $\VASSB$ extends a $\VASS$ with balloon states and balloon places, and allows keeping
multisets of state-vector pairs over balloons.
We can use this additional power to store spawned threads.
As we shall see (Section~\ref{sec:dcps-to-vassb}), we can reduce $\DCPS$ to $\VASSB$.
Later, we shall show decidability of fair infinite behaviors for $\VASSB$, completing the proof. 

\subsection{VASS with Balloons}
\label{sec:vassb}

\def \Op{\mathsf{OP}}

A \emph{VASS with balloons} ($\VASSB$) is a tuple $\mathcal{V} =(Q,P,\bbq,\bbp,E)$,
where
$Q$ is a finite set of \emph{states},
$P$ is a finite set of \emph{places},
$\bbq$ is a finite set of \emph{balloon states},
$\bbp$ is a finite set of \emph{balloon places},
and
$E$ is a finite set of edges of the form $q\autstep[\op]q'$,
  where $\op$ is one of a finite set $\Op$ of operations of the
  following form:
  \begin{enumerate}
  \item $\op=\delta$ with $\delta\in\integs^P$,
  \item $\op=\newb(\sigma,S)$, where $\sigma\in\bbq$ and
    $S\subseteq\nats^{\bbp}$ is a semi-linear subset of
    $\nats^{\bbp}$.
  \item $\op=\deflateb(\sigma,\sigma', \pi,p)$, where
    $\sigma,\sigma'\in\bbq$, $\pi\in\bbp$, $p\in P$.
  \item $\op=\burstb(\sigma)$, where $\sigma\in\bbq$.
  \end{enumerate}

A \emph{configuration} of a $\VASSB$ is an element of
$Q\times \multiset{P}\times\multiset{\bbq\times \multiset{\bbp}}$.
That is, a configuration $c = (q, \mmap, \nmap)$ consists of a state $q\in Q$, a multiset $\mmap\in \multiset{P}$, 
and a multiset $\nmap \in \multiset{\bbq \times \multiset{\bbp}}$ of \emph{balloons}.
We assume $\nmap$ has finite support.
A \emph{semiconfiguration} is a configuration $(q,\mmap, \emptyset)$.
For semiconfigurations, we simply write $(q,\mmap)\in Q\times\multiset{P}$. 
For a configuration $c$, 
we write $c.q$, $c.\mmap$, and $c.\nmap$ to denote the components of $c$. 
For a balloon $b\in \bbq\times \multiset{\bbp}$,
we write $b.\barq$ and $b.\kmap$ to indicate its balloon state and
contents respectively and 
write $c.\nmap(b)$ for the number of balloons $b$ in $c$.

The edges in $E$ define a transition relation on configurations.
For an edge $q\autstep[\op]q'$, and configurations 
$c=(q,\mmap,\nmap)$ and $c'=(q',\mmap',\nmap')$, we define
$c \autstep[\op] c'$ iff one of the following is true:
\begin{enumerate}
\item If $\op=\delta\in\integs^P$ and 
$\mmap'=\mmap+\delta$ and $\nmap'=\nmap$.
  
\item If $\op=\newb(\sigma,S)$ and
  $\mmap'=\mmap$ and $\nmap'=\nmap+\multi{(\sigma,\kmap)}$ for some $\kmap\in S$.  
  That is, we create a new balloon with state
  $\sigma$ and multiset $\kmap$ for some $\kmap\in S$.
  
\item If $\op=\deflateb(\sigma,\sigma',\pi,p)$ and 
  there is a balloon $b=(\sigma,\kmap)\in \bbq\times\multiset{\bbp}$ with $\nmap(b)\ge 1$ and
  $\mmap'=\mmap+\kmap(\pi)\cdot\multi{p}$ and
  $\nmap'=(\nmap-\multi{b})+\multi{(\sigma',\kmap')}$,
  where $\kmap'(\pi) = 0$ and $\kmap'(\pi') =\kmap(\pi')$ for all $\pi'\in\bbp\setminus\set{\pi}$.
  That is, we pick a balloon $(\sigma,\kmap)$ from 
  $\nmap$, transfer the contents in place $\pi$ from $\kmap$
  to place $p$ in $\mmap$, and update the balloon state $\sigma$ to $\sigma'$.
  Here we say the balloon $(\sigma,\kmap)$ was \emph{deflated}.
  
\item If $\op=\burstb(\sigma)$ and
  there is a balloon $b=(\sigma,\kmap)\in \bbq\times\multiset{\bbp}$ with $\nmap(b) \geq 1$ and
  $\mmap'=\mmap$ and $\nmap' = \nmap - \multi{b}$.
  This means we pick some balloon $b$ with state $\sigma$ from our multiset $\nmap$ of balloons and remove it, 
  making any tokens still contained in its balloon places disappear as well.
  Here we say the balloon $b$ is \emph{burst}.
\end{enumerate}
The edge set $E$ is the disjoint union of the sets $E_p,E_n,E_d,E_b$
which stand for the edges with operations from (1),(2),(3),(4)
respectively.
We write $c\autstep c'$ if $c\autstep[\op]c'$ for some edge $q\autstep[\op]q'$ in $E$.
A \emph{run} $\rho = c_0 \autstep[\op_0] c_1 \autstep[\op_1] c_2 \autstep[\op_2] \cdots$ is a
finite or infinite sequence of configurations.
The size of $\mathcal{V}=(Q,P,\bbq,\bbp,E)$ is given by
$|\mathcal{V}|=|Q|+|P|+|\bbq|+|\bbp|+|E|$. 

An infinite run $\rho$ is \emph{progressive} iff the following holds:
\begin{enumerate}
  \item For every configuration $c_i=(q_i,\mmap_i,\nmap_i)$ and every balloon
    $b =(\sigma,\kmap) \in \bbq\times\multiset{\bbp}$ with $\nmap_i(b) \geq 1$ there
    is a $c_j$, $j > i$, such that $\op_j$ either bursts or deflates
    $b$.
  \item Moreover, for every configuration $c_i=(q_i,\mmap_i,\nmap_i)$ and every place
    $p \in P$ with $\mmap_i(p) \geq 1$ there is a $c_j$, $j > i$, such that a token is removed from $p$; that is,
    $\op_j \equiv \delta$ with $\delta(p) < 0$.
\end{enumerate}
We define the balloon-norm of a configuration $c =(q,\mmap,\nmap)$ as $\norm{c} = \max\set{ \sum_{p\in\bbp} \kmap(p)   \mid \exists \barq\; \nmap(\barq,\kmap) > 0}$.
A progressive run is \emph{shallow} if there is a number $B\in\nats$
such that
$\norm{c_j} \leq B$ for all $j\geq 0$. In other words, shallowness of
a run means
that each balloon in every configuration on the run contains at most $B$ tokens in the balloon places.
Note that this does not mean the size of the configurations become bounded:
the number of balloons and the number of tokens in $P$ can still be
unbounded. 

The \emph{progressive run problem for $\VASSB$} is the following:
\begin{description}
\item[Given] A $\VASSB$ $\mathcal{V}=(Q,P,\bbq,\bbp,E)$ and an initial
semiconfiguration $c_0$.
\item[Question] Does $\mathcal{V}$ have an infinite progressive run starting from $c_0$?
\end{description}

In \cref{sec:configReach}, we shall prove the following theorems.

\begin{theorem}
\label{th:vassb-fair-term}
The progressive run problem for $\VASSB$ is decidable.
\end{theorem}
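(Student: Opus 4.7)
The plan is a two-step reduction. First, I would reduce the progressive run problem for $\VASSB$ to the configuration reachability problem for $\VASSB$. Second, I would reduce $\VASSB$ configuration reachability to the (known decidable) reachability problem for $\VASS$. Composing these two reductions yields the decidability claim.

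For the first step, I plan a lasso argument tailored to the progressive semantics. The target lemma is: an infinite progressive run from $c_0$ exists iff there is a finite run $c_0 \Rightarrow^* c \Rightarrow^* c'$ such that (i) $c.q = c'.q$, (ii) $c.\mmap \leq c'.\mmap$, (iii) $c.\nmap \leq c'.\nmap$ as multisets of balloons, and (iv) along the inner path from $c$ to $c'$, every token of $c.\mmap$ is eventually removed and every balloon of $c.\nmap$ is eventually deflated or burst. The forward direction iterates the loop $c \Rightarrow^* c'$, with condition~(iv) ensuring that every resource present at the start of an iteration is processed inside that iteration, which delivers progressivity across the concatenation. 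The backward direction relies on a well-quasi-order argument on the sequence of configurations of any infinite progressive run (using Dickson's lemma together with the natural order on balloons by state and content) to extract such $c$ and $c'$. Condition~(iv) is the subtle one; I would enforce it by instrumenting the $\VASSB$ with tags that mark the specific tokens and balloons present at $c$ and then asking, via configuration reachability, whether a $c'$ with no tagged resource remaining can be reached. A single configuration reachability query in the instrumented $\VASSB$ thus captures existence of a good lasso.

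The second step, reducing $\VASSB$ configuration reachability to $\VASS$ reachability, is where I expect the main technical difficulty. A $\VASSB$ configuration has unboundedly many balloons, each with unbounded content drawn from a semi-linear set, and a single $\deflateb$ operation transfers the entire content of one balloon place into a main place atomically---an operation a plain $\VASS$ cannot directly perform. The plan is to exploit three structural features of $\VASSB$: balloons have only finitely many balloon states, they cannot fire transitions of their own, and their initial contents come from semi-linear sets. Using these, I would encode each balloon as a dedicated ``side gadget'' in the simulating $\VASS$: its initial content is produced incrementally by a subautomaton built from the base and period vectors of the semi-linear set, its state changes induced by $\deflateb$ and $\burstb$ become gadget transitions, and the atomic transfer at deflation is simulated by draining an escrow place token by token into the target main place, with a handshake that forces the main simulation to wait for the drain to complete. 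The hardest part is making this simulation faithful in both directions---ruling out spurious behaviours such as premature leaks, partial transfers, or tokens surviving after a burst---and lifting this faithfulness to exact reachability of the target $\VASSB$ configuration. Once the encoding is constructed and proven sound and complete, the theorem follows from decidability of $\VASS$ reachability.
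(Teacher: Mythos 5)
Your overall architecture (progressive runs $\to$ $\VASSB$ configuration reachability $\to$ $\VASS$ reachability) matches the paper's, but both of your reduction steps have genuine gaps, and in each case the missing piece is the main technical idea of the corresponding step in the paper.

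First, your lasso characterization is not correct as stated. In the forward direction, the covering $c.\nmap \preceq c'.\nmap$ leaves a surplus of balloons in $c'$ that are not in the image of the embedding, and these may be \emph{non-empty} with arbitrary contents. When you iterate the loop, each iteration replays the same deflate/burst operations on the balloons matched to $c.\nmap$; the surplus balloons are never touched, and a fresh surplus is added at every round. Since a non-empty balloon with specific contents is not interchangeable with another balloon of the same state (a $\deflateb$ transfers its actual contents), you cannot redirect operations to the surplus without invalidating the rest of the run. Hence the unrolled run violates progressiveness. Separately, even granting the characterization, you only explain how to check condition~(iv) by tagging; you do not explain how a single configuration-reachability query can verify the covering $c.\nmap \preceq c'.\nmap$ itself, which requires matching an unbounded collection of balloons with content domination. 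The paper's solution to both problems is a \emph{token-shifting surgery} (after normalizing the $\VASSB$ to be zero-base and typed): it rearranges balloon contents so that the two witness configurations contain only \emph{empty} balloons. Then surplus balloons are interchangeable (so the unrolling is progressive), and the covering reduces to comparing counts per balloon state via $\leq_{\pc}$, which the two-copy reachability gadget can check.

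Second, your $\VASSB$-to-$\VASS$ encoding with ``a dedicated side gadget per balloon'' cannot be realized: a run may inflate unboundedly many balloons whose contents must be tracked independently (each $\deflateb$ transfers the content of one \emph{specific} balloon's place), while a $\VASS$ has a fixed finite set of places. The step you are missing is a bound: the paper proves (via a Ramsey-theorem argument combined with id-switching and token-shifting surgeries on typed, zero-base $\VASSB$) that if a semiconfiguration is reachable, it is reachable by a run inflating at most $N$ \emph{non-empty} balloons, with $N$ computable (quadruply exponential). Only with this bound can one allocate $|\bbq|\cdot N\cdot|\bbp|$ places for non-empty balloon contents and count empty balloons per state. (Your ``handshake that forces the drain to complete'' is also not implementable without zero tests, though that part is repairable by deferring the emptiness check to the final exact-reachability query.) Without the $N$-balloon bound, the second reduction does not get off the ground.
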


The following is a by-product of the proof of
\cref{th:vassb-fair-term}, which will be used in
\cref{starvation}.
\begin{theorem}
\label{th:vassb-bounded-balloons}
A $\VASSB$ $\mathcal{V}$ has a progressive run iff it has a shallow progressive
run.
\end{theorem}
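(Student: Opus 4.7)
The direction ``shallow implies progressive'' is immediate from the definitions. For the converse, given an infinite progressive run $\rho = c_0 \autstep[\op_0] c_1 \autstep[\op_1] \cdots$ of $\mathcal{V}$, my plan is to construct a shallow progressive run by combining a structural analysis of balloon lifetimes with the reduction that will be used in the proof of \cref{th:vassb-fair-term}.

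First, I would examine the life cycle of a single balloon $b$ in $\rho$. By progressivity, every balloon appearing in a configuration is, at a later step, either burst or deflated; iterating (and using that deflation changes the balloon's state), each balloon must eventually undergo a $\burstb$ step after at most a bounded sequence of $\deflateb$ operations, since each deflation zeroes one balloon place and advances the balloon state within the finite set $\bbq$. Consequently, any tokens remaining in balloon places at the $\burstb$ step are simply discarded: they contribute nothing to the rest of the run. The only effect the initial choice of contents $\kmap \in S$ in a $\newb(\sigma, S)$ step has on the rest of $\rho$ is through the tokens transferred into regular places by subsequent deflations of $b$.

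Second, I would exploit the semi-linearity of $S$, writing it as a finite union of linear sets $\mmap_{0} + \nats \mmap_{1} + \cdots + \nats \mmap_{n}$. Intuitively, any large choice $\kmap = \mmap_0 + t_1 \mmap_1 + \cdots + t_n \mmap_n$ in $\rho$ can be replaced by several creations of smaller balloons whose deflations aggregate to the same tokens in regular places; combined with Dickson's lemma on $\nats^{\bbp}$, this suggests that at the cost of using more $\newb$ steps, all balloons can be kept under a uniform bound $B$ that depends only on the base vectors of the semi-linear sets appearing in $\mathcal{V}$'s edges and on the finite summaries tracked by the reduction.

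The main obstacle is the interdependence across balloons: shrinking one balloon's content reduces the tokens transferred into regular places, which may be required to fire later edges or to satisfy the progressivity obligation on other balloons. This is precisely the difficulty the decidability proof of \cref{th:vassb-fair-term} must already overcome. The reduction there represents each balloon only by its (bounded) balloon-state trajectory together with a bounded summary of its contents, and any witness on the target side lifts back to a progressive run of $\mathcal{V}$ in which every balloon has content uniformly bounded by some $B$. Thus \cref{th:vassb-bounded-balloons} drops out as a by-product of the construction used to prove \cref{th:vassb-fair-term}, without needing a separate pumping argument.
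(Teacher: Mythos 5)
Your overall instinct---that the theorem should fall out of the finite-witness machinery behind \cref{th:vassb-fair-term} rather than a separate pumping argument---matches the paper's framing. But two of your intermediate steps are wrong, and the third is too vague to carry the proof. First, your claim that progressiveness forces each balloon to be burst after a bounded sequence of deflates is false: the progressiveness condition only requires that a balloon present in a configuration be burst \emph{or deflated} later, and nothing prevents a balloon from undergoing infinitely many (eventually trivial) $\deflateb$ operations without ever being burst---$\deflateb(\sigma,\sigma',\pi,p)$ may have $\sigma'=\sigma$ or cycle through $\bbq$, and the same balloon place can be deflated repeatedly. The paper has to do real work here: it first normalizes $\cV$ to a zero-base, typed $\VASSB$ and then explicitly replaces each balloon with an infinite trivial-deflate suffix by an ordinary $\VASS$ token, so that one may assume every balloon is eventually burst. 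Second, your plan to split a large balloon $\kmap=\mmap_0+t_1\mmap_1+\cdots+t_n\mmap_n$ into several smaller balloons goes in the wrong direction and does not type-check: the individual pieces need not belong to the linear set (only the whole sum does, unless the base vector is zero), each piece would have to replay the same deflate schedule, and multiplying the number of $\newb$ steps changes the control-state sequence of the run. The paper's surgery is the opposite: it \emph{merges} the contents of later balloons of the same type into one earlier balloon (legal precisely because zero-base linear sets are closed under addition), emptying all the others.

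The actual mechanism by which shallowness emerges is also different from what you describe. The reduction does not track a ``bounded summary'' of balloon contents; balloons in the witness may carry arbitrarily large contents. Rather, the token-shifting surgery produces infinitely many configurations containing \emph{only empty balloons}, so that the wqo $\leq_{\pc}$ on pseudoconfigurations (which ignores contents) yields a finite self-covering segment, the $A,B$-witness. Shallowness is then automatic for a separate reason: the witness is a \emph{finite} run, so it mentions only finitely many balloon contents, and the unrolling of the witness into an infinite progressive run reuses exactly those contents, bounding every balloon by the maximum occurring in the witness. Your proposal never isolates this ``finite witness, hence bounded contents upon unrolling'' step, which is the entire content of the theorem.
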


\def \cV{\mathcal{V}}
\def \cT{\mathcal{T}}
\def \SLA{\mathsf{SL}(\cA)}
\def \Ginf{G_{\mathsf{inf}}}

\subsection{From $\DCPS$ to $\VASSB$}\label{sec:dcps-to-vassb}

Instead of reducing fairness for $\DCPS$ to $\VASSB$, we would like to use a stronger notion, which simplifies many of our proofs. To this end, we introduce the notion of progressiveness that we already defined for $\VASSB$ now for $\DCPS$ as well: given a bound $K \in \nats$, an infinite run $\rho$ of a $\DCPS$ is called \emph{progressive} if the rule {\sc Term} is only ever applied when the active thread is at $K$ context switches, and for every local configuration $(w,i)$ of an inactive thread in a configuration of $\rho$, there is a future point in $\rho$ where the rule {\sc Resume} is applied to $(w,i)$, making it the local configuration of the active thread.

Intuitively, no type of thread can stay around infinitely long in a progressive run without being resumed, and every thread that terminates does so after exactly $K$ context switches. Note that progressiveness is a stronger condition than fairness, because it does not allow threads to ``get stuck'' or go above the context switch bound $K$. However, we can always transform a $\DCPS$ where we want to consider fair runs into one where we can consider progressive runs instead. 
The transformation is formalized by the following lemma.

\begin{lemma} \label{lem:progDCPS}
  Given a bound $K \in \nats$ and a $\DCPS$ $\cA$, we can construct a $\DCPS$ $\widetilde{\cA}$ such that:
  \begin{itemize}
    \item $\cA$ has an infinite fair $K$-context switch bounded run iff $\widetilde{\cA}$ has an infinite progressive $K$-context switch bounded run.
    \item $\cA$ has an infinite fair $K$-context switch bounded run that starves a thread iff $\widetilde{\cA}$ has an infinite progressive $K$-context switch bounded run that starves a thread.
  \end{itemize}
\end{lemma}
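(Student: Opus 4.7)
The plan is to augment $\cA$ with two mechanisms, each addressing one of the two ways a fair run can fail to be progressive: a thread may terminate at fewer than $K$ context switches, or a thread in the multiset may never be resumed because no applicable Resume rule ever becomes enabled. For the first issue I introduce a fresh bottom-of-stack marker $\bot$, placed beneath every thread's stack by modifying the initial stack symbol and the spawn rules (spawned threads start with a symbol $\tilde{\gamma'}$ that is rewritten to $\gamma'\bot$ on its first expansion). I also augment the global state to $G \cup (G \times \{0, \ldots, K\})$, with the second component tracking the currently active thread's context switch count, and I propagate this count through Swap and Resume by tagging each top-of-stack symbol with the count its thread will carry when next resumed. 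The translated Term rule is restricted to fire only from states $(g, K)$ after a cleanup rule pops $\bot$, and I add idle Swap/Resume pairs that let a thread with $\bot$ on top bump its count by one without changing anything else; a thread that would terminate in $\cA$ at count $i < K$ must cycle $K - i$ times on $\bot$ in $\widetilde{\cA}$ before terminating. For the second issue I add wildcard Swap/Resume pairs: for each triple $(g, \gamma, i)$ with $i < K$, rules permitting any multiset thread to be resumed and immediately re-swapped without changing the configuration (apart from the count tag), so that any multiset thread is resumable at any point and can be wildcard-cycled to count $K$ and terminated.

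For the forward direction I take a fair $K$-bounded run of $\cA$ and simulate each step in $\widetilde{\cA}$, inserting $K - i$ idle $\bot$-cycles before each simulated Term of a thread at count $i < K$, and inserting wildcard cycles to process each multiset thread that is never scheduled in the original run; the result is a progressive $K$-bounded run of $\widetilde{\cA}$ by construction, and $K$-boundedness is preserved because padding never exceeds $K$. For the reverse direction I project a progressive $\widetilde{\cA}$-run back to $\cA$ by deleting all idle, wildcard, and $\bot$-handling steps; the surviving steps correspond to original rules and form a valid $\cA$-run. Fairness of this projection follows because any thread ready in the projected run has an original Resume rule applicable in $\widetilde{\cA}$ as well, and I will show that any progressive $\widetilde{\cA}$-run can be rewritten so that an original Resume is used in place of a wildcard Resume whenever the former is applicable; this rewriting preserves progressiveness while ensuring the projected run is fair.

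The starvation case follows via the same simulation/projection, because the idle- and wildcard-cycle rules never create or destroy copies of a thread type asymmetrically, so type-level starvation (as defined via multiset counts) is preserved in both directions, up to the bijective correspondence between $\cA$-types $(w, i)$ and their $\widetilde{\cA}$-counterparts carrying the $\bot$ marker and count tags. The main obstacle is the reverse-direction fairness argument, namely showing that wildcards can always be replaced by original Resumes whenever applicable: for this I appeal to a canonical-form reorganization, observing that original and wildcard Resume have the same effect on the global state and on the stack (modulo $\bot$ and tags), so swapping one for the other at a single step preserves progressiveness and can be iterated to eliminate unnecessary wildcards. A secondary but routine obstacle is the propagation of count tags through pop rules, which I handle by allowing creation rules to match any tag on the consumed top symbol and always re-tag pushed symbols with the current state's count.
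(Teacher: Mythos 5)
Your construction matches the paper's on the first failure mode (threads terminating below $K$ context switches): the bottom-of-stack marker $\bot$, the context-switch counter carried in the global state and top-of-stack symbol, and the idle padding up to $K$ before termination are all essentially what the paper does. The gap is in your treatment of the second failure mode, threads that are never resumed. Your wildcard Swap/Resume rules let \emph{any} pending thread be resumed, cycled to count $K$, and terminated at \emph{any} point, with no check that the thread was genuinely stuck. This breaks the reverse direction of the equivalence. Concretely, take the \texttt{foo}/\texttt{bar} program from the introduction: it has no fair infinite run, but your $\widetilde{\cA}$ has a progressive one --- wildcard-terminate \texttt{bar} immediately and let \texttt{foo} respawn itself forever; every pending thread is eventually resumed (via wildcard or genuinely), so the run is progressive, yet its projection leaves \texttt{bar} in the bag while it is ready infinitely often, i.e.\ the projected run is unfair. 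Your proposed repair --- rewriting a wildcard Resume into an original Resume whenever the latter is applicable --- does not go through, because the two are not locally interchangeable: an original Resume makes the thread active and forces it to execute real transitions (changing the global state, spawning threads, consuming stack), whereas your wildcard resume-and-reswap is a no-op. Substituting one for the other changes the entire continuation of the run, so the claimed ``canonical-form reorganization'' is not a valid local surgery.

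What is missing is a \emph{guess-and-verify} mechanism certifying that a force-terminated thread really was stuck. The paper handles this by recording, for each thread it decides to kill, its top-of-stack symbol $\gamma$ as a marker thread $\bar{\gamma}$ in the bag, and by carrying a set $G'\subseteq G$ in the global state that constrains which global states may occur at schedule points from some moment on; processing the marker $\bar{\gamma}$ shrinks $G'$ to the states having no Resume rule for $\gamma$, and progressiveness forces every marker to be processed. This ties the permission to kill a thread to a verifiable commitment that the thread would only finitely often have been ready, which is exactly the condition under which fairness tolerates never scheduling it. Without some such verification, the ``only if'' direction of both bullet points fails. (Your starvation argument inherits the same problem, since it relies on the same projection.)
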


\subsubsection*{Idea}
To prove \cref{lem:progDCPS} we modify the $\DCPS$ $\cA$ by giving every thread a bottom of stack symbol $\bot$ and saving its context switch number 
in its top of stack symbol. 
We also save this number in the global state whenever a thread is active. 
This way we can still swap a thread out and back in again once it has emptied its stack, 
and we also can keep track of how often we need to repeat that, before we reach $K$ context switches and allow it to terminate.

Furthermore, we also keep a subset $G'$ of the global states of $\cA$ in our new global states, which 
restricts the states that can appear when no thread is active. 
This way we can guess that a thread will be ``stuck'' in the future, upon which we terminate it instead 
(going up to $K$ context switches first) and also spawn a new thread keeping track of its top of stack symbol in the bag. 
Then later we restrict the subset $G'$ to only those global states that do not have {\sc Resume} rules for 
the top of stack symbols we saved in the bag. This then verifies our guess of ``being stuck''.
The second part of the lemma is used in \cref{starvation}, where we reason about starvation. 

We now state the main reduction to $\VASSB$.

\begin{theorem} \label{th:dcps-to-vassb}
  Given a bound $K \in \nats$ and a $\DCPS$ $\cA$ we can construct a $\VASSB$ $\cV$ with a state $q_0$ such that $\cA$ has an infinite progressive $K$-context switch bounded run iff $\cV$ has an infinite progressive run from $(q_0,\emptyset,\emptyset)$.
\end{theorem}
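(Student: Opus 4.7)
The plan is to represent each thread of $\cA$ by a single balloon of $\cV$, and to use Parikh's theorem to pack the thread's entire $K$-bounded context-switched behaviour into one semi-linear guess at creation time. The main $\VASS$ part then only schedules which balloons to resume and converts freshly-spawned thread-tokens into new balloons via $\newb$.

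For each starting pair $(g,\gamma) \in G \times \Gamma$, each $i \le K+1$, and each candidate event sequence $\tau = (g_1,\gamma_1,g_1')\cdots(g_{i-1},\gamma_{i-1},g_{i-1}')(g_i,\bot)$ with $(g_1,\gamma_1)=(g,\gamma)$, I restrict $L^{(i)}_{(g,\gamma)}$ to words whose cs/term letters appear in the prescribed order $\tau$, and tag each spawn letter by the segment in which it occurs. By \cref{th:parikh}, the resulting context-free language has an effectively computable semi-linear Parikh image $S_{(g,\gamma,\tau)} \subseteq \nats^{\Gamma \times \{1,\ldots,K+1\}}$, which is exactly the set of per-segment spawn multisets realisable by a thread following plan $\tau$.

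I would then build $\cV$ with: places $P=\Gamma$ (one token in $\gamma$ is a freshly-spawned thread not yet wrapped into a balloon); balloon places $\bbp = \Gamma \times \{1,\ldots,K+1\}$ storing per-segment spawns; balloon states $\bbq$ comprising, for each plan $\tau$ and index $j$, a waiting state $\text{wait}_{\tau,j}$ (first $j$ segments done), a short deflate chain used to empty each $(\gamma',j)$-place into $\gamma' \in P$, and a terminal bursting state $\text{burst}_\tau$; and $\VASSB$ states $Q$ carrying the current global state together with a small phase flag. The edges realise four step types: (a) a staging edge consumes a pending-spawn token from some $\gamma \in P$ and $\newb$s a fresh balloon in $\text{wait}_{\tau,0}$ with contents drawn from $S_{(g,\gamma,\tau)}$; (b) a resume edge $\deflateb$-picks a balloon in $\text{wait}_{\tau,j-1}$ matching the current global state $g$ and expected top-of-stack $\gamma_j$, pushing it into the deflate chain for segment $j$; (c) the deflate chain empties each $(\gamma',j) \in \bbp$ into $\gamma' \in P$; (d) the balloon either re-parks in $\text{wait}_{\tau,j}$ while the $\VASSB$ state updates to the scheduler at $g_j'$, or, when $j=i$, moves to $\text{burst}_\tau$ and is $\burstb$-removed. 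A small gadget handles threads that loop forever in a single segment: such a thread gets an immortal balloon state with a deflate self-loop carrying no tokens, which satisfies the $\VASSB$ progressiveness condition without ever reaching a burst. An initial edge from $q_0$ $\newb$s the starting balloon for $(g_0,\gamma_0)$ and enters the scheduler at $g_0$.

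Finally I would show the correspondence by bidirectional simulation. Forward: a progressive $K$-bounded run of $\cA$ exhibits, for each thread, a plan $\tau$ (or the infinite-segment marker) and a per-segment spawn multiset in $S_{(g,\gamma,\tau)}$, which directly dictates the $\newb$/$\deflateb$/$\burstb$ choices in $\cV$. Backward: from a progressive $\cV$ run, Parikh's theorem reconstructs, from each balloon's guessed spawn vector, a concrete context-free witness word in the corresponding restricted language, and splicing these per-thread witnesses produces a progressive $\DCPS$ run. The main obstacle is the alignment of the two progressiveness notions: the $\VASSB$ requirement that every balloon is eventually deflated or burst must match the $\DCPS$ requirement that every inactive thread is resumed for all $K+1$ of its planned segments, and the $\VASSB$ requirement that every token in $P$ is eventually removed must match the requirement that every newly-spawned thread eventually becomes an inactive thread. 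The balloon-state machinery forces mortal balloons to traverse all $K+1$ segments before bursting, which pins {\sc Term} to cs count exactly $K$; the immortal-balloon gadget rules out missing corner cases of actively-looping threads; and the staging layer through $P$ guarantees that no spawn is silently discarded.
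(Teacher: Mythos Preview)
Your approach is essentially the paper's: represent each thread by a balloon whose state encodes the context-switch plan and whose balloon places hold the Parikh-guessed per-segment spawns, with ordinary places $P=\Gamma$ buffering freshly-spawned threads. The edge structure (create balloon, deflate one segment's spawns into $P$, repark, eventually burst) is the same as the paper's construction.

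There are, however, two points where your proposal diverges from what a progressive $\DCPS$ run actually looks like, and one genuine subtlety you do not address.

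First, in a \emph{progressive} $K$-bounded run every thread makes exactly $K$ context switches and then terminates with empty stack: the rule {\sc Term} is applied only at cs-number $K$, and every inactive local configuration must be resumed, which together force every execution through all $K+1$ segments. Consequently only plans with $i=K+1$ matter. Allowing $i<K+1$ is not just superfluous: in the backward direction a $\VASSB$ run could burst a balloon after fewer segments and you would reconstruct a $\DCPS$ thread that terminates early, violating progressiveness. Your later claim that ``mortal balloons traverse all $K+1$ segments'' silently fixes this, but then the parametrisation over $i$ should be dropped.

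Second, the ``immortal balloon'' gadget is unnecessary. A $\DCPS$ run alternates finite thread-step blocks $\rightarrow^+$ with scheduler steps; an infinite run therefore has infinitely many segments, and no single segment is infinite. Progressiveness further forces every thread to reach {\sc Term}. So there is no thread that ``loops forever in a single segment'' to model, and the paper's construction omits any such gadget.

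Third, and this is the subtlety the paper flags explicitly: balloons do not record stack contents, whereas $\DCPS$ progressiveness is phrased in terms of local configurations $(w,i)$. Two threads with the same plan and spawn multiset but different stacks map to the same balloon, and conversely two balloons with the same state but different contents may correspond to threads with identical $(w,i)$. Your ``splicing'' argument glosses over which balloon gets deflated when several share a state, and which thread gets resumed when several share a local configuration. The paper resolves this by always choosing the \emph{oldest} thread (in the $\DCPS$-to-$\VASSB$ direction) and the oldest balloon (in the reverse direction); without such a tie-breaking discipline the bidirectional progressiveness correspondence does not go through.
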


\subsubsection*{Idea}
One of the main insights regarding the behavior of $\DCPS$ is that the order of the spawns of a thread during one round of being active does not matter. None of the spawned threads during one such segment can influence the run until the active thread changes. Thus we only need to look at the semi-linear Parikh image of the language of spawns for each segment. One can then identify a thread with the state changes and spawns it makes during segments $0$ through $K$.
The state changes can be stored in a balloon state and the spawns for each segment in balloon places that correspond to $K+1$ copies of the stack alphabet. 
The inflate operation then basically guesses the exact multiset of spawns of the corresponding thread.

Representing threads by balloons in this way does not keep track of stack contents, which was important for ensuring the progressiveness of a $\DCPS$ run. 
However, starting from a progressive $\DCPS$ run we can always construct a progressive run for the $\VASSB$ by always continuing with the oldest thread 
in a configuration if given multiple choices, and then building the balloons accordingly. 
Always picking the oldest balloon also works for the reverse direction. 

Now let us argue about the construction in more detail.
Given a $\DCPS$ with stack alphabet $\Gamma$ and a context switch bound $K$, construct a $\VASSB$ whose configurations mirror the ones of the $\DCPS$ in the following way. 
The set of places is $\Gamma$ and it is used to capture threads that have not been scheduled yet and therefore only carry a single stack symbol. Formally each thread with context switch number $0$ and stack content $\gamma \in \Gamma$ is represented by a token on place $\gamma$. The set of balloon places is $\Gamma \times \{0,\ldots,K\}$ and they are supposed to carry the future spawns of any given thread during segments $0$ to $K$. Every thread $t$ with context switch number $\geq 1$ is then represented by a balloon where the number of tokens on balloon place $(\gamma, i)$ is equal to the number of threads with stack content $\gamma$ that $t$ will spawn during its $i$th segment. The spawns for segment $0$ are transferred to the place set $\Gamma$ immediately after such a balloon is created, since the represented thread is now supposed to have made its first context switch. Furthermore, each balloon state consists of the context switch sequence and context switch number of its corresponding thread $t$.
The set of states of the $\VASSB$ mirrors the global states of the $\DCPS$.

For this idea to work, we need to compute the semi-linear set of spawns that each type of thread can make, so that we can correctly inflate the corresponding balloon using this set. Here, the \emph{type} of a terminating thread consists of the stack symbol $\gamma$ it spawns with, the global state $g$ in which it first becomes active, the sequence of context switches it makes, and the state in which it terminates. Given a $\DCPS$ $\cA = (G,\Gamma,\Delta,g_0,\gamma_0)$ and a context switch bound $K$, the formal definition of the set of thread types is
\[
  \cT(\cA,K) \coloneq G \times \Gamma \times (G \times \Gamma \times G)^K \times G.
\]
Since we want to decide existence of an infinite progressive run of $\cA$, we can restrict ourselves to threads that make exactly $K$ context switches. 
Now let $t = (g_0', \gamma_0', (g_1, \gamma_1, g_1') \ldots (g_K, \gamma_K, g_K'), g_{K+1}) \in \cT(\cA, K)$ 
be a thread type. 
We want to use $\cP_{(g_0',\gamma_0')}$, the PDA of a thread of this type, to accept the language of spawns such a thread can make. However, we have two requirements on this language, that the PDA does not yet fulfill. 
Firstly, in the spirit of progressiveness, we only want to consider threads that reach the empty stack and terminate. Secondly, we want the spawns during each segment of the thread execution to be viewed separately from one another.

For the first requirement, we modify the transition relation of $\cP_{(g_0',\gamma_0')}$, such that transitions of the form $(g_2,\bot) \xrightarrow{(g_2,\bot)|\gamma_2/\gamma_2} \mathsf{end}$ defined in (\ref{tPDA:tomod}) are only kept in the relation for $\gamma_2 = \bot$. This ensures that the PDA no longer considers thread executions that do not reach the empty stack.

Regarding the second requirement, we can simply introduce $K + 1$ copies of $\Gamma$ to the input alphabet of $\cP_{(g_0',\gamma_0')}$. It is then redefined as $\Sigma =\Gamma \times \{0,\ldots,K\} \cup G\times \Gamma \times G \cup G \times \set{\bot}$, while the stack alphabet and states stay the same. Any transition previously defined on input $\gamma \in \Gamma$ is now copied for inputs $(\gamma,0)$ to $(\gamma,k)$.

Let $\widetilde{\cP}_{(g_0',\gamma_0')}$ be the PDA these changes result in. Then the context-free language that characterizes the possible spawns of a thread of type $t$ is given by the following:
\[
  L_t \coloneq L(\widetilde{\cP}_{(g_0',\gamma_0')}) \cap (\Gamma \times 0)^* \cdot (g_1,\gamma_1,g_1') \cdot (\Gamma \times 1)^* \cdots (g_K,\gamma_K,g_K') \cdot (\Gamma \times K)^* \cdot (g_{K+1},\bot)
\]
Here we intersect the language of the PDA with a regular language, which forces it to adhere to the type $t$ and groups the spawns correctly. 
If the language $L_t$ is nonempty, using Parikh's theorem (\cref{th:parikh}), we can compute the semi-linear set characterizing the Parikh-image of 
this language projected to $\Gamma \times \{0,\ldots,K\}$, which we denote $\mathsf{sl}(t)$. 
We also define the set of all semi-linear sets that arise in this way as $\mathsf{SL}(\cA,K) \coloneq \{\mathsf{sl}(t) \mid t \in \cT(\cA,K), L_t \neq \emptyset \}$.

Now we can construct a $\VASSB$ whose configurations correspond to the ones of the $\DCPS$ in the way we mentioned earlier. From $(q_0,\emptyset,\emptyset)$ we put a token on $\gamma_0$ to simulate spawning the initial thread and thus begin the simulation of the $\DCPS$. We can then construct a progressive run of the $\VASSB$ from a progressive run of the $\DCPS$ by constructing the individual balloons as if the scheduler always picked the oldest thread out of all choices with the same local configuration. The converse direction works for similar reasons by always picking the oldest balloon to continue with.

The proof also allows us to reason about \emph{shallow} progressive runs of $\DCPS$. 
Following the same notion for $\VASSB$, we call a run of a $\DCPS$ \emph{shallow}
if there is a bound $B \in \nats$ such that each thread on that run spawns at most $B$ threads.
Obverse that in the $\VASSB$ construction of this section the spawns of $\DCPS$ threads are mapped to the contents of balloons, which is how the two notions of shallowness correspond to each other. Thus we can go from progressive $\DCPS$-run to progressive $\VASSB$-run by \cref{th:dcps-to-vassb}, to shallow progressive $\VASSB$-run by \cref{th:vassb-bounded-balloons}, to shallow progressive $\DCPS$ run by \cref{th:dcps-to-vassb} combined with the observation on the two notions of shallowness. This is formalized in the following:

\begin{corollary} \label{dcps-spawn-bounded}
  A $\DCPS$ $\cA$ has a progressive run iff it has a shallow progressive run.
\end{corollary}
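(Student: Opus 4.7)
The plan is to exploit the back-and-forth translation between $\DCPS$ and $\VASSB$ already established, and observe that the two notions of shallowness align along this translation. The ``only if'' direction is immediate: any shallow progressive run is, by definition, a progressive run, so I only need to argue the ``if'' direction, namely that the existence of a progressive run forces the existence of a shallow one.

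First I would start with an arbitrary progressive $K$-context switch bounded run $\rho$ of $\cA$. By \cref{lem:progDCPS} (if $\rho$ were only fair) or directly, we can assume $\rho$ is already progressive in the strong sense used above. Applying \cref{th:dcps-to-vassb} to $\rho$ yields an infinite progressive run $\rho'$ of the associated $\VASSB$ $\cV$ starting from $(q_0,\emptyset,\emptyset)$. By \cref{th:vassb-bounded-balloons}, the mere existence of $\rho'$ implies that $\cV$ admits some \emph{shallow} progressive run $\rho''$, i.e., one in which there is a uniform bound $B$ on the number of tokens contained in any balloon along $\rho''$.

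Next, I would apply the reverse direction of \cref{th:dcps-to-vassb} to $\rho''$ to obtain an infinite progressive $K$-context switch bounded run $\tilde\rho$ of $\cA$. The crux is then to verify that $\tilde\rho$ is shallow in the $\DCPS$ sense, i.e., that every thread on $\tilde\rho$ spawns at most some fixed number of threads. Here I would invoke the key observation already recorded in the construction preceding \cref{th:dcps-to-vassb}: each thread of $\cA$ corresponds to a balloon of $\cV$, and the tokens in the balloon places $\Gamma \times \{0,\dots,K\}$ of that balloon encode exactly the multiset of threads spawned by the corresponding thread during each of its segments. Consequently, the balloon-norm of a configuration on $\rho''$ is precisely the maximum, over all alive threads at that moment, of the number of spawns the thread has committed to across its remaining segments. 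A uniform bound $B$ on the balloon-norm therefore translates into a uniform bound $B$ on the total number of spawns per thread on $\tilde\rho$, so $\tilde\rho$ is shallow.

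The only potentially delicate step is the correspondence between balloon-norm and per-thread spawn count, in particular accounting for the threads whose context switch number is $0$ and are thus represented as ordinary tokens on places in $\Gamma$ rather than as balloons. Such threads have not yet been scheduled, hence have spawned no threads at all, so they trivially satisfy the bound $B$. The initial thread, spawned directly at $(q_0,\emptyset,\emptyset)$, is likewise covered since it too becomes a balloon as soon as it is first scheduled. With this minor bookkeeping, the chain $\rho \rightsquigarrow \rho' \rightsquigarrow \rho'' \rightsquigarrow \tilde\rho$ delivers the desired shallow progressive run of $\cA$, completing the proof.
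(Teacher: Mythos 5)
Your proposal is correct and follows essentially the same route as the paper: pass from a progressive $\DCPS$ run to a progressive $\VASSB$ run via \cref{th:dcps-to-vassb}, obtain a shallow progressive $\VASSB$ run via \cref{th:vassb-bounded-balloons}, and translate back using the observation that balloon contents encode exactly the spawns of the corresponding thread, so the two notions of shallowness coincide. Your extra bookkeeping for threads with context-switch number $0$ (represented as plain tokens) is a reasonable, if minor, addition to the paper's one-line justification.
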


\newcommand{\pc}{\mathsf{p}}

\newcommand{\tildeP}{\tilde{P}}
\def \tildebq{\tilde{Q}}
\def \cI{\mathcal{I}}
\def \dbl{\mathsf{dbl}}
\def \tildebp{\tilde{P}}

\newcommand{\runs}{\mathsf{runs}}

\newcommand{\id}{\mathsf{id}}
\newcommand{\seq}{\mathsf{seq}}
\newcommand{\Seq}{\mathsf{SEQ}}
\newcommand{\last}{\mathsf{last}}
\newcommand{\emp}{\mathsf{emp}}
\newcommand{\first}{\mathsf{first}}
\newcommand{\second}{\mathsf{second}}
\newcommand{\finc}{\mathsf{finc}}
\newcommand{\sinc}{\mathsf{sinc}}
\newcommand{\stage}{\mathsf{stage}}

\newcommand{\fdec}{\mathsf{fdec}}
\newcommand{\sdec}{\mathsf{sdec}}
\newcommand{\fshift}{\mathsf{fshift}}
\newcommand{\sshift}{\mathsf{sshift}}
\newcommand{\checked}{\mathsf{checked}}
\newcommand{\unchecked}{\mathsf{unchecked}}
\newcommand{\verify}{\mathsf{verify}}

\section{From Progressive Runs for $\VASSB$ to Reachability}
\label{sec:configReach}

In this section, we prove Theorems~\ref{th:vassb-fair-term} and \ref{th:vassb-bounded-balloons}.
We outline the main ideas and technical lemmas
used to obtain the proofs. The formal proofs can be found in the full
version of the paper.

We first establish that finite witnesses exist for infinite progressive runs. As a byproduct, this yields \cref{th:vassb-bounded-balloons}.
Then, we show that finding finite witnesses for progressive runs reduces to reachability in $\VASSB$.
Finally, we prove that reachability is decidable
for $\VASSB$.

\subsection{From Progressive Runs to Shallow Progressive Runs}

Fix a $\VASSB$ $\cV= (Q,P,\bbq,\bbp,E)$.
A \emph{pseudoconfiguration} 
$\pc(c)=(q,\mmap,\partial\nmap) \in Q \times \multiset{P} \times \multiset{\bbq}$ 
of a configuration $c =(q,\mmap,\nmap)$ is given by
$\partial\nmap(\barq) = \sum_{\kmap\in\multiset{\bbp}} \nmap(\barq, \kmap)$.
That is, a pseudoconfiguration is obtained by counting the number of balloons in
a given state $\barq\in\bbq$ but ignoring the contents. The
\emph{support} $\sup(\mmap)$ of a
multiset $\mmap$ is the set of places $\set{p \mid \mmap(p) >0}$
where $\mmap$ takes non-zero values.

For configurations $c=(q,\mmap,\nmap)$ and $c'=(q',\mmap',\nmap')$,
we write $c\leq c'$ if $q = q'$, $\mmap \preceq \mmap'$, and $\nmap \preceq \nmap'$. Moreover, we write
$\pc(c) \leq_{\pc} \pc(c')$ if $q=q'$, $\mmap\preceq \mmap'$, and
$\partial\nmap\preceq \partial\nmap'$.  
Both $\leq$ and $\leq_{\pc}$ are \emph{well-quasi orders} (wqo), that is, any infinite
sequence of configurations (resp.\ pseudoconfigurations) has an infinite increasing subsequence
with respect to $\leq$ (resp.\ $\leq_{\pc}$) (see, e.g., \cite{ACJT,FinkelSchnoebelen} for details).
This follows
because wqos are closed under multiset embeddings.
Moreover, $\cV$ is \emph{monotonic} w.r.t.\ $\leq$: if $c_1 \rightarrow c_1'$ and $c_1\leq c_2$,
then there is some $c_2'$ such that $c_1' \leq c_2'$ and $c_2
\rightarrow c_2'$. 

In analogy with algorithms for finding infinite runs in VASS (in
particular the procedure for checking fair termination in the case
$K=0$ in \cite{GantyM12}), one might try to find a self-covering run
w.r.t.\ the ordering $\leq$. However, checking for such a run would
require comparing an unbounded collection of pairs of
balloons. In order to overcome this issue, we use a
\emph{token-shifting
surgery} which moves
tokens from one balloon to another. The surgery is performed on the
given progressive run
$\rho$, converting it into a progressive run $\rho'$ with a special
property:
there exist
infinitely many configurations in $\rho'$ which contain only empty
balloons.
By restricting ourselves to such configurations, we are able to show
the existence of a special kind of self-covering run where the
cover and the original configuration only contain empty balloons and
thus it suffices to compare them using the ordering $\leq_{\pc}$. 
First, we need the following notion of a witness for progressiveness.

For $A \subseteq P, B \subseteq \bbq$, a run $\rho_{A,B}=c_0 \autstep
[*] c
\autstep
[*] c'$ is
called an \emph{$A,B$-witness} for progressiveness if it satisfies the
following
properties:
\begin{enumerate}
	\item For any $c''$ occurring between $c$ and $c'$, we have $\sup
	(c''.\mmap) \subseteq
	A$,
	\item for each $p \in A$, there exists $\op$ between $c$ and $c'$ in
	$\rho_{A,B}$
	such that $\op=\delta$ where $\delta(p)<0$,
	\item for any balloon $b$, we have $c.\nmap(b) \ge 1$ iff $c'.\nmap
	(b)
	\ge 1$ iff
	($b.\kmap=\emptyset $ and $ b.\barq \in B$),
	\item for any $\barq \in B$, there exists $\op$ occurring between $c$
	and $c'$ 
	such that $\op=\deflateb(\barq,\cdot,\cdot,\cdot)$ or $\op=\burstb
	(\barq)$ is applied to an empty balloon with state $\barq$, and
		\item $\pc(c) \leq_{\pc} \pc(c')$ and $\sup(c.\mmap)=\sup(c'.\mmap)$.
\end{enumerate}

In order to formalize the idea of a token-shifting surgery, we
associate a unique identity with each balloon. In particular, we may
associate the
unique number $i \in \nats$ with the balloon $b$ which is inflated by
the $i^{th}$ operation $\op_i$ in a run $\rho=c_0 \autstep[\op_1] c_1
\autstep[\op_2] c_2 \cdots$,
giving us a
\emph{balloon-with-id} $(b,i)$. The id of a balloon is preserved on
application of deflate and burst operations and thus we can speak of
\emph{the balloon $i$} and the sequence of operations $\seq_i$
that it undergoes. Given a run $\rho$, we produce a corresponding
\emph{canonical run-with-id} $\tau$ inductively as follows: the
balloon identities are assigned
as above and the balloon with least id is chosen for execution every
time. 
Extending the notion of a
balloon-with-id to a configuration-with-id $d$ (where every balloon
has an
id)
and a run-with-id $\tau$ (which consist of sequences of
configurations-with-id), we define the notion of a \emph{progressive}
run-with-id
$\tau$,
which is
one such that the sequence $\seq_i$ associated with any id $i$ in
$\tau$ is
either infinite, or $\seq_i$  is finite with the last operation being
a
burst. If every id $i$ undergoes a burst operation in a run-with-id
$\tau$, we say ``$\tau$ bursts every balloon.'' We abuse terminology by
saying ``$\rho$ bursts every balloon'' for a run $\rho$ to mean that
there is a \emph{corresponding} run-with-id $\tau$ which bursts every
balloon. It is easy to see that the canonical run-with-id $\tau$
associated
with a progressive run $\rho$ is ``almost''
progressive. By always picking the id which has been idle for the
longest time, we can convert
$\tau$ into a progressive run-with-id. Thus there exists a progressive
run if and only if there exists a progressive
run-with-id, but the latter retains more information,
allowing us to argue formally in proofs. 
%
With these notions in
hand, we prove the following lemma:

\begin{lemma}
\label{new-witness-prog}
Given a $\VASSB$ $\cV$ and a semiconfiguration $s$ of $\cV$, one can
construct a $\VASSB$ $\cV'=(Q',P',\bbq',\bbp',E')$ and a
semiconfiguration $s'$ of $\cV'$ such that
the following are equivalent:
\begin{enumerate}
	\item $\cV$ has a progressive run from $s$,
\item $\cV'$ has a progressive run from 
$s'$ that bursts every balloon, and
\item  $\cV'$ has an $A,B$-witness for some $A
	\subseteq P'$ and $B \subseteq \bbq'$.
\end{enumerate}
\end{lemma}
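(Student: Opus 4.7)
The plan is to prove the three equivalences cyclically, \(1 \Rightarrow 2 \Rightarrow 3 \Rightarrow 1\). For \(1 \Rightarrow 2\) I would construct \(\cV'\) from \(\cV\) by adding, for every balloon state \(\sigma \in \bbq\) and every state \(q \in Q\), a self-loop edge \(q \autstep[\burstb(\sigma)] q\) (if not already present), taking the remaining components and the initial semiconfiguration \(s'\) to coincide with those of \(\cV\) and \(s\). Given a progressive \(\cV\)-run, which is also a \(\cV'\)-run, each balloon is either eventually burst or deflated infinitely often; in the latter case, because balloon places are monotonically non-increasing under deflation, after finitely many deflations the balloon becomes empty and all further deflations transfer zero tokens, so I would reroute each such useless deflation tail into a single new \(\burstb\)-step of \(\cV'\). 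The result is a progressive \(\cV'\)-run that bursts every balloon.

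For \(2 \Rightarrow 3\) I would apply the token-shifting surgery to the given progressive \(\cV'\)-run, producing an equivalent progressive \(\cV'\)-run in which infinitely many configurations \(c_{i_1}, c_{i_2}, \ldots\) contain only empty balloons. Since \(\leq_{\pc}\) is a well-quasi-order on pseudoconfigurations, among these ``clean'' configurations there exist indices \(j < k\) with \(\pc(c_{i_j}) \leq_{\pc} \pc(c_{i_k})\); by a Ramsey-style refinement of the sequence one can additionally require \(\sup(c_{i_j}.\mmap) = \sup(c_{i_k}.\mmap)\) and that the support of the place-marking of every intermediate configuration stays inside this common set. Set \(A\) to this common support and \(B\) to the set of balloon states occurring at \(c_{i_j}\) (equivalently at \(c_{i_k}\), since those balloons are empty). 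Conditions~(1), (3), and (5) of an \(A,B\)-witness hold by construction, while (2) and (4) follow from progressiveness: between \(c_{i_j}\) and \(c_{i_k}\) every place in \(A\) must be net-decreased at least once and every balloon state in \(B\) must be deflated or burst in an empty-balloon step, for otherwise such a token or balloon would persist forever, contradicting progressiveness.

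For \(3 \Rightarrow 1\) I would pump the witness: play the prefix \(c_0 \autstep[*] c\) once, then iterate the loop \(c \autstep[*] c'\) infinitely many times. Monotonicity of \(\cV'\) together with \(\pc(c) \leq_{\pc} \pc(c')\) guarantees that each iteration is replayable from the configuration reached after the previous one. To translate this \(\cV'\)-progressive run back to a \(\cV\)-progressive run, I would replace each occurrence of one of the new self-loop bursts by a round-robin-scheduled interleaving of trivial deflations of the would-be-burst balloon (using the deflate edges of \(\cV\); the \(A,B\)-witness itself certifies that every balloon state that ever arises comes with a compatible deflate or burst in \(\cV\), since by condition (4) the witness already exhibits such an operation in \(\cV'\) on an empty balloon, and the burst self-loops added in \(\cV'\) never affect places or the control state). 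Progressiveness of the resulting \(\cV\)-run follows from the witness conditions: (2) ensures every place-token deposited into \(A\) is consumed within a bounded number of iterations; (4) guarantees that each empty balloon in a state of \(B\) alive at a checkpoint is deflated or burst during the next iteration; and (3) implies that any balloon created inside a loop iteration is either burst during the iteration or is empty and in \(B\) at the next checkpoint.

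The main obstacle is the token-shifting surgery used in \(2 \Rightarrow 3\). Since VASSB semantics cannot literally move tokens from one balloon to another, the surgery must be an accounting argument on the canonical run-with-id introduced earlier: it reassigns the ``identities'' of the tokens that live inside balloons along the entire run, preserving all externally observable place-markings and control transitions, but rearranging deflations so that each balloon's contents are shifted onto balloons whose lifespans end strictly before its own, thereby exposing infinitely many checkpoints at which every currently alive balloon is empty. Formalising this so that progressiveness is maintained -- in particular, that the reassignment does not generate infinitely many balloons with unbounded contents and does not starve any place of decreases -- is the technical heart of the proof.
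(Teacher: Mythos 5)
Your overall architecture ($1\Rightarrow 2$ via a modified $\VASSB$, $2\Leftrightarrow 3$ via token-shifting plus the wqo on $\leq_{\pc}$, $3\Rightarrow 1$ via unrolling) matches the paper's, but your construction of $\cV'$ has a genuine gap. Adding burst self-loops $q\autstep[\burstb(\sigma)]q$ and then ``rerouting each useless deflation tail into a single burst-step'' does not yield a valid run: the infinitely many trivial deflates of a never-burst balloon are edges $q\autstep[\op]q'$ scattered throughout the run, interleaved with other operations, and each effects a control-state change. You cannot delete them (the subsequent operations expect to start from $q'$), and once you burst the balloon you can no longer fire them (no balloon in the required state remains). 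The paper's $\cV'$ instead bursts the balloon at its last non-trivial deflate and replaces it by a $\VASS$ token on fresh places $\bbq\times\{0,1\}$; every trivial deflate edge gets a companion edge with the \emph{same} source and target states that merely shuttles this token between the two copies, so the control flow of the run is preserved and the new places themselves satisfy the progressiveness condition. Symmetrically, in your $3\Rightarrow 1$ direction, condition (4) of an $A,B$-witness in your $\cV'$ could be discharged entirely by the freely-added burst self-loops, in which case there need be no trivial deflate edges of $\cV$ available at the relevant control states to substitute back --- your claim that the witness ``certifies'' their existence does not follow. The paper's gadget forces this correspondence by sandwiching the extra burst between a genuine deflate edge and the token transfers on $\bbq\times\{0,1\}$, so that progressiveness on those places witnesses infinitely many available trivial deflates of $\cV$.

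A second omission: the token-shifting surgery you rely on for $2\Rightarrow 3$ requires the preprocessing of $\cV$ into a \emph{zero-base} and \emph{typed} $\VASSB$ (\cref{lem:zerotype}). Zero-base is what makes the surgery sound at all --- the recipient balloon is inflated with the \emph{sum} of two multisets drawn from the same linear set, and this sum lies in the linear set only when the base vector is $\bfz$. Typing (recording the sequence of potentially non-trivial deflates in the balloon state) is what lets you identify which balloons are interchangeable and verify that the donor is inflated only after the recipient has completed all its non-trivial deflates. Without both, the ``accounting argument on runs-with-id'' you defer to cannot be made to produce a valid run. Your pumping step $3\Rightarrow 1$ and the wqo/pigeonhole extraction of $d'_l\leq_{\pc}d'_m$ with matching supports are essentially the paper's argument and are fine.
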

The remainder of this subsection is devoted to the proof of \cref{new-witness-prog}.
The lemma is proved in two steps: first we show (1)~$\iff$~(2), then
we show (2)~$\iff$~(3). We do some preprocessing before (1)~$\iff$~(2),
 by showing that one can convert $\cV$ into a
$\VASSB$ $\cV'$ with two special properties:
 (i) the 
\emph{zero-base}
property, by
which
every linear set of $\cV'$ has base vector equal to $\bfz$, and 
(ii) the
property of being \emph{typed}, which means that we guess and verify
the sequence of deflates performed by a balloon $b$ that could
potentially transfer a non-zero
number of tokens by including this information in its balloon state
$b.\barq$. A deflate operation which transfers a non-zero number of
tokens is called a \emph{non-trivial} deflate. The \emph{type} $t=
(L,S)$ of a
balloon $i$
consists of the linear set $L$ used during its inflation, along with
the sequence $S=(\barp_1,p_1),(\barp_2,p_2),\cdots,(\barp_n,p_n)$ of
deflate operations on $i$, such that for each $j \in 
\set{1,\cdots,n}$, the first deflate operation acting on the balloon
place $\barp_j$ sends tokens to the place $p_j$.
\begin{lemma}
\label{lem:zerotype}
Given a $\VASSB$ $\cV$ along with its semiconfigurations $s_0,s_1$, we
can construct a zero-base, typed $\VASSB$ $\cV'$ and its
semiconfigurations 
$s'_0,s'_1$ such that:
\begin{enumerate}
	\item There is a progressive run of $\cV$ from $s_0$ iff there is a
	progressive run of $\cV'$ from $s'_0$.
	\item There is a run $s_0 \xrightarrow{*} s_1$ in $\cV$ iff there is
	a run $s'_0 \xrightarrow{*} s'_1$ in $\cV'$.
\end{enumerate}
\end{lemma}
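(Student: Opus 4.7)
The plan is to achieve the zero-base and typed properties by augmenting each balloon's state with auxiliary information that is guessed nondeterministically at inflation and then either consumed or verified as the balloon ages. Both transformations can be carried out independently, but they combine cleanly into a single construction.

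For the zero-base property, I would process each $\newb(\barq, S)$ edge, where $S = L_1 \cup \cdots \cup L_k$ with $L_j = \{\mmap_0^j + t_1 \mmap_1^j + \cdots + t_{n_j} \mmap_{n_j}^j \mid t_i \in \nats\}$. The edge is split into $k$ alternatives; the $j$-th inflates with the zero-base set $L_j' = \{t_1 \mmap_1^j + \cdots + t_{n_j} \mmap_{n_j}^j\}$ and enters a balloon state that records the choice of $L_j$ and, in particular, the missing base vector $\mmap_0^j$. Compensation is then applied at the \emph{first} deflate on each balloon place $\barp$: the $\deflateb$ edge is followed, through a fresh intermediate VASSB state, by a plain VASS addition of $\mmap_0^j(\barp)$ tokens to $p$. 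Balloon places that are never deflated---because the balloon is burst first---receive no compensation, in agreement with the original semantics that discards balloon contents on burst.

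For the typed property, I augment each $\newb$ so that it also guesses the sequence $S = (\barp_1, p_1), \ldots, (\barp_n, p_n)$ of first-deflates that the balloon will eventually undergo. This sequence is always finite since $n \leq |\bbp|$; the pair $(L, S)$ is stored in the new balloon state. Every $\deflateb$ edge is split into two variants: a \emph{first} variant, enabled only when $(\barp, p)$ matches the leading entry of $S$, which consumes that entry and flags $\barp$ as used; and a \emph{trivial} variant, enabled only on already-used balloon places. The $\burstb$ edge is enabled only when the stored sequence is empty. These constraints ensure that the sequence of non-trivial deflates actually occurring in $\cV'$ matches the guessed $S$ exactly in both identity and order.

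Given this setup, the two items of the lemma follow by a straightforward run-by-run correspondence. A run of $\cV$ lifts to a run of $\cV'$ by inspecting each balloon's actual linear set and actual sequence of first-deflates in the run and installing these as the nondeterministic guesses in $\cV'$; progressiveness is preserved because the compensation gadgets introduce no new balloons and the $P$-tokens they add are precisely those that the original $\deflateb$ would have deposited, while each intermediate VASSB state has a unique outgoing edge so no run can stall there. Conversely, a run of $\cV'$ projects back to $\cV$ by collapsing each ``deflate followed by compensating addition'' into a single original $\deflateb$ and erasing the extra balloon-state information. Since semiconfigurations carry no balloons, defining $s'_0, s'_1$ as the natural lifts of $s_0, s_1$ makes the same bijection transport reachability witnesses. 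I expect the main obstacle to be exactly the interaction between first/trivial deflates and the zero-base compensation: a trivial deflate must not be exploitable to introduce spurious $\mmap_0$ tokens. Combining the two transformations into a single construction sidesteps this, since the typing guess $S$ identifies precisely which $\deflateb$ occurrences are ``first'' and therefore entitled to the compensation $\mmap_0(\barp)$, while the ``used''-flag mechanism blocks trivial deflates from triggering the compensating addition.
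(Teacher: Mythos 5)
Your proposal is correct and follows essentially the same route as the paper: the paper likewise stores the chosen linear set together with a (marked) sequence of first-deflates in the balloon state, uses that type information to distinguish non-trivial from trivial deflates and to gate bursts, and compensates for the removed base vector by routing each first deflate through a fresh intermediate state that performs a plain $E_p$ addition of $\mmap_0(\barp)$ tokens. Your closing observation---that the typing must be installed first so the compensation gadget knows which deflates are entitled to the base-vector tokens---is exactly the ordering the paper relies on.
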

The zero-base property is easily obtained by making sure that the
portion of tokens transferred which correspond to the base vector are
separately transferred using $E_p$-edges. The addition of the types
into the global state can be done by expanding the set of balloon
states exponentially. A proof of the lemma is given in the full
version. 

We return to the proof of \cref{new-witness-prog}.
The direction (1)~$\Rightarrow$~(2) requires us to show that $\cV'$
can be
assumed to ``burst every balloon'' in a progressive run-with-id. Consider a balloon $i$
occurring in an arbitrary progressive
run-with-id $\tau''$. In order to convert the given $\tau''$ into a
progressive $\tau'$ in which every balloon is burst, we need to burst
those id's $i$ such that $\seq_i$ is infinite in $\tau''$. Since only
a
finite number of non-trivial deflates can be performed by a given
balloon $i$, this implies that $\seq_i$ consists of a finite prefix in
which non-trivial deflates are performed, followed by an infinite
suffix of trivial deflates. Every such balloon $i$ can then be
burst and replaced by a special $\VASS$	token.
The
infinite trivial suffix is then simulated by using addition and
subtraction operations of
these special tokens using additional places, since any such
balloon $i$ will not transfer any more tokens. The converse
direction (1)~$\Leftarrow$~(2) is a reversal of the construction where
we replace
the
special $\VASS$ tokens with infinite sequences of trivial deflate
operations.

\definecolor{c00}{RGB}{255, 255, 255} %
\definecolor{c01}{RGB}{204,204,0} %
\definecolor{c02}{RGB}{0,153,153} %
\definecolor{c03}{RGB}{225,128,0} %
\definecolor{c04}{RGB}{204,0,102} %
\definecolor{c05}{RGB}{0,102,0} %

\def \n {0.8} %
\def \yn {4} %

\def \l {0.04} %
\def \tk {0.1} %

\def \rh {1} %
\def \rw {0.3} %

\def \eh {0.2} %
\def \ew {0.1} %

\newcommand{\num}[2]{
\coordinate (n) at (#1*\n, 0) {};
\draw ($(n)+(0,\tk cm)$) -- ($(n)-(0,\tk cm)$);
\node[label=below:{\tiny #2}] at (n) {};
}

\newcommand{\narow}[2]{
	\coordinate (n) at (#1*\n,0) {};
	\coordinate (m) at (#2*\n,-\yn cm){};
	\draw[dashed, line width=\l cm, black!40, arrows=->] (n) -- (m);
}

\newcommand{\nvar}[2]{
\coordinate (n) at (#1*\n, 0) {};
\draw ($(n)+(0,\tk cm)$) -- ($(n)-(0,\tk cm)$);
\node[label=above:{\tiny #2}] at (n) {};
}

\newcommand{\carowa}[3]{
	\coordinate (n) at (#1*\n,0) {};
	\coordinate (m) at (#2*\n,0) {};
	\path (n)  edge[bend right] node[below] {\scriptsize#3} (m);
}

\newcommand{\carowb}[3]{
	\coordinate (n) at (#1*\n,0) {};
	\coordinate (m) at (#2*\n,0) {};
	\path (n)  edge[bend right=60] node[below] {\scriptsize#3} (m);
}

\newcommand{\numline}[1]{

	\coordinate (n0) at (0*\n, 0) {};
	\coordinate (nlast) at (20*\n, 0) {};
	\node[label=right:{\tiny #1}] at (nlast) {};

	\draw[line width=\l cm, black] (n0) -- (nlast);

	\num{0}{};
	\num{1}{};
	\num{2}{};
	\num{3}{};
	\num{4}{};
	\num{5}{};
	\num{6}{};
	\num{7}{};
	\num{8}{};
	\num{9}{};
	\num{10}{};
	\num{11}{};
	\num{12}{};
	\num{13}{};
	\num{14}{};
	\num{15}{};
	\num{16}{};
	\num{17}{};
	\num{18}{};
	\num{19}{};
	\num{20}{};

}

\newcommand{\dballoonInitl}[3]{  
\begin{scope}
    \coordinate (v1) at (#1*\n,1.2) {};
    \draw[#2,thick] (v1) ellipse (0.12 and 0.22);
    \tikzset{decoration={snake,amplitude=0.2mm,segment length=1mm, post length=0mm, pre length=0mm}}
\draw[line width=0.1mm, decorate] ($(v1)+(0,-0.22)$) -- ($(v1)+
(0,-0.42)$);

    \clip (v1) ellipse (0.1 and 0.2);
    \coordinate (v2) at ($(v1)+(0,#3)$) {};

    \fill[black!30] ($(v2)+(-0.2,0)$)--($(v2)+(0.2,0)$)--($(v2)+
    (0.2,-0.4)$)--
    ($(v2)+(-0.2,-0.4)$)--cycle;
 \end{scope}
}

\begin{figure}[t!]
\label{token-shifting-tikz}

\begin{tikzpicture}[scale=0.8]

\numline{$\tau$}
\nvar{2}{$i_0$}
\dballoonInitl{2}{red}{0.2}

\nvar{4}{$i_1$}
\dballoonInitl{4}{red}{0.1}
\carowa{2}{4}{$(\pi_1,p_1)$}

\nvar{7}{$i_2$}
\dballoonInitl{7}{red}{0}
\carowa{4}{7}{$(\pi_2,p_2)$}

\nvar{10}{$i_3$}
\dballoonInitl{10}{red}{-0.2}
\carowa{7}{10}{$(\pi_3,p_3)$}

\nvar{1}{$j_0$}
\dballoonInitl{1}{cyan}{0.2}

\nvar{6}{$j_1$}
\dballoonInitl{6}{cyan}{0}
\carowb{1}{6}{$(\pi_1,p_1)$}

\nvar{11}{$j_2$}
\dballoonInitl{11}{cyan}{-0.1}
\carowb{6}{11}{$(\pi_2,p_2)$}

\nvar{14}{$j'_2$}
\dballoonInitl{14}{cyan}{-0.1}
\carowb{11}{14}{$(\pi_2,p'_2)$}

\nvar{16}{$j_3$}
\dballoonInitl{16}{cyan}{-0.2}
\carowb{14}{16}{$(\pi_3,p_3)$}

\begin{scope}[yshift=-\yn cm]
\numline{$\tau'$}
\nvar{2}{$i_0$}
\dballoonInitl{2}{red}{0.2}

\nvar{4}{$i_1$}
\dballoonInitl{4}{red}{0.1}
\carowa{2}{4}{$(\pi_1,p_1)$}

\nvar{7}{$i_2$}
\dballoonInitl{7}{red}{0}
\carowa{4}{7}{$(\pi_2,p_2)$}

\nvar{10}{$i_3$}
\dballoonInitl{10}{red}{-0.2}
\carowa{7}{10}{$(\pi_3,p_3)$}

\nvar{1}{$j_0$}
\dballoonInitl{1}{cyan}{-0.2}

\nvar{6}{$j_1$}
\dballoonInitl{6}{cyan}{-0.2}
\carowb{1}{6}{$(\pi_1,p_1)$}

\nvar{11}{$j_2$}
\dballoonInitl{11}{cyan}{-0.2}
\carowb{6}{11}{$(\pi_2,p_2)$}

\nvar{14}{$j'_2$}
\dballoonInitl{14}{cyan}{-0.2}
\carowb{11}{14}{$(\pi_2,p'_2)$}

\nvar{16}{$j_3$}
\dballoonInitl{16}{cyan}{-0.2}
\carowb{14}{16}{$(\pi_3,p_3)$}

\end{scope}

\end{tikzpicture}
\caption{Top: Initial run $\tau$ with two non-empty balloons which
perform the same sequence of three non-trivial deflates $(\pi_1,p_1),
(\pi_2,p_2),(\pi_3,p_3)$. Bottom: Modified run $\tau'$ after shifting
tokens
from the cyan balloon inflated at $j_0$ to the red balloon inflated at
$i_0$.}
\label{token-shift-fig}
\end{figure}
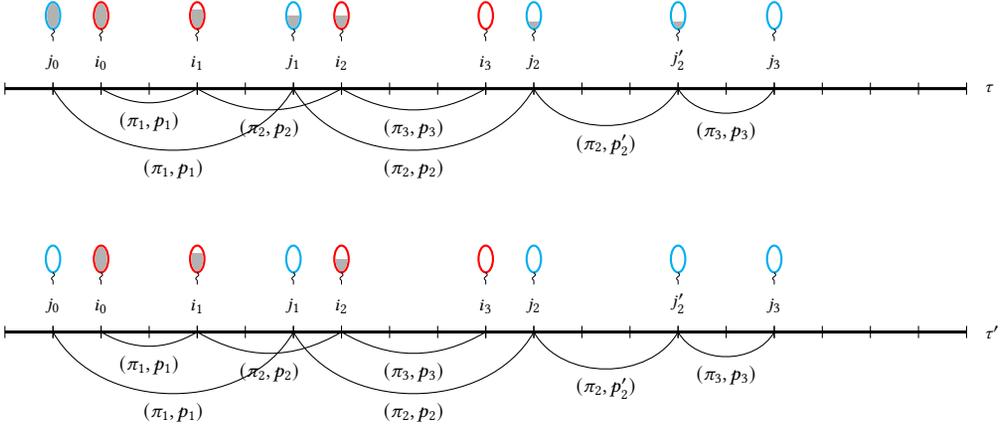

\paragraph{Token-Shifting}
We move on to show (2)~$\iff$~(3).
The key idea is a token-shifting surgery.
A \emph{token-shifting surgery} creates a run $\tau'$ from a run
$\tau$ as depicted in Figure \ref{token-shift-fig}. 
We start with the run-with-id $\tau=d_0 \autstep[\op_1] d_1
\autstep[\op_2] d_2 \cdots$ in which two balloons have the
same type $t=(L,S)$ with $S=(\pi_1,p_1), (\pi_2,p_2),(\pi_3,p_3)$
being the sequence of (potentially) non-trivial
deflates. Recall that an index $i$ relates to the
operation $\op_i$ in $\tau$. The region of a balloon which is shaded
grey
visualizes
the total number of tokens contained in the balloon. This number is seen to decrease
after each non-trivial deflate operation. An empty balloon is white
in color. Balloons having the
same identity have the same outline color: red for the balloon $b_1$
inflated
at $i_0$ and cyan for the balloon $b_2$ inflated at $j_0$. At $i_3$ (resp.
$j_3$) all tokens have been transferred and the red balloon
(resp.
cyan balloon) is empty. The
crucial property
satisfied is that $i_k < j_k$ for $k \in \set{1,2,3}$, i.e., every
deflate from $S$ of the red balloon occurs before the corresponding
deflate
of the cyan balloon. In the modified run $\tau'$,
we have the inflation of an
empty cyan balloon
and the inflation of the red balloon with the sum of the
tokens of both red and cyan balloons
in $\tau$. We require the zero-base property in order to be
able
to shift the tokens in this manner: observe that a linear set with
zero base vector is closed under addition. Note that the cyan balloon
undergoes a trivial deflate
at $j'_2$ where no tokens are transferred: this deflate is not
part of $S$ and is not relevant for the token-shifting. Thus the
run-with-id $\tau$
may be modified to give a \emph{valid}
run-with-id
$\tau'$. Note that the configurations-with-id $d'_j$ of $\tau'$
satisfy $d'_j \geq d_j$ for each $j$ and so by monotonicity every
operation that is applied at $d_j$ can be applied at $d'_j$.

We now show how token-shifting is applied to prove (2)~$\iff$~(3) in
\cref{new-witness-prog}.
First, from a progressive run $\tau$ of $\cV'$ such that every balloon
is
burst, we produce a run $\tau'$ of $\cV'$ from which it will be easy
to extract an $A,B$-witness. Let $T_
{\infty}$
be the set of types of balloons which occur infinitely often in
$\tau$. Since every balloon is eventually burst in $\tau$, there has to be a configuration $d_0$ such that
after $d_0$, every occurring balloon has a type in $T_\infty$.
We now inductively pick a sequence of
configurations
$d_1,d_2,\cdots$ and a sequence of sets of balloons
$I_1,I_2,\cdots$ with the following properties, for each $k \ge 1$:
\begin{enumerate}
 	\item $I_k$ contains exactly one balloon inflated after $d_{k-1}$
 	for each
 	type in $T_{\infty}$ and
\item every balloon in $I_k$ is burst before $d_k$.
 \end{enumerate}  
 For every
balloon $i$ not in any of the sets $I_1,I_2,\cdots$, which is inflated
between $d_k$ and $d_{k+1}$ for $k \ge 1$, we shift its tokens to
the corresponding balloon $j$ in $I_k$ of the same type as $i$.
Clearly this is
allowed since all of the deflate operations of $j$ occur before $i$ is
inflated. Thus we obtain the run $\tau'=d''_0 \xrightarrow{*} d'_0 
\xrightarrow{*} d'_1 \xrightarrow{*} d'_2 \ldots$ from $\tau$. The
prefix $d''_0 \xrightarrow{*} d'_0$ of
the modified run $\tau'$ may contain balloons of arbitrary
type. Between $d'_0$ and $d'_1$, there are only balloons in $T_\infty$. 
After $d'_2$, we have an infinite suffix where all balloons
are of a
type from $T_{\infty}$ and the only non-empty balloons
are those belonging to $I_k$ for some $k \ge 2$. This
means that the configurations-with-id $d'_k$ for each
$k \ge 2$ only contain empty balloons. Since $\le_{\pc}$
is a well-quasi-ordering, the sequence $d'_2,d'_3,\ldots$
must contain configurations $d'_l$ and $d'_m$ with $d'_l\leq_{\pc}
d'_m$. We obtain an $A,B$-witness as follows. We choose the set $P_
{\infty}$ which is the set of places
which are non-empty infinitely often along $\tau'$ for the set $A$.
Since the set
of possible balloon states and places in a given configuration is
finite, by Pigeonhole Principle, we may assume that $d'_l$ and $d'_m$
have the same set of non-empty places $A \subseteq P'$ and balloon
states $B \subseteq \bbq$. We may also assume that progressiveness
checks (2) and (4) corresponding to $A$ and $B$ occur between $d'_l$
and $d'_m$.

 Conversely, an $(A, B)$-witness $\rho_{A,B}$ can be
  ``unrolled'' to give a
 progressive run $\tau''$ of $\cV'$. Furthermore, since the unrolling
 $\tau''$ only contains
 balloons with contents present in the finite run $\rho_{A,B}$,
 giving us a shallow progressive run as stated in
Theorem~\ref{th:vassb-bounded-balloons}.

\subsection{Reduction to Reachability}

The \emph{reachability problem} $\REACH$ for $\VASSB$ asks: 
\begin{description}
\item[Given] A $\VASSB$ $\cV=(Q,P,\balloonQ,\balloonP,E)$ and
two semiconfigurations $c_0$ and $c$.
\item[Question] Is there a run $c_0 \xrightarrow{*} c$ ?
\end{description}
The more general version of the problem, where $c_0$ and $c$
can be arbitrary configurations (i.e., with balloon contents), easily
reduces to this problem. 
However, the exposition is simpler if we restrict to semiconfigurations here.
In this subsection, we shall reduce the progressive run problem for
$\VASSB$ to the \emph{reachability problem} for $\VASSB$.
In the next subsection, we shall reduce the reachability problem to the reachability problem for $\VASS$,
which is known to be decidable \cite{Kosaraju,Mayr81}. 

\begin{lemma}
\label{lem:fairnterm-to-reach}
	The progressive run problem for $\VASSB$ reduces to the problem $\REACH$
	for $\VASSB$.
\end{lemma}
Fix a $\VASSB$ $\cV$.
Using Lemma~\ref{new-witness-prog}, we look for progressive witnesses.
Let $A \subseteq P$ and $B \subseteq \bbq$.
We shall iterate over the finitely many choices for $T=(A,B)$ and
check that $\cV$ has an
infinite progressive run with a $A,B$-witness 
by reducing to the configuration reachability problem for an associated $\VASSB$ $\cV(T)$.

The $\VASSB$ $\cV(T)$ simulates $\cV$ and guesses the two configurations $c_1$ and $c_2$ such that
$c_0 \xrightarrow{*} c_1 \xrightarrow{*} c_2$ satisfies the conditions for a progressive witness. 
It operates in five total stages, with three main stages and two
auxiliary ones sandwiched between the main stages. 
In the first main stage, it simulates two identical copies of the run
of $\cV$ starting from $c_0$. The global state is shared by the two
copies while we have separate sets of places. We cannot maintain
separate sets of balloons for each copy since the inflate operation is
inherently non-deterministic and hence the balloon contents may be
different in the two balloons produced. The trick to maintaining two
copies of
the same balloon is to in fact only inflate a single instance of a ``doubled'' balloon
which uses ``doubled'' vectors and two copies of balloon places. Deflate
operations are then performed twice on each doubled balloon, moving
tokens to the corresponding copies of places.

The $\VASSB$  $\cV(T)$ also tracks the number of balloons in each balloon
state (independent of their contents), for each copy of the run.
At some point, it guesses that the current configuration is $c_1$ (in both copies) and moves
to the first auxiliary stage. The first auxiliary stage checks
whether all the balloons in $c_1$ are empty. Control is then passed to
the
second main stage.

In the second main stage, the first copy of the run is frozen to
preserve
$\pc(c_1)$ and $\cV(T)$ continues
to simulate $\cV$ on the second copy. This is implemented by producing
only ``single'' balloons during this stage and only deflating the second
copy of the places in the ``double'' balloons which were produced in the
first main stage.
While simulating $\cV$ on the second copy, $\cV(T)$ additionally
checks the progressiveness constraints (2) and (4) corresponding to an
$A,B$-witness in its global state.
The second main stage non-deterministically guesses when the second
copy
reaches $c_2$ (and ensures all progress
constraints have been met) and moves on to the second auxiliary stage.
Here the fact that all the balloons in $c_2$ are empty is checked and
then control passes
to the third main stage.
In the third main stage, $\cV(T)$ verifies that the two configurations
$c_1$ and $c_2$ also satisfy conditions (1), (3), and (5) for an
$A,B$-witness.
A successful verification puts $\cV(T)$ in a specific final
semi-configuration. 

\subsection{From Reachability in $\VASSB$ to Reachability in $\VASS$}
\label{sec:reach2reach}

In this subsection, we show that reachability for $\VASSB$
reduces to reachability in ordinary $\VASS$. 
We write $\exp_k(x)$ for the $k$-fold
exponential function i.e. $\exp_1(x)$ is $2^x$, $\exp_2(x)$ is
$2^{2^x}$ etc. 
%
%

A run $\rho=s_1 \autstep[*] s_2$ of a
$\VASSB$ $\cV$ between two semiconfigurations is said to be 
\emph{$N$-balloon-bounded} for some $N \in \nats$
if there exist at most $N$ non-empty
balloons which are inflated in $\rho$.
The following lemma is the crucial observation
for our reduction.
\begin{lemma}
\label{lem:N-bal-bdd}
  Given any $\VASSB$ $\cV=(Q,P,\balloonQ,\balloonP,E)$, there
  exists $N \in \nats$ with $N \leq O(\exp_4(|\cV|))$ such
  that for
  any two semiconfigurations $s_1,s_2$, if $(\cV,s_1,s_2) \in \REACH$,
  then there exists a run $\rho=s_1 \autstep[*] s_2$ of $\cV$
  that is $N$-balloon-bounded.
\end{lemma}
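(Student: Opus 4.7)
The plan is to first normalize $\cV$ using \cref{lem:zerotype}, then bound the number of non-empty balloons by combining a type count with a per-type consolidation argument based on the token-shifting surgery from the previous subsection. First, I would apply \cref{lem:zerotype} to reduce reachability in $\cV$ to reachability in a zero-base, typed $\VASSB$ $\cV'$, paying a single-exponential blowup, so that $|\cV'| \leq \exp_1(|\cV|)$. In $\cV'$, every balloon carries its type $t = (L,S)$ explicitly in its balloon state, where $L$ is a linear set with zero base vector (hence closed under addition) and $S = (\pi_1,p_1),\ldots,(\pi_n,p_n)$ is the sequence of potentially non-trivial deflates that the balloon will perform before being burst. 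Since each balloon place can contribute at most one non-trivial deflate, $n \leq |\balloonP'|$, and the total number of types is at most $\exp_2(|\cV|)$.

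The second step is to bound, for each fixed type $t$, the number of non-empty balloons of type $t$ appearing in a shortest reachability witness. The key structural tool is the token-shifting surgery: if two non-empty balloons $b_1,b_2$ of the same type $t$ occur in a run such that all non-trivial operations of $b_1$ strictly precede the inflation of $b_2$, then their contents can be consolidated into a single balloon with summed coefficient vector, using that $L$ is closed under addition. I would apply this consolidation greedily on any reachability run: repeatedly merge any such ``sequentially ordered'' pair of same-type balloons into one. The result is a run in which, for every type $t$, the balloons of type $t$ form an antichain under the ``operation-interval precedence'' order, i.e., every pair has overlapping non-trivial operation intervals.

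The third step is to bound the size of such an antichain. Here I would use a Dickson- or Rackoff-style pumping argument on the sequence of balloon operations of a fixed type, exploiting monotonicity of $\cV'$: if the antichain is too long, then by pigeonhole some pair of balloons induces an exchangeable sub-pattern whose effect on both main places and surviving balloon contents can be replayed using fewer balloons, contradicting minimality. This yields a per-type bound of at most $\exp_2(|\cV'|) = \exp_3(|\cV|)$. Multiplying by the type count gives a total bound of $\exp_2(|\cV|) \cdot \exp_3(|\cV|) = O(\exp_4(|\cV|))$, as required.

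The main obstacle is the third step: rigorously establishing the doubly-exponential antichain bound. Unlike the token-shifting surgery in \cref{new-witness-prog}, which exploits the special structure of progressive runs with eventually-empty balloons, here we must work with a finite reachability run in which several balloons of the same type may have genuinely interleaved deflate sequences. Making the pumping argument precise will likely require a careful well-quasi-ordering on ``interval-labelled'' balloon tuples, and using the typed, zero-base normal form to guarantee that excising a pumped sub-pattern preserves both the final configuration and the feasibility of all later VASS and balloon operations.
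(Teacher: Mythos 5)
Your first step (normalizing to a zero-base, typed $\VASSB$ via \cref{lem:zerotype}) and your second step (using closure of zero-base linear sets under addition to consolidate same-type balloons whose non-trivial deflates are sequentially ordered) match the paper's setup. The gap is exactly where you locate it, in your third step, and the tool you propose there is not the one that works. After your greedy consolidation you are left with same-type balloons whose deflate sequences genuinely interleave, and a ``Dickson- or Rackoff-style pumping argument'' does not obviously apply: you are not shortening a run or covering a configuration, you are trying to show that two interleaved balloons can be merged, and excising or replaying a ``sub-pattern'' of one balloon's operations changes the token flow into the main places at intermediate configurations, which can break feasibility of later subtractions. A wqo on interval-labelled tuples gives neither an effective bound nor a concrete merging operation here.

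The paper's missing idea is an \emph{id-switching surgery} that reduces the interleaved case back to the token-shifting case. For two balloons $i<j$ of the same type, record for each deflate $(\barp_k,p_k)$ of the type's deflate sequence whether $i$ performs it before $j$ (a word in $\{0,1\}^{|S|}$), and treat this word as an edge colour on the complete graph over all non-empty balloons of that type. Ramsey's theorem extracts a monochromatic clique, so all pairs in the clique interleave in the \emph{same} pattern; the ``reversed'' positions group into maximal $1$-blocks. By pigeonhole over balloon states, a large enough clique contains two balloons that additionally have \emph{equal balloon states at the start and end of every $1$-block}; this is precisely the condition that lets one swap which of the two balloons performs which operations inside each $1$-block. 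After the swap, every deflate of the earlier balloon precedes the corresponding deflate of the later one, so the token-shifting surgery applies and eliminates one non-empty balloon; validity of the modified run follows from monotonicity exactly as in \cref{new-witness-prog}. The Ramsey bound combined with the state-matching pigeonhole is what produces the $\exp_4$ bound. Without this surgery, or an equivalent exchange argument, your third step does not go through.
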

Before we prove Lemma~\ref{lem:N-bal-bdd}, let us see how it allows us to
reduce reachability in $\VASSB$ to reachability in $\VASS$.
\begin{lemma}
\label{lem:reach2reach}
  The problem $\REACH$ for $\VASSB$ reduces to the problem $\REACH$ for
  $\VASS$.
\end{lemma}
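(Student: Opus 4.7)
The plan is to use \cref{lem:N-bal-bdd} to fix a uniform bound $N$ on the number of non-empty balloons ever inflated in any witness run of a VASSB reachability instance, and then simulate $\cV$ by a VASS $\cV'$ that dedicates $N$ explicit \emph{slots} to non-empty balloons and, for each balloon state $\sigma \in \balloonQ$, a single counter place for empty balloons in state $\sigma$. Since \cref{lem:zerotype} lets me preprocess $\cV$ into a typed, zero-base VASSB, I may further assume that each balloon records in its balloon state the sequence $S = (\pi_1, p_1), \dots, (\pi_n, p_n)$ of non-trivial deflates it will undergo, making the lifecycle of a slot deterministic modulo the choice of balloon contents. Because $N$ bounds the non-empty balloons over the whole run, no slot ever needs to be reused.

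A slot of $\cV'$ would consist of a finite lifecycle tag kept in the finite control (current balloon state, recorded type, index of the next pending deflate in $S$) together with $|\balloonP|$ content places storing the balloon's contents; the places of $\cV$ are copied verbatim. An inflate $\newb(\sigma, L)$ is simulated by either (i) allocating a fresh unused slot, recording $\sigma$ and the guessed type, and materializing some $\kmap \in L$ by adding the base vector and a nondeterministically chosen multiplicity of each period vector into the slot's content places, or (ii) when $\bfz \in L$, just incrementing the empty-balloon counter for $\sigma$. A deflate $\deflateb(\sigma, \sigma', \pi, p)$ either updates the counter (moving one empty balloon from $\sigma$ to $\sigma'$ with zero transfer) or picks a slot in state $\sigma$ and iteratively transfers tokens one-at-a-time from its $\pi$-place to $p$, afterwards advancing the slot's lifecycle stage; a burst simply discards a slot or decrements the empty-balloon counter. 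The target of $\cV'$ is the natural encoding of the VASSB target together with the requirement that every slot-place and every empty-balloon counter be zero.

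The main obstacle is the ``transfer-all'' semantics of deflate: a single VASSB deflate empties its $\pi$-place in one step, whereas a VASS cannot zero-test. I would finesse this by giving $\cV'$ no decrement edge on a slot's content places other than the one issued by the simulated deflate, and by relying on the reachability target to reject cheating runs. Over-transfer is impossible because a VASS decrement cannot fire without a token being present; under-transfer leaves a residue in the slot that, by the no-reuse property, nothing else ever touches, so the residue survives to the end of the run and blocks reachability of the target. In the converse direction, any VASSB run $s_1 \xrightarrow{*} s_2$ is mimicked step-for-step by $\cV'$ after assigning each of its at-most-$N$ non-empty balloons to a distinct slot and replaying every operation verbatim. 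Combined with decidability of VASS reachability \cite{Kosaraju,Mayr81}, and with the remark that arbitrary-configuration reachability for VASSB reduces to the semiconfiguration case, this yields \cref{lem:reach2reach}.
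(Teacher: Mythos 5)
Your overall architecture is the same as the paper's: after the typed, zero-base preprocessing of \cref{lem:zerotype}, use the bound $N$ from \cref{lem:N-bal-bdd} to allocate $N$ never-reused slots of $|\bbp|$ content places for the non-empty balloons, keep one counter place per balloon state for the empty ones, track each slot's balloon state and pending-deflate index in the finite control, and let the final zero-requirement on all auxiliary places reject runs that under-transfer at a deflate (over-transfer being impossible in a \VASS{}). That ``only one opportunity to transfer, residue blocks the target'' argument is exactly the paper's correctness argument, and your one-token-at-a-time transfer loop is an acceptable substitute for the paper's bulk $\pm\Delta_L$ operations.

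There is, however, one genuine gap: your treatment of $\burstb$. In \VASSB{} semantics a balloon may be burst while it still contains tokens, and after the typed preprocessing this necessarily happens for every token sitting on a balloon place that never undergoes a deflate during that balloon's lifetime (the type only forces all \emph{recorded} deflates to occur before the burst, not that the balloon be empty). Your inflate gadget materializes the full vector $\kmap\in L$ into the slot, your deflate gadgets are the \emph{only} decrements on slot places, and your burst ``simply discards a slot'' --- so tokens on never-deflated balloon places are stranded in the slot forever and violate your final zero-check. The reduction is then incomplete: a legitimate \VASSB{} run reaching the target (e.g.\ one forced to inflate a balloon from a linear set supported on a place that no deflate edge ever touches) has no accepting counterpart in $\cV'$. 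The paper closes this hole by having the burst edge nondeterministically remove, from the never-deflated balloon places of the slot, a vector drawn from the projection of the balloon's linear set onto those places (so that some branch drains them exactly); an equally simple fix is to never deposit tokens on such places at inflation time, i.e.\ to inflate with the projection of $L$ onto the places occurring in the recorded deflate sequence. With either repair your argument goes through.
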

From a given $\VASSB$ $\cV$, we construct a $\VASS$ $\cV'$ which has
extra places $\bbq \times \set{1,\cdots,N}\times \bbp$ for storing the
contents of all the non-empty
balloons, as well as extra places $\bbq$ that store the number of 
balloons which were created empty for each balloon state $\barq$ of
$\cV$. The global state of $\cV'$ is used to keep track of the total
number of
non-empty balloons created as well as their state changes. Deflate and
burst operations are replaced by appropriate token transfers such that
there is only one opportunity for $\cV'$ to transfer tokens of any
non-empty balloon by using the global state. This results in a
 faithful simulation in the forward direction, as well as the
 easy extraction of a run of $\cV$ from a run of
 $\cV'$ in the
 converse direction.

\paragraph{Proof of \cref{lem:N-bal-bdd}}
We now prove \cref{lem:N-bal-bdd},
which will complete the proof of \cref{th:vassb-fair-term}.
 
We observe that if, for every balloon state $\barq$, the number of
balloons that are inflated in $\rho$ with state $\barq$ is bounded by
$N$, this
implies a bound of $|\bbq|N$ on the total number of balloons inflated
in $\rho$. Hence, we can equivalently show the former bound assuming
a particular balloon state.
We assume that $\cV$ is both
zero-base and typed while preserving reachability by \cref{lem:zerotype}. This
implies that
the type information is contained in the state of a balloon.
The lemma is then proved by showing that if more
than $N$ non-empty balloons of a particular state $\barq$ are
inflated
in a run
$\rho=s_1
\autstep[*] s_2$, then
it is possible to perform an \emph{id-switching} surgery, resulting in a run $\rho'=s_1
\autstep[*] s_2$ which creates one less non-empty balloon with state
$\barq$. 

\newcommand{\numlineIds}[1]{

	\coordinate (n0) at (0*\n, 0) {};
	\coordinate (nA) at (2.5*\n, 0) {};
	\coordinate (nB) at (3.5*\n, 0) {};
	\coordinate (nC) at (17.5*\n, 0) {};
	\coordinate (nD) at (18.5*\n, 0) {};
	\coordinate (nlast) at (20*\n, 0) {};
	\node[label=right:{\tiny #1}] at (nlast) {};

	\draw[line width=\l cm, black] (n0) -- (nA);
	\draw[dotted, line width=\l cm, black] (nA) -- (nB);
	\draw[line width=\l cm, black] (nB) -- (nC);
	\draw[dotted, line width=\l cm, black,] (nC) -- (nD);
		\draw[line width=\l cm, black] (nD) -- (nlast);

	\num{0}{};
	\num{1}{};
	\num{2}{};
	\num{4}{};
	\num{5}{};
	\num{6}{};
	\num{7}{};
	\num{8}{};
	\num{9}{};
	\num{10}{};
	\num{11}{};
	\num{12}{};
	\num{13}{};
	\num{14}{};
	\num{15}{};
	\num{16}{};
	\num{17}{};
	\num{19}{};
	\num{20}{};

}

\begin{figure}[t!]

\begin{tikzpicture}[scale=0.8]

\numlineIds{$\tau$}

\nvar{1}{$i_0$}
\dballoonInitl{1}{cyan}{0.2}

\nvar{2}{$j_0$}
\dballoonInitl{2}{orange}{0.2}

\nvar{4}{$i_1$}
\dballoonInitl{4}{cyan}{0.1}

\nvar{5}{$j_1$}
\dballoonInitl{5}{orange}{0.1}

\nvar{7}{$j_2$}
\dballoonInitl{7}{orange}{0}
\node at (7*\n,-1) {\scriptsize$i_0.\barq.=j_0.\barq$};
\draw[->] (7*\n,-0.9) -- (7*\n,-0.2);

\nvar{8}{$i_2$}
\dballoonInitl{8}{cyan}{0}

\nvar{10}{$j_3$}
\dballoonInitl{10}{orange}{-0.07}

\nvar{11}{$i_3$}
\dballoonInitl{11}{cyan}{-0.07}
\node at (11*\n,-1) {\scriptsize$i_0.\barq=j_0.\barq$};
\draw[->] (11*\n,-0.9) -- (11*\n,-0.2);

\nvar{13}{$i_4$}
\dballoonInitl{13}{cyan}{-0.12}

\nvar{14}{$j_4$}
\dballoonInitl{14}{orange}{-0.12}

\begin{scope}[yshift=-0.7*\yn cm]
\numlineIds{}

\nvar{1}{$i_0$}
\dballoonInitl{1}{cyan}{0.2}

\nvar{2}{$j_0$}
\dballoonInitl{2}{orange}{0.2}

\nvar{4}{$i_1$}
\dballoonInitl{4}{cyan}{0.1}

\nvar{5}{$j_1$}
\dballoonInitl{5}{orange}{0.1}

\nvar{7}{$i_2$}
\dballoonInitl{7}{cyan}{0}

\nvar{8}{$j_2$}
\dballoonInitl{8}{orange}{0}

\nvar{10}{$i_3$}
\dballoonInitl{10}{cyan}{-0.07}

\nvar{11}{$j_3$}
\dballoonInitl{11}{orange}{-0.07}

\nvar{13}{$i_4$}
\dballoonInitl{13}{cyan}{-0.12}

\nvar{14}{$j_4$}
\dballoonInitl{14}{orange}{-0.12}

\end{scope}

\begin{scope}[yshift=-1.3*\yn cm]
\numlineIds{$\tau'$}

\nvar{1}{$i_0$}
\dballoonInitl{1}{cyan}{0.2}

\nvar{2}{$j_0$}
\dballoonInitl{2}{orange}{-0.2}

\nvar{4}{$i_1$}
\dballoonInitl{4}{cyan}{0.1}

\nvar{5}{$j_1$}
\dballoonInitl{5}{orange}{-0.2}

\nvar{7}{$i_2$}
\dballoonInitl{7}{cyan}{0}

\nvar{8}{$j_2$}
\dballoonInitl{8}{orange}{-0.2}

\nvar{10}{$i_3$}
\dballoonInitl{10}{cyan}{-0.07}

\nvar{11}{$j_3$}
\dballoonInitl{11}{orange}{-0.2}

\nvar{13}{$i_4$}
\dballoonInitl{13}{cyan}{-0.12}

\nvar{14}{$j_4$}
\dballoonInitl{14}{orange}{-0.2}

\end{scope}

\end{tikzpicture}
\caption{Top: Initial run $\tau$ with two non-empty balloons of the
same type: cyan balloon inflated at $i_0$ and orange balloon inflated
at $j_0$. Middle: Switching cyan and orange balloons in
the part of $\tau$ between $j_2$ and $i_3$. Bottom: Modified run
$\tau'$ obtained by shifting token from orange balloon to cyan
balloon.}
\label{id-switch-fig}
\end{figure}
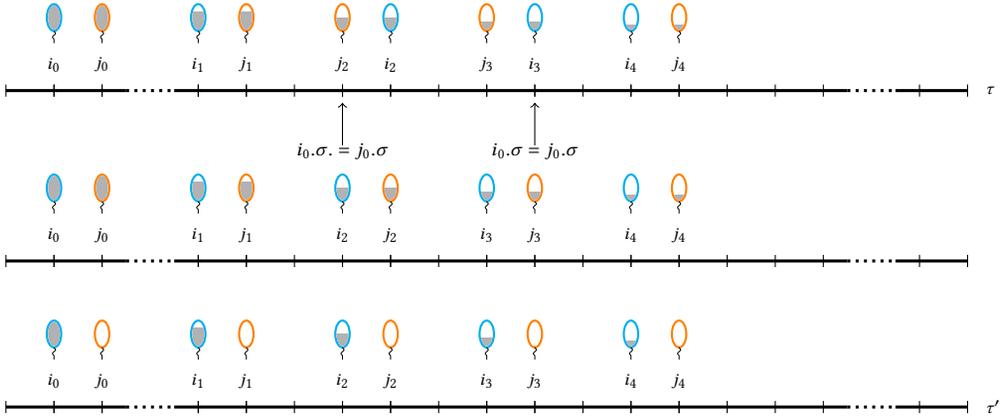

The id-switching surgery is depicted in Figure \ref {id-switch-fig} . The
formal proof of correctness uses runs-with-id. Fix a run-with-id $\tau=d_0 \autstep[\op_1] d_1
\autstep[\op_2] d_2 \cdots$. 
Suppose the cyan and orange balloons are inflated at $i_0$ and $j_0$
respectively
with the same balloon state. Since the type information is included in the
balloon state, this implies that they are also of the same type $t=
(L_t,S_t)$.
The points marked with indices $i_1,i_2,i_3,i_4$ (resp.
$j_1,j_2,j_3,j_4$) are
those at which the cyan (resp. orange) balloon undergoes a deflate
from $S_t$. While
$i_1 < j_1$ and $i_4 < j_4$, we have $j_2 < i_2$ and $j_3 < i_3$;
therefore token-shifting is not possible.
However, let  us assume that the state of the cyan and orange
balloons
is
the same
at $d_{j_2-1}$ and $d_{i_3}$ of $\tau$. This implies that the
operations
performed on the two balloons in $\tau[j_2,i_3]$ can be switched as
shown in the middle of Figure \ref{id-switch-fig}. Note that this
need not be a valid
run
as the number of tokens transferred by the orange at $j_2$ may exceed
that
transferred by the cyan balloon at $i_2$ and the extra tokens may be
required for
the run $\tau[j_2,i_2]$ to be valid. However, the id-switching now
enables a token-shifting operation since $j_k <i_k$ for each $k \in
\set{0,1,\ldots,4}$. Thus, combining
the switch with a token-shifting operation which moves
all tokens from orange to cyan results in the valid run $\tau'$
shown at the bottom of Figure \ref{id-switch-fig}, which contains one less non-empty
balloon of
type $t$ than $\tau$. It remains to show that such an
id-switching
surgery is always possible in a run $\tau$ when the number of
non-empty balloons of a type $t$ exceeds the bound $N$ given in the
lemma.

\paragraph{Ramsey's Theorem} \label{ramsey} To this end, we employ
  the well-known (finite) Ramsey's theorem~\cite[Theorem B]{Ramsey1930}, which
  we recall first.  For a set $S$ and $k\in\nats$, we denote by
  $\powerset[k]{S}$ the set of all $k$-element subsets of $S$. An
  \emph{$r$-colored (complete) graph} is a tuple $(V,E_1,\ldots,E_r)$,
  where $V$ is a finite set of \emph{vertices} and the sets
  $E_1,\ldots,E_r$ form a partition of all possible edges
  (i.e.~two-element subsets), i.e.
  $\powerset[2]{V}=E_1\mathop{\dot{\cup}}\cdots\mathop{\dot{\cup}}
  E_r$. A subset $U\subseteq V$ of vertices is
  \emph{monochromatic} if all edges between members of $U$ have the
  same color, in other words, if $\powerset[2]{U}\subseteq E_j$
  for some $j\in[1,r]$.  Ramsey's theorem says that for each
  $r,n\in\nats$, there is a number $R(r;n)$ such that any $r$-colored
  graph with at least $R(r;n)$ vertices contains a monochromatic subset of
  size $n$. It is a classical result by Erd\H{o}s and
  Rado~\cite[Theorem~1]{ErdosRado1952} that $R(r;n)\le r^{r(n-2)+1}$.

\def \polyside {2} %

\newcommand{\numlineR}[1]{

	\coordinate (n0) at (0*\n, 0) {};
	\coordinate (nlast) at (25*\n, 0) {};
	\node[label=right:{\small #1}] at (nlast) {};

	\draw[line width=\l cm, black] (n0) -- (nlast);

	\num{0}{};
	\num{1}{};
	\num{2}{};
	\num{3}{};
	\num{4}{};
	\num{5}{};
	\num{6}{};
	\num{7}{};
	\num{8}{};
	\num{9}{};
	\num{10}{};
	\num{11}{};
	\num{12}{};
	\num{13}{};
	\num{14}{};
	\num{15}{};
	\num{16}{};
	\num{17}{};
	\num{18}{};
	\num{19}{};
	\num{20}{};
	\num{21}{};
	\num{22}{};
	\num{23}{};
	\num{24}{};
	\num{25}{};
}

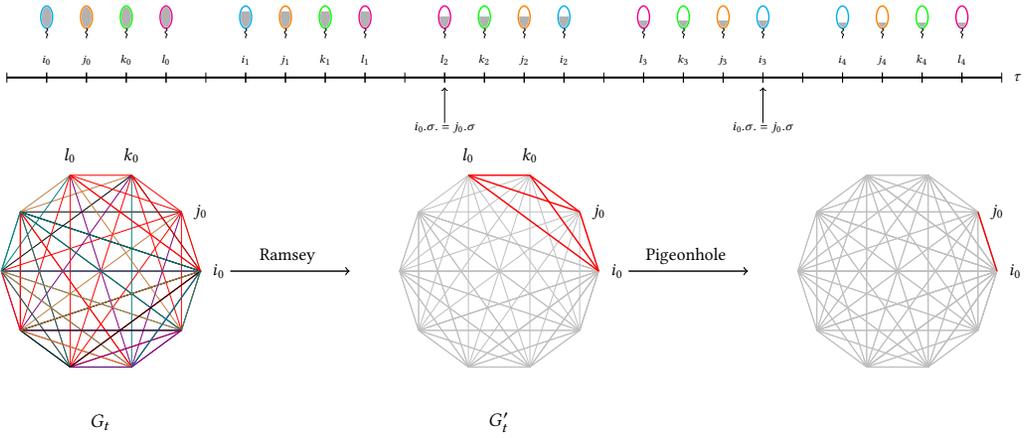
\begin{figure}
\resizebox{.99\linewidth}{!}{
	\begin{tikzpicture}
		\numlineR{$\tau$}
\nvar{1}{$i_0$}
\dballoonInitl{1}{cyan}{0.2}

\nvar{2}{$j_0$}
\dballoonInitl{2}{orange}{0.2}

\nvar{3}{$k_0$}
\dballoonInitl{3}{green}{0.2}

\nvar{4}{$l_0$}
\dballoonInitl{4}{magenta}{0.2}

\nvar{6}{$i_1$}
\dballoonInitl{6}{cyan}{0.1}

\nvar{7}{$j_1$}
\dballoonInitl{7}{orange}{0.1}

\nvar{8}{$k_1$}
\dballoonInitl{8}{green}{0.1}

\nvar{9}{$l_1$}
\dballoonInitl{9}{magenta}{0.1}

\nvar{11}{$l_2$}
\dballoonInitl{11}{magenta}{0}
\node at (11*\n,-1) {\scriptsize$i_0.\sigma.=j_0.\sigma$};
\draw[->] (11*\n,-0.9) -- (11*\n,-0.2);

\nvar{12}{$k_2$}
\dballoonInitl{12}{green}{0}

\nvar{13}{$j_2$}
\dballoonInitl{13}{orange}{0}

\nvar{14}{$i_2$}
\dballoonInitl{14}{cyan}{0}

\nvar{16}{$l_3$}
\dballoonInitl{16}{magenta}{-0.07}

\nvar{17}{$k_3$}
\dballoonInitl{17}{green}{-0.07}

\nvar{18}{$j_3$}
\dballoonInitl{18}{orange}{-0.07}

\nvar{19}{$i_3$}
\dballoonInitl{19}{cyan}{-0.07}
\node at (19*\n,-1) {\scriptsize$i_0.\sigma.=j_0.\sigma$};
\draw[->] (19*\n,-0.9) -- (19*\n,-0.2);

\nvar{21}{$i_4$}
\dballoonInitl{21}{cyan}{-0.12}

\nvar{22}{$j_4$}
\dballoonInitl{22}{orange}{-0.12}

\nvar{23}{$k_4$}
\dballoonInitl{23}{green}{-0.12}

\nvar{24}{$l_4$}
\dballoonInitl{24}{magenta}{-0.12}

	\end{tikzpicture}
  }
  \resizebox{.99\linewidth}{!}{
	\begin{tikzpicture}
\pgfmathdeclarerandomlist{MyRandomColors}{%
    {red}%
    {teal}%
    {black}
    {brown}%
    {violet}%
}

    \foreach \i in {0, ..., 10} {
     \coordinate (\i) at (\i*36:\polyside) {};
    }
  	\foreach \x in {0, ..., 10}
        \foreach \y in {4,...,10} {
        \pgfmathrandomitem{\RandomColor}{MyRandomColors} 
            \draw[\RandomColor] (\x)--(\y);
        }

   \draw[red] (0)--(1);
    \draw[red] (0)--(2);
    \draw[red] (0)--(3);
    \draw[red] (1)--(2);
    \draw[red] (1)--(3);
    \draw[red] (2)--(3);
    \node[label=right:$i_0$] at (0) {};
    \node[label=right:$j_0$] at (1) {};
    \node[label=above:$k_0$] at (2) {};
    \node[label=above:$l_0$] at (3) {};
    \node at (0,-3) {\large $G_t$};

   \begin{scope}[xshift=8cm]
   	\foreach \i in {0, ..., 10} {
     \coordinate (n\i) at (\i*36:\polyside) {};
    }

  	\foreach \x in {0,1,...,10}
        \foreach \y in {4,...,10} {
            \draw[black!25] (n\x)--(n\y);
        }
    \draw[red,thick] (n0)--(n1);
    \draw[red,thick] (n0)--(n2);
    \draw[red,thick] (n0)--(n3);
    \draw[red,thick] (n1)--(n2);
    \draw[red,thick] (n1)--(n3);
    \draw[red,thick] (n2)--(n3);
    \node[label=right:$i_0$] at (n0) {};
    \node[label=right:$j_0$] at (n1) {};
    \node[label=above:$k_0$] at (n2) {};
    \node[label=above:$l_0$] at (n3) {};
    \node at (0,-3) {\large $G'_t$};
   \end{scope}
   \draw[->,thick] (2.6,0) -- (5,0);
   \node at (3.75,0.3) {Ramsey};

   \begin{scope}[xshift=16cm]
   	\foreach \i in {0, ..., 10} {
     \coordinate (m\i) at (\i*36:\polyside) {};
    }
  	\foreach \x in {0, ..., 10}
        \foreach \y in {0,...,10} {
            \draw[black!25] (m\x)--(m\y);
        }
   \draw[red,thick] (m0)--(m1);
    \node[label=right:$i_0$] at (m0) {};
    \node[label=right:$j_0$] at (m1) {};
   \end{scope}
  \draw[->,thick] (10.6,0) -- (13,0);
   \node at (11.75,0.3) {Pigeonhole};

	\end{tikzpicture}
  }
	 \caption{Above:Four balloons inflated at $i_0<j_0<k_0<l_0$ of the
	 same type $t$ performing
four deflate
operations (subscripts denote deflate operations of the same
balloon). Note that the ordering relationship of deflate
operations between any pair of the four balloons is the same: between
balloons $i_0$ and $j_0$, their deflate sequences are related as
$i_1<j_1,j_2
<i_2,j_3<i_3,i_4 <j_4$, which is represented by the string $0110$. The
edge-color red is used to represent $0110$ in the figure.
The balloons $i_0$ and $j_0$ share the same states at
configurations $d_{l_2}-1$ and
$d_{i_3}$ of $\tau$. \\
Below: The same four balloons inflated at $i_0<j_0<k_0<l_0$ shown as
forming a monochromatic
subgraph $G'_t$ in the graph $G_t$. For large enough $|G'_t|$, by
Pigeonhole Principle we find $i_0,j_0$ which share the same states.}
\label{ramsey-fig}
\end{figure}

The application of Ramsey's Theorem is shown in Figure 
\ref{ramsey-fig}. The bottom half of the figure depicts the
construction of a graph $G_t$ whose vertices are balloons, on which
Ramsey's theorem is applied. This results in the identification of a
monochromatic clique with vertices $i_0,j_0,k_0,l_0$.
The
top half of the figure shows the deflate operations on the
balloons
inflated at $i_0,j_0,k_0,l_0$ in the run $\tau$.

Formally, we construct a
graph $G_t$ with vertex set 
$V_t$ of all id's in $\tau$ of a fixed type
$t$. By assumption, $|V_t| \ge N$. For id's $i,j \in V_t$ with $i<j$
and $S=(\barp_1,p_1),\cdots,
(\barp_n,p_n)$, define a sequence $s_{i,j} \in
\set{0,1}^{|S|}$ by $s_{i,j}(k)=0$ if and only if $i$ undergoes the
deflate transferring tokens from $\barp_k$ to $p_k$ before $j$ does.

In Figure 3, assuming that $|S|=4$, we have $s_{i_0,j_0}=0110$. 
Interpreting each word from $\set{0,1}^{|S|}$ as a color, we obtain a
finite coloring of the edges of $G_t$. Red colored edges in $G_t$ are
to be interpreted as the string $0110$, with other colors representing
other strings. For a large enough value of
$N$, Ramsey's Theorem gives us a monochromatic subgraph $G'_t$ of
$G_t$ induced by a set of vertices $V'_t$. As shown in the figure,
any pair of balloons chosen from the
cyan, orange, green and magenta balloons inflated at $i_0,j_0,k_0,l_0$
respectively, behave in the same way with respect to their order of
deflates and thus form a monochromatic subgraph $G'_t$ colored red.

Let a
maximal contiguous sequence of $1$'s in $s_{i,j}$ be called a
\emph{$1$-block}. Since the number of balloon
states is finite, this implies that for a large enough value of
$|V'_t|$, there will exist two id's $i_0,j_0 \in |V'_t|$ (represented
by the cyan and orange balloons respectively) which share
the
same states at configurations at the beginning and end of every
$1$-block by the Pigeonhole Principle. The id-switching surgery can be
performed on the cyan and orange balloons, which are the ones
considered in the
id-switching surgery of Figure \ref{id-switch-fig}. While Ramsey's
Theorem gives
a double-exponential
bound
on $N$ in order to obtain a large monochromatic subgraph, the
second condition requiring same states at the beginning and end of
$1$-blocks further increases our requirement to $\exp_4$ for
id-switching to
be enabled.

This concludes the proof of \cref{lem:N-bal-bdd} as well as \cref{th:vassb-fair-term}.

\newcommand{\wdcl}[2]{#1\mathord{\downarrow}_{#2}}
\renewcommand{\vec}[1]{\mathbf #1}
\newcommand{\femph}[1]{\textbf{#1}}

\newcommand{\frakp}{\mathfrak{p}}
\newcommand{\frakf}{\mathfrak{f}}
\newcommand{\frakt}{\mathfrak{t}}
\newcommand{\fraks}{\mathfrak{s}}
\newcommand{\frakm}{\mathfrak{m}}
\newcommand{\fraku}{\mathfrak{u}}
\newcommand{\frakP}{\mathfrak{P}}
\newcommand{\frakK}{\mathfrak{K}}
\newcommand{\frakT}{\mathfrak{T}}
\newcommand{\frakS}{\mathfrak{S}}
\newcommand{\frakV}{\mathfrak{V}}

\section{Starvation}\label{starvation}
We now prove \cref{thm:fair-starv}.
Let us first explain the additional difficulty
of the starvation problem. For deciding progressive termination, we
observed that each thread execution can be abstracted by its type
and the threads it spawns. (In other words, two executions
that agree in these data are interchangeable without affecting
progressiveness of a run.) However, for starvation of a thread, it is
also important whether each thread visits some stack content $w$
after $i$ context switches. Here, $w$ is not known in advance and has
to be agreed upon by an infinite sequence of threads.

Very roughly speaking, we reduce starvation to progressive termination
as follows.  For each thread, we track its spawned multiset up to some
bound $B$.  Using Ramsey's theorem~\cite[Theorem
B]{Ramsey1930}, we show that if we choose $B$ high enough, then this
abstraction already determines whether a sequence of thread executions
can be replaced with different executions that actually visit some
agreed upon stack content $w$ after $i$ context switches. The latter
condition permitting replacement of threads will be called
``consistency.''

A further subtlety is that consistency of the abstractions up to $B$
only guarantees consistency of the (unabstracted) executions if the
run is shallow. Here, \cref{dcps-spawn-bounded} will yield a
shallow run, so that we may conclude consistency of the
unabstracted executions.

\subsection*{Terminology}
In our terminology, a
thread is a pair $(w,i)$, where $w$ is a stack content and $i$ is a
context switch number.  To argue about starvation, it is 
convenient to talk about how a thread evolves over time. By a
\emph{(thread) execution} we refer to the sequence of (pushdown and
swap) instructions that belong to a single thread, from its creation
via spawn until its termination. A thread execution can spawn new
threads during each of its segments. We
say that a thread execution $e$ \emph{produces} the multiset
$\mmap\in\multiset{\Lambda}$, where $\Lambda=\Gamma\times\{0,\ldots,K\}$
if the following holds: For each $i\in\{0,\ldots,K\}$ and
$\gamma\in\Gamma$, the thread execution $e$ spawns $\mmap((\gamma,i))$
new threads with top of stack $\gamma$ in segment $i$. In this
case, we also call $\mmap$ the \emph{production} of $e$.

According to \cref{lem:progDCPS}, in order to decide $\STARV[K]$, it
suffices to decide whether in a given $\DCPS$ $\cA$, there exists a
\emph{progressive} run that starves some thread $(w,i)$. Therefore, we
say that a run $\rho$ is \emph{starving} if it is progressive and
starves some thread $(w,i)$.  Let us first formulate starvation in
terms of thread executions.  We observe that a progressive run $\rho$
starves a thread $(w,i)$ if and only if there are configurations
$c_1,c_2,\ldots$ and executions $e_1,e_2,\ldots$ in $\rho$ such that:
\begin{enumerate}
  \item For each $j=1,2,\ldots$, in configuration $c_j$, both $e_j$ and
    $e_{j+1}$ are in state $(w,i)$,
  \item $e_j$ is switched to in the step after $c_j$, and
  \item $e_{j+1}$ is not switched to until $c_{j+1}$.
\end{enumerate}
For the ``if'' direction, note that if a progressive run $\rho$ starves
$(w,i)$, then $(w,i)$ must be in the bag from some point on and
whenever $(w,i)$ becomes active, there are at least two instances of
$(w,i)$ in the bag.  We choose $c_1,c_2,\ldots$ as exactly those
configurations in $\rho$ after which $(w,i)$ becomes active.  Moreover,
$e_j$ is the thread execution that is switched to after
$c_j$. Furthermore, since in $c_j$, there must be another instance of
$(w,i)$ in the bag, there must be some execution $e'_j$ whose state
$(w,i)$ is in the bag at $c_j$. However, since $e_{j+1}$ will start
from $(w,i)$ in $c_{j+1}$ and $e'_j$ is in $(w,i)$ at $c_j$, we may
assume that $e_{j+1}=e'_j$. With this choice, we clearly satisfy
(1)--(3) above.

For the ``only if'' direction, note that conditions (1)--(3) allow
$(w,i)$ to become active in between $c_j$ and $c_{j+1}$. However, since
$e_{j+1}$ is not switched to between $c_j$ and $c_{j+1}$, we know that
any time $(w,i)$ becomes active, there must be another instance of $(w,i)$.

\subsection*{Consistency}
Our first step in deciding starvation is to find a reformulation
that does not explicitly mention the stack $w$. Instead, it states
the existence of $w$ as a consistency condition, which we will develop
now.

Of course, it suffices to check whether a $\DCPS$ can starve some thread
$(w,i)$ when $i\in[1,K]$ is fixed. Therefore, from now on, we choose
some $i\in[1,K]$ and want to decide whether there is a stack
$w\in\Gamma^*$ such that the thread $(w,i)$ can be starved by our
$\DCPS$.

First, a note on notation. In the following, we will abbreviate the
set $\cT(\cA,K)$ of thread types with $\cT$. We will work with
families $(X_t)_{t\in\cT}$ of subsets $X_t\subseteq X$ of some set $X$
indexed by types $t\in\cT$. We identify the set of such tuples indexed
by $\cT$ with $\powerset{X}^\cT$. Sometimes, it is more natural to
treat them as tuples $(X_1,\ldots,X_k)$ with $k=|\cT|$. For
simplicity, we will call both objects tuples.

For each type $t\in\cT$, we consider the following set
\begin{align*} S_t=\{(w,\vec m)\in\Gamma^*\times\multiset{\Lambda} \mid \text{there is an execution of type $t$ that produces $\mmap$} \\
  \text{and reaches stack $w$ after segment $i$}\}
\end{align*}
The set $S_t$ encodes the following information: Is there a thread
execution of type $t$ that produces $\mmap\in\multiset{\Lambda}$ and
at the same time arrives in $w$ after $i$ segments? The tuple
$\frakS_{\cA}=(S_t)_{t\in\cT}$ encodes this information for all types at
once.

We will analyze $\frakS_{\cA}$ to show that if our decision procedure claims
that there exists a starving run, then we can construct one. This
construction will involve replacing one execution with another that
(i)~has the same type, (ii)~arrives in $w$ after $i$ segments,
and (iii)~spawns more threads. Formally, the inserted execution must
be larger w.r.t.\ the following order: For
$\mmap,\mmap'\in\multiset{\Lambda}$, we have $\mmap\preceq_1\mmap'$ if and
only if $\mmap\preceq\mmap'$ and also $\sup(\mmap)=\sup(\mmap')$.
Recall that $\sup(\mmap)=\{x\in\Lambda\mid \mmap(x)>0\}$ is the support of $\mmap\in\multiset{\Lambda}$.
Here, the condition $\sup(\mmap)=\sup(\mmap')$ makes sure that the replacement
does not introduce thread spawns with new stack symbols, as this
might destroy progressiveness of the run.

Let $S\subseteq\Gamma^*\times\multiset{\Lambda}$ be a set.
For $w\in\Gamma^*$, we define
\[ \wdcl{S}{w} = \{\vec m\in\multiset{\Lambda} \mid \exists \vec m'\in\multiset{\Lambda}\colon \vec m\preceq_1 \vec m',~(w,\vec m')\in S\}.\]
Observe that $\mmap\in\wdcl{S_t}{w}$ expresses that there exists an
execution of type $t$ that visits $w$ after segment $i$ and
produces a vector $\mmap'\succeq_1 \mmap$.

Our definition of consistency involves the tuple
$\frakS_{\cA}$. However, since some technical proofs will be more natural in a
slightly more abstract setting, we define consistency for a general
tuple $\frakS=(S_1,\ldots,S_k)$ of subsets
$S_l\subseteq\Gamma^*\times\multiset{\Lambda}$. Hence, the following
definitions should be understood with the case $\frakS=\frakS_{\cA}$
in mind. Suppose we have a run with thread executions $e_1,e_2,\ldots$
and for each type $t$, let $V_t$ be the set of productions of all
executions in $\{e_1,e_2,\ldots\}$ that have type $t$. We want to
formulate a condition expressing the existence of a stack $w$ such
that for any $t\in\cT$ and any multiset $\mmap\in V_t$, there exists
an execution of type $t$ that visits $w$ (after $i$ context switches)
and produces a multiset $\mmap'\succeq_1 \mmap$. This would allow us
to replace each $e_j$ by an execution $e'_j$ that actually visits $w$:
Since $\mmap'\succeq_1\mmap$, we know that $e'_j$ produces more
threads of each stack symbol (and can thus still sustain the run), but
also does not introduce new kinds of threads (because $\mmap'$ and
$\mmap$ have the same support), so that progressiveness will not
be affected by the replacement.

Let us make this formal. We say that a tuple $\frakV=(V_1,\ldots,V_k)$ with
$V_l\subseteq\multiset{\Lambda}$ is \emph{$\frakS$-consistent} if
there exists a $w\in\Gamma^*$ with $V_l\subseteq \wdcl{S_l}{w}$ for
each $l\in[1,k]$. 
In this case, we call $w$ an
\emph{$\frakS$-consistency witness} for $\frakV$. For words
$w,w'\in\Gamma^*$, we write $w\le_\frakS w'$ if
$\wdcl{S_l}{w}\subseteq\wdcl{S_l}{w'}$ for every $l\in[1,k]$.

\subsection*{Starvation in Terms of Consistency}
This allows us to state the following reformulation of starvation,
where $w$ does not appear explicitly.
A progressive run $\rho$ is said to be \emph{consistent} if
there are configurations $c_1,c_2,\ldots$ and thread
executions $e_1,e_2,\ldots$ that produce $\vec m_1,\vec m_2,\ldots$
and such that:
\begin{enumerate}
\item For each $j=1,2,\ldots$, in configuration $c_j$, the executions
  $e_j$ and $e_{j+1}$ have completed $i$ segments,
\item $e_j$ is switched to in the step after $c_j$,
\item $e_{j+1}$ is not switched to until $c_{j+1}$, and:
\item Let $V_t=\{\vec m_j \mid \text{$j\in\nats$, execution
$e_j$ has
    type $t$}\}$. Then the tuple $\frakV=(V_t)_{t\in\cT}$ is
  $\frakS_{\cA}$-consistent.
\end{enumerate}
Note that the consistency condition in (4) expresses that there exists
a stack
content $w$ such that we could, instead of each $e_j$, perform a
thread execution that actually visits $w$. It is thus straightforward to show:
\begin{restatable}{lemma}{starvingConsistent}\label{starving-consistent}
  A $\DCPS$ has a starving run if and only if it has a consistent run.
\end{restatable}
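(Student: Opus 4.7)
The plan is to prove both directions by translating between the two characterizations using the definition of $\frakS_{\cA}$.

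For the forward direction, suppose $\rho$ is a starving run that starves some thread $(w,i)$. I would invoke the three-condition reformulation of starvation given earlier in this section to extract configurations $c_j$ and executions $e_j$ satisfying items (1)--(3), with the property that at each $c_j$ both $e_j$ and $e_{j+1}$ are in state $(w,i)$. This immediately gives items (1)--(3) of consistency, since being in state $(w,i)$ in particular entails having completed $i$ segments. For item (4), let $t_j$ and $\mmap_j$ be the type and production of $e_j$; since $e_j$ actually reaches stack $w$ after segment $i$, we have $(w,\mmap_j)\in S_{t_j}$, and therefore $\mmap_j\in\wdcl{S_{t_j}}{w}$ by taking $\mmap_j$ itself as the dominating vector. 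Consequently $V_t\subseteq\wdcl{S_t}{w}$ for every $t\in\cT$, so $w$ is an $\frakS_{\cA}$-consistency witness for $\frakV=(V_t)_{t\in\cT}$.

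For the backward direction, I start from a consistent run $\rho$ with witness $w$ and construct a starving run for the thread $(w,i)$. For each $j$, by the definition of $\wdcl{S_{t_j}}{w}$, I pick $\mmap_j'\succeq_1\mmap_j$ with $(w,\mmap_j')\in S_{t_j}$ together with a thread execution $e_j'$ of type $t_j$ that produces $\mmap_j'$ and visits $w$ after segment $i$. I then replace each $e_j$ in $\rho$ by $e_j'$. Since $e_j$ and $e_j'$ share the same type $t_j$, the global-state sequence at every segment boundary is preserved and the interleaving of $\rho$ remains valid at the level of state transitions. The only new behaviour is the spawning of additional threads of types $(\gamma,k)$, where by $\sup(\mmap_j')=\sup(\mmap_j)$ each such type already appears in $\mmap_j$; these additional spawns can be scheduled alongside already-existing copies of the same spawn-type, mimicking the choices the original run already made. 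In the resulting run, at each $c_j$ both $e_j'$ and $e_{j+1}'$ are in state $(w,i)$, $e_j'$ is activated immediately after $c_j$, and the specific execution $e_{j+1}'$ is not activated until $c_{j+1}$, so the thread $e_{j+1}'$ is starved as required.

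The main obstacle I anticipate is the replacement step in the backward direction, namely verifying that the constructed run is actually a valid progressive $\DCPS$ run rather than merely a valid sequence of transitions. The key technical point is that $\preceq_1$ requires equal support: this is precisely what rules out $e_j'$ spawning threads of stack types not already present in $\rho$, which would otherwise call for scheduling patterns not realized in $\rho$ and could violate progressiveness. With support preservation, every extra spawn can be served by copying the scheduling decisions already used for some existing thread of the same stack and segment, so the cascade of new scheduling decisions stays within the behavioural envelope of $\rho$ and progressiveness is preserved.
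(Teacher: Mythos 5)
Your forward direction is fine and matches the paper's (which dispatches it in one sentence). The skeleton of your backward direction---replace each $e_j$ by an execution $e'_j$ of the same type that visits $w$ after segment $i$ and produces $\mmap'_j\succeq_1\mmap_j$---is also the paper's, and you correctly identify that the equal-support requirement in $\preceq_1$ is what prevents spawns with entirely new stack symbols. The gap is in how you handle the \emph{extra} spawned threads, i.e.\ the surplus $\mmap'_j-\mmap_j$. You propose to activate them, ``scheduling them alongside already-existing copies of the same spawn-type, mimicking the choices the original run already made.'' This step would fail: a thread execution cannot simply be run ``alongside'' another one. Each of its segments begins with a resume rule that must match the current global state and ends by changing the global state, so inserting an extra execution (or an extra segment) at the point where the original ran perturbs the global-state sequence and invalidates the remainder of the run; the cascade of further mimicked executions that you invoke only compounds the problem.

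The resolution is the opposite of what you propose: the surplus threads should \emph{never} be activated. Progressiveness is a condition on local configurations, not on thread identities, so a surplus thread $(\gamma,0)$ may sit in the bag forever provided $(\gamma,0)$ is resumed infinitely often by \emph{other} threads. This is guaranteed exactly when $\gamma$ lies in the set $\Gamma_{\mathsf{inf}}$ of symbols scheduled infinitely often along $\rho$, and it can fail for the finitely many early executions $e_j$ that spawn symbols outside $\Gamma_{\mathsf{inf}}$: such a surplus thread would remain in the bag while its local configuration is resumed only finitely often, violating progressiveness. The missing idea is therefore to replace only $e_j$ for $j\ge k$, where $k$ is chosen (using progressiveness of $\rho$) so that every $e_j$ with $j\ge k$ spawns only symbols from $\Gamma_{\mathsf{inf}}$; by equal support the same then holds for each $\mmap'_j$, the surplus threads can be left untouched in the bag, and the modified run is progressive and starving. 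Replacing only a tail of the sequence $e_1,e_2,\ldots$ does not affect the starvation conditions, so the conclusion still follows.
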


\subsection*{Tracking Consistency}
Our next step is to find some finite data that we can track about each
of the produced vectors $\vec m_1,\vec m_2,\ldots$ such that this data
determines whether the tuple $(V_t)_{t\in\cT}$ is $\frakS_{\cA}$-consistent.
We do this by abstracting vectors ``up to
a bound.''  Let $B\in\nats$. We define the map
$\alpha_B\colon\multiset{\Lambda}\to\multiset{\Lambda}$ by
$\alpha_B(\mmap)=\mmap'$, where $\mmap'(x) =\min(\mmap(x), B)$
for $x\in\Lambda$. We
naturally extend $\alpha_B$ to subsets of $\multiset{\Lambda}$
(point-wise) and to tuples of subsets of $\multiset{\Lambda}$
(component-wise). Note that for a tuple $\frakV=(V_t)_{t\in\cT}$ with
$V_t\subseteq\multiset{\Lambda}$ for $t\in\cT$, the tuple
$\alpha_B(\frakV)$ belongs to the finite set
$\powerset{[0,B]^{\Lambda}}^{\cT}$.
The following \lcnamecref{abstraction-consistency} tells us that by
abstracting w.r.t.\ some suitable $B$, we do not lose information
about $\frakS_{\cA}$-consistency.

\begin{theorem}\label{abstraction-consistency}
  Given a $\DCPS$ $\cA$, there is an effectively computable bound
  $B\in\nats$ such that the following holds. If
  $\frakV=(V_t)_{t\in\cT}$ is a tuple of \femph{finite}
  subsets $V_t\subseteq\multiset{\Lambda}$, then $\frakV$ is
  $\frakS_{\cA}$-consistent if and only if $\alpha_B(\frakV)$ is
  $\frakS_{\cA}$-consistent.
\end{theorem}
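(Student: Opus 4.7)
The plan is as follows. First, the forward direction: if $w$ witnesses $\frakS_{\cA}$-consistency of $\frakV$, then $w$ will also witness consistency of $\alpha_B(\frakV)$ for any $B \geq 1$, since $\alpha_B(\mmap) \preceq_1 \mmap$ (the support is preserved once $B \geq 1$) and $\wdcl{S_t}{w}$ is downward closed under $\preceq_1$.

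For the converse, my plan is to show that the \emph{same} $w$ witnessing $\alpha_B(\frakV)$-consistency also witnesses $\frakV$-consistency, provided $B$ is chosen large enough. Fix a type $t\in\cT$ and $\mmap\in V_t$. By hypothesis there is a thread execution $e$ of type $t$ visiting stack $w$ after segment $i$ and producing some $\mmap^\ast \succeq_1 \alpha_B(\mmap)$; the goal is to enhance $e$ into an execution $e'$ that still visits $w$ after segment $i$ but produces $\mmap'' \succeq_1 \mmap$. For coordinates $x$ with $\mmap(x)\leq B$ we already have $\mmap^\ast(x)\geq B\geq \mmap(x)$, so the task reduces to increasing $\mmap^\ast(x)$ past $\mmap(x)$ for the finitely many coordinates $x$ with $\mmap(x)>B$.

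The main tool will be a pumping argument on the PDA $\widetilde{\cP}$ recognizing executions of type $t$. I would apply an Ogden-style pumping lemma with the occurrences of the $x$-spawn symbol marked: if $B$ exceeds an Ogden threshold $p_t$ for the language $L_t$, then any coordinate $x$ with $\mmap^\ast(x)\geq B$ must admit a stack-neutral pumpable factor in the derivation of $e$ that produces at least one copy of $x$. Since this factor is a sub-derivation of $e$, its Parikh image is dominated by $\mmap^\ast$, hence supported within $\sup(\mmap^\ast)=\sup(\mmap)$. Iterating the factor sufficiently many times then raises $\mmap^\ast(x)$ past $\mmap(x)$ without disturbing the stack at segment $i$ and without enlarging the production's support. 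Carrying out this procedure iteratively over all deficient coordinates yields the desired $\mmap''\succeq_1\mmap$ with $(w,\mmap'')\in S_t$.

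The main obstacle will be twofold: first, to choose the pumpable factor so that it lies entirely within a single segment of $e$ (otherwise iterating it could alter the stack reached after segment $i$), and second, to ensure that the threshold $B$ is uniform in the stack $w$. For the uniformity, the Ogden constant $p_t$ depends only on $\widetilde{\cP}$, which is constructible from $\cA$ with size at most exponential in $|\cA|$; hence $B=\max_{t\in\cT}p_t$ is computable from $\cA$ alone, independently of $w$. For confining the pump to a single segment, I plan a Ramsey-style selection on the PDA derivation of $e$: coloring pairs of $x$-producing positions by the pair (control state, segment index), a monochromatic family of sufficient size (guaranteed by $\mmap^\ast(x)\geq B$ once $B$ exceeds the Ramsey-boosted Ogden bound) forces a pumpable factor contained in a single segment, which is then stack-neutral across segment $i$ and produces $x$. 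This delivers the required enhancement and completes the proof.
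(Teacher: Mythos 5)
Your forward direction is fine, but the converse rests on a claim that is false in general: that the \emph{same} stack content $w$ witnessing consistency of $\alpha_B(\frakV)$ can be kept while only the execution is ``enhanced.'' The number of spawns an execution makes can be rigidly coupled to the stack it builds. Concretely, consider a type $t$ whose executions, during segment $i$, repeat ``push $a$, spawn $b$'' $n$ times and are context-switched with stack $a^n$ (popping everything in later segments). Then $S_t=\{(a^n\bot,\,n\cdot\multi{(b,i)})\mid n\in\nats\}$, so $\wdcl{S_t}{a^n\bot}$ contains only multisets of size at most $n$. If $V_t=\{2B\cdot\multi{(b,i)}\}$, the word $a^B\bot$ witnesses consistency of $\alpha_B(\frakV)$ but cannot witness consistency of $\frakV$ for any choice of $B$; every stack-neutral cycle of the PDA that produces a $b$ also nets an $a$ onto the stack, so the Ogden-style pump you are looking for (produce an extra $x$ without disturbing $w$) simply does not exist. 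No threshold $B$, Ramsey-boosted or not, repairs this, because the obstruction is structural rather than quantitative.

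The paper's proof accepts that the witness must change: one pumps $w$ itself to a longer word $\bar w=xy^\ell z$ so that $\mmap\in\wdcl{S_t}{\bar w}$, and the real difficulty is guaranteeing that the new witness does not \emph{lose} coverage of any other multiset in any other $V_{t'}$. This is handled by (i) first replacing each pushdown-defined $S_t$ by a rational subset of $\Gamma^*\times\multiset{\Lambda}$ (\cref{stack-vector-rational}, a B\"uchi-style argument that decouples the stack from the spawn counts), and (ii) a Ramsey-based pumping lemma (\cref{pump-dcl}) choosing the pumped factor $y$ so that $xyz\le_\frakS xy^\ell z$ holds simultaneously for all $S_1,\ldots,S_k$ and all supports; consistency is then restored one multiset at a time by induction (\cref{extend-consistency}). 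Your Ramsey argument is aimed at the wrong target (confining a pump to one segment) rather than at the monotonicity $w\le_\frakS\bar w$, which is the actual crux. (Separately, a minor slip: for $\mmap(x)\le B$ you get $\mmap^\ast(x)\ge\alpha_B(\mmap)(x)=\mmap(x)$, not $\mmap^\ast(x)\ge B$; the conclusion you need still holds.)
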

Roughly speaking, \cref{abstraction-consistency} allows us to check
for the existence of a consistent run by checking whether there is one
with an $\frakS_{\cA}$-consistent tuple $\alpha_B(\frakV)$.  However,
we may only conclude consistency of $\frakV$ (and hence of the run)
from consistency of $\alpha_B(\frakV)$ if $\frakV$ is finite. To
remedy this, we shall employ the fact that a DCPS with a progressive
run also has a shallow progressive run (\cref{dcps-spawn-bounded}). We
will show that if our algorithm detects a run $\rho$ with consistent
$\alpha_B(\frakV)$, then there also exists a run $\rho'$ with finite
$\frakV$ such that $\alpha_B(\frakV)$ is consistent, meaning by
\cref{abstraction-consistency}, $\rho'$ has to be consistent.

Moreover, given a tuple of finite subsets, we can decide
$\frakS_{\cA}$-consistency:
\begin{theorem}\label{consistency-decidable}
  Given a tuple $\frakV=(V_t)_{t\in\cT}$ of finite
  subsets $V_t\subseteq\multiset{\Lambda}$,
  it is decidable whether $\frakV$ is $\frakS_{\cA}$-consistent.
\end{theorem}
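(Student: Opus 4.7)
The plan is to show that, for every pair $(t,\vec m)$ with $t\in\cT$ and $\vec m\in V_t$, the set
\[ S_{t,\vec m}=\{w\in\Gamma^*:\vec m\in\wdcl{S_t}{w}\} \]
is an effectively computable \emph{regular} language over $\Gamma$.  Once that is established, $\frakS_{\cA}$-consistency of $\frakV$ is equivalent to non-emptiness of $\bigcap_{t\in\cT,\,\vec m\in V_t} S_{t,\vec m}$, an intersection of finitely many regular languages (as $\cT$ is finite and each $V_t$ is finite), hence decidable.

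The key observation is that the relation $\vec m'\succeq_1\vec m$ decomposes segment by segment: for each segment $j\in\{0,\ldots,K\}$, the segment-$j$ part of the spawn vector must (i)~use only those stack symbols $\gamma$ with $\vec m(\gamma,j)>0$, and (ii)~spawn at least $\vec m(\gamma,j)$ copies of each such $\gamma$.  Condition~(i) is purely local: one deletes the ``forbidden'' spawning transitions of segment $j$ from the thread PDA.  Condition~(ii) is enforced by attaching, for each relevant $\gamma$, a counter saturated at the \emph{fixed} threshold $\vec m(\gamma,j)$, baked into the finite control, together with a segment-boundary check that all such counters are saturated before moving on.  Because $\vec m$ is fixed, every counter has constant size, so the resulting machine $\widetilde P_{t,\vec m}$ remains a pushdown system with finite (though possibly exponential) control.

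I would then split $\widetilde P_{t,\vec m}$ at the boundary between segments $i-1$ and $i$ into two pushdown systems $\widetilde P^{\mathrm{pre}}$ and $\widetilde P^{\mathrm{post}}$, hard-coding the control states at every segment boundary as dictated by the type $t$.  By the classical post$^*$ saturation for pushdown systems, the set $L^{\mathrm{pre}}_{t,\vec m}$ of stack contents that $\widetilde P^{\mathrm{pre}}$ can reach, from the initial configuration $(g'_0,\gamma'_0)$, at a control state marking ``end of segment $i-1$ with all pre-$i$ thresholds saturated'' is effectively regular.  Dually, by pre$^*$, the set $L^{\mathrm{post}}_{t,\vec m}$ of stack contents from which $\widetilde P^{\mathrm{post}}$ can terminate with all later thresholds saturated is effectively regular.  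Consequently $S_{t,\vec m}=L^{\mathrm{pre}}_{t,\vec m}\cap L^{\mathrm{post}}_{t,\vec m}$ is an effectively computable regular language.

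The principal obstacle is turning the constraint $\vec m'\succeq_1\vec m$ into a finite-state acceptance condition, since a~priori it looks like an unbounded Parikh-style constraint which is not in general regular-enforceable.  The crucial point, however, is that $\vec m$ is a given, fixed vector, so every counter saturates at a constant bound and the support equality is handled by static transition pruning.  Once this encoding is in place, the rest is a routine application of classical pushdown reachability combined with closure of regular languages under finite intersection and decidability of emptiness.
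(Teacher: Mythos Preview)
Your proposal is correct, and the overall strategy---compute, for each $(t,\vec m)$, the regular language $S_{t,\vec m}=\{w:\vec m\in\wdcl{S_t}{w}\}$ and then test non-emptiness of the finite intersection---is exactly what the paper does. The difference lies in \emph{how} regularity of $S_{t,\vec m}$ is obtained. The paper proves once and for all that each $S_t\subseteq\Gamma^*\times\multiset{\Lambda}$ is an effectively \emph{rational} relation (\cref{stack-vector-rational}), via a B\"uchi-style argument that replaces stack-neutral excursions by Parikh-equivalent finite automata; regularity of $T_{t,\vec m}$ for any fixed $\vec m$ then falls out immediately from rationality. You instead bake the fixed $\vec m$ into the pushdown control via support-pruning and saturated threshold counters, split at the segment-$i$ boundary, and invoke classical $\mathrm{post}^*$/$\mathrm{pre}^*$ saturation. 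Your route is more concrete and self-contained for this single theorem, and your observation that the $\preceq_1$ constraint decomposes segment-by-segment into finite-state conditions is the right one. The paper's route is more modular: the same rationality lemma is reused to prove \cref{abstraction-consistency} (the bound $B$ for abstraction), where one needs structural information about $S_t$ beyond what a per-$\vec m$ PDA construction provides. So your argument suffices here, but would not directly yield the companion result.
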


\subsection*{Deciding Starvation}
We will prove \cref{abstraction-consistency,consistency-decidable} later in this section.
Before we do that, 
let us show how they are used to decide starvation. First, we use
\cref{abstraction-consistency} to compute $B\in\N$. Let us fix $B$ for the decision procedure. Let
$\fraku\in\powerset{[0,B]^{\Lambda}}^{\cT}$ with $\fraku=(U_t)_{t\in\cT}$. We use a lower-case letter for
this tuple to emphasize that it is of bounded size.
An infinite progressive run $\rho$
of $\cA$ is said to be \emph{$(i,\fraku)$-starving} if it contains
configurations $c_1,c_2,\ldots$ and executions $e_1,e_2,\ldots$ that
produce $\vec m_1,\vec m_2,\ldots$ such that:
\begin{enumerate}
\item For each $j=1,2,\ldots$, in configuration $c_j$, the executions
  $e_j$ and $e_{j+1}$ have completed $i$ segments,
\item $e_j$ is switched to in the step after $c_j$,
\item $e_{j+1}$ is not switched to until $c_{j+1}$, and:
\item Let $V_t=\{\vec m_j \mid \text{$j\in\nats$, execution $e_j$ has
    type $t$}\}$. Then $\alpha_B(V_t)\subseteq U_t$ for each $t\in\cT$.
\end{enumerate}

Now using the bound $B$ from \cref{abstraction-consistency}, we can show the
following.
\begin{restatable}{lemma}{spawnBoundedStarving}\label{spawn-bounded-starving}
  If $\cA$ has a starving run, then it has an $(i,\fraku)$-starving run
  for some $i\in[1,K]$ and some $\frakS_{\cA}$-consistent
  $\fraku\in\powerset{[0,B]^{\Lambda}}^{\cT}$.  Moreover, if $\cA$ has a
  \femph{shallow}
  $(i,\fraku)$-starving run for some $i\in[1,K]$
  and some $\frakS_{\cA}$-consistent $\fraku\in\powerset{[0,B]^{\Lambda}}^{\cT}$,
  then it has a starving run.
\end{restatable}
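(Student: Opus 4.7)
The proof will handle the two implications essentially independently, invoking \cref{starving-consistent,abstraction-consistency,dcps-spawn-bounded} as the main tools. The unifying idea is that, given any starving/consistent run with execution productions $\vec m_1,\vec m_2,\ldots$, the natural candidate for $\fraku$ is the componentwise abstraction $\alpha_B(\frakV)$ of the tuple $\frakV=(V_t)_{t\in\cT}$ induced by the run.

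For the forward direction, I will start from a starving run $\rho$ of $\cA$. By \cref{starving-consistent}, $\rho$ is a consistent run: there are configurations $c_1,c_2,\ldots$ and executions $e_1,e_2,\ldots$ producing $\vec m_1,\vec m_2,\ldots$ that satisfy conditions (1)--(3), and the tuple $\frakV=(V_t)_{t\in\cT}$ with $V_t=\{\vec m_j\mid e_j\text{ has type }t\}$ is $\frakS_{\cA}$-consistent. I then set $\fraku = \alpha_B(\frakV)$, i.e.\ $U_t=\alpha_B(V_t)$, which lives in $\powerset{[0,B]^{\Lambda}}^{\cT}$. Condition (4) of $(i,\fraku)$-starvation (with $i$ as in the definition of starvation for $\rho$) holds trivially by construction. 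It remains to verify that $\fraku$ is $\frakS_{\cA}$-consistent. Let $w$ be a consistency witness for $\frakV$. For each $\vec m\in V_t$ and $x\in\Lambda$, the value $\alpha_B(\vec m)(x)=\min(\vec m(x),B)$ agrees with $\vec m(x)$ up to truncation, so (provided $B\geq 1$, which we may assume) $\alpha_B(\vec m)$ has the same support as $\vec m$ and satisfies $\alpha_B(\vec m)\preceq_1 \vec m$. Since any $\vec m'\succeq_1\vec m$ witnessing $\vec m\in\wdcl{(S_{\cA})_t}{w}$ also witnesses $\alpha_B(\vec m)\in\wdcl{(S_{\cA})_t}{w}$ by transitivity of $\preceq_1$, we get $\alpha_B(V_t)\subseteq\wdcl{(S_{\cA})_t}{w}$, so $\fraku$ is consistent with the same witness $w$. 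The same run $\rho$ is thus $(i,\fraku)$-starving.

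For the backward direction, suppose $\cA$ has a \emph{shallow} $(i,\fraku)$-starving run $\rho$ with $\frakS_{\cA}$-consistent $\fraku=(U_t)_{t\in\cT}$. Let $\frakV=(V_t)_{t\in\cT}$ arise from $\rho$ as before. The key observation is that shallowness of $\rho$ (in the sense of \cref{dcps-spawn-bounded}, i.e.\ each thread spawns at most some $B'\in\nats$ threads) forces every production $\vec m_j$ to lie in the finite set $[0,B']^{\Lambda}$, so each $V_t$ is finite. By condition (4) of $(i,\fraku)$-starvation, $\alpha_B(V_t)\subseteq U_t$ for every $t\in\cT$, and since consistency is downward closed under componentwise inclusion (the same witness $w$ works), $\alpha_B(\frakV)$ is $\frakS_{\cA}$-consistent. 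Now \cref{abstraction-consistency} applies because $\frakV$ is a tuple of finite sets, and yields that $\frakV$ itself is $\frakS_{\cA}$-consistent. Thus $\rho$ witnesses the four conditions of a consistent run, and \cref{starving-consistent} gives a starving run of $\cA$.

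The statement is almost a bookkeeping consequence of the preceding results; the only subtle points I expect to have to argue carefully are (i)~that $\alpha_B$ respects the $\preceq_1$ ordering, which hinges on $B\geq 1$ preserving supports, and (ii)~that shallowness of the run translates into finiteness of each $V_t$, which is where \cref{dcps-spawn-bounded} implicitly enters via the definition of shallowness for $\DCPS$. Neither requires new combinatorics, so the proof should be short.
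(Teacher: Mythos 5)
Your proposal is correct and follows essentially the same route as the paper's proof: both directions reduce to \cref{starving-consistent}, with the forward direction taking $\fraku=\alpha_B(\frakV)$ and observing that abstraction preserves consistency (via $\alpha_B(\mmap)\preceq_1\mmap$ for $B\ge 1$), and the backward direction using shallowness to get finiteness of the $V_t$ so that \cref{abstraction-consistency} upgrades consistency of $\alpha_B(\frakV)$ to consistency of $\frakV$. Your two explicit care points (support preservation under $\alpha_B$, and that condition (4) only gives an inclusion $\alpha_B(V_t)\subseteq U_t$ so one must note consistency is inherited by componentwise subsets) are in fact slightly more careful than the paper's own write-up.
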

Here, we need to assume shallowness for the converse direction because
we need finiteness of $\frakV$ in the converse of
\cref{abstraction-consistency}.

Because of \cref{spawn-bounded-starving}, we can proceed as follows to
decide starvation.  We first guess a tuple
$\fraku\in\powerset{[0,B]^\Lambda}^{\cT}$ and check whether it is
$\frakS_{\cA}$-consistent using \cref{consistency-decidable}. Then, we
construct a $\DCPS$ $\cA_{(i,\fraku)}$ such that $\cA_{(i,\fraku)}$
has a progressive run if $\cA$ has an $(i,\fraku)$-starving
run. Moreover, we use the fact every $\DCPS$ that has a progressive
infinite run also has a shallow infinite run
(\cref{dcps-spawn-bounded}). This will allow us to turn a progressive
run of $\cA_{(i,\fraku)}$ into a shallow $(i,\fraku)$-starving run of
$\cA$, which must be starving by \cref{spawn-bounded-starving}. Let us
now see how to construct $\cA_{(i,\fraku)}$.

\subsection*{Freezing $\DCPS$}
For constructing $\cA_{(i,\fraku)}$, it is convenient to
have a simple locking mechanism available, which we call ``freezing.''
It will be easy to see that this can be implemented in $\DCPS$. In a
freezing $\DCPS$, there is one distinguished ``frozen'' thread in each
configuration. It cannot be resumed using the ordinary resume rules.
It can only be resumed using an unfreeze operation, which at the same
time freezes another thread. We use this to make sure that the
$e_{j+1}$ stays inactive between $c_j$ and $c_{j+1}$.

\newcommand{\freeze}{\text{{\tiny \SnowflakeChevron}}}
\newcommand{\freezerule}[4]{#1\mapsto #2\lhd #3 \mathbin{\freeze} #4}

Syntactically, a \emph{freezing $\DCPS$} is a tuple
$\cA=(G,\Gamma,\Delta,g_0,\gamma_0,\gamma_f)$, where
$(G,\Gamma,\Delta,g_0,\gamma_0)$ is a $\DCPS$, except that the rules
$\Delta$ also contain a set $\Deltau$ of \emph{unfreezing rules} of
the form $\freezerule{g}{g'}{\gamma}{\gamma'}$ and $\gamma_f$ is the
initial frozen thread with a single stack symbol.  The unfreezing
rules
allow the $\DCPS$ to unfreeze and resume a thread with top of stack
$\gamma$, while also freezing a thread with top of stack
$\gamma'$. A \emph{configuration} is a tuple in
$G\times(\Gamma^*\times\nats \cup
\{\#\})\times\multiset{\hat{\Gamma}^*\times\nats}$, where
$\hat{\Gamma}=\Gamma\cup{\Gamma^\freeze}$ and
${\Gamma^\freeze}=\{ {\gamma^\freeze} \mid \gamma\in \Gamma\}$. A
thread is \emph{frozen} if its top-of-stack belongs to
$\Gamma^\freeze$. It will be clear from the steps %
that in each reachable configuration, there is exactly one frozen
thread.

A freezing $\DCPS$ has the same steps as those of the corresponding
$\DCPS$. In particular, those apply only to top-of-stack symbols in
$\Gamma$. In addition, there is one more rule:
\[ \inferrule[Unfreeze]{
    \freezerule{g}{g'}{\gamma}{\gamma'}
  }{
    \langle g,\#,\mmap + \multi{{\gamma^\freeze} w,l} + \multi{\gamma' w',j}\rangle\mapsto \langle g',(\gamma w,l),\mmap + \multi{\gamma'^\freeze w',j}\rangle
  }
\]
Hence, the frozen thread $({\gamma^\freeze} w,l)$ is unfrozen and
resumes, while the thread $(\gamma'w',j)$ becomes the new frozen
thread.  Moreover, the initial configuration is
$\langle g_0,\#, \multi{(\gamma_0,0)} + \multi{({\gamma_f^\freeze},0)}\rangle$. 

Given these additional steps, progressive termination is defined as for
$\DCPS$. (In particular, the progressiveness condition also applies to frozen
threads.)
\begin{lemma}\label{liveness-freezing}
  Given a freezing $\DCPS$ $\cA$, it is decidable whether $\cA$ has a
  progressive run.  Moreover, if $\cA$ has a progressive run, then it
  has a shallow progressive run.
\end{lemma}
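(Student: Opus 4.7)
The plan is to reduce the freezing $\DCPS$ $\cA$ (with context-switch bound $K$) to an ordinary $\DCPS$ $\cA'$ with bound $K'=2K+1$, such that $\cA$ has a progressive $K$-bounded run iff $\cA'$ has a progressive $K'$-bounded run and, moreover, shallow progressive runs correspond in both directions. Decidability and the shallow-witness property then follow by applying \cref{th:dcps-to-vassb} together with \cref{th:vassb-fair-term}, and \cref{dcps-spawn-bounded}, to $\cA'$.

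The encoding extends the stack alphabet of $\cA$ by a disjoint copy $\Gamma^{\freeze}=\{\gamma^{\freeze}\mid \gamma\in\Gamma\}$, so that a thread is \emph{frozen} precisely when its top-of-stack lies in $\Gamma^{\freeze}$. The key design choice is that $\cA'$ contains \emph{no} ordinary Resume rule guarded by a $\Gamma^{\freeze}$-symbol, which ensures that no frozen thread can ever be scheduled by an ordinary schedule step. The initial configuration of $\cA$ is reproduced by a brief bootstrap prelude in which the sole initial thread of $\cA'$ spawns one thread with top $\gamma_0$ and one with top $\gamma_f^{\freeze}$, then terminates. All creation, interruption, ordinary resumption, and termination rules of $\cA$ are copied verbatim, acting only on $\Gamma$-symbols.

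For each unfreeze rule $\freezerule{g}{g'}{\gamma}{\gamma'}$ we build a gadget through fresh auxiliary global states that simulates the simultaneous freeze/unfreeze exchange in four elementary steps. Starting from a schedule point in $g$: (a)~an ordinary Resume picks up some thread with top $\gamma'$; (b)~an Interruption rewrites $\gamma'$ to $\gamma'^{\freeze}$ and returns the thread to the bag at an intermediate schedule state; (c)~a specially added Resume---the \emph{only} rule in $\cA'$ capable of scheduling a frozen thread---picks up a thread with top $\gamma^{\freeze}$; and (d)~a creation rule rewrites $\gamma^{\freeze}$ back to $\gamma$ and enters state $g'$. Since $\Gamma^{\freeze}$-symbols are touched only by the bootstrap and these gadgets, the invariant ``exactly one frozen thread exists at each schedule point'' is preserved throughout, and each gadget faithfully realizes one Unfreeze step.

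The main obstacle will be the bookkeeping of context-switch counters, since the Unfreeze rule of $\cA$ leaves both affected threads' counters unchanged while the gadget in $\cA'$ performs one genuine Interruption on the thread being frozen. The key observation is that between any two consecutive times a given thread becomes frozen it must first be removed from the bag by a real Swap---i.e.\ one of its $K$ allowed context switches in $\cA$---so it can be frozen at most $K+1$ times during its lifetime; hence its simulation in $\cA'$ consumes at most $K+(K+1)=2K+1$ context switches, and the choice $K'=2K+1$ suffices. The auxiliary states of each gadget tightly enforce its prescribed sequence, so that every progressive $\cA'$-run decomposes into a valid progressive $\cA$-run and vice versa. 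Since the gadgets spawn no additional threads, per-thread spawn counts of $\cA'$ match those of $\cA$ up to the constant bootstrap contribution, so shallowness transfers in both directions, and feeding $\cA'$ into the cited $\DCPS$ decidability and shallow-witness results completes the argument.
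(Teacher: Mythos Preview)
Your high-level plan is the paper's plan—encode freezing via top-of-stack markers, simulate each Unfreeze by a short gadget, and blow up the context-switch budget to $2K+1$—but two concrete points in your encoding break the claimed equivalence of \emph{progressive} runs.

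The main problem is the termination clause of progressiveness: in a $\DCPS$ with bound $K'$, {\sc Term} must fire only when the active thread has made \emph{exactly} $K'$ context switches. In your encoding a thread that was frozen $f$ times and performed its $K$ genuine swaps in $\cA$ reaches cs-number $K+f$ in $\cA'$ at termination, with $f$ ranging anywhere from $0$ to $K+1$. Hence the $\cA'$-image of a progressive $\cA$-run is almost never progressive (threads that were never frozen terminate at cs-number $K$, not $2K+1$), and conversely a progressive $\cA'$-run would force every thread to be frozen exactly $K+1$ times, which need not be achievable. The paper repairs this by \emph{also} instrumenting the ordinary Resume rules: each non-frozen resume is simulated by a swap-out/swap-in pair that burns one extra context switch, so that an $\cA$-thread at cs-number $k$ always sits at $2k$ (inactive) or $2k+1$ (active or frozen) in $\cA'$, and termination uniformly lands at $2K+1$.

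A second, smaller gap: when $\gamma=\gamma'$, after your step~(b) there are two frozen threads with identical top symbol $\gamma^{\freeze}$, and step~(c) may pick up the one you just froze instead of the old one; the resulting net effect (resume a $\gamma'$-thread, waste a switch, move to $g'$, leave the old frozen thread untouched) need not correspond to any step of $\cA$. The paper avoids this by keeping two alternating copies $\Gamma^{\freeze}$ and $\bar{\Gamma}^{\freeze}$: a gadget always freezes into one copy and unfreezes from the other, so the collision cannot occur.
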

\Cref{liveness-freezing} can be shown using a straightforward
reduction to progressive termination of ordinary $\DCPS$.  The
freezing is realized by introducing stack symbols
$\Gamma^\freeze=\{\gamma^\freeze\mid \gamma\in\Gamma\}$.  An unfreeze
rule $\freezerule{g}{g'}{\gamma}{\gamma'}$ for a thread
$(\gamma^\freeze w,l)$ is then simulated by a simple locking mechanism
using a bounded number of context switches: It turns a thread with
stack $\gamma'w'$ into one with stack $\gamma'^\freeze w'$ (using
context switches) and then resumes $(\gamma^\freeze w,j)$, where
initially, $\gamma^\freeze$ is replaced with $\gamma$. Other than
that, for threads with top of stack in $\Gamma^\freeze$, there are no
resume rules.  Since each thread in a freeze $\DCPS$ can only be
frozen and unfrozen at most $K$ times, the constructed $\DCPS$ uses at
most $2K +1$ context-switches to simulate a run of the freeze $\DCPS$.

\subsection*{Reduction to Progressive Runs in Freezing $\DCPS$}
We now reduce starvation to progressive runs in freezing $\DCPS$. We first guess
a pair $(i,\fraku)$ with $i\in[1,K]$ and a $\frakS_{\cA}$-consistent
$\fraku\in\powerset{[1,B]^{\Lambda}}^{\cT}$, $\fraku=(U_t)_{t\in\cT}$,
and construct a freezing $\DCPS$ $\cA_{(i,\fraku)}$ so that $\cA$ has an
$(i,\fraku)$-starving run if and only if $\cA_{(i,\fraku)}$ has a
progressive run.  Moreover, if $\cA_{(i,\fraku)}$ has a progressive
run, then $\cA$ even has a shallow $(i,\fraku)$-starving run.
Therefore, $\cA$ has a starving run if and only if for some choice of
$(i,\fraku)$, $\cA_{(i,\fraku)}$ has a progressive run.

Intuitively, we do this by tracking for each thread execution the
multiset $\alpha_B(\mmap)$, where $\mmap$ is its production. Using
frozen threads, we make sure that every progressive run in $\cA$
contains executions $e_1,e_2,\ldots$ to witness
$(i,\fraku)$-starvation. To verify the $(i,\fraku)$-starvation, we also
track each thread's type and current context-switch number.  Hence,
we store a tuple $(t,j,\bar{\mmap},\bar{\nmap})$, where (i)~$t$ is the
type, (ii)~$j$ is the current context-switch number,
(iii)~$\bar{\mmap}$ is the guess for $\alpha_B(\mmap)$, where
$\mmap\in\multiset{\Lambda}$ is the entire production of the execution,
and (iv)~$\bar{\nmap}$ is $\alpha_B(\nmap)$, where
$\nmap\in\multiset{\Lambda}$ is the multiset spawned so far.

While a thread is inactive, the extra information is stored on the top
of the stack, resulting in stack symbols
$(\gamma,t,j,\bar{\mmap},\bar{\nmap})$. In particular, when we spawn a
new thread, we immediately guess its type $t$ and the abstraction
$\bar{\mmap}$, and we set $j=0$ and $\bar{\nmap}=\emptyset$. The
freezing and unfreezing works as follows.  Initially, we have the
frozen thread $\gamma_\dagger$ (where $\gamma_\dagger$ is a fresh stack
symbol).  To unfreeze it, we have to freeze a thread of some type
$t$ where $\bar{\mmap}$ belongs to $U_t$ (recall that this is a component of $\fraku$):
\[ g\mapsto g'\lhd \gamma_\dagger\mathbin{\freeze} (\gamma,t,i,\bar{\mmap},\bar{\nmap}) \]
for every $g,g'\in G$, $t\in\cT$, $\bar{\mmap}\in U_t$. To unfreeze
(and thus resume) a thread with top of stack
$(\gamma,t,i,\bar{\mmap},\bar{\nmap})$, we have to freeze a thread
$(\gamma',t',i,\bar{\mmap}',\bar{\nmap}')$ with
$\bar{\mmap}'\in U_{t'}$. Unfreezing requires
context-switch number $i$, because the executions $e_1,e_2,\ldots$
must be in segment $i$ in 
$c_1,c_2,\ldots$:
\[ g\mapsto \widehat{g'}\lhd (\gamma,t,i,\bar{\mmap},\bar{\nmap})\mathbin{\freeze} (\gamma',t',i,\bar{\mmap}',\bar{\nmap}') \]
for each resume rule $g\mapsto g'\lhd \gamma$, $t$, $\bar{\mmap}$,
and
$\bar{\nmap}$, provided that $g$ is the state specified in $t$ to
enter from in the $i$th segment.  Here, $\widehat{g'}$ is a
decorated version of $g'$, in which the thread can only transfer the
extra information related to $t,i,\bar{\mmap},\bar{\nmap}$
back to the global state.
Symmetrically, when interrupting a thread that information is
transferred back to the stack and the segment counter $j$ is
incremented.
To resume an ordinary (i.e.\ unfrozen) inactive thread, we have a
resume rule $g\mapsto g'\lhd (\gamma,t,j,\bar{\mmap},\bar{\nmap})$
for each resume rule $g\mapsto g'\lhd \gamma$ and each $t$, $j$,
$\bar{\mmap}$, and $\bar{\nmap}$ --- if $g$ is specified as
the entering global state for segment $j$ in $t$.  While
a thread is active, it keeps $\bar{\nmap}$ up to date by
recording all spawns (and reducing via $\alpha_B$).  Finally, when a
thread terminates, it checks that the components $\bar{\mmap}$ and
$\bar{\nmap}$ agree. 

It is clear from the construction that $\cA$ has a
$(i,\fraku)$-starving run if and only if $\cA_{(i,\fraku)}$ has a
progressive run. Moreover, \cref{liveness-freezing} tells us that if
$\cA_{(i,\fraku)}$ has a progressive run, then it has a shallow
progressive run.  This shallow progressive run clearly yields a
shallow $(i,\fraku)$-starving run of $\cA$. According to
\cref{spawn-bounded-starving}, this implies that $\cA$ has a starving
run. This establishes the following lemma, which implies that starvation is
decidable for $\DCPS$.
\begin{lemma}
  $\cA$ has a starving run if and only if for some $i\in[1,K]$ and
  some $\frakS_{\cA}$-consistent $\fraku\in\powerset{[0,B]^\Lambda}^{\cT}$, the
  freezing $\DCPS$ $\cA_{(i,\fraku)}$ has a progressive run.
\end{lemma}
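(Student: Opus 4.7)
The plan is to establish the lemma by combining the construction of $\cA_{(i,\fraku)}$ described just above with the two main auxiliary results that have already been proved: \cref{spawn-bounded-starving} (which links starving runs of $\cA$ to shallow $(i,\fraku)$-starving runs for $\frakS_{\cA}$-consistent $\fraku$) and \cref{liveness-freezing} (which ensures that any progressive run of a freezing $\DCPS$ can be taken shallow). The two directions then each amount to a translation of runs in one direction together with an appeal to one of these lemmas.

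For the forward direction, I would start with a starving run of $\cA$ and apply \cref{spawn-bounded-starving} to extract some $i\in[1,K]$ and some $\frakS_{\cA}$-consistent $\fraku=(U_t)_{t\in\cT}$ together with an $(i,\fraku)$-starving run $\rho$ of $\cA$. I then lift $\rho$ to a run $\tilde\rho$ of $\cA_{(i,\fraku)}$ as follows: for each thread execution in $\rho$, guess its type $t\in\cT$ and the abstraction $\bar{\mmap}=\alpha_B(\mmap)$ of its entire production $\mmap$ at the moment of its spawn; track its current segment counter $j$ and the abstraction $\bar{\nmap}$ of the multiset spawned so far. For the witnessing executions $e_1,e_2,\ldots$ of the $(i,\fraku)$-starvation, condition~(4) of an $(i,\fraku)$-starving run guarantees $\alpha_B(\mmap_j)\in U_{t_j}$, which is precisely what the unfreeze rules of $\cA_{(i,\fraku)}$ require in order to pass the freeze marker from $e_j$ to $e_{j+1}$. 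Conditions~(1)--(3) then ensure the unfreeze rule is enabled at the right configurations, and progressiveness of $\tilde\rho$ follows from progressiveness of $\rho$ (each non-frozen thread is eventually scheduled as in $\rho$, and each freeze marker is handed on at the next $c_j$).

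For the backward direction, assume $\cA_{(i,\fraku)}$ has a progressive run for some $i$ and some $\frakS_{\cA}$-consistent $\fraku$. By \cref{liveness-freezing}, there is then a \emph{shallow} progressive run $\tilde\sigma$ of $\cA_{(i,\fraku)}$. Projecting away the extra bookkeeping (type $t$, segment $j$, and abstractions $\bar{\mmap},\bar{\nmap}$) yields a run $\sigma$ of $\cA$. The unfreeze rules of $\cA_{(i,\fraku)}$ fire infinitely often along $\tilde\sigma$ (since the frozen thread cannot be resumed otherwise and $\tilde\sigma$ is progressive), and each such firing designates a thread execution that enters its $i$th segment and that has its guessed $\bar{\mmap}$ in $U_t$. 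Listing these executions in order yields the required $e_1,e_2,\ldots$ and configurations $c_1,c_2,\ldots$ verifying (1)--(4) of the $(i,\fraku)$-starving definition for $\sigma$, with $\alpha_B(\mmap_j)\in U_{t_j}$ validated at the termination check in $\cA_{(i,\fraku)}$. Shallowness of $\tilde\sigma$ carries over to $\sigma$, since each thread's production in $\sigma$ is exactly the number of spawns it made in $\tilde\sigma$. Thus $\sigma$ is a shallow $(i,\fraku)$-starving run of $\cA$, and \cref{spawn-bounded-starving} yields a starving run of $\cA$.

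There is no single hard obstacle: the real work has been offloaded to \cref{abstraction-consistency}, \cref{spawn-bounded-starving}, \cref{liveness-freezing}, and the construction of $\cA_{(i,\fraku)}$. The main point where care is needed is verifying that the freeze mechanism genuinely enforces the starvation pattern, namely that the freeze marker can only move from one segment-$i$ thread with abstraction in $U_t$ to another, so that the sequence extracted from the unfreeze events really witnesses $(i,\fraku)$-starvation and not a weaker pattern; this is built into the unfreeze rules by design, since they simultaneously demand that both the thread being unfrozen and the thread being newly frozen satisfy the segment and abstraction constraints.
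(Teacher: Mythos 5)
Your proposal is correct and follows essentially the same route as the paper: both directions reduce to the correspondence between $(i,\fraku)$-starving runs of $\cA$ and progressive runs of $\cA_{(i,\fraku)}$ built into the construction, with \cref{spawn-bounded-starving} handling the passage to and from starving runs and \cref{liveness-freezing} supplying the shallow progressive run needed for the converse. The paper states the construction correspondence as immediate, whereas you spell out how the unfreeze rules thread the witnessing executions $e_1,e_2,\ldots$ through the freeze marker; this is a faithful elaboration, not a different argument.
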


\subsection*{Proving \cref{abstraction-consistency,consistency-decidable}}
It remains to prove
\cref{abstraction-consistency,consistency-decidable}.  We will use a
structural description of the sets $S_t$
(\cref{stack-vector-rational}), which requires some terminology.  An
\emph{automaton over $\Gamma^*\times\multiset{\Lambda}$} is a tuple
$\cM=(Q,E,q_0,q_f)$, where $Q$ is a finite set of \emph{states},
$E\subseteq Q\times \Gamma^*\times\multiset{\Lambda}\times Q$ is a
finite set of \emph{edges}, $q_0\in Q$ is its \emph{initial state},
and $q_f\in Q$ is its \emph{final state}. We write
$p\autstep[u|\mmap] q$ if there is a sequence
$(p_0,u_1,\mmap_1,p_1),
(p_1,u_2,\mmap_2,p_2),\ldots,(p_{n-1},u_n,\mmap_n,p_n)$ of edges in
$\cM$ with $p=p_0, q=p_n$, $u=u_1\cdots u_n$, and
$\mmap=\mmap_1+\cdots+\mmap_n$.  The set \emph{accepted by $\cM$} is
the set of all $(w,\mmap)\in\Gamma^*\times\multiset{\Lambda}$ with
$q_0\autstep[w|\mmap]q_f$. A subset of
$\Gamma^*\times\multiset{\Lambda}$ is \emph{rational} if it is accepted
by some automaton over $\Gamma^*\times\multiset{\Lambda}$.
\begin{lemma}\label{stack-vector-rational}
For every $t\in\cT$, the set $S_t$ is effectively rational.
\end{lemma}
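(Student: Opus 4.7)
The plan is to build an automaton over $\Gamma^{*}\times\multiset{\Lambda}$ accepting $S_t$ by splitting every thread execution of the fixed type $t=(g_0',\gamma_0',(g_1,\gamma_1,g_1'),\ldots,(g_K,\gamma_K,g_K'),g_{K+1})$ at the context switch singled out by $i$. At this moment the active thread has swapped out with some stack $w\in\Gamma^{*}$, it is in the prescribed global state $g_j$, and will later resume from $(g_j',w)$; the prefix of the execution therefore runs from $(g_0',\gamma_0')$ to $(g_j,w)$ and produces some spawn multiset $\mmap_1\in\multiset{\Lambda}$, while the suffix runs from $(g_j',w)$ to $(g_{K+1},\varepsilon)$ and produces some spawn multiset $\mmap_2$. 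Consequently
\[
  S_t \;=\; \bigl\{\,(w,\mmap_1+\mmap_2) \;\big|\; (w,\mmap_1)\in R_{\mathrm{pre}},\ (w,\mmap_2)\in R_{\mathrm{post}}\,\bigr\},
\]
where $R_{\mathrm{pre}},R_{\mathrm{post}}\subseteq\Gamma^{*}\times\multiset{\Lambda}$ describe the two pieces of the run of $\widetilde{\cP}_{(g_0',\gamma_0')}$, restricted (by intersection with a regular language) to match the context-switch decorations of~$t$.

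\textbf{Key lemma.} The main step is to show that for any $\PDA$ $P$ whose spawn letters range over $\Lambda$, and any two of its control states $p$ and~$q$, each of the relations
\[
  \{(w,\mmap) \mid P\text{ has a run }(p,\gamma_0)\Rightarrow^{*}(q,w) \text{ producing spawn Parikh image }\mmap\}
\]
and its mirror image for runs $(p,w)\Rightarrow^{*}(q,\varepsilon)$ is an effectively rational subset of $\Gamma^{*}\times\multiset{\Lambda}$. I would prove this via a Parikh-enhanced version of the classical saturation showing that the set of reachable stack contents of a $\PDA$ is regular: the saturation yields an NFA over $\Gamma$ whose edges each summarise an infinite family of $\PDA$ sub-runs, and by \cref{th:parikh}, the set of spawn Parikh images realised along such an edge is an effectively semi-linear subset of $\multiset{\Lambda}$. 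Since semi-linear subsets of $\multiset{\Lambda}$ are exactly the rational subsets of the commutative monoid $\multiset{\Lambda}$, each NFA edge may be replaced by a rational gadget, producing an automaton over $\Gamma^{*}\times\multiset{\Lambda}$ of the desired form.

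\textbf{Join on $w$.} Rational subsets of $\Gamma^{*}\times\multiset{\Lambda}$ are effectively closed under the same-$w$ join used above. After normalising the automata for $R_{\mathrm{pre}}$ and $R_{\mathrm{post}}$ so that each edge carries either $(\varepsilon,\mmap)$ or $(\gamma,\mmap)$ with $\gamma\in\Gamma$, a standard product construction synchronises $\gamma$-labelled edges across the two components, lets $\varepsilon$-edges of either component fire independently, and sums the $\multiset{\Lambda}$ parts. The resulting automaton accepts exactly $S_t$, and since every construction above is effective, $S_t$ is effectively rational.

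\textbf{Main obstacle.} The primary technical hurdle is the Parikh-enhanced saturation in the key lemma: one must argue that the non-trivial interaction between the context-free stack moves of the $\PDA$ and the commutatively accumulated spawn multisets can be summarised edge-by-edge into semi-linear, and therefore rational, pieces. Once this is in place, the splitting at the designated context switch and the product construction implementing the same-$w$ join are both routine, so these do not pose obstacles.
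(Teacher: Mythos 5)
Your proposal is correct and follows essentially the same route as the paper: the paper also splits the run at the designated point (via the sets $I_{\cA,q}$ for the run up to stack $w$ and $I_{\bar{\cA},q}$ for the dual automaton handling the suffix down to the empty stack), combines them with exactly your same-$w$ join $\otimes$ via a product construction, and proves the one-sided relations rational by the same Parikh-enhanced version of B\"uchi's saturation, gluing in finite automata that output the semi-linear sets of spawn multisets realised by level-preserving sub-runs. You have also correctly identified the saturation step as the one piece requiring real work.
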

This can be deduced from \cite[Lemma 6.2]{Zetzsche2013a}. Since the
latter would require introducing a lot of machinery, we include a
direct proof in the full version.
Both proofs are slight extensions of B\"{u}chi's proof of regularity of the set of
reachable stacks in a pushdown automaton~\cite[Theorem
1]{buchi1964regular}. The only significant difference is the following:
While \cite{buchi1964regular} essentially introduces shortcut edges
for runs that go from one stack $w$ back to $w$, we glue in a finite
automaton that produces the same output over $\Lambda$ as such
runs. This is possible since the set of the resulting multisets is
always semi-linear by Parikh's theorem~\cite[Theorem 2]{Parikh66}.

Because of \cref{stack-vector-rational}, the following immediately
implies \cref{consistency-decidable}:
\begin{lemma}
  Given a tuple $\frakS=(S_1,\ldots,S_k)$ of rational subsets
  $S_j\subseteq\Gamma^*\times\multiset{\Lambda}$ and a tuple
  $\fraku=(U_1,\ldots,U_k)$ of finite subsets
  $U_j\subseteq\multiset{\Lambda}$, it is decidable whether $\fraku$ is
  $\frakS$-consistent.
\end{lemma}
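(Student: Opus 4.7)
The plan is to reduce $\frakS$-consistency to nonemptiness of an intersection of effectively regular languages over $\Gamma^*$. For each $l\in[1,k]$ and each $\mmap\in U_l$, define
\[ W_{l,\mmap} = \{w\in\Gamma^*\mid \exists\, \mmap'\in\multiset{\Lambda}\colon \mmap\preceq_1 \mmap' \text{ and } (w,\mmap')\in S_l\}. \]
Unfolding the definition of $\frakS$-consistency, the tuple $\fraku$ is $\frakS$-consistent if and only if $\bigcap_{l=1}^{k}\bigcap_{\mmap\in U_l} W_{l,\mmap}\neq\emptyset$. Since each $U_l$ is finite, this is a finite intersection, and it therefore suffices to show that each $W_{l,\mmap}$ is effectively regular.

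To show regularity of $W_{l,\mmap}$, I start from an automaton $\cM=(Q,E,q_0,q_f)$ over $\Gamma^*\times\multiset{\Lambda}$ accepting $S_l$, which exists by \cref{stack-vector-rational}. Recall that $\mmap\preceq_1\mmap'$ means $\mmap\preceq\mmap'$ \emph{and} $\sup(\mmap')=\sup(\mmap)$. Since the multiset component produced by a run of $\cM$ is the sum of edge labels, and $\sup(\mmap_1+\mmap_2)=\sup(\mmap_1)\cup\sup(\mmap_2)$, the support constraint $\sup(\mmap')\subseteq\sup(\mmap)$ is captured exactly by restricting attention to the subautomaton $\cM'$ of $\cM$ obtained by deleting every edge $(p,u,\mmap'',q)$ with $\sup(\mmap'')\not\subseteq\sup(\mmap)$. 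The remaining quantitative lower-bound condition $\mmap'(x)\geq\mmap(x)$ for each $x\in\sup(\mmap)$ is finite-state: I augment $\cM'$ with a tuple of counters $(c_x)_{x\in\sup(\mmap)}$, each $c_x\in\{0,1,\ldots,\mmap(x)\}$, that are incremented by the $\mmap''(x)$-component of each traversed edge label but saturate at $\mmap(x)$. An accepting run reaches $q_f$ with every counter saturated. Projecting out the $\multiset{\Lambda}$-component of the edge labels yields a finite automaton $\cN_{l,\mmap}$ over $\Gamma$ whose language is exactly $W_{l,\mmap}$.

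Finally, a product construction on the automata $\cN_{l,\mmap}$ for $l\in[1,k]$ and $\mmap\in U_l$ produces a finite automaton accepting $\bigcap_{l,\mmap}W_{l,\mmap}$, and emptiness for finite automata is decidable. The main subtlety is the mismatch between $\preceq_1$ and the ordinary multiset order $\preceq$: a naive Parikh-image style treatment would only handle $\preceq$, but the extra support-equality condition in $\preceq_1$ translates cleanly into a support restriction on the individual edges of $\cM$, which is what justifies the edge-deletion step above. Everything else is routine finite-state bookkeeping, so the whole procedure terminates and decides $\frakS$-consistency.
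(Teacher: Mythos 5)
Your proposal is correct and follows essentially the same route as the paper: both reduce $\frakS$-consistency to nonemptiness of the finite intersection $\bigcap_{l}\bigcap_{\mmap\in U_l}\{w\mid \mmap\in\wdcl{S_l}{w}\}$ of regular languages. The paper merely asserts that each of these languages is effectively regular, whereas you supply the (correct) construction via support-restricted edges and saturating counters.
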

\begin{proof}
  Since $S_j$ is rational, for each $\mmap\in\multiset{\Lambda}$ and
  $j\in[1,k]$, we can compute a finite automaton for the language
  $T_{j,\mmap}=\{w\in\Gamma^* \mid \mmap\in \wdcl{S_j}{w}\}$. Then
  $\fraku$ is $\frakS$-consistent if and only if the intersection
  $\bigcap_{j\in[1,k]}\bigcap_{\mmap\in U_j} T_{j,\mmap}$
  of regular languages is non-empty, which is clearly decidable.
\end{proof}

Moreover, because of \cref{stack-vector-rational},
\cref{abstraction-consistency} is a direct consequence of the
following.
\begin{restatable}{proposition}{abstractionConsistencyRational}\label{abstraction-consistency-rational}
  Given rational subsets
  $S_1,\ldots,S_k\subseteq\Gamma^*\times\multiset{\Lambda}$, we can
  compute a bound $B$ such that for the tuple $\frakS=(S_1,\ldots,S_k)$,
  the following holds: If $\frakV=(V_1,\ldots,V_k)$ is a tuple of
  \femph{finite} subsets $V_j\subseteq\multiset{\Lambda}$, then $\frakV$
  is $\frakS$-consistent if and only if $\alpha_B(\frakV)$ is
  $\frakS$-consistent.
\end{restatable}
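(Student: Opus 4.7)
The ``only if'' direction is immediate: for any $B \ge 1$ we have $\alpha_B(\mmap) \preceq_1 \mmap$ for each $\mmap$, so any consistency witness $w$ for $\frakV$ also witnesses $\frakS$-consistency of $\alpha_B(\frakV)$.

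For the ``if'' direction, I would first decompose by support, writing $V_{j,A} = V_j \cap \{\mmap : \sup(\mmap) = A\}$; within each fixed support, $\preceq_1$ reduces to componentwise $\preceq$, so it suffices to show that for each $(j, A)$ and each $\mmap \in V_{j,A}$, there exists $\mmap^*$ in the cross-section $T_{j,w}^A := \{\mmap' : \sup(\mmap') = A,\, (w,\mmap') \in S_j\}$ with $\mmap^* \ge \mmap$. The key tool is the semilinear structure of $T_{j,w}^A$: by Parikh's theorem applied to a product of the automaton for $S_j$ with a position-tracking automaton for $w$, $T_{j,w}^A$ is a finite union of linear sets $\vec b + \mathrm{span}(P)$ whose periods $\vec p \in P$ correspond to $\varepsilon$-cycles in the automaton for $S_j$, so pumping them does not alter $w$. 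Given a witness $w$ for $\alpha_B(\frakV)$-consistency and a witnessing $\mmap^* \in T_{j,w}^A$ with $\mmap^* \ge \alpha_B(\mmap)$: for every coordinate $a$ with $\mmap(a) > B$, if $B$ exceeds $\vec b(a)$ then some period $\vec p$ used in the representation of $\mmap^*$ must have $\vec p(a) > 0$, and its support lies inside $A$ (since it is ``used'' in a representation whose support is $A$); pumping this period boosts $\mmap^*$ to some $\mmap^{**} \ge \mmap$ in $T_{j,w}^A$, establishing $\mmap \in \wdcl{S_j}{w}$ without altering $w$.

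The main obstacle is that the base-vector component $\vec b(a)$ can grow with $|w|$, so choosing $B$ uniformly in $\frakV$ requires more care than the above one-coordinate-at-a-time argument. As hinted in the paper's introduction, I would employ a Ramsey-style pumping argument across the (potentially unboundedly many) $\mmap \in V_j$ and the $j \in [1,k]$: a synchronized-product construction over a common $w$ coordinates the witnessing runs of the various automata, while Ramsey's theorem extracts a uniform pumping scheme whose bound depends only on $\frakS$. Intuitively, if $|w|$ is long enough to make some $\vec b(a)$ exceed any $\frakS$-dependent constant, then repeated patterns guaranteed by Ramsey allow one to contract or replace segments of $w$ while preserving $\alpha_B(\frakV)$-consistency; iterating reduces the problem to the case of a bounded-length witness $w'$, at which point a bound $B$ depending only on $\frakS$ suffices by the semilinear argument above. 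Coordinating this simultaneous pumping and contraction across all tracks, while maintaining supports and a single common word, is the delicate combinatorial heart of the proof, and is where I expect the main difficulty.
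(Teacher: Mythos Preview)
Your ``only if'' direction is fine. The gap is in the ``if'' direction: both of your strategies---keeping $w$ fixed and pumping $\varepsilon$-cycles, and contracting $w$ to bounded length---fail on the same simple example. Take $k=1$, $S_1=\{(a^n,n\cdot\multi{b})\mid n\ge 0\}$ (one state, one edge reading $a$ and outputting $\multi{b}$). For $\mmap=2B\cdot\multi{b}$ we have $\alpha_B(\mmap)=B\cdot\multi{b}$, and any witness $w$ for $\alpha_B(\frakV)$ is $a^n$ with $n\ge B$. There are no $\varepsilon$-cycles, so $T_{1,w}^{\{b\}}=\{n\cdot\multi{b}\}$ is a singleton with no periods; you cannot boost to cover $2B\cdot\multi{b}$ without changing $w$. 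Contracting $w$ makes things strictly worse. More generally, your plan to ``reduce to a bounded-length witness $w'$ and then apply the semilinear argument'' cannot work: the bound $B$ must be chosen \emph{before} $\frakV$, but the minimum witness length for $\alpha_B(\frakV)$ already grows with $B$, and on a witness of that length the $\varepsilon$-cycle periods may be empty.

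The paper goes in the opposite direction: it \emph{lengthens} the witness. The key lemma (\cref{pump-dcl}, proved via Ramsey) says that in any $w$ with enough marked positions one can find a decomposition $w=xyz$ with $xyz\le_\frakS xy^\ell z$ for all $\ell\ge 1$; crucially, $\le_\frakS$ means every $\wdcl{S_j}{\cdot}$ only grows, so nothing already covered is lost. Using this, \cref{extend-consistency} shows that if $\frakV$ is $\frakS$-consistent and $\alpha_B(\mmap)\in V_j$, then adding $\mmap$ to $V_j$ is still $\frakS$-consistent: one takes a run witnessing $\alpha_B(\mmap)\preceq_1\mmap'$ with $(w,\mmap')\in S_j$, uses the $\ge B$ outputs on the large coordinate to mark $\ge M$ positions in $w$, pumps the resulting infix to boost that coordinate past $\mmap(c)$, and repeats over the finitely many large coordinates. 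The proposition then follows by starting from the consistent tuple $\alpha_B(\frakV)$ and iteratively adding each $\mmap\in V_j$ (finitely many, since each $V_j$ is finite) via \cref{extend-consistency}. The insight you are missing is that the witness must be allowed to grow, and the role of Ramsey is to guarantee that growth is monotone for $\le_\frakS$ across all $S_j$ simultaneously.
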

Thus, it remains to prove \cref{abstraction-consistency-rational},
which is the purpose of the rest of this section.  Note that in
\cref{abstraction-consistency-rational}, the requirement that the
$V_j$ be \emph{finite} is crucial. For example, suppose $k=1$ and
$S=\{(\mathtt{a}^n, n\cdot\multi{\mathtt{b}}) \mid n\in\nats\}$ and
$\frakS=(S)$.  Then a set $V\subseteq\multiset{\{\mathtt{b}\}}$ is
$\frakS$-consistent if and only if $V$ is finite. Hence, there is no
bound $B$ such that $\alpha_B(V)$ reflects $\frakS$-consistency of any
$V$.
%
For \cref{abstraction-consistency-rational}, we use Ramsey's theorem (see \cref{ramsey}) to
prove the following pumping
  lemma.
\begin{lemma}\label{pump-dcl}
  Given a tuple $\frakS=(S_1,\ldots,S_k)$ of rational subsets
  $S_j\subseteq\Gamma^*\times\multiset{\Lambda}$, we can compute a
  bound $M$ such that the following holds.  In a word $w$ with $M$
  marked positions, we can pick two marked positions so that for the
  resulting decomposition $w=xyz$, we have
  $xyz\le_\frakS xy^\ell z$ for every $\ell\ge 1$.
\end{lemma}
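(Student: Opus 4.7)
My plan is to apply Ramsey's theorem (see \cref{ramsey}) to a coloring of pairs of marked positions, where the color records the ``profile'' of the factor between them in each of the automata accepting $S_1,\ldots,S_k$. By \cref{stack-vector-rational}, each $S_l$ is accepted by an automaton $\cM_l=(Q_l,E_l,q_0^l,q_f^l)$ over $\Gamma^*\times\multiset{\Lambda}$. For a word $y\in\Gamma^*$ I would define
\[
\Phi_l(y)=\bigl\{(p,A,q)\in Q_l\times\powerset{\Lambda}\times Q_l\,\bigm|\,\exists\mmap\colon p\autstep[y|\mmap]q \text{ in } \cM_l,\ \sup(\mmap)=A\bigr\},
\]
and set $\Phi(y)=(\Phi_1(y),\ldots,\Phi_k(y))$. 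The set of possible profiles is a finite monoid $\cP$ under the composition induced by concatenation, and $y\mapsto\Phi(y)$ is a monoid morphism; let $r=|\cP|$.

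Second, I would choose $M=R(r;3)$. For a word $w$ with $M$ marked positions, color each unordered pair of marked positions by the profile of the intervening factor. Ramsey yields three marked positions $i_a<i_b<i_c$ whose three pairwise profiles all coincide. Setting $y=w[i_a{:}i_b]$ and $v=w[i_b{:}i_c]$, the relations $\Phi(y)=\Phi(v)=\Phi(yv)=\Phi(y)\cdot\Phi(v)$ show that $\Phi(y)$ is idempotent in $\cP$, hence $\Phi(y^\ell)=\Phi(y)$ for every $\ell\ge 1$. I then take $x=w[0{:}i_a]$ and $z=w[i_b{:}|w|]$ as the advertised decomposition $w=xyz$.

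Third, I would verify that $xyz\le_\frakS xy^\ell z$ for every $\ell\ge 1$. Fix $l$ and $\mmap\in\wdcl{S_l}{xyz}$, witnessed by some $\mmap'\succeq_1\mmap$ with $(xyz,\mmap')\in S_l$. An accepting run of $\cM_l$ factors as $q_0^l\autstep[x|\mmap_x]p\autstep[y|\mmap_y]q\autstep[z|\mmap_z]q_f^l$ with $\mmap'=\mmap_x+\mmap_y+\mmap_z$; set $A=\sup(\mmap_y)$. It suffices to produce a run $p\autstep[y^\ell|\mmap^*]q$ with $\mmap^*\succeq\mmap_y$ and $\sup(\mmap^*)=A$, since then $\mmap'':=\mmap_x+\mmap^*+\mmap_z$ satisfies $(xy^\ell z,\mmap'')\in S_l$ and $\mmap''\succeq_1\mmap$. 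To produce $\mmap^*$ I would use idempotence: since $(p,A,q)\in\Phi_l(y)=\Phi_l(y)^N$ for every $N$, any length-$N$ decomposition with $N\ge|Q_l|+1$ forces, by pigeonhole, a repeated intermediate state $r^*$, yielding a loop at $r^*$ on a factor $y^m$ whose support is contained in $A$; iterating this loop freely inside the run allows one to add to the production in each coordinate of $A$ without introducing new support, and combining loops found in this way for each coordinate of $A$ (again using idempotence to re-insert them) yields a run on $y^\ell$ with the required $\mmap^*\succeq\mmap_y$ of support exactly $A$.

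The main obstacle is the last step: idempotence of the support profile immediately delivers a run on $y^\ell$ from $p$ to $q$ of support $A$, but obtaining \emph{componentwise} domination of $\mmap_y$ uniformly in $\ell$ requires that the loops supplied by idempotence collectively cover every element of $A$. If the bare support profile is insufficient, I would refine the coloring to also record, for each pair $(p,q)$, the supports of loops available at states appearing in a decomposition of $(p,A,q)\in\Phi_l$; this is still a finite piece of data, so the Ramsey step goes through unchanged, and the refined idempotent forces, for every $x\in A$, the existence of a loop containing $x$ in its support, which is exactly what is needed to iteratively dominate $\mmap_y$ coordinate by coordinate.
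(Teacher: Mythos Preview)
Your Ramsey-based strategy is on the right track and coincides with the paper's at the coarse level, but the gap you flag in your last paragraph is genuine and is not repaired by the proposed refinement.

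The problem is precisely the domination $\mmap^*\succeq\mmap_y$ with $\sup(\mmap^*)=A$. Idempotence of $\Phi(y)$ guarantees that for every $\ell$ there is \emph{some} run $p\autstep[y^\ell|\mmap^*]q$ with $\sup(\mmap^*)=A$, but the concrete $\mmap_y$ from the original accepting run is lost once you pass to the profile. Your pigeonhole step produces a loop $r^*\autstep[y^m|\nmap]r^*$ with $\sup(\nmap)\subseteq A$, yet nothing forces $\sup(\nmap)=A$, nor that such loops collectively cover every coordinate of $A$; and your refined coloring (``supports of loops available at intermediate states'') still does not let you \emph{insert} those loops into a run that starts at $p$, ends at $q$, reads exactly $y^\ell$, and retains all of the original $\mmap_y$. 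Once the factor $y$ is abstracted to its transition profile, there is no handle on the specific multiset that must be dominated, and three monochromatic positions simply do not give you a repeated state inside the given run.

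The paper sidesteps this with two changes. First, it colors pairs of positions by the set $\kappa(u)\subseteq Q\times\powerset{\Lambda}$ of \emph{cycles} on $u$ (pairs $(q,\Theta)$ with $q\autstep[u|\mmap]q$ and $\sup(\mmap)=\Theta$), not by arbitrary transitions. Second, it asks Ramsey for a monochromatic set of size $n+1$ where $n=|Q|$, not $3$. With $n+1$ positions yielding $w=xy_1\cdots y_nz$ and $y=y_1\cdots y_n$, the \emph{original} accepting run on $(xyz,\mmap')$ must, by pigeonhole on its $n+1$ states at the borders of the $y_h$, contain a cycle $q\autstep[y_f\cdots y_g|\bar{\mmap}]q$ where $\bar{\mmap}$ is a summand of $\mmap'$. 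Monochromaticity then supplies cycles $q\autstep[y_h|\mmap_h]q$ with $\sup(\mmap_h)=\sup(\bar{\mmap})$ for every $h$, and splicing these into the original run accepts $(xy^\ell z,\;\mmap'+(\ell-1)\sum_h\mmap_h)$. Since the original $\mmap'$ is kept intact and only cycles of support $\sup(\bar{\mmap})\subseteq\sup(\mmap')$ are added, the required $\preceq_1$-domination is automatic. The larger monochromatic set is exactly what forces the repeated state to lie \emph{on the given run}, so one never has to rebuild a dominating run from abstract profile data.
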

\begin{proof}
  Let $\cM_j$ be an automaton for $S_j$ with state set $Q_j$ for
  $j\in[1,k]$.  We may assume that the sets $Q_j$ are pairwise
  disjoint and we define $Q=\bigcup_{j=1}^k Q_j$ and $n=|Q|$.  To each
  $u\in\Gamma^*$, we assign a subset
  $\kappa(u)\subseteq Q\times\powerset{\Lambda}$, where
  $(q,\Theta)\in Q_j\times\powerset{\Lambda}$ belongs to $\kappa(u)$
  if there is a cycle in $\cM_j$ that starts (and ends) in $q$, reads
  $u$, and reads a multiset with support $\Theta$. Hence,
  for $q\in Q_j$ and $\Theta\subseteq\Lambda$, we have
  $(q,\Theta)\in\kappa(u)$ if and only if $q\autstep[u|\mmap]q$ in
  $\cM_j$ for some $\mmap\in\multiset{\Lambda}$ with
  $\sup(\mmap)=\Theta$.

  We specify $M$ later. Suppose $M$ positions $s_1,\ldots,s_M$ are
  marked in $w$.  We build a colored graph on $M$ vertices and we
  label the edge from $j$ to $j'$ by the set $\kappa(u)$, where $u$ is
  the infix of $w$ between $s_j$ and $s_{j'}$.  Hence, the graph is
  $r$-colored, where $r=2^{|Q|\cdot 2^{|\Lambda|}}$ is the number of
  subsets of $Q\times\powerset{\Lambda}$.  We now apply Ramsey's
  theorem. We compute $M$ so that $M\ge R(r;n+1)$,
  e.g. $M=r^{r(n-1)+1}$. Then our graph must contain a monochromatic
  subset of size $n+1$. Let $t_1,\ldots,t_{n+1}$ be the corresponding
  positions in $w$. Moreover, let $w=xy_1\cdots y_{n}z$ be the
  decomposition of $w$ such that $y_j$ is the infix between $t_j$ and
  $t_{j+1}$. We claim that with $y=y_1\cdots y_n$, we have indeed
  $xyz\le_\frakS xy^\ell z$ for every $\ell\ge 1$.

  Consider a word $xy^\ell z$ and some multiset
  $\mmap\in \wdcl{S_j}{xyz}$.  We have to show that
  $\mmap\in \wdcl{S_j}{xy^\ell z}$. Since $\mmap\in\wdcl{S_j}{xyz}$,
  there is a run of $\cM_j$ reading $(xyz,\mmap')$ for some
  $\mmap'\succeq_1\mmap$. Since $\cM_j$ has $\le n$ states, some state
  must repeat at two borders of the decomposition $y=y_1\cdots
  y_n$. Suppose our run reads $(y_f\cdots y_g,\bar{\mmap})$ on a cycle on $q\in Q_j$
  for some $\bar{\mmap}$ in $\cM_j$. By monochromaticity, we
  know that $\kappa(y_f\cdots y_g)=\kappa(y_h)$ for every $h\in[1,n]$.
  Observe that we can write
 \begin{equation}
 xy^\ell z = x y_1\cdots y_{f-1} (y_f\cdots y_n y_1\cdots y_{f-1})^{\ell-1} y_f\cdots y_n z.
  \label{pump-decomposition}
  \end{equation}
  For every $h\in[1,n]$, we have $\kappa(y_f\cdots y_g)=\kappa(y_h)$,
  and hence $q\autstep[y_h|\mmap_h]q$ in $\cM_j$ for some
  $\mmap_h\in\multiset{\Lambda}$ with
  $\sup(\mmap_h)=\sup(\bar{\mmap})$. Then, in particular,
  $\sup(\mmap_h)\subseteq\sup(\mmap')=\sup(\mmap)$. Therefore,
  \cref{pump-decomposition} shows that $\cM_j$ accepts
  $(xy^\ell z,\mmap'+(\ell-1)\sum_{h=1}^n \mmap_h)$.  Since we now
  have $\mmap\preceq_1\mmap'+(\ell-1)\sum_{h=1}^n \mmap_h$, this
  implies $\mmap\in \wdcl{S}{xy^\ell z}$.
\end{proof}

Using \cref{pump-dcl}, we can obtain the final ingredient of
\cref{abstraction-consistency-rational}:
\begin{restatable}{lemma}{extendConsistency}\label{extend-consistency}
  Given a tuple $\frakS=(S_1,\ldots,S_k)$ of rational subsets
  $S_j\subseteq\Gamma^*\times\multiset{\Lambda}$, we can compute a
  bound $B$ such that the following holds. Let $\frakV=(V_1,\ldots,V_k)$
  be a $\frakS$-consistent tuple and suppose $\alpha_B(\mmap)\in
  V_j$. Then adding $\mmap$ to $V_j$ preserves $\frakS$-consistency.
\end{restatable}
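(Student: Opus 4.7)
The plan is to take an arbitrary $\frakS$-consistency witness $w$ for $\frakV$ and, using \cref{pump-dcl} together with a pigeonhole argument on the automata $\cM_1,\ldots,\cM_k$ recognising $S_1,\ldots,S_k$, pump $w$ into a word $w'$ that is still a witness for $\frakV$ and, in addition, satisfies $\mmap\in\wdcl{S_j}{w'}$. Since $\alpha_B(\mmap)\in V_j\subseteq\wdcl{S_j}{w}$, there is an accepting run of $\cM_j$ reading $(w,\mmap')$ for some $\mmap'\succeq_1\alpha_B(\mmap)$. In particular $\sup(\mmap')=\sup(\mmap)$ and $\mmap'(x)\ge\min(\mmap(x),B)$, so the only coordinates on which $\mmap'$ still falls short of $\mmap$ are those $x\in\sup(\mmap)$ with $\mmap(x)>B$, and for each such $x$ I will pump $w$ along a suitable cycle of the $\cM_j$-run in order to boost that coordinate past $\mmap(x)$.

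Fix one such $x$. Let $C$ bound the multiset entries appearing on any edge of any $\cM_l$, and let $M$ be the constant produced by \cref{pump-dcl} from $\frakS$; I shall choose $B$ large enough that $B/C\ge M\cdot\max_l|Q_l|$. Because the run reads at least $\mmap'(x)\ge B$ copies of $x$, at least $B/C$ of its edges have $x$ in their multiset support, and pigeonhole on the starting states of these edges yields $M$ of them that all begin in one common state $q\in Q_j$. I mark the $M$ corresponding positions of $w$ and apply \cref{pump-dcl}, obtaining a decomposition $w=uyv$ where $y$ is the infix between two of the marks and $uyv\le_\frakS uy^\ell v$ for every $\ell\ge 1$. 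Since both endpoints sit at edges that start from $q$, the sub-run of $\cM_j$ over $y$ is a cycle at $q$; and since the first of these edges contributes to $x$, this cycle reads some multiset $\bar{\mmap}$ with $\bar{\mmap}(x)\ge 1$ and $\sup(\bar{\mmap})\subseteq\sup(\mmap')=\sup(\mmap)$. For $\ell$ sufficiently large, $\cM_j$ then accepts $(uy^\ell v,\mmap'+(\ell-1)\bar{\mmap})$, whose $x$-coordinate exceeds $\mmap(x)$, and by $\le_\frakS$ every $V_l$ remains inside $\wdcl{S_l}{uy^\ell v}$.

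I then iterate this construction over the finitely many $x\in\sup(\mmap)$ with $\mmap(x)>B$, each round replacing $w$ by the pumped word and $\mmap'$ by the new accepted multiset. The containment $\sup(\bar{\mmap})\subseteq\sup(\mmap)$ ensures that no coordinate outside $\sup(\mmap)$ is ever introduced, so the $\preceq_1$-relation $\mmap'\succeq_1\alpha_B(\mmap)$ is preserved and earlier gains on other coordinates are not undone, while $\le_\frakS$ chains by transitivity. After at most $|\Lambda|$ rounds the resulting word is an $\frakS$-consistency witness for the enlarged tuple. The main obstacle is ensuring that the single factor $y$ returned by \cref{pump-dcl} is simultaneously (a) $\le_\frakS$-pumpable and (b) a cycle of the $\cM_j$-run whose read multiset contains the target symbol $x$; the pigeonhole step described above sidesteps this by marking only positions that already witness (b), thereby forcing \cref{pump-dcl} to supply a factor with both properties, and it is exactly this step that fixes the computable bound $B\ge M\cdot C\cdot\max_l|Q_l|$.
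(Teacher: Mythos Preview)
Your approach is exactly the paper's: pigeonhole on the accepting run to find a state $q$ that recurs often at edges outputting $x$, then invoke \cref{pump-dcl} on those occurrences so that the returned infix is simultaneously $\le_\frakS$-pumpable and a $q$-cycle of the $\cM_j$-run whose output contains $x$; iterate over the finitely many deficient coordinates. The paper normalises the automata so each edge reads either a single letter or a singleton multiset (making your $C$ equal to $1$), but your use of a general bound $C$ works just as well.

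There is one small gap. When you ``mark the $M$ corresponding positions of $w$'' and apply \cref{pump-dcl}, you implicitly assume these positions are pairwise distinct. They need not be: an edge with $x$ in its multiset support may read $\varepsilon$ from $\Gamma^*$, so several of your $M$ selected edges can sit at the same position of $w$, and \cref{pump-dcl} requires $M$ distinct marks. The paper makes exactly this case split explicit: if two of the occurrences of $q$ lie at the same position of $w$ (i.e.\ the word read between them is empty), then the sub-run between them is already a $q$-cycle reading $(\varepsilon,\bar{\mmap})$ with $\bar{\mmap}(x)\ge 1$, and one pumps it directly without changing $w$ (trivially $w\le_\frakS w$). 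Only when all $M$ positions are distinct does \cref{pump-dcl} enter. Adding this two-line case distinction completes your argument.
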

The idea is the following. Let $\frakV'=(V'_1,\ldots,V'_k)$ be
obtained from $\frakV$ by adding $\mmap$ to $V_j$.  Let us say that
$w\in\Gamma^*$ \emph{covers} some $\nmap\in V'_j$ if and only if
$\nmap\in \wdcl{S_j}{w}$. Since $\frakV$ is $\frakS$-consistent, there
is a $w\in\Gamma^*$ that covers all elements of $\frakV$. Then $w$
covers $\alpha_B(\mmap)$. Moreover, $\mmap$ agrees with
$\alpha_B(\mmap)$ on all coordinates where $\mmap$ is $<B$.
We now have to construct a $w'$ that covers $\mmap$ in the remaining coordinates.
A simple pumping argument for each coordinate of $\mmap$ over an
automaton for $S_j$ (say, with $B$ larger than the number of states)
would yield a word $w'$ that even covers $\mmap$. However,
this might destroy coverage of all the other multisets in
$\frakV$. Therefore, we use \cref{pump-dcl}. It allows us to choose
$B$ high enough so that pumping to $w'=xy^\ell z$ covers $\mmap$, but
also guarantees $w\le_\frakS w'$. The latter implies that going from $w$ to
$w'$ does not lose any coverage. 

Finally, \cref{abstraction-consistency-rational} follows from \cref{extend-consistency} by induction. 

\section{Conclusion}
\label{sec:conclusion}

We have shown decidability of verifying liveness for $\DCPS$ in the context-bounded case.
Our results imply that fair termination for $\DCPS$ is $\Pi_1^0$-complete when
each thread is restricted to context switch a finite number of times.
Our result extends to liveness properties that can be expressed as a B\"uchi condition. 
We can reduce to fair non-termination by simply adding the states of a B\"uchi automaton to the global states via a product construction. 
From there, the B\"uchi acceptance condition can be simulated by using a special thread that forces a visit to a final state when scheduled, 
and then reposts itself before terminating. 
Scheduling this thread fairly along an infinite execution thus results in infinitely many visits to final states.

While we have focused on termination- and liveness-related questions, our techniques also imply further decidability results on
commonly studied decision questions for concurrent programs. 
A run of a $\DCPS$ is \emph{bounded} if 
there is a bound $B\in\nats$ so that 
the number of pending threads in every configuration along the run is at most $B$.
The $K$-bounded boundedness problem asks if every $K$-context bounded run is bounded.
Since boundedness is preserved under downward closures, our techniques for non-termination
can be modified to show the problem is also $\TWOEXPSPACE$-complete. 

The $K$-bounded \emph{configuration reachability} problem for $\DCPS$ asks if a given configuration is reachable.
Our reductions from $\DCPS$ to $\VASSB$, and the decidability of reachability for $\VASSB$, imply
$K$-bounded configuration reachability is decidable for $\DCPS$.

Thus, combined with previous results on safety verification \cite{AtigBQ2009j} and the case $K=0$ \cite{GantyM12}, 
our paper closes the decidability frontier for all commonly studied $K$-bounded verification problems 
for all $K\geq 0$.

\begin{acks}                            %
This research was sponsored in part by
the Deutsche Forschungsgemeinschaft project 389792660 TRR 248--CPEC
and by the European Research Council under the
Grant Agreement 610150 (http://www.impact-erc.eu/) (ERC Synergy Grant ImPACT).
\end{acks}

\newpage

\bibliography{bibliography}

\appendix

\def \cAtilde{\widetilde{\cA}}
\def \Deltatilde{\widetilde{\Delta}}
\def \Deltatildec{\widetilde{\Delta}_{\mathsf{c}}}
\def \Deltatildei{\widetilde{\Delta}_{\mathsf{i}}}
\def \Deltatilder{\widetilde{\Delta}_{\mathsf{r}}}
\def \Deltatildet{\widetilde{\Delta}_{\mathsf{t}}}
\def \rhotilde{\widetilde{\rho}}

\section{Strengthening Fairness  to Progressive Runs} \label{sec:strengthening-fairness}

In this section we strengthen the notion of fairness for $\DCPS$ to progressiveness by proving \cref{lem:progDCPS}.

\subsubsection*{Idea}
To prove \cref{lem:progDCPS} we modify the $\DCPS$ $\cA$ by giving every thread a bottom of stack symbol $\bot$ and saving its context switch number in its top of stack symbol. We also save this number in the global state whenever a thread is active. This way we can still swap a thread out and back in again once it has emptied its stack, and we also can keep track of how often we need to repeat that, before we reach $K$ context switches and allow it to terminate.

Furthermore, we also keep a subset $G'$ of the global states of $\cA$ in our new global states, which restricts the states that can appear when no thread is active. This way we can guess that a thread will be ``stuck'' in the future, upon which we terminate it instead (going up to $K$ context switches first) and also spawn a new thread keeping track of its top of stack symbol in the bag. Then later we restrict the subset $G'$ to only those global states that do not have {\sc Resume} rules for the top of stack symbols we saved in the bag. This then verifies our guess of ``being stuck''.

\subsubsection*{Formal construction}
Let $K \in \nats$ and $\cA = (G,\Gamma,\Delta,g_0,\gamma_0)$ be a $\DCPS$. We construct the $\DCPS$ $\cAtilde = (\widetilde{G},\widetilde{\Gamma},\Deltatilde,(g_0,G),\gamma_0)$, where
\begin{itemize}
  \item $\widetilde{G} = \big(G \times \powerset{G}\big) \cup \big(G \times \{0,\ldots,K\} \times \powerset{G}\big) \cup \big(\bar{G} \times \{0,\ldots,K\} \times \powerset{G}\big)$, where $\bar{G} = \{\bar{g} | g \in G\}$ and $\powerset{G}$ is the powerset of $G$, i.e.\ the set of all subsets of $G$,
  \item $\widetilde{\Gamma} = \Gamma_\bot \cup (\Gamma_\bot \times \{0,\ldots,K\}) \cup \bar{\Gamma}$, where $\Gamma_\bot = \Gamma \cup \{\bot\}$ and $\bar{\Gamma} = \{\bar{\gamma}|\gamma\in\Gamma\}$,
  \item $\Deltatilde = \Deltatildec \cup \Deltatildei \cup \Deltatilder \cup \Deltatildet$ consists of the following transitions rules:
  \begin{enumerate}
    \item $(g_1,G') \mapsto (g_2,G') \lhd \gamma \in \Deltatilder$ for all $G' \supseteq \{g_1\}$ iff $g_1 \mapsto g_2\lhd \gamma \in \Deltar$.
    \item $(g_1,G') \mapsto (g_2,k,G') \lhd (\gamma,k) \in \Deltatilder$ for all $k \in \{1,\ldots,K\}$, $G' \supseteq \{g_1\}$ iff $g_1 \mapsto g_2\lhd \gamma \in \Deltar$.
    \item $(g,G')|\gamma \hookrightarrow (g,0,G')|(\gamma,0).\bot \in \Deltatildec$ for all $g \in G$, $\gamma \in \Gamma$.
    \item $(g,k,G')|\gamma \hookrightarrow (g,k,G')|(\gamma,k) \in \Deltatildec$ for all $k \in \{0,\ldots,K\}$, $g \in G$, $\gamma \in \Gamma_\bot$.
    \item $(g_1,k,G')|(\gamma,k) \hookrightarrow (g_2,k,G')|w \in \Deltatildec$ for all $k \in \{0,\ldots,K\}$ iff $g_1|\gamma \hookrightarrow g_2|w \in \Deltac$.
    \item $(g_1,k,G')|(\gamma_1,k) \hookrightarrow (g_2,k,G')|w \triangleright \gamma_2 \in \Deltatildec$ for all $k \in \{0,\ldots,K\}$ iff $g_1|\gamma_1 \hookrightarrow g_2|w \triangleright \gamma_2 \in \Deltac$.
    \item $(g_1,k,G')|(\gamma_1,k) \mapsto (g_2,G')|(\gamma_2,k+1) \gamma_3 \in \Deltatildei$ for all $k \in \{0,\ldots,K-1\}$, $G' \supseteq \{g_2\}$ iff $g_1|\gamma_1 \mapsto g_2|\gamma_2 \gamma_3 \in \Deltai$, where $\gamma_2 \in \Gamma$ and $\gamma_3 \in \Gamma \cup \{\varepsilon\}$.
    \item $(g_1,k,G')|(\bot,k) \hookrightarrow (\bar{g}_2,k,G')|(\bot,k) \in \Deltatildec$ for all $k \in \{0,\ldots,K\}$ iff $g_1 \mapsto g_2 \in \Deltat$
    \item $(\bar{g},k,G')|(\bot,k) \mapsto (\bar{g},k+1,G')|(\bot,k+1) \in \Deltatildei$ for all $k \in \{0,\ldots,K-1\}$, $g \in G' \subseteq G$.
    \item $(\bar{g},k,G')|(\bot,k) \hookrightarrow (\bar{g},k,G')|(\bot,k) \in \Deltatildec$ for all $k \in \{0,\ldots,K-1\}$, $g \in G' \subseteq G$.
    \item $(\bar{g},k,G') \mapsto (\bar{g},k,G') \lhd (\bot,k) \in \Deltatilder$ for all $k \in \{0,\ldots,K\}$, $g \in G' \subseteq G$.
    \item $(\bar{g},K,G')|(\bot,K) \hookrightarrow (\bar{g},K,G')|\varepsilon \in \Deltatildec$ for all $g \in G' \subseteq G$.
    \item $(\bar{g},K,G') \mapsto (g,G') \in \Deltatildet$ for all $g \in G' \subseteq G$.
    \item $(g_1,k,G')|(\gamma,k) \hookrightarrow (\bar{g}_2,k,G')|\varepsilon \triangleright \bar{\gamma} \in \Deltatildec$ for all $k \in \{0,\ldots,K\}$, $g_2 \in G' \subseteq G$, $\gamma \in \Gamma$ iff $g_1|\gamma_1 \mapsto g_2|w \in \Deltai$ for some $w \in \Gamma^*$ and $g' \mapsto g \lhd \gamma \in \Deltar$ for some $g' \in G'$, $g \in G$. 
    \item $(g_1,k,G')|(\gamma,k) \hookrightarrow (\bar{g}_2,k,G')|\varepsilon \in \Deltatildec$ for all $k \in \{0,\ldots,K\}$, $g_2 \in G' \subseteq G$, $\gamma \in \Gamma$ iff $g_1|\gamma_1 \mapsto g_2|w \in \Deltai$ for some $w \in \Gamma^*$ and $g' \mapsto g \lhd \gamma \notin \Deltar$ for all $g' \in G'$, $g \in G$.
    \item $(\bar{g},k,G')|\gamma \hookrightarrow (\bar{g},k,G')|\varepsilon \in \Deltatildec$ for all $k \in \{0,\ldots,K\}$, $g \in G' \subseteq G$, $\gamma \in \Gamma$.
    \item $(\bar{g},k,G')|\bot \hookrightarrow (\bar{g},k,G')|(\bot,k) \in \Deltatildec$ for all $k \in \{0,\ldots,K\}$, $g \in G' \subseteq G$.
    \item $(g,G') \mapsto (g,0,G') \lhd \bar{\gamma} \in \Deltatilder$ for all $k \in \{1,\ldots,K\}$, $g \in G' \subseteq G$, $\gamma \in \Gamma$.
    \item $(g,0,G_1)|\bar{\gamma} \hookrightarrow (\bar{g},0,G_2)|(\bot,0) \in \Deltatildec$ for all $g \in G_2 \subseteq G_1 \subseteq G$, $\gamma \in \Gamma$, where $G_2 = \{g_1 \in G_1 | \forall g_2 \in G\colon g_1 \mapsto g_2 \lhd \gamma \notin \Deltar\}$.
  \end{enumerate}
\end{itemize}
We proceed by constructing a run of $\cAtilde$ from a fair run of $\cA$ and argue that this construction results in a progressive run and preserves starvation, which proves the if direction of the two points of \cref{lem:progDCPS}. For the only if direction we argue that we can also do this backwards, starting with a progressive run from $\cAtilde$ and constructing one from $\cA$.

Let $\rho$ be a infinite fair run of $\cA$. We begin by formalizing the notion threads reaching a certain local configuration and never progressing from there. Consider a local configuration $t = (w,i)$ of $\cA$ that over the course of $\rho$ is only removed from the bag finitely often (via applications of {\sc Resume}) and is added to the bag more often than it is removed (via applications of {\sc Swap}). Let $n_t$ be the number of times $t$ is removed from the bag in this way. Then from the $(n+1)$th applications of {\sc Swap} and onwards adding $t$ to the bag over the course of $\rho$, we say the local configuration $t$ added to the bag is \emph{stagnant}.

Now we can properly construct an infinite run $\rhotilde$ of $\cAtilde$. Whenever $\rho$ reaches a configuration with global state $g$, $\rhotilde$ mimics it by reaching a global state that includes $g$ in the state tuple. Similarly any non-stagnant local configuration $(\gamma w,i)$ occurring on $\rho$ corresponds to a local configuration $\big((\gamma,i) w,i\big)$ occurring on $\rhotilde$, whereas the empty stack corresponds to stack content $\bot$ or $(\bot,i)$. Newly spawned local configurations $(\gamma,0)$ look the same in both runs. For the initial configuration $\langle g_0, \bot, \multi{(\gamma_0,0)}\rangle$ of $\rho$ and $\langle (g_0,G), \bot, \multi{(\gamma_0,0)}\rangle$ of $\rhotilde$ this correspondence evidently holds. Let us now go over the consecutive thread step and scheduler step relations of $\rho$ and construct $\rhotilde$ appropriately:

In the following, for any step relation between two configurations of $\cA$, we write a number 1 to 17 above the arrow to denote which rule of $\Deltatilde$ is being applied.
\begin{description}
  \item [Case $\langle g_1, \#, \mmap + \multi{(\gamma w, i)} \rangle \mapsto \langle g_2, (\gamma w, i), \mmap \rangle$ due to $g_1 \mapsto g_2 \lhd \gamma \in \Deltar$:]\ \\
  If $i = 0$ (and therefore also $w = \varepsilon$) then
  \[\langle (g_1,G'), \#, \widetilde{\mmap} + \multi{(\gamma, 0)} \rangle \xmapsto{\,1\,} \langle (g_2,G'), (\gamma,0), \widetilde{\mmap} \rangle \xrightarrow{3} \langle (g_2,0,G'), \big((\gamma,0) \bot,0\big), \widetilde{\mmap} \rangle,\]
  otherwise (if $i \geq 1$)
  \[\langle (g_1,G'), \#, \widetilde{\mmap} + \multi{\big((\gamma,i) w \bot, i\big)} \rangle \xmapsto{\,2\,} \langle (g_2,i,G'), \big((\gamma,i) w \bot, i\big), \widetilde{\mmap} \rangle.\]
  
  \item [Case $\langle g_1, (\gamma w,i), \mmap \rangle \rightarrow \langle g_2, (w' w,i), \mmap \rangle$ due to $g_1|\gamma \hookrightarrow g_2|w' \in \Deltac$:]%
  \[\langle (g_1,i,G'), \big((\gamma,i) w \bot,i\big), \widetilde{\mmap} \rangle \xrightarrow{5} \langle (g_2,i,G'), (w' w \bot,i), \widetilde{\mmap} \rangle,\]
  from here, if $w' w = \gamma' w'' \neq \varepsilon$ for some $\gamma' \in \Gamma$, $\rhotilde$ continues with
  \[\langle (g_2,i,G'), (\gamma' w'' \bot,i), \widetilde{\mmap} \rangle \xrightarrow{4} \langle (g_2,i,G'), \big((\gamma',i) w'' \bot,i\big), \widetilde{\mmap} \rangle,\]
  otherwise (if $w' w = \varepsilon$) $\rhotilde$ continues with
  \[\langle (g_2,i,G'), (\bot,i), \widetilde{\mmap} \rangle \xrightarrow{4} \langle (g_2,i,G'), \big((\bot,i),i\big), \widetilde{\mmap} \rangle.\]
  
  \item [Case $\langle g_1, (\gamma w,i), \mmap \rangle \rightarrow \langle g_2, (w' w,i), \mmap + \multi{(\gamma',0)} \rangle$ due to $g_1|\gamma \hookrightarrow g_2|w' \triangleright \gamma' \in \Deltac$:]%
  \[\langle (g_1,i,G'), \big((\gamma,i) w \bot,i\big), \widetilde{\mmap} \rangle \xrightarrow{6} \langle (g_2,i,G'), (w' w \bot,i), \widetilde{\mmap} + \multi{(\gamma',0)} \rangle,\]
  if $w' w = \gamma'' w'' \neq \varepsilon$ for some $\gamma'' \in \Gamma$ then $\rhotilde$ continues with
  \[\langle (g_2,i,G'), (\gamma'' w'' \bot,i), \widetilde{\mmap} + \multi{(\gamma',0)} \rangle \xrightarrow{4} \langle (g_2,i,G'), \big((\gamma'',i) w'' \bot,i\big), \widetilde{\mmap} + \multi{(\gamma',0)} \rangle,\]
  otherwise (if $w' w = \varepsilon$) $\rhotilde$ continues with
  \[\langle (g_2,i,G'), (\bot,i), \widetilde{\mmap} + \multi{(\gamma',0)} \rangle \xrightarrow{4} \langle (g_2,i,G'), \big((\bot,i),i\big), \widetilde{\mmap} + \multi{(\gamma',0)} \rangle.\]
  
  \item [Case $\langle g_1, (\gamma w, i), \mmap \rangle \mapsto \langle g_2, \#, \mmap + \multi{(w' w, i+1)} \rangle$ due to $g_1|\gamma \mapsto g_2|w' \in \Deltai$:]\ \\
  By definition of $\Deltai$ we have $1 \leq |w'| \leq 2$ and therefore $w' = \gamma_1 \gamma_2$ for some $\gamma_1 \in \Gamma$, $\gamma_2 \in \Gamma \cup \{\varepsilon\}$. If $(w' w, i+1)$ is not stagnant here, then
  \[\langle (g_1,i,G'), \big((\gamma,i) w \bot,i\big), \widetilde{\mmap} \rangle \xmapsto{\,7\,} \langle (g_2,G'), \#, i\big), \widetilde{\mmap} + \multi{\big((\gamma_1,i+1) \gamma_2 w \bot, i+1\big)} \rangle.\]
  Otherwise (if $(w' w, i+1)$ is stagnant here), if $G'$ still contains states that allow {\sc Resume} rules on $\gamma_1$, we first save this top of stack symbol in the bag (as $\bar{\gamma}_1$):
  \[\langle (g_1,i,G'), \big((\gamma,i) w \bot,i\big), \widetilde{\mmap} \rangle \xrightarrow{14} \langle (\bar{g}_2,i,G'), (w \bot,i), \widetilde{\mmap} + \multi{(\bar{\gamma}_1,0)} \rangle.\]
  If $G'$ does not allow for any {\sc Resume} rules on $\gamma_1$ we do not save this top of stack symbol:
  \[\langle (g_1,i,G'), \big((\gamma,i) w \bot,i\big), \widetilde{\mmap} \rangle \xrightarrow{15} \langle (\bar{g}_2,i,G'), (w \bot,i), \widetilde{\mmap} \rangle.\]
  In both of these stagnant cases we continue by almost emptying the stack of the active thread in $\rhotilde$. Let $\widetilde{\mmap}' = \widetilde{\mmap} + \multi{(\bar{\gamma}_1,0)}$ in the first case and $\widetilde{\mmap}' = \widetilde{\mmap}$ in the second case:
  \[\langle (\bar{g}_2,i,G'), (w \bot,i), \widetilde{\mmap}' \rangle \xrightarrow{16} \ldots \xrightarrow{16} \langle (\bar{g}_2,i,G'), (\bot,i), \widetilde{\mmap}' \rangle \xrightarrow{17} \langle (\bar{g}_2,i,G'), \big((\bot,i),i\big), \widetilde{\mmap}' \rangle.\]
  From here $\rhotilde$ continues with what we call an \emph{extended thread termination}, where we force the active thread to make its remaining context switches (up to $K$) and then terminate:
  \begin{align*}
  \langle (\bar{g}_2,i,G'), \big((\bot,i),i\big), \widetilde{\mmap}' \rangle &\xmapsto{\,9\,} \langle (\bar{g}_2,i+1,G'), \#, \widetilde{\mmap}' + \multi{\big((\bot,i+1),i+1\big)} \rangle \\
  &\xmapsto{11} \langle (\bar{g}_2,i+1,G'), \big((\bot,i+1),i+1\big), \widetilde{\mmap}' \rangle \\
  &\xrightarrow{10} \langle (\bar{g}_2,i+1,G'), \big((\bot,i+1),i+1\big), \widetilde{\mmap}' \rangle \\
  &\;\;\;\vdots\;\;\;\big((K - i)\text{ repetitions of the rule sequence (9), (11), (10)}\big) \\
  &\xrightarrow{10} \langle (\bar{g}_2,K,G'), \big((\bot,K),K\big), \widetilde{\mmap}' \rangle \\
  &\xrightarrow{12} \langle (\bar{g}_2,K,G'), (\varepsilon,K), \widetilde{\mmap}' \rangle \\
  &\xmapsto{13} \langle (g_2,G'), \#, \widetilde{\mmap}' \rangle.
  \end{align*}
  Of course if $i = K$ only the last two steps are part of $\rhotilde$. We also note that rule (10) is only necessary to facilitate alternation between thread step and scheduler step relations, as required by runs of $\DCPS$.
  
  \item [Case $\langle g_1, (\varepsilon, i), \mmap \rangle \mapsto \langle g_2, \#, \mmap \rangle$ due to $g_1 \mapsto g_2 \in \Deltat$:]%
  \[\langle (g_1,i,G'), \big((\bot,i),i\big), \widetilde{\mmap} \rangle \xrightarrow{8} \langle (\bar{g}_2,i,G'), \big((\bot,i),i\big), \widetilde{\mmap}' \rangle,\]
  from here $\rhotilde$ continues with an extended thread termination, as explained previously.
  
  \item [Once a particular configuration $\langle g, \#, \mmap \rangle$ is reached:]\ \\
  Let $\Ginf \subseteq G$ be the set of global states that occur infinitely often on configurations of $\rho$ with no active thread. Let $\langle g, \#, \mmap \rangle$ be a configuration upon which only global states in $\Ginf$ occur on configurations with no active thread on $\rho$ (including for this configuration itself, meaning $g \in \Ginf$). Furthermore, no new top of stack symbols appear among stagnant threads after this configuration has occurred. Now we handle all the threads with top of stack symbol in $\bar{\Gamma}$ spawned over the course of $\rhotilde$:
  \[\langle (g,G_1), \#, \widetilde{\mmap} + \multi{(\bar{\gamma},0)} \rangle \xmapsto{18} \langle (g,0,G_1), (\bar{\gamma},0), \widetilde{\mmap} \rangle \xrightarrow{19} \langle (\bar{g},0,G_2), \big((\bot,0),0\big), \widetilde{\mmap} \rangle.\]
  Here $G_2 = \{g_1 \in G_1 | \forall g_2 \in G\colon g_1 \mapsto g_2 \lhd \gamma \notin \Deltar\}$, i.e.\ the subset of $G_1$ that no longer contains any global states that allow {\sc Resume} rules for $\gamma$. We continue with an extended thread termination, upon which we repeat these steps until no more local configurations with top of stack symbol in $\bar{\Gamma}$ are in the bag.
\end{description}
The run $\rhotilde$ is sound: Firstly, the correspondence of local configurations we mentioned earlier is upheld throughout the run. Secondly, we always save the cs-number of the active thread in the state tuple and we have no cs-number in the state tuple if there is no active thread, which we assumed to hold at the start of all step sequences we added to $\rhotilde$. Finally, the subset $G' \subseteq G$ in each state tuple restricts which global states can occur while there is no active thread, but we update it in such a way that none of the steps we want to make are disallowed: As long as we visit global states $\notin \Ginf$ in configurations with no active thread $G' = G$ holds. After such visits stop happening, we update $G'$ based on all the stagnant threads that occurred. However, the top of stack symbols of stagnant threads cannot allow for the application of {\sc Resume} rules from global states in $\Ginf$, because that would violate the fairness condition of $\rho$. Therefore $\Ginf \subseteq G'$ still holds after we finish updating $G'$, which means all future steps are still allowed by $G'$.

The run $\rhotilde$ is also progressive: Firstly, every local configuration on $\rhotilde$ that corresponds to a non-stagnant thread of $\rho$ is resumed at some point, because its correspondent was handled fairly on $\rho$, which implies resumption for non-stagnant threads. Secondly, the local configurations of $\rhotilde$ corresponding to stagnant threads of $\rho$ do not have to be resumed, since they always terminate. Thirdly, all threads with top of stack in $\bar{\Gamma}$ are resumed eventually: Since we only update $G'$ once all different types of these threads were spawned, the update disallows any {\sc Resume} rules for all top of stack symbols of stagnant threads occurring afterwards. Therefore the only transition rule in $\Deltatildec$ that spawns these threads (rule 14) can no longer be applied (we apply rule 15 instead). Since these threads then no longer occur in the bag afterwards, they are of course do not have to be resumed. Finally, every thread that terminates does so after exactly $K$ context switches: This is due to the extended thread termination that is performed every time the bottom of stack symbol $\bot$ is reached.

A thread is starved by $\rhotilde$ iff a thread is starved by $\rho$: It is clear that stagnant threads cannot be starved, and non-stagnant threads of $\rho$ have their correspondents handled in same way on $\rhotilde$. The only mismatch in starvation can thus occur on the threads with top of stack symbol in $\bar{\Gamma}$. However, we already argued that after a certain point, none of these threads even occur as part of a configuration of $\rhotilde$ anymore. Therefore they cannot be starved either. This concludes the if-direction of the proof.

For the \emph{only if} direction let $\rhotilde$ be an infinite progressive run of $\cAtilde$. We observe that we can do most of the previous construction backwards to obtain an infinite run $\rho$ of $\cA$. Whenever we guess a thread to be stagnant by using rule (14) or (15) on $\rhotilde$, we instead keep its correspondent around in $\rho$ forever. Since each local configuration with top of stack symbol in $\bar{\Gamma}$ has to have a {\sc Resume} rule applied to it on $\rhotilde$, we eventually are restricted to global states that do not allow for {\sc Resume} rules on the corresponding top of stack symbols in $\Gamma$. This means the threads we keep around forever on $\rho$ are ``stuck'' and therefore handled fairly. Restricting the global states based on some symbol $\bar{\gamma}$ also disallows new threads with the same symbol to spawn during $\rhotilde$ afterwards. This means at some point no new threads with top of stack symbol in $\bar{\Gamma}$ are newly added to the bag, but the old threads disappear eventually because of progressiveness. Thus these threads cannot be exhibit starvation that then would not be present in $\rho$. Finally, all remaining threads of $\rho$ are handled the same way as their correspondents of $\rhotilde$, meaning the progressiveness of the latter implies fairness of the former.

\section{Proofs from Section~\ref{sec:vassb}} \label{appendix:dcps-to-vassb}

In this section we prove \cref{th:dcps-to-vassb} formally, which involves constructing a $\VASSB$ $\cV$ with the same progressiveness properties (when starting in configuration $c_0 = (q_0,\emptyset,\emptyset)$) as a given $\DCPS$ $\cA$.

\subsubsection*{Formal construction}
Let $K \in \nats$ and $\cA = (G,\Gamma,\Delta,g_0,\gamma_0)$ be a $\DCPS$ with $\Gamma = \{\gamma_0,\ldots,\gamma_l\}$. Construct a $\VASSB$ $\cV = (Q,P,\bar{Q},\bar{P},E)$, where
\begin{itemize}
  \item $\bar{Q} = \big(\cT(\cA,K) \times \{0,\ldots,K\} \times \mathsf{SL}(\cA,K) \times (\Gamma \cup \{\varepsilon\})\big) \cup \{\bot\}$,
  \item $\bar{P} = \Gamma \times \{0,\ldots,K\}$,
  \item $Q = G \cup (G \times \bar{Q} \times G) \cup \{q_0\}$,
  \item $P = \Gamma$,
  
  \item $E$ contains the following edges of the form $q\autstep[\op]q'$:
  \begin{enumerate}
    \item $\op=\delta\in\integs^P$ with $\delta(p) = 0$ for all $p \in P$, iff $q = q_0$, and $q' = (g_0,\bar{q},g_1)$ with $\bar{q} = (t,0,\mathsf{sl}(t),\varepsilon)$ where $t \in \cT(\cA,K)$ with a first segment going from $g_0$ to $g_1$. \label{VASSwBitem:init}
    \item $\op=\delta\in\integs^P$ with $\delta(\gamma) = -1$ and $\delta(p) = 0$ for all $p \neq \gamma$, iff there is a rule $g_1\mapsto g_2\lhd \gamma \in \Deltar$, $q = g_1 \in G$, and $q' = (g_2,\bar{q},g_3)$ with $\bar{q} = (t,0,\mathsf{sl}(t),\varepsilon)$ where $t \in \cT(\cA,K)$ with a first segment going from $g_2$ to $g_3$. \label{VASSwBitem:newT}
    \item $\op=\newb(\bar{q},S)$, iff $q = (g_1,\bar{q},g_2)$ where $\bar{q} = (t,0,S,\varepsilon)$, and $q' = (g_1,\bar{q}',g_2)$ where $\bar{q}' = (t,0,S,\gamma_0)$, for some $t \in \cT(\cA,K)$, $S \in \mathsf{SL}(\cA,K)$, $g_1,g_2 \in G$. \label{VASSwBitem:newB}
    \item $\op=\deflateb(\bar{q},\bar{q}', \bar{p},p)$, iff $p = \gamma_i$, $\bar{p} = (\gamma_i,j)$, $\bar{q} = (t,j,S,\gamma_i)$, $\bar{q}' = (t,j,S,\gamma_{i+1})$, $q = (g_1,\bar{q},g_2)$, and $q' = (g_1,\bar{q}',g_2)$, for some $j \in \{0,\ldots,K\}$, $t \in \cT(\cA,K)$, $S \in \mathsf{SL}(\cA,K)$, $i \in \{0,\ldots,l-1\}$, $g_1,g_2 \in G$. \label{VASSwBitem:popmid}
    \item $\op=\deflateb(\bar{q},\bar{q}', \bar{p},p)$, iff $p = \gamma_l$, $\bar{p} = (\gamma_l,j)$, $\bar{q} = (t,j,S,\gamma_l)$, $\bar{q}' = (t,j+1,S,\varepsilon)$, $q = (g_1,\bar{q},g_2)$, and $q' = g_2$, for some $j \in \{0,\ldots,K-1\}$, $t \in \cT(\cA,K)$, $S \in \mathsf{SL}(\cA,K)$, $g_1,g_2 \in G$. \label{VASSwBitem:popend}
    \item $\op=\deflateb(\bar{q},\bar{q}', \bar{p},p)$, iff $p = \gamma_l$, $\bar{p} = (\gamma_l,K)$, $\bar{q} = (t,K,S,\gamma_l)$, $\bar{q}' = \bot$, $q = (g_1,\bar{q},g_2)$, and $q' = (g_1,\bot,g_2)$, for some $t \in \cT(\cA,K)$, $S \in \mathsf{SL}(\cA,K)$, $g_1,g_2 \in G$. \label{VASSwBitem:popendlimit}
    \item $\op=\burstb(\bot)$ iff $q = (g_1,\bot,g_2)$ and $q=g_2$ for some $g_1,g_2 \in G$. \label{VASSwBitem:float}
    \item $\op=\delta\in\integs^P$ with $\delta(p) = 0$ for all $p \in P$, iff there is a rule $g_1\mapsto g_2\lhd \gamma \in \Deltar$, $q = g_1 \in G$ and $q' = (g_2,\bar{q},g_3)$ with $\bar{q} = (t,j,S,\gamma_0)$ where $t \in \cT(\cA,K)$ with a $j$th context switch on symbol $\gamma$ followed by a $j$th segment going from $g_2$ to $g_3$, for some $S \in \mathsf{SL}(\cA,K)$, $j \in \{1,\ldots,K\}$. \label{VASSwBitem:resB}
  \end{enumerate}
\end{itemize}
Intuitively, the edge in (\ref{VASSwBitem:init}) spawns the initial thread, the edges in (\ref{VASSwBitem:newT}) remove a thread with cs-number $0$ when it becomes active for the first time and the edges in (\ref{VASSwBitem:newB}) convert it to a balloon, the edges in (\ref{VASSwBitem:popmid}) apply the spawns during the $j$th segment of a thread by transferring them from the balloon to the regular places, the edges in (\ref{VASSwBitem:popend}) and (\ref{VASSwBitem:popendlimit}) finalize this application process with the edges in (\ref{VASSwBitem:popendlimit}) handling the special case of the $K$th segment of a thread, the edges in (\ref{VASSwBitem:float}) burst balloons containing no more spawns, and the edges in (\ref{VASSwBitem:resB}) initialize the application of spawns for a thread with cs-number $\geq 1$ that is already being represented by a balloon.

\begin{lemma}
  The $\DCPS$ $\cA$ has an infinite, progressive, $K$-context switch bounded run iff the $\VASSB$ $\cV$ has an infinite progressive run.
\end{lemma}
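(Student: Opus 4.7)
The plan is to establish a step-by-step correspondence between configurations of $\cA$ on a progressive run and configurations of $\cV$ on a progressive run, using a canonical ``oldest-first'' scheduling on the $\DCPS$ side to make the matching unambiguous. Concretely, call a configuration of $\cA$ and a configuration of $\cV$ \emph{corresponding} if (i)~the global state agrees, (ii)~for each $\gamma \in \Gamma$ the number of tokens on place $\gamma$ in $\cV$ equals the number of threads $(\gamma,0)$ in the multiset of $\cA$, and (iii)~there is a bijection between threads with cs-number $\geq 1$ (including the active thread) in $\cA$ and balloons of $\cV$ such that each matched pair agrees on thread type $t$, current cs-number $j$, and the contents of the balloon in places $(\gamma,k)$ equal the remaining spawns that the matched thread will perform during segment $k$, for $k \geq j$.

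For the forward direction, fix a progressive $K$-bounded run $\rho$ of $\cA$. Decorate each thread with a unique identifier and assume the scheduler always resumes the oldest ready thread among those of the same local configuration; this is w.l.o.g.\ by the discussion following the definition of fairness. Each thread $\tau$ in $\rho$ has a well-defined type $t_\tau \in \cT(\cA,K)$, namely the tuple recording its initial global state, initial stack symbol, the interruption/resumption triples, and termination state (all $K$ context switches occur by progressiveness). By construction of $L_{t_\tau}$, the Parikh image of $\tau$'s spawn sequence lies in $\mathsf{sl}(t_\tau)$. Now simulate $\rho$ stepwise in $\cV$: for the initial thread use edge~(\ref{VASSwBitem:init}); for each first-time resumption of a thread use edge~(\ref{VASSwBitem:newT}) to consume its token from place $\gamma$, followed by edge~(\ref{VASSwBitem:newB}) which inflates a balloon whose multiset is precisely $\tau$'s entire spawn production (possible because this production is in $\mathsf{sl}(t_\tau)$); then use edges~(\ref{VASSwBitem:popmid}) and (\ref{VASSwBitem:popend}) to deflate balloon place $(\gamma,0)$ into place $\gamma$ for each spawn made during segment $0$, thereby making these spawns available as tokens in $\cV$ in lockstep with $\cA$. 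Subsequent segments are handled by edge~(\ref{VASSwBitem:resB}) at each resumption, and the final segment by edge~(\ref{VASSwBitem:popendlimit}) followed by edge~(\ref{VASSwBitem:float}) to burst the now-empty balloon. Progressiveness transfers: every thread in $\rho$ terminates after exactly $K$ context switches, so each corresponding balloon is eventually burst; every place-$\gamma$ token corresponds to a cs-$0$ thread that is eventually resumed (by progressiveness of $\rho$), which in $\cV$ consumes that token.

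For the reverse direction, fix a progressive run $\pi$ of $\cV$ starting from $(q_0,\emptyset,\emptyset)$. By progressiveness, every balloon is eventually burst, which by inspection of the edge set can only happen via edges~(\ref{VASSwBitem:popendlimit})--(\ref{VASSwBitem:float}), so every balloon completes a full type-$t$ trajectory through all $K+1$ segments. Similarly, every token placed on $\gamma \in \Gamma$ is eventually consumed by an edge of type~(\ref{VASSwBitem:newT}). Simulate $\pi$ in $\cA$ again using an oldest-first scheduler. Whenever $\cV$ creates a balloon of type $t$ with contents $\vec m \in \mathsf{sl}(t)$, nonemptiness of $L_t$ together with the definition of $\mathsf{sl}(t)$ via Parikh's theorem guarantees a concrete $\PDA$-run of $\widetilde{\cP}_{(g,\gamma)}$ whose spawn Parikh image is exactly $\vec m$; pick one such run and let the matched $\DCPS$ thread follow it, segment by segment, interspersed with interruptions and resumptions dictated by $\pi$. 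Since balloon deflations determine exactly when and with which symbol each spawn is released into the common pool, the constructed $\DCPS$ run is well-defined and stays in correspondence with $\pi$; it is $K$-bounded and progressive because every balloon is burst (threads terminate after $K$ context switches) and every token is consumed (cs-$0$ threads are resumed).

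The main obstacle is ensuring that the matching between $\cA$-threads and $\cV$-balloons stays coherent across the run, because both are multisets and the $\DCPS$ scheduler has freedom when two threads share the same local configuration while their matched balloons have different remaining-spawn vectors. The oldest-first canonicalisation resolves this: in $\cV$ we always pick the oldest balloon whose $(q,\cdot)$-trace matches the next $\cA$ step, and in $\cA$ we always resume the oldest thread compatible with the bijection. Verifying that this canonical choice preserves both runs' progressiveness and the configuration correspondence invariant is the technical core; once established, both directions follow by induction on the length of the run, with progressiveness at the limit inherited from the matching of burst events to thread terminations and of token consumptions to first-time resumptions.
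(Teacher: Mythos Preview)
Your proposal is correct and follows essentially the same approach as the paper: both directions decompose the run into per-thread executions (resp.\ per-balloon lifecycles), use the type $t$ and the semi-linear set $\mathsf{sl}(t)$ to translate between a thread's spawn-multiset and a balloon's initial content, and resolve the multiset ambiguity by an oldest-first matching. Your configuration-invariant presentation is a mild repackaging of the paper's segment-by-segment construction, but the key ideas---including the observation that progressiveness forces every balloon to traverse all $K+1$ segments and be burst, and every place-$\gamma$ token to be consumed---are the same.
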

\begin{proof}
  For the \emph{only if direction} let $\rho$ be an infinite, progressive, $K$-context switch bounded run of $\cA$. We want to decompose $\rho$ into parts corresponding to a single thread each. To do this we look at the continuous \emph{segments} of $\rho$ where a thread is active, starting from the first configuration with an active thread and ending each segment at a configuration with no active thread, upon which a new segment begins at the very next configuration. We can now group segments together into \emph{executions} of single threads:
  \begin{itemize}
    \item Start with the earliest not yet grouped segment where the active thread has context switch number $0$. Note the local configuration that gets added to the bag at the end of this segment.
    \item Take the very next ungrouped segment in $\rho$ where the active thread at the beginning matches the previously noted local configuration. Again, note the local configuration that gets added to the bag at the end of this segment.
    \item Repeat the previous step until an empty stack occurs.
  \end{itemize}
  Repeating this infinitely often decomposes $\rho$ into infinitely many thread executions of $K+1$ segments each. This is because due to progressiveness all local configurations have to be resumed and all thread terminations occur after exactly $K$ context switches. Each execution also corresponds to a thread type $t$, which can be determined by taking note of the global state and top of stack symbol at the start of each segment and the global state at the end of each segment of the execution. Since all executions end in the empty stack, $L_t \neq \varnothing$ for all thread types $t$ that occur on $\rho$ in this way. Let us now construct from $\rho$ an infinite run of $\cV$ starting from $c_0 = (q_0,\emptyset,\emptyset)$:
  \begin{itemize}
    \item Start with the edge from (\ref{VASSwBitem:init}) to create a token on the place corresponding to the initial stack symbol $\gamma_0$.
    \item Go over all consecutive segments of $\rho$. If a segment is the first segment of an execution:
    \begin{itemize}
      \item Start with an edge from (\ref{VASSwBitem:newT}) to remove a token from place $\gamma$, which is the top of stack symbol this segment starts with. Use as $g_2$ the global state this segment starts with, as $g_3$ the global state this segment ends in, and as $t$ the thread type this segment's execution corresponds to. This leads to a state of $\cV$ that contains the semi-linear set $s(t)$.
      \item Next, take an edge from (\ref{VASSwBitem:newB}) to create a balloon whose contents characterize exactly the spawns made by this segment's execution. This is possible because we kept track of $s(t)$ in the state, and the appropriate balloon contents are in this semi-linear set by construction.
      \item From here we need to deflate the balloon $|\Gamma|$ times to transfer all the spawns from this segment to the places of $\cV$. This is done using the edges from (\ref{VASSwBitem:popmid}) for each $i$ from $0$ to $l-1$ on the balloon places $(\gamma_0,0)$ to $(\gamma_{l-1},0)$. The $0$ as the second component is because this is the first segment of an execution.
      \item If this is not the last segment of its execution, we use an edge from (\ref{VASSwBitem:popend}) to perform the last deflation on the balloon place $(\gamma_l,0)$ and go to the global state that this segment ends in, using that state as $g_2$.
      \item If it is the last segment of its execution we instead use an edge from (\ref{VASSwBitem:popendlimit}) perform the same deflation, but go to a state $(g_1,\bot,g_2)$ instead. From there we use the edge from (\ref{VASSwBitem:float}) to burst the balloon and finally go to state $g_2$, which this segment ends in.
    \end{itemize}
    \item If a segment is not the first segment of its execution:
    \begin{itemize}
      \item Start with an edge from (\ref{VASSwBitem:resB}) using as $g_2$ the global state this segment starts with, as $g_3$ the global state this segment ends in, and as $\bar{q}$ the balloon state we ended in after handling the previous segment of this execution.
      \item From here continue like for a first segment of an execution, deflating the balloon $|\Gamma|$ times in the next few steps. Note that we operate on balloon places $\Gamma \times \{j\}$, where this is the $j$th segment within its execution, instead of $\Gamma \times \{0\}$.
    \end{itemize}
  \end{itemize}
  Since each thread execution of $\rho$ has exactly $K$ segments, any balloon will eventually reach the $K$th segment while going over $\rho$. This results in an edge from (\ref{VASSwBitem:float}) being used to burst the balloon. Therefore any balloon state occurring in a configuration of $\cV$ on the constructed run will be used for a deflate or burst later. Furthermore, each place $\gamma \in \Gamma$ corresponds to a local configuration $(\gamma,0)$ of $\cA$, and we remove a token from it whenever we switch in a thread with that local configuration. Since such local configurations always have to be resumed on $\rho$, the corresponding places will always have tokens removed from them as well. Thus the handling of all balloons and places fulfils the conditions for a progressive run.
  
  The constructed run of $\cV$ is also sound: For each segment of $\rho$ it has a sequence of transitions that make the same state change as that segment, and also apply the same spawns. The latter is due to initially choosing the contents during the creation of a balloon correctly. A token representing the initial thread is also created in the beginning of the run. Since the transitions from one segment to the next require the same resumption rules $\Deltar$ as in $\rho$ (see edges (\ref{VASSwBitem:newT}) and (\ref{VASSwBitem:resB})) these are also possible in $\cV$.
  
  For the \emph{if direction} we do a very similar construction, but backwards. Starting with an infinite progressive run of $\cV$ from $c_0$, we decompose it into segments that end in a state in $G$. Afterwards we group those segments together by matching the local balloon configurations in the same way we did for local thread configurations before. Then by construction of $L_t$ we know that there is a thread execution of type $t$ that makes the same spawns in each segment as the group of one balloon does. Those executions will all terminate after $K$ context switches, guaranteeing progressiveness, and we can interleave them in the same way as the balloon segments to form an infinite progressive run of $\cA$.
\end{proof}

\section{Proofs for Section \ref{sec:configReach} }
\label{appendix-sec-five-ndag}
\def\sset{\mathsf{set}}

\paragraph{Balloons with Identity}
In order to prove Lemma \ref{new-witness-prog}, we will work with runs which
contain a unique identity for every balloon i.e. with 
\emph{balloons-with-id}. Formally,
a balloon-with-id is a tuple $(b,i)$ where $b$ is a balloon and $i
\in \nats$. A configuration-with-id $d$ is a tuple $d=(q,\mmap,\nu)$
where
$\nu$ is a multiset of balloons-with-id. For a $\VASSB$ $\cV=
(Q,P,\bbq,\bbp,E)$, the edges in $E$ define a transition relation on
configurations-with-id. For an edge $q\autstep[\op]q'$, and
configurations-with-id
$d=(q,\mmap,\nmap)$ and $d'=(q',\mmap',\nmap')$, we define
$d \autstep[\op] d'$:
\begin{itemize}
\item If $\op=\delta\in\integs^P$ and 
$\mmap'=\mmap+\delta$ and $\nmap'=\nmap$.
  
\item If $\op=\newb(\sigma,S)$ and
  $\mmap'=\mmap$ and $\nmap'=\nmap+\multi{((\sigma,\kmap),i)}$ for some
  $\kmap\in S$ and $i \in \nats$.  
  That is, we create a new balloon with state
  $\sigma$, multiset $\kmap$ for some $\kmap\in S$ and arbitrary id
  $i$.
  
\item If $\op=\deflateb(\sigma,\sigma',\pi,p)$ and 
  there is a balloon-with-id $(b,i)=((\sigma,\kmap),i)\in \bbq\times
  \multiset{\bbp}$ with $\nmap(b,i)\ge 1$ and
  $\mmap'=\mmap+\kmap(\pi)\cdot\multi{p}$ and
  $\nmap'=\nmap-\multi{(b,i)}+\multi{((\sigma',\kmap'),i)}$,
  where $\kmap'(\pi) = 0$ and $\kmap'(\pi') =\kmap(\pi')$ for all $\pi'\in\bbp\setminus\set{\pi}$.
  That is, we pick a balloon-with-id $((\sigma,\kmap),i)$ from 
  $\nmap$, transfer the contents in place $\pi$ from $\kmap$
  to place $p$ in $\mmap$, and update the state $\sigma$ to $\sigma'$
  while retaining its id.
  Here we say the balloon-with-id $((\sigma,\kmap),i)$ was 
  \emph{deflated}.
  
\item If $\op=\burstb(\sigma)$ and
  there is a balloon-with-id $(b,i)=((\sigma,\kmap),i)\in \bbq\times
  \multiset{\bbp} \times \nats$ with $\nmap(b,i) \geq 1$ and
  $\mmap'=\mmap$ and $\nmap' = \nmap \ominus \multi{(b,i)}$. 
  This means we pick some balloon-with-id $(b,i)$ with state $\sigma$
  from our multiset $\nmap$ of balloons-with-id and remove it,
  making any tokens still contained in its balloon places disappear as well.
  Here we say the balloon-with-id $(b,i)$ is \emph{burst}.
\end{itemize}
We often simply say `the balloon $i$' was burst (or deflated) when
balloon
identities are unique.
A \emph{run-with-id} $\tau=d_0 \xrightarrow{\op_1} d_1 
\xrightarrow{\op_1} d_2 \cdots$ is a finite or infinite sequence of
configurations-with-id. We note that a
semiconfiguration-with-id is just a semiconfiguration since there are
no balloons.

\begin{definition}
	We associate a \emph{canonical run-with-id} $\tau=d_0 \xrightarrow{\op_1} d_1 
\xrightarrow{\op_1} d_2 \cdots$ to any given run $\rho=c_0 
\xrightarrow{\op_1} c_1 \xrightarrow{\op_2} \cdots$ as follows:
\begin{itemize}
	\item If $\op_i$ creates a balloon $b$ in $\rho$, then it creates $
	(b,i)$ in $\tau$. This implies that balloons are assigned unique id's
	since every operation creates only one balloon. 
	\item If $\op_i$ deflates (resp. bursts) a balloon $b$ in $\rho$,
	then it deflates (resp. bursts)
	the balloon-with-id $(b,j)$ in $\tau$ where $j=min\{ k \mid d_
	{i-1}.\nu(b,k)
	\ge 1\}$.
\end{itemize}
\end{definition}
We observe that the multiset $d_i.\nu$ of any configuration-with-id
$d_i$ in a canonical run-with-id $\tau$ is infact a set. The set of
id's in $\tau$ is denoted $I(\tau)$. A collection of
balloons-with-id with the same id $i_0$ may be viewed as a particular
balloon
which
undergoes a sequence of operations $\seq_{i_1}=\op_{i_0},\op_
{i_1},\op_{i_2},
\cdots$.
The balloons $
(b_j,i_0)$
resulting from the operations $\op_{i_j}$ are associated with the id
$i_0$ and we will perform surgery on a run-with-id $\tau$ resulting in
a modified run $\tau'$ by
replacing the balloon $(b_1,i_0)$ at its point of inflation $\op_
{i_0}$ by $(b'_1,i_0)$ in $\tau'$, with the
implicit assumption that the sequence of operations $\seq'_{i_0}$ in
the modified run-with-id $\tau'$ now act on $(b'_1,i_0)$ instead. Note
that we use the notation $\seq'_i$ to represent the sequence of
operations on id $i$ occuring in the modified run $\tau'$.
The
sequences $\seq_{i_0}$ and $\seq_{j_0}$ are disjoint for $i_0 \neq
j_0$ and thus the set of sequences $\{ \seq_i \mid i \in \tau\}$ form
a partition
of the indices $I_B(\tau)=\{ i \mid \op_i \text{ is a balloon
operation }\}$.
We will write $i.\barq,i.\kmap$ in short for $(b,i).\barq$ and $
(b,i).\kmap$ where $(b,i)$ is created by $\op_i$. The linear set
used to create a balloon with the id $i$ is denoted $L_i$.
A run-with-id $\tau$ \emph{corresponds} to a run $\rho$ if $\rho$ is
obtained from $\tau$ by removing id's. For a balloon $b$ in $\rho$, the
set $\{
(b,i) \mid i \in I(\tau) \}$ of balloons-with-id in a corresponding
$\tau$ is called the
\emph{balloon class} of $b$. 

The fact that reachability remains preserved whether considering runs
or runs-with-id immediately follows from the definition of a
canonical run-with-id:
\begin{proposition}
\label{prop-runs-with-id}
	Given $\cV$ and its semiconfigurations $s_1,s_2$, there exists a run
	$\rho=s_1 \autstep[*] s_2$ of $\cV$ iff there exists a
	corresponding run-with-id
	$\tau=s_1 \autstep[*] s_2$.
\end{proposition}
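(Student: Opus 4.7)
The plan is to establish the two directions separately, since each is essentially a bookkeeping exercise once the right correspondence is made explicit.

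For the ``only if'' direction, I would take the canonical run-with-id $\tau$ associated with $\rho$, as already defined in the excerpt, and verify by induction on the length of $\rho$ that $\tau$ is indeed a valid run-with-id whose underlying run (obtained by stripping ids) coincides with $\rho$. The inductive step splits according to the type of $\op_i$: for operations that do not touch balloons, the assignment of ids is unchanged; for $\newb$ the fresh id $i$ is unused in $d_{i-1}.\nu$ because each balloon-creating operation index is used at most once, so adding $((\sigma,\kmap),i)$ preserves the property that $\nu$ has finite support and the operation is enabled; for $\deflateb$ and $\burstb$ the minimum-id choice $j=\min\{k\mid d_{i-1}.\nu(b,k)\ge 1\}$ is well-defined because the underlying operation on $c_{i-1}$ required $c_{i-1}.\nmap(b)\ge 1$, and by the inductive hypothesis this count equals $\sum_k d_{i-1}.\nu(b,k)$.

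For the ``if'' direction, given any run-with-id $\tau=d_0\autstep[\op_1]d_1\autstep[\op_2]\cdots$, I would define $c_i$ as the configuration obtained from $d_i$ by discarding ids, i.e. $c_i.\mmap=d_i.\mmap$, $c_i.q=d_i.q$, and $c_i.\nmap(b)=\sum_{j\in\nats}d_i.\nu(b,j)$. A case analysis on each $\op_i$ shows that $c_{i-1}\autstep[\op_i]c_i$ in $\cV$: the $E_p$-case is immediate; for $\newb$, adding a balloon-with-id to $d_{i-1}.\nu$ adds exactly one balloon of the chosen shape to $c_{i-1}.\nmap$; for $\deflateb$ and $\burstb$, the existence of some balloon-with-id $(b,j)$ with $d_{i-1}.\nu(b,j)\ge 1$ yields $c_{i-1}.\nmap(b)\ge 1$, and the token transfers between $\mmap$ components are dictated entirely by $\kmap$, which is visible at the un-ided level. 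Since $s_1,s_2$ are semiconfigurations, they have no balloons and thus coincide with their un-ided versions, so the endpoints agree.

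I do not expect a genuine obstacle here: the definition of the transition relation on configurations-with-id is literally designed to project onto the transition relation on configurations via id removal, and the canonical assignment was designed to be the inverse. The only point that deserves care is making the projection/injection of balloon multisets explicit so that multiplicities are tracked correctly (hence the use of $\sum_j d_i.\nu(b,j)$), and to remark that the statement is about semiconfigurations $s_1,s_2$, so no compatibility of ids at the endpoints needs to be checked.
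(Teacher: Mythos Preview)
Your proposal is correct and takes essentially the same approach as the paper, which simply states that the proposition ``immediately follows from the definition of a canonical run-with-id'' without spelling out the induction. Your more detailed case analysis (forward via the canonical assignment, backward via id-stripping with the multiplicity formula $c_i.\nmap(b)=\sum_j d_i.\nu(b,j)$) is exactly the bookkeeping the paper leaves implicit.
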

The fact that progressivenss is also preserved is also not difficult
to see. We first define the obvious notion of progressiveness for a
run-with-id.
\begin{definition}
	A progressive run-with-id $\tau=d_0 \xrightarrow{\op_1} d_1 
\xrightarrow{\op_1} d_2 \cdots$ is one such that:
\begin{itemize}
 	\item for every
	balloon-with-id $(b,i) \in d_j$ for some $d_j \in \tau$, there exists
	$d_k$ with $k>j$ such that $d_k \xrightarrow{\op_k} d_{k+1}$ where
	$\op_k$ is a deflate or
	burst operation on id $i$, and
	\item for every $p \in P$ if $d_i.\mmap(p) >0$ then there exists
	$j>i$ such that $d_j \xrightarrow{\delta} d_{j+1}$ where $\delta(p)
	<0$.
 \end{itemize} 
\end{definition}

The following proposition follows from the above definition.
\begin{proposition}
\label{prop-prog-id}
	In a progressive run-with-id $\tau$, for any id $i$, one of the
	following is true:
	\begin{enumerate}
		\item $\seq_i$ is finite and the final operation in $\seq_i$ is a
		burst operation, or
		\item $\seq_i$ is infinite.
	\end{enumerate}
\end{proposition}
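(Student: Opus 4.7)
The plan is to prove the proposition by contradiction, by exploiting the first clause of the definition of a progressive run-with-id. Suppose $\seq_i$ is finite, say $\seq_i = \op_{i_0},\op_{i_1},\ldots,\op_{i_n}$, where $\op_{i_0}$ is the inflation that creates balloon $(b,i)$. I claim the last operation $\op_{i_n}$ must be a burst.

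Assume for contradiction that $\op_{i_n}$ is not a burst. Then $\op_{i_n}$ is either the inflation $\op_{i_0}$ itself (if $n=0$) or a deflate operation. In both cases, by the semantics of inflate and deflate, the balloon-with-id $(b,i)$ (possibly with an updated balloon state and contents) is still present in the configuration $d_{i_n}$ reached right after $\op_{i_n}$; that is, $d_{i_n}.\nu(b',i)\ge 1$ for some balloon $b'$ with id $i$. The progressiveness condition for balloons then guarantees the existence of some $k > i_n$ such that $\op_k$ is a deflate or burst operation acting on an id-$i$ balloon in $d_{k-1}$. Since the id of a balloon is preserved under deflate and is the very identifier referenced by burst, this $\op_k$ must act on the unique id-$i$ balloon present at $d_{k-1}$, and hence belongs to $\seq_i$. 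This contradicts the assumption that $\op_{i_n}$ was the final operation of $\seq_i$.

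Therefore, if $\seq_i$ is finite, its last operation must be a burst, giving case (1); otherwise $\seq_i$ is infinite, giving case (2). No step here looks delicate: the only subtlety is to observe that neither an inflate nor a deflate operation removes the balloon-with-id from the configuration, so progressiveness always forces a further operation in $\seq_i$ until a burst occurs.
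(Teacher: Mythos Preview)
Your proof is correct and matches the paper's approach: the paper simply states that the proposition ``follows from the above definition'' of a progressive run-with-id, and your argument is precisely the natural unfolding of that remark. The key observation you make---that inflate and deflate leave a balloon-with-id of identity $i$ in the resulting configuration, so progressiveness forces a further operation in $\seq_i$---is exactly the content the paper leaves implicit.
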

If $\seq_i$ is finite for all $i$, we say `every balloon is burst in
$\tau$'.
The existence of progressive runs and progressive runs-with-id also
coincide.
\begin{proposition}
\label{prog-run-with-id}
	For any given $\cV$ and its semiconfiguration $s$, there exists a
	progressive run $\rho$ from $s$ iff there exists a corresponding
	progressive run-with-id $\tau'$ from $s$.
\end{proposition}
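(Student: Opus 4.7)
The plan is to prove the two directions separately. For the easier \emph{if} direction, observe that given a progressive run-with-id $\tau'$ and its corresponding run $\rho$ (obtained by forgetting ids), whenever a balloon $b = (\sigma, \kmap)$ is present in $c_i$ of $\rho$, there must exist some balloon-with-id $(b,k)$ with $\nmap(b,k) \geq 1$ in the corresponding configuration $d_i$ of $\tau'$; by \cref{prop-prog-id}, either $\seq_k$ is infinite (so a later operation acts on $(b,k)$) or $\seq_k$ ends in a burst, and in both cases some later operation of $\rho$ acts on the underlying balloon $b$. The place condition transfers verbatim, since the id assignment does not touch $\mmap$.

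For the \emph{only if} direction, I would construct $\tau'$ from $\rho$ using a FIFO (``longest-waiting'') id assignment, rather than the canonical smallest-id rule of \cref{prop-runs-with-id}, which could leave larger ids dangling. Concretely, maintain, for each state-contents pair $b = (\sigma, \kmap)$, a queue of ids of balloons currently in that state. An inflation $\op_i$ producing $b$ assigns the fresh id $i$ and enqueues it at the back of the $b$-queue. A deflate or burst $\op_i$ of $b$ in $\rho$ dequeues the front id $k$ and applies the operation to the balloon-with-id $(b,k)$; a deflate to state-contents $b' = (\sigma',\kmap')$ re-enqueues $k$ at the back of the $b'$-queue. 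This yields a well-defined run-with-id corresponding to $\rho$, because $\nmap(b) \geq 1$ in $\rho$ guarantees that the $b$-queue is non-empty whenever an operation on $b$ is applied.

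The main obstacle is verifying progressiveness, namely that no id $k$ has a finite $\seq_k$ that fails to end in a burst. Suppose for contradiction such a $k$ exists; then after the last operation of $\seq_k$, the balloon-with-id $(b',k)$ is present in every subsequent configuration of $\tau'$ for some state-contents $b'$, and $k$ sits forever in the $b'$-queue without ever reaching the front. The FIFO discipline ensures that only ids already enqueued in the $b'$-queue before $k$ can lie ahead of $k$, say $m$ of them; therefore at most $m$ further burst/deflate operations on state-contents $b'$ can occur in $\rho$ after $k$ is enqueued. Consequently $b'$ is eventually always present as a balloon in $\rho$, yet only finitely many operations act on it, contradicting progressiveness of $\rho$ (which would require, by iterated application, infinitely many operations on $b'$). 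The place condition on $\tau'$ is inherited directly from that of $\rho$, completing the proof.
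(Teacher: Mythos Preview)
Your proof is correct and takes a genuinely different route from the paper's. For the harder \emph{only if} direction, the paper starts from the \emph{canonical} run-with-id (always choose the least id), observes that this can fail progressiveness only when some balloon $b$ undergoes an infinite sequence of trivial deflates (so the least id absorbs all operations and larger ids in the balloon class of $b$ are starved), and then patches just those classes by ``dovetailing'' through their ids. You instead build the run-with-id in one shot via a FIFO discipline per state-contents class, and argue directly by contradiction that no id can be stranded: if $k$ were stuck forever in a $b'$-queue then at most finitely many operations on $b'$ occur after $k$'s enqueuing (only the finitely many ids ahead of it can be served), yet $b'$ is then present forever in $\rho$, forcing infinitely many operations on $b'$ by progressiveness of $\rho$.

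Your FIFO scheme is in effect a uniform mechanism that subsumes the paper's targeted dovetail fix, and the resulting argument is arguably cleaner as a single construction. The paper's approach has the advantage of reusing the canonical run-with-id (already introduced for other purposes) and isolating precisely where it breaks. One small wording issue: you write that $k$ ``sits forever in the $b'$-queue without ever reaching the front,'' but of course $k$ may reach the front and simply never be dequeued thereafter; your subsequent conclusion---at most $m$ further operations on $b'$---is nonetheless correct for either case, so the argument is unaffected.
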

\begin{proof}
Clearly a progressive run $\rho$ can be obtained from a progressive
run-with-id $\tau$ by id-removal. For the converse direction, consider
the canonical $\tau=d_0 \xrightarrow{\op_1} d_1 
\xrightarrow{\op_1} d_2 \cdots$
	corresponding to $\rho=c_0 \xrightarrow{\op_1} c_1 \xrightarrow{\op_2} c_2 \cdots$.
	 The canonical $\tau$ satisfies all progressiveness conditions
	except for one special case: a balloon $b$ in $\rho$ which undergoes
	an infinite sequence of trivial deflates at $j_1,j_2,\cdots$
	may result in $\tau$ which contains multiple balloons-with-id in the
	balloon class of $b$ at every configuration $d_{j_1},d_{j_2},\cdots$.
	By construction, $\tau$ always
	chooses the least id and hence
	$\tau$ is not progressive for other id's in the balloon
	class of $b$.
	The progressive $\tau'$ is obtained by modifying $\tau$ to `dovetail'
	through all possible choices in the balloon class of $b$ (i.e. if
	the id's in the balloon class are $i_1 <i_2 < i_3 \cdots$, then the
	choices made are $i_1,i_1,i_2,i_1,i_2,i_3,i_1 \cdots$). We note
	that $\tau'$ agrees with $\tau$ on all inflate and burst operations.
\end{proof}

We now consider the two properties assumed in the proof of Lemma 
\ref{new-witness-prog}, namely that of being zero-base and being typed
and show that it is possible to construct a $\VASSB$ $\cV'$ with these
properties from a given $\VASSB$ $\cV$ while preserving reachability
and progressiveness.

\paragraph{Balloon Types and Surgeries on Balloons}
A deflate operation transferring a
non-zero number of tokens is called a \emph{non-trivial} deflate,
otherwise it is a trivial deflate. Clearly, a non-trivial deflate
moving tokens from a balloon place $\barp$ must be the first deflate
which transfers tokens from $\barp$, motivating the following
definition.
For an id $i$, define
\begin{flalign*}
 	\typeseq_i= & \{ (\barp,p,j)
\mid \exists \barq,\barq' \in \bbq\; \exists j \in \seq_i \colon
\op_j=\deflateb
(\barq,\barq',\barp,p) \text{ and } \\
 &\forall \barq'',\barq''' \in
\bbq \;\forall p' \in P\; \forall j'\in \seq_i, j' <j \colon \op_{j'}
\neq \deflateb
(\barq'',\barq''',\barp,p') \}
 \end{flalign*} 
 We write $i <_t j$ for id's $i,j$ if $i <j$, $L_i=L_j$ where $L_i$
 (resp. $L_j$) is
the linear set used to inflate balloon $i$ (resp. $j$) and for every $
(\barp,p) \in
\bbp \times P$, there exists $k \in \seq_i$ such that $(\barp,p,k)
\in \typeseq_i$ iff there exists $k' \in \seq_j$ with $k' >k$ such that $
(\barp,p,k') \in \typeseq_j$. 

A \emph{deflate-sequence} $S=(\barp_1,p_1),(\barp_2,\p_2), \cdots, 
(\barp_n,p_n)$ is a finite sequence of elements from $\bbp \times P$
which satisfies the property that for all $i, j \in [1,n]$, we have
$\barp_i \neq \barp_j$. We write $n=|S|$ and also write $\sset(S)$
for
the set of tuples $(\barp_i,p_i)$ obtained from $S$ by ignoring the
order. A marked deflate-sequence
$M=(S,i)$ is a tuple
consisting of a deflate-sequence $S$ and $i \in [0,\cdots,|S|]$. We
write $M.S$ and $M.i$ for the two components of $M$. If $M'=(S,i+1)$
and $M=(S,i)$, we write $M'=M+1$. The
set of all marked deflate-sequences is denoted by $\cM$.

 Let $\typeseq_i\hspace{-3 pt}\downharpoonright\;=(\barp_1,p_1),
 (\barp_2,p_2),\cdots,(\barp_n,p_n)$ be
the deflate-sequence obtained by
projecting $\typeseq_i$ to the first two
components, in increasing order of the third component. We write
$k=\typeseq_i(j)$ if $(\barp_j,p_j,k) \in \typeseq_i$. 
We write $i \sim_t j$ if $\typeseq_i\hspace{-3 pt}\downharpoonright
\;=
\typeseq_j\hspace{-3 pt}\downharpoonright$ and $L_i=L_j$. A tuple $t=(L_i,\typeseq_i\hspace{-3 pt}
\downharpoonright)$ is called a \emph{type} and only depends on $\cV$.
The set of finitely many types associated with $\cV$ is denoted $\cT
_{\cV}$; we  drop the subscript when $\cV$ is clear from the context.

Given any $\VASSB$ $\cV=(Q,P,\bbq,\bbp,E_p \cup E_n \cup E_d \cup E_b)$, 
its \emph{typed extension} $\cV'=(Q,P,\bbq',\bbp,E_p \cup E'_n
\cup E'_d \cup E'_b)$ 
is given by $\bbq'=\bbq \times \cM$ and set of balloon edges is given
as:
\begin{enumerate}
		\item For each $e=q \xrightarrow{\newb(\barq,L)} q'$ in $E_n$, for
		each $M \in \cM$ such that $M.i=0$,  add\\
				$q \xrightarrow{\newb((\barq,M),L)} q'$ to $E'_n$,
	\item for each $e=q \xrightarrow{\deflateb(\barq,\barq',\barp,p)} q'$
		in $E_d$ and for each $M \in \cM$ such that $(\barp,p)=(\barp_j,p_j)
		\in M$ for some $j \in
			[1,\cdots, |M.S|]$ 
		\begin{enumerate}
			\item if $j \leq M.i$ then  add

		$q \xrightarrow{\deflateb((\barq,M),(\barq',M),\barp,p)} q'$ to
		$E'_d$, 
		\item else if $j =(M.i)+1$ then  add
		$q \xrightarrow{\deflateb((\barq,M),(\barq',M+1),\barp,p)} q'$ to
		$E'_d$, and
		\end{enumerate}
		   
	\item for each $e=q \xrightarrow{\burstb(\barq)} q'$ in $E_b$ and 
	for each $M \in \cM$ with $M.i=|M.S|$,  add
		$q \xrightarrow{\burstb(\barq,M)} q'$ to $E'_b$.
	\end{enumerate}

\begin{lemma}
\label{lem-typed-vassb}
	Given any $\VASSB$ $\cV$ and two semiconfigurations $s_1,s_2$
	of $\cV$,
	 its typed extension $\VASSB$ $\cV'$ satisfies the following
	 properties:
	\begin{enumerate}
		\item $\cV$ has a progressive run starting from $s_1$ iff $\cV'$
		has a progressive run starting from $s_1$ and
		\item $(\cV,s_1,s_2) \in \REACH$ iff $(\cV',s_1,s_2) \in \REACH$.
	\end{enumerate}
\end{lemma}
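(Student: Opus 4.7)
The plan is to establish a projection-lifting correspondence between runs of $\cV$ and $\cV'$. For the $(\Leftarrow)$ direction of both parts, I would project each configuration of $\cV'$ by replacing every balloon state $(\barq,M)\in\bbq\times\cM$ with its first component $\barq\in\bbq$. Inspection of the construction shows that each edge in $E'_n$, $E'_d$, $E'_b$ is a typed version of an edge already in $E_n$, $E_d$, $E_b$ respectively, so the projection sends every $\cV'$-edge to a $\cV$-edge. Since $s_1$ and $s_2$ are semiconfigurations and hence balloon-free, the projection is the identity on them; consequently reachability transfers. Moreover, the projection induces a natural bijection on balloon identities that preserves deflate and burst events, so progressiveness transfers as well.

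For the $(\Rightarrow)$ direction, I would lift a run $\rho$ of $\cV$ to its canonical run-with-id $\tau$ via \cref{prop-runs-with-id} (or \cref{prog-run-with-id} in the progressive case). For each id $i\in I_B(\tau)$, I compute the deflate-sequence $S_i := \typeseq_i\hspace{-3pt}\downharpoonright$, listing the first-deflate events per balloon place in temporal order. By construction $S_i$ has pairwise distinct balloon places and thus $S_i\in\cM$. I then construct $\tau'$ by inflating each balloon $i$ with marker $(S_i, 0)$, advancing the marker via clause (2)(b) at each first-deflate event on $i$, using clause (2)(a) for repeated deflates matching an earlier $(\barp,p)\in S_i$, and bursting via clause (3) when the marker reaches $|S_i|$. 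Stripping ids yields a run of $\cV'$ from $s_1$ that ends at $s_2$ (when $\tau$ does), with deflates and bursts aligned one-to-one with those of $\tau$; so reachability and progressiveness both transfer.

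The main technical obstacle I anticipate is handling deflate events of $\tau$ that act on a balloon place $\barp$ after its first deflate but with a different target place. Such subsequent deflates are necessarily trivial (they transfer zero tokens since the balloon has $\kmap(\barp)=0$), but they still change the balloon's $\bbq$-component, and no matching typed edge exists in $\cV'$ because $S_i$ records each balloon place at most once. I would resolve this by a preliminary normalization of $\tau$: trivial deflates on $\barp$ with off-canonical target places can be rerouted to trivial deflates with the canonical target (the one recorded in $S_i$), which is sound because they transfer no tokens and the intermediate balloon $\bbq$-state transitions can be realigned using the available edges of $\cV$, or, failing that, by slightly enriching $\cM$ to record the full per-balloon-place $\bbq$-trajectory as part of the type. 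After this normalization, every deflate of $\tau$ matches exactly one element of $S_i$, and the construction of $\tau'$ above goes through verbatim, concluding both parts of the lemma.
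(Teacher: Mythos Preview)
Your overall strategy---project balloon states for the $(\Leftarrow)$ direction and lift via the canonical run-with-id for the $(\Rightarrow)$ direction---is exactly the paper's approach; the paper's argument for lifting is in fact terser than yours, simply asserting that each deflate or burst is reassigned to the same id in $\cV'$.

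You have correctly identified a real obstacle that the paper's proof glosses over: a repeated deflate on balloon place~$\barp$ whose target $p'$ differs from the first-deflate target $p$ recorded in $S_i$ has no corresponding edge in $E'_d$ as constructed, because clause~(2) requires the full pair $(\barp,p')$ to occur in $M.S$, and a deflate-sequence lists each balloon place at most once. However, neither of your proposed repairs works as stated. Rerouting to the canonical target would require $\cV$ to contain an edge $q\xrightarrow{\deflateb(\barq,\barq',\barp,p)}q'$ with the \emph{same} source and target control states as the offending edge, and there is no reason it should; the transition $q\to q'$ is tied to the specific edge and cannot be detached from it. Your fallback of enriching~$\cM$ to record a ``per-balloon-place $\bbq$-trajectory'' misidentifies what is missing (the issue is the target place in $P$, not the balloon state in $\bbq$), and in any case a balloon may undergo unboundedly many deflates, so recording full trajectories would make $\cM$ infinite.

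The clean fix is at the level of the typed extension: relax clause~(2)(a) so that for an edge $q\xrightarrow{\deflateb(\barq,\barq',\barp,p)}q'$ in $E_d$ and any $M\in\cM$ with $\barp=\barp_j$ for some $j\le M.i$ (dropping the requirement $p=p_j$), one adds $q\xrightarrow{\deflateb((\barq,M),(\barq',M),\barp,p)}q'$ to $E'_d$. Such deflates are necessarily trivial, so the target $p$ is immaterial to the semantics, and the projection direction is unaffected. With this amendment both directions go through exactly as you outlined.
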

\begin{proof}
	Given a run $\rho'=c'_0 \autstep[\op'_1] c'_1 \autstep[\op'_2] \cdots$ of
	$\cV'$ starting from a semiconfiguration $c'_0$, it is clear that there
	exists a run $\rho=c'_0 \autstep[\op_1]
	c_1 \autstep[\op_2] \cdots$ of $\cV$ where for each $k \ge 1$,
	\begin{itemize}
		\item if $\op'_k \in E_p$ then $\op_k=\op'_k$,
		\item else if $\op'_k$ creates a balloon $b'$ with state $
		(\barq,M)$, then $\op_k$ creates a
		balloon $b$ such that $b.\kmap=b'.\kmap$ and $b.\barq=\barq$,
		\item otherwise $\op'_k$ is a deflate or burst operation on a
		balloon $b'$ with state $(\barq,M)$ and $\op_k$
		applies the corresponding deflate or burst operation to a balloon
		$b$ with $b.\kmap=b'.\kmap$ and $b.\barq=\barq$.
	\end{itemize}
	By induction on length of the run, we see that $
	c_k.\mmap=c'_k.\mmap$ for
	each $k \ge 0$ and further, there is a bijection between $c_k.\nu$
	and $c'_k.\nu$ which preserves the balloon contents with the
	state of a balloon in $c_k$ obtained by projecting the state of the
	corresponding balloon to the first component. If $\rho'$ is a
	finite run ending in a semiconfiguration, this implies that $\rho$
	also ends in the same semiconfiguration. Since $\rho'$ is progressive
	with respect to balloon states of the form $(\barq,M)$, clearly
	$\rho$ is progressive with respect to the balloon states $\barq$.

	For the converse direction, we argue using runs-with-id.
	Let $\tau=d_0 \autstep[\op_1] d_1 \autstep[\op_2] \cdots$ be the
	canonical
	run-with-id of $\cV$ starting from a semiconfiguration $d_0$ for a
	given $\rho$ of $\cV$. We
	define the run-with-id $\tau'=d_0 \autstep[\op'_1] d_1 \autstep
	[\op'_2] \cdots$ of
	$\cV'$ where for each $k \ge 1$,
	\begin{itemize}
		\item if $\op_k \in E_p$ then $\op'_k=\op_k$,
		\item if $\op_k \in E_n$ creates $(b,i)$ then $\op'_k$ creates $
		(b',i)$ where $b'.\barq=(b.\barq,M)$ with $M=(\typeseq_i
		\downharpoonright,0)$ and $b'.\kmap=b.\kmap$, and
		\item if $\op_k \in E_d \cup E_b$ and $\op_k \in \seq_i$ for some id
		$i$, then $\op'_k \in \seq'_i$. Here $\seq'_i$(resp. $\seq_i$)
		refers to the
		sequence of operations on id $i$ in $\tau'$ (resp. $\tau$).
	\end{itemize}

	If $\tau$ is a finite run between semiconfigurations $s_1$ and $s_2$,
	then so is
	$\tau'$ and thus we have (2) of the lemma.
	 If $\tau$ is a
	progressive run-with-id from $s_1$ then so is
	$\tau'$ and by Proposition \ref{prog-run-with-id}, we conclude (1)
	of the lemma.
\end{proof}
In the remainder of this section, we assume that any $\VASSB$ is
typed.

A $\VASSB$ $\cV$ is said to be \emph{zero-base} if every linear set in
$\cV$ has base vector $\emptyset$.
\begin{lemma}
\label{lem-remove-base}
	Given any $\VASSB$ $\cV$, we can construct a zero-base $\VASSB$
	$\cV'$ such that for any two semiconfigurations $s_1,s_2$ of $\cV$,
	\begin{enumerate}
		\item $\cV$ has a progressive run starting from $s_1$ iff $\cV'$
		has a progressive run starting from $s_1$ and
		\item $(\cV,s_1,s_2) \in \REACH$ iff $(\cV',s_1,s_2) \in \REACH$.
	\end{enumerate}
\end{lemma}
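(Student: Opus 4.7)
The plan is to combine the typed construction supplied by \cref{lem-typed-vassb} with a re-interpretation of balloon inflation that separates the base vector of every linear set from its period vectors. First I would apply \cref{lem-typed-vassb} to replace $\cV$ by its typed extension, so that every balloon carries in its state the sequence $S=(\bar p_1,p_1),\ldots,(\bar p_n,p_n)$ of non-trivial deflates it is committed to perform, together with the current position $M.i$ in this sequence. Writing each semi-linear set that appears in an inflate operation as a finite union $L_1\cup\cdots\cup L_r$ of linear sets, I split each inflate edge into $r$ edges, one per linear set, so that from now on every inflate uses a single linear set $L=\{m_0+t_1 m_1+\cdots+t_k m_k\}$. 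Since the collection of such linear sets occurring in $\cV$ is finite, I can enlarge the balloon state once more by an index pointing to the $L$ actually chosen at inflation, allowing every balloon to remember its base vector $m_0$ throughout its life.

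Second, I would perform the zero-base surgery on inflate and deflate edges. Each inflate $q\autstep[\newb(\sigma,L)]q'$ is replaced by $q\autstep[\newb(\sigma',L')]q'$, where $\sigma'$ is the enlarged balloon state recording $L$ and $L'=\{t_1 m_1+\cdots+t_k m_k\}$ is the zero-base version of $L$. For each deflate $q\autstep[\deflateb(\sigma,\sigma',\bar p,p)]q'$, typedness tells me that $(\bar p,p)$ is the next non-trivial deflate in $S$ for a known linear set $L$ with known base $m_0$. I replace this edge by two consecutive edges through a fresh intermediate state $q_{\mathrm{mid}}$: first the same deflate, which in the new $\VASSB$ only transfers $\kmap'(\bar p)=t_1 m_1(\bar p)+\cdots+t_k m_k(\bar p)$ tokens to $p$, and then an $E_p$-edge $q_{\mathrm{mid}}\autstep[\delta]q'$ with $\delta(p)=m_0(\bar p)$ and $\delta(p')=0$ otherwise, compensating for the missing base contribution. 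Burst edges are left untouched: the typed construction only permits a burst after all declared non-trivial deflates have been performed, so by burst-time every $m_0(\bar p)$ corresponding to an entry of $S$ has already been compensated, and the $m_0(\bar p')$ for $\bar p'\notin S$ are correctly lost, exactly matching the behaviour of $\cV$.

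Correctness in both directions is then a local bisimulation. In the forward direction, every atomic inflate or deflate step of $\cV$ is mimicked by the matching inflate, or the matching deflate immediately followed by its compensation edge, in $\cV'$; because the only edge out of $q_{\mathrm{mid}}$ is the compensation edge, this never introduces new reachable behaviour. In the backward direction, every traversal of a deflate-edge in $\cV'$ must be immediately followed by its compensation edge (since no other edge leaves the fresh $q_{\mathrm{mid}}$), so blocks of the form (zero-base deflate, compensate) collapse to a single deflate of $\cV$. Both transformations preserve the starting and ending semiconfigurations of interest, and they preserve progressiveness since they preserve, balloon by balloon, the set of deflate and burst operations that occur and, place by place, the occurrence of subtraction operations from each place in $P$. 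This gives items (1) and (2) of the lemma.

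The main subtlety, and the reason typedness is essential, is the timing shift: the compensation edge can inject tokens into $p$ slightly earlier than the corresponding atomic deflate of $\cV$ would, so in principle a run of $\cV'$ might use these extra tokens to enable an edge that $\cV$ cannot. This is where the two-edge sandwich, rigidly locked together by the fresh intermediate state $q_{\mathrm{mid}}$, does the work: no other transition can be taken between the deflate and its compensation, so the pair behaves atomically and the net effect on every place coincides, step for step, with the original deflate. Getting this atomicity argument exactly right, and checking that it interacts correctly with the typed-extension data in the balloon state (in particular that the compensation to be added is uniquely determined by $\sigma$ because $\sigma$ now records both $L$ and the current position in $S$), is the part of the proof I expect to require the most careful bookkeeping.
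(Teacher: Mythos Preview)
Your proposal is correct and follows essentially the same approach as the paper: first pass to the typed extension, record the linear set in the balloon state, replace each inflate by its zero-base version, and compensate each non-trivial deflate (detected via the type marker moving, i.e.\ $M_{\barq_1}\neq M_{\barq_2}$) with an $E_p$-edge through a fresh intermediate state. The paper's intermediate state is indexed $(e,p)$ by the deflate edge, which matches your $q_{\mathrm{mid}}$.

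One small remark on your final paragraph: the timing concern you articulate is backwards. In $\cV'$ the compensation edge fires \emph{after} the (weaker) deflate, so between the two steps $\cV'$ has \emph{fewer} tokens in $p$ than $\cV$ would, not more; the worry is that $\cV'$ might get stuck, not that it gains extra enabling power. Your resolution---that the fresh state $q_{\mathrm{mid}}$ has only the compensation edge leaving it, so the pair is effectively atomic---is exactly right and handles this. Also, strictly speaking you do not need atomicity to deduce that $\cV'$ cannot do more than $\cV$: having fewer tokens can only disable transitions, never enable new ones.
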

\begin{proof}
We may assume that there exists a unique linear set $L_
{\barq}$ associated with any given balloon state $\barq$ of $\cV$ such
that any
balloon $b$ inflated with state $\barq$ has $b.\kmap \in L_{\barq}$,
by applying the following modification to $\cV$.
Let $\mathcal{L}$ be the finite set of linear sets used in $\cV$. We
replace the set of balloon states $\bbq$ by 
the cartesian product $\bbq \times \mathcal{L}$ and for each inflate
operation $\newb(\barq,L)$ we use $\newb((\barq,L),L)$ in the modified
$\VASSB$. This modification is easily seen to preserve
progressiveness and reachability. We also assume that the given
$\cV$ is typed and instead of writing balloon states as tuples $
(\barq,M)$, we assume that every state $\barq$ comes with an
associated $M_{\barq}$.

Let $\cV=(Q,P,\bbq,\bbp,E_p \cup E_n
	\cup E_d \cup E_b)$ and $\bmap_{\barq}$ be the base vector of a
	balloon state $\barq \in
\bbq$.
$\cV'=(Q',P,\bbq',\bbp,E'_p \cup E'_n \cup E'_d \cup E_b)$
is constructed from $\cV$ as follows:
\begin{enumerate}
	\item $Q'=Q \cup E_d \times P$,
	\item $\bbq'=\bbq$ contains a state $\barq'$ for each
	$\barq$ such that
			$L_{\barq'}$ is obtained from $L_{\barq}$ by removing the base
			vector $\bmap_{\barq}$,
	\item $E'_p=E_p \cup E''_p$, where $E''_p$ contains for each edge 
			$e=q \xrightarrow{\deflateb
			(\barq_1,\barq_2,\barp,p)} q'$ in $E_d$ with $M_{\barq_1} \neq M_
			{\barq_2}$, the edge
			$(e,p) \xrightarrow{\delta}
			q'$ where $\delta(p)=\bmap_{\barq}(p)$ and $\delta(p')=0$ for $p'
			\neq
			p$,
	\item $E'_n$ contains an edge $q \xrightarrow{\newb(\barq',L_
	{\barq'})} q'$
	for
			each edge $q \xrightarrow{\newb(\barq,L_{\barq})} q'$ in $E_n$, and
			\item $E'_d$ contains the following edges for each $e=q 
			\xrightarrow{\deflateb
			(\barq_1,\barq_2,\barp,p)} q'$ in $E_d$:
			\begin{enumerate}
			 	\item if $M_{\barq_1} \neq M_{\barq_2}$ then we add $q 
			\xrightarrow{\deflateb
			(\barq'_1,\barq'_2,\barp,p)} (e,p)$ to $E'_d$,
			\item else we add $q 
			\xrightarrow{\deflateb
			(\barq'_1,\barq'_2,\barp,p)} q'$ to $E'_d$.
			 \end{enumerate} 
\end{enumerate}
Given a finite run $\rho=c_0 \xrightarrow{} c_1 \autstep[] c_2 \cdots
c_k$
of $\cV$ between semiconfigurations,
there exists a finite run $\rho'=c'_0 \xrightarrow{*} c'_1 \autstep[*]
c'_2 \cdots c'_k$
where for
every $\newb(\barq,L)$
operation
in
$\rho$, we perform a $\newb(\barq',L)$ operation in $\rho'$, creating
a
balloon which is identical except for the base vector. The missing
base
vector tokens are then transferred to the appropriate place via edges
of the form $(e,p) \xrightarrow{\delta} q'$ from (2). Note that the
extra base-vector tokens are only added during non-trivial deflates,
which is information that can be obtained from the type information in
$M_{\barq}$.

 Conversely if
$\rho'$ is a a finite run of $\cV'$ between semiconfigurations of
$\cV$, we
observe that the states use to add the missing tokens are sandwiched
between the states of a $\deflateb$ operation and we obtain a run
$\rho$ from $\rho'$ by replacing the $\newb(\barq',L)$ operations by
$\newb(\barq,L)$ operations and replacing every chain of edges $q
\rightarrow (e,p) \rightarrow q'$ by the corresponding $\deflateb$
edge $q \rightarrow q'$ of $\cV$. Thus reachability is preserved. It is
easy to see that the
construction also preserves progressiveness since any infinite run of
$\cV'$
can also be decomposed into segments which are runs between
semiconfigurations of $\cV$.
 \end{proof}
Henceforth we assume that any $\VASSB$ is zero-base.

\paragraph{Token-shifting Surgery}
Fix a run-with-id $\tau=d_0 \autstep[\op_1] d_1 \autstep[\op_2] d_2
\cdots$ of a zero-base $\VASSB$
$\cV$. Let $i$ be
an id and $I$ a finite set of id's with the property that for each $j
\in I$
we have $i <_t j$. Let $N=\sum_{\kmap \in K} \kmap$ where $K=\{ \kmap
\mid \exists j \in I,
\op_j \text{ creates balloon } j \text{ with } \kmap=j.\kmap\}$.
 The token-shifting surgery $\cS_{i \leftarrow I}$
on $\tau$ yielding $\cS_{i \leftarrow I}(\tau)=\tau'=d'_0 \autstep
[\op'_1] d'_1 \autstep[\op'_2] d'_2
\cdots$ is defined as, for each $k \ge 1$:
\begin{enumerate}
	\item If $k \not\in I \cup \{ i\}$, then $\op'_k=\op_k$,
	\item else if $k=i$ then if $\op_i$ creates $(b,i)$ then $\op'_i$
	creates $(b',i)$ where $b'.\barq=b.\barq$ and $b'.\kmap= b.\kmap
	+ N$,
	\item else $k \in I$ and if $\op_k$ creates $(b_k,k)$ then $\op'_k$
	creates $(b',k)$ where $b'.\barq=b.\barq$ and $b'.\kmap =\emptyset$.
\end{enumerate}
In other words, the only difference between $\tau'$ and $\tau$ is the
content of balloons created at $I \cup \{i\}$ and the resulting
changes in the configurations.

\begin{proposition}
\label{valid-token-shifting}
	For any $\tau = d_0 \autstep[\op_1] d_1 \autstep[\op_2] d_2
\cdots$ and any $\cS_{i \leftarrow I}$, the run $\cS_{i
	\leftarrow I}(\tau)=\tau'=d'_0 \autstep[\op'_1] d'_1 \autstep[\op_2]
	d'_2
\cdots$ is a valid run. Further, if $\tau$ is a
	progressive run from $s$ then so is $\cS_{i
	\leftarrow I}(\tau)$. If $\tau=s \xrightarrow{*} s'$ is a finite run
	between two semiconfigurations, then $\cS_{i \leftarrow I}(\tau)$ is
	also a run from $s$ to $s'$.
\end{proposition}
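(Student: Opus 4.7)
The plan is to prove Proposition~\ref{valid-token-shifting} by comparing $\tau$ and $\tau'$ step by step, exploiting the zero-base assumption and the ordering $i <_t j$ for every $j \in I$. First I would establish \emph{validity}: the surgery only modifies the inflate operations for balloons with ids $i$ and $j \in I$, leaving deflate and burst operations syntactically unchanged. The inflate of $i$ in $\tau'$ requires $b.\kmap + N \in L_i$; this holds because $L_i = L_j$ for every $j \in I$ (part of the definition of $<_t$), each $j.\kmap \in L_j = L_i$, and $L_i$ is closed under addition by the zero-base property. The inflate of each $j \in I$ in $\tau'$ requires $\emptyset \in L_j$, which again follows from zero-base-ness by setting all period coefficients to $0$. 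Deflate and burst operations act on balloons whose state component is unchanged (only $\kmap$ is touched), so they remain enabled and are applied to the correct balloons.

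The technical heart of the argument, which I expect to be the main obstacle, is the pointwise inequality $\mu'_k \succeq \mu_k$ on the multiset of ordinary places for every $k\ge 0$, where $\mu_k = d_k.\mmap$ and $\mu'_k = d'_k.\mmap$. The $<_t$ condition guarantees that for every $(\barp,p)\in \typeseq_i\!\downharpoonright$ and every $j\in I$, the first deflate of balloon $i$ at balloon place $\barp$ strictly precedes the first deflate of balloon $j$ at $\barp$; all \emph{other} deflates from $\barp$ by either balloon transfer zero tokens (their balloon place is already empty). Consequently, the $N(\barp)$ tokens that the surgery reroutes from balloons in $I$ into balloon $i$ arrive at place $p$ in $\tau'$ at the moment of $i$'s deflate, whereas in $\tau$ they only arrive later at the deflates of balloons $j\in I$. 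In between $\tau'$ holds strictly more tokens at $p$, and once all the relevant deflates have occurred the cumulative transfers balance out. This immediately implies that every $E_p$-edge and every deflate that subtracts from $\mu$ that was enabled in $\tau$ is also enabled in $\tau'$, completing validity.

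Progressiveness preservation is then almost automatic, because $\tau'$ reuses the same sequence of operations as $\tau$: every balloon-with-id appearing in some $d'_k.\nu$ is still burst or deflated at the same future step, and every place $p$ with $\mu'_k(p)\ge 1$ either already had $\mu_k(p)\ge 1$ (so $\tau$ supplies the future removal directly), or the surplus consists of tokens that $\tau$ will itself deposit on $p$ at some later step $k' > k$ via deflates of balloons in $I$, at which point progressiveness of $\tau$ from $k'$ provides the future $E_p$-decrease needed in $\tau'$ from $k$. Finally, for runs between semiconfigurations $s\xrightarrow{*} s'$, observe that at the endpoint all balloons have been burst so $\nu'_{\mathrm{end}} = \nu_{\mathrm{end}} = \emptyset$; since $\sset(\typeseq_i\!\downharpoonright) = \sset(\typeseq_j\!\downharpoonright)$ for every $j\in I$, the excess tokens transferred via balloon $i$ in $\tau'$ exactly compensate for those not transferred via the (now empty) balloons in $I$ in $\tau'$ but transferred in $\tau$. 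This yields $\mu'_{\mathrm{end}} = \mu_{\mathrm{end}} = s'.\mmap$, establishing the reachability claim and concluding the proof.
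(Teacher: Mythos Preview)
Your proposal is correct and follows essentially the same approach as the paper's own proof: zero-base closure under addition to justify the modified inflates, the pointwise inequality $d_k.\mmap \preceq d'_k.\mmap$ derived from the ordering $i <_t j$ (tokens are transferred no later in $\tau'$ than in $\tau$), and equality of total transfers for the finite semiconfiguration-to-semiconfiguration case. Your treatment of progressiveness for places is in fact more careful than the paper's, which simply asserts that progressiveness is preserved because ``the only change is in the content of some of the newly created balloons''; your case split on whether $\mu_k(p)\ge 1$ or the surplus comes from tokens that $\tau$ deposits later makes explicit why the place-removal condition still holds.
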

\begin{proof}
	The fact that $\cV$ is zero-base ensures that $b'.\kmap$ as defined
	in (2)
	belongs to the linear set $L_i$. The condition $i <_t j$ for each $j
	\in I$ implies that the balloon $i$ makes its non-trivial deflates
	before any of the balloons with an id from $I$. This implies that for
	each $k \ge 1$ we have $d_k.\mmap \leq d'_k.\mmap$ since the total
	number of tokens transferred from $\barp$ to $p$ for any $\barp \in
	\bbp,p \in P$ in $\tau'$ remains greater than that in $\tau$ for
	every prefix $\tau[0,k]$. By monotonicity, every operation remains
	enabled in $\tau'$. Progressiveness is preserved since the only
	change is in the content of some of the newly created balloons. Since
	the total number of tokens transferred in the entire run is the same
	in $\tau$ and $\tau'$ for each $(\barp,p)$ in a finite run
	between two semiconfigurations $s,s'$, and there are no balloons in
	$s'$, this implies that $\tau'$ reaches the same semiconfiguration
	$s'$.
\end{proof}

\subsection{Proof of Lemma \ref{new-witness-prog}}
We have now introduced all of the machinery required to prove Lemma
\ref{new-witness-prog}. We will construct $\cV'$ from $\cV$ such that
the following are
equivalent:
\begin{enumerate}
	\item[(S1)] there exists a progressive run-with-id $\tau=s
	\rightarrow d_1
	\rightarrow
d_2 \cdots$ of $\cV$,
\item[(S2)] there exists a progressive run-with-id $\tau'=s
\rightarrow d'_1
\rightarrow
d'_2 \cdots$ of $\cV'$ which bursts every balloon, and
\item[(S3)] there exists an
	$A,B$-witness
	$\rho'_{A,B}=s \autstep[*] c \autstep[*] c'$ of $\cV'$ for some $A
	\subseteq P', B \subseteq \bbq'$.
\end{enumerate}

We will first show (S1) iff (S2) and then show (S2) iff (S3).
By Lemmas \ref{lem-remove-base} and \ref{lem-typed-vassb} we assume
that $\VASSB$ $\cV$ is zero-base and typed, and
that every $\barq \in \bbq$
comes with an associated marked deflate-sequence $M_{\barq}$ and
do not write balloon states explicitly as $(\barq,M_{\barq})$. From
$\cV=(Q,P,\bbq,\bbp,E_p \cup E_n
\cup E_d \cup E_b)$, we
construct
$\cV'=(Q',P',\bbq,\bbp,(E_p \cup E''_p) \cup E_n \cup E'_d
\cup E'_b)$ as
follows:
\begin{itemize}
	\item $Q'=Q \cup (E \times \{0,1\})$,
	\item $P'=P \cup (\bbq \times \{0,1\})$,
	\item $E'$ is defined as follows:
	\begin{enumerate}
	\item For each $e=q \xrightarrow{\deflateb(\barq,\barq',\barp,p)}
		q'$ in $E_d$ where $(\barp,p)=(\barp_j,p_j) \in M_{\barq}$,
			\begin{enumerate}
				 	\item if $M_{\barq}.i \leq |M_{\barq}.S| -1$ then
				 	add,\\
							$q \xrightarrow{\deflateb(\barq,\barq',\barp,p)} q'$ to $E'_d$,
				\item else if $M_{\barq}.i = |M_{\barq}.S|$,
						\begin{enumerate}
							\item $q \xrightarrow{\deflateb(\barq,
											\barq',\barp,p)} q'$ to $E'_d$,
								 \item $q \xrightarrow{\deflateb(\barq,
											\barq',\barp,p)} (e,0)$
											to $E'_d$,
								 \item $(e,0) \xrightarrow{\burstb(\barq')} (e,1)$
								 	to $E'_b$,
								\item $(e,1) \xrightarrow{\barq'^+} q'$ to $E''_p$.
									\item if $\barq = \barq'$, for $(i,j) \in \{ (0,1),(1,0)\}$ add $q 
										\xrightarrow{(\barq,i)^+(\barq,j)^-} q'$ to
										$E''_p$,
								\item else for each $i, j \in \{0,1 \}$ add $q 
								\xrightarrow{(\barq,i)^-(\barq',j)^+} q'$ to $E''_p$
						\end{enumerate}
				\end{enumerate}
	\item For each $e=q \xrightarrow{\burstb(\barq)}
				q'$ in $E_b$, add $q \xrightarrow{\burstb(\barq)}
				q'$ to $E'_b$.
	\end{enumerate}
\end{itemize}

For $k,k' \in \seq_i$, we write $k \lessdot_i k'$ to denote the
successor relation in $\seq_i$.
 Let $m$ be the smallest id in $\tau$ such
that $\seq_m$ is infinite.
Since there can be at most finitely many indices in $\seq_m$ where a
non-trivial deflate occurs, there exists $n \in \seq_m$
which is the last such index.
Intuitively, $\cV'$ uses the edges in (1 b ii) to (1 b vi) to burst the
balloon
with id $m$ at operation $n$ where the last non-trivial deflate occurs
and
uses a $\VASS$ token to simulate the effect of the infinite suffix of
$\seq_m$ which performs only trivial deflates. We use the notation
$p^+$(resp. $p^-$) to indicate that the place $p$ is incremented 
(resp. decremented). 
For example, for $p_1, p_2\in P$, we write $p_1^+$, $p_2^-$, and $p_1^-p_2^+$ for the vectors
mapping 
$p_1$ to $+1$ and all other places to $0$,
$p_2$ to $-1$ and all other places to $0$,
and $p_1$ to $-1$, $p_2$ to $+1$, and all other places to $0$, respectively.

 From the given progressive $\tau=s \xrightarrow{\op_1}
d_1
\xrightarrow{\op_2} d_2 \cdots$ of $\cV$, we construct a
 progressive $\tau'_1=d'_0 \xrightarrow{*}
d'_{i_1}
\xrightarrow{*} d'_{i_2} \cdots$ for $\cV'$. We set $d'_0=s$ and
for every $k$, the run $d'_{i_{k-1}} \autstep[*] d'_{i_k}$ is defined
as
follows:
\begin{enumerate}[(I)]
\item If $k \not\in \seq_m$ or $k<n$ then
\begin{enumerate}
					\item if $\op_k=\delta$ then $i_k=i_{k-1}+1$ and $d'_{i_{k-1}}
					\autstep[\delta] d'_{i_k}$,
				 	\item if $\op_k=\newb(\barq,L)$ creates balloon $(b,k)$, 
								then $i_k=i_{k-1}+1$ and\\
								 $d'_{i_{k-1}}
								\xrightarrow{\newb(\barq,L)} d'_{i_k}$
								creates $(b',i_k)$ with $b'.\kmap=b.\kmap$,
				\item else if $\op_k=\deflateb
								(\barq,\barq',\barp,p)$ deflates balloon $k'$, then $i_k=i_
								{k-1}+1$ and\\
								 $d'_{i_{k-1}}
								\xrightarrow{\deflateb(\barq,
								\barq',\barp,p)} d'_{i_k}$
								deflates balloon $
								i_{k'}$,
				\item else $\op_k=\burstb
								(\barq)$ bursts balloon $k'$, then $i_k=i_{k-1}+1$ and\\
								 $d'_{i_
								{k-1}}
								\xrightarrow{\burstb(\barq)} d'_{i_k}$
								bursts balloon $
								i_{k'}$,
 \end{enumerate} 

 \item otherwise we have $k \in \seq_m$ and $k \ge n$, in which case,
 \begin{enumerate}
 		\item if $k =n$ and $\op_k=\deflateb(\barq,\barq',\barp,p)$
 		deflates $m$,
 		 then $i_n=i_{n-1}+3$ and\\
 		  $d'_{i_{n-1}} 
				\xrightarrow{\op'_{i_{n-1}+1}} d'_{i_{n-1}+1}
				\autstep[\op'_{i_{n-1}+2}] d'_{i_{n-1}+2} \autstep[\op'_{i_
				{n-1}+3}]
				d'_{i_{n-1}+3}$ where 
				\begin{itemize}
					\item $\op'_{i_{n-1}+1}=\deflateb(\barq,\barq',\barp,p)$ deflates
					balloon $i_m$ where
					$n' \lessdot_m n$,
					\item  $\op'_{i_{n-1}+2}=\burstb(\barq)$, and 
					\item $\op'_{i_{n-1}+3}=(\barq,0)^+$,
				\end{itemize}
		\item else if $k>n$ and $\op_k=\deflateb(\barq,\barq',\barp,p)$ then
					$i_k=i_{k-1}+1$ and
					\begin{enumerate}
						\item if $\barq=\barq'$ and $\exists k' <_m k$ such that
						$\op'_{i_{k'}}=(\barq,1)^+(\barq,0)^-$ then
						\\
						$\op'_{i_{k}}=(\barq,0)^+(\barq,1)^-$ else\\
						$\op'_{i_{k}}=(\barq,1)^+(\barq,0)^-$,
						\item else if $\barq \neq \barq'$ and $ k' \lessdot_m k$ is
						such that
						$\op'_{i_{k'}}=(\barq,0)^+(\barq',0)^-$
						then\\ 
						$\op'_{i_{k}}=(\barq,0)^+(\barq',0)^-$ else\\
						$\op'_{i_{k}}=(\barq,1)^+(\barq',1)^-$.
					\end{enumerate}
 \end{enumerate}
\end{enumerate}

We repeat the construction starting with $\tau_1$, picking the next
higher index $m_1>m$ such that $\seq(i_1)$ is infinite and obtain a
sequence of runs $\tau_1,\tau_2,\tau_3 \cdots$ which agree on
increasingly larger prefixes (since the id's are chosen in order).
Thus we may define $\tau'$ as the limit of this sequence having the
property that every balloon is burst in $\tau'$. This implies that
$\tau'$ satisfies the progressiveness conditions for all places $p
\in P$ and all balloons. Our choices in (II b i) and (II b ii) imply that $\tau'$ also
satisfies progressiveness conditions for the places $\bbq \times \{ 0,
1\}$.

Conversely, consider an arbitrary progressive run-with-id $\tau'$ of
$\cV'$
which bursts all balloons. It may
be decomposed as
$\tau'=d'_0 \xrightarrow{*} d'_{i_1} 
\xrightarrow{*} d'_{i_2} \cdots$ where each segment $d'_{i_{k-1}} 
\xrightarrow{*} d'_{i_k}$ is either an edge from $E_p$,
(1 a),(1 b i),(1 b v),(1 b vi) or (2), or a sequence
of edges
(1 b ii,1 b iii,1 b iv). Choosing the
balloon with least index
$m$ such that
there exists a burst operation (1 b iii), we know there exists an
infinite sequence of corresponding
token transfers (1 b v,1 b vi) by progressiveness. By construction,
there
exist trivial deflates (1 b i) corresponding to these token transfers
and by replacing the transfers by (1 b i) edges, we obtain $\tau'_1$.

Repeating the construction with $\tau'_1$ we obtain a sequence of runs
$\tau'_1,\tau'_2,\cdots$ which agree on increasingly larger prefixes
and
we obtain a progressive run $\tau''$ of $\cV'$ as the limit of these
runs. Since no edges from (1 b ii) to (1 b vi) exist in $\tau''$, this
implies that $\tau''$ is infact a progressive $\tau$ of $\cV$. This
concludes the proof that (S1) iff (S2).

We now take up (S2) iff (S3), beginning with the only-if direction.
Let $\tau'=d'_0 \autstep[] d'_1 \autstep[]
d'_2
\cdots$ be a progressive run of $\cV'$ which bursts every
balloon. We explain the idea behind constructing the
required witness for (S3):
We pick a set of id's $I_1$ which contains one
balloon for each $\barq' \in \bbq'$. We then inductively pick sets $I_k$
for each $k >1$ such that all balloons in $I_{k-1}$ have been burst
before the creation of any balloon in $I_k$. This divides $\tau'$ into
segments such that the tokens of all balloons created in the $k^{th}$
segment excluding those from $I_k$ can be shifted to the balloons in
$I_{k-1}$. In particular, this creates a infinite sequence of
configurations which contain only empty balloons, which can be used to
construct the required witness. Recall that $I(\tau)$ denotes the
set of id's
in $\tau$ while $I_B(\tau)$ denotes the indices of the operations in
$\tau$ which are balloons operations.
Formally, for each $\barq' \in \bbq'$ we define\\
 $i_{1,\barq'}=\min\{ i
\in I(\tau') \mid \op_i=\newb(\barq',L_{\barq'})\}$,\\ $I_1=\{ i_
{1,\barq'} \mid \barq' \in \bbq'\}$ and\\
$B_1=\{ j_{1,\barq'} \in
I_B
(\tau') \mid
\exists i_{1,\barq'} \in I_1, \op_{j_{1,\barq'}}=\burstb(\barq'),
j_{1,\barq'} \in \seq_{i_{1,\barq'}}\}$.\\
 Inductively, we define for
$k>1$:\\
$i_{k,\barq'}=\min\{ l
\in I(\tau') \mid \op_l=\newb(\barq',L_{\barq'}), l > \max(B_
{k-1})\}$,\\
$I_k=\{ i_{k,\barq'} \mid \barq' \in \bbq'\}$ and\\
 $B_k=\{ j_
{k,\barq'}
\in I_B(\tau') \mid
\exists i_{k,\barq'} \in I_k, \op_{j_{k,\barq'}}=\burstb(\barq'),
j_{k,\barq'} \in \seq_{i_{k,\barq'}}\}$.

We also define, for $k \ge 1$,\\
 $I_{k,k+1,\barq'}=\{ l \in I(\tau')
\mid l \not \in I_{k+1}, \op_l=\newb(\barq',L_{\barq'}) \text{ and }
\max(B_k) <l <\max(B_{k+1})$ and\\
 $I_{k,k+1}=\bigcup_{\barq' \in
\bbq'}I_{k,k+1,\barq'}$.

We note that by construction, for each $k \ge 1$, the id $i_
{k,\barq'}$
and the set of id's $I_{k,k+1,\barq'}$ satisfy the
requirements for a token-shifting
surgery. Applying the surgeries $\cS_{i_{k,\barq'} \leftarrow I_
{k,k+1,\barq'}}$ to $\tau'$, we obtain $\tau''=d''_0 \autstep[] d''_1 \autstep[] d''_2 \cdots$
 which bursts every balloon. Therefore there exists $n \in \nats$ such
 that for
 any id $i$ satisfying $i < \max(B_1)$, balloon $i$ is burst before
 $d''_n$. By construction, for every $k > n$ the configuration $d''_
 {\max(B_k)}$ contains only empty balloons i.e. for every $(b,j)$ such
 that $d''_{\max(B_k)}.\nu(b,j)=1$, we have $(b,j).\kmap=\emptyset$.

 Let $S=d''_{j_1},d''_{j_2},\cdots$ be the infinite sequence of
 configurations in $\tau''$ such that they only contain empty
 balloons i.e. satisfy condition (3) of an $A,B$-witness. There exists
 an infinite subsequence $S'=d''_{k_1},d''_
 {k_2},\cdots$ of $S$ such that the corresponding id-removals
 $c''_{k_1},c''_{k_2},\cdots$ satisfy $c''_{k_1} \leq c''_
 {k_2} \leq\cdots$ due to the fact that $\leq$ is a
 Well-Quasi-Order. Further, by the Pigeonhole principle, we may assume
 that they all have the same set of occupied places and same set of
 balloon states $B$, thus satisfying condition (5) of an
 $A,B$-witness.
 We pick $A=P_\infty = \set{p\in P\mid \forall i\geq
 0\exists
 j>i\colon d''_j.\mmap(p) > 0}$, ensuring that $\sup(c''_{k_l})
 \subseteq A$ as required. By progressiveness, there
 exists
 $k_m,k_n$ with $k_m < k_n$ such that conditions (2) and (4) of an
 $A,B$-witness are satisfied for $c_{k_m}$ and $c_{k_n}$. Thus the run
 $\rho''=c_0 \autstep[*]c''_
 {k_m}
 \autstep[*] c''_{k_n}$ is the required $A,B$-witness. 

Conversely, given an $A,B$-witness $\rho''=c_0 \autstep[*]c
 \autstep[*] c'$ let $\rho'=c_0 \autstep[*]c_1
 \autstep[*] c_2 \autstep[*] c_3 \cdots$ be its `unrolling' where
 $c_1=c,c_2=c'$ and for each $k \ge 3$, the configuration $c_k$ is
 obtained from $c_{k-1}$ by applying the sequence of operations $c_1
 \autstep[*]c_2$. Clearly the unrolling satisfies all the
 progressiveness conditions. Consider a corresponding progressive
 run-with-id
 $\tau'=d'_0\autstep[*]d'_1
 \autstep[*] d'_2 \autstep[*] d'_3 \cdots$ given by Proposition \ref{prog-run-with-id}.
 By Proposition \ref{prop-prog-id}, every
 id created is either burst or, $\seq_i$ is infinite with $\seq_i$
 containting an infinite
 suffix of trivial deflates. By applying the construction in (S1) iff
 (S2),
 we replace every id $i$
 which has an infinite $\seq_i$ in $\tau'$ by a $\VASS$ token to
 obtain a
 progressive $\tau''$ where every balloon is burst. This concludes the
 proof of Lemma \ref{new-witness-prog}.

\subsection{Proof of Lemma \ref{lem:fairnterm-to-reach}}
\label{proof-fairnterm-to-reach}
The detailed construction of the $\VASSB$ $\cV(T)$
corresponding to an $A,B$-witness is given below. The $\VASSB$ $\cV
(T)$ consists of five total stages broken up into three main stages
and two auxiliary stages.
\emph{First Main Stage: Simulating two copies.} 
We define a $\VASSB$ $\cV_1$ that simulates two identical copies of the run of $\cV$.
The two copies share the global state in $\VASSB$ $\cV_1$.
It maintains two copies of the places, one for each run, 
and in addition, uses a pair of places $\bbq \times \set{1,2}$ for
each balloon state in order
to count the total
number of balloons of a given balloon state. An extra pair of places
$\bbq \times \set{3,4}$
for each balloon state remain unused in this stage and are used by
later stages for checking the emptiness of balloons.
Each step of $\cV$ is simulated by $\cV_1$ by updating two copies of the places and conceptually maintaining
two copies of the balloons.
The only non-trivial point is that we cannot maintain two balloons separately, because two different
steps executing $\newb(\barq, L)$ may pick two different contents from
the linear set $L$.
We avoid this by maintaining one balloon with two sets of balloon
places and extend the linear set $L$
so that it has two identical copies of the same multiset for each set of places.

We introduce some notation.
For a $v \in \nats^P$ (resp.\ $\delta \in \integs^P$), we write the ``doubling'' $v\odot v$ (resp.\ $\delta \odot \delta$) 
for the function in $\nats^{P\times\set{1,2}}$ such that $v\odot v (p,1) = v\odot v(p,2) = v(p)$
(resp.\ $\integs^{P\times\set{1,2}}$ such that $\delta\odot \delta(p,1) = \delta\odot\delta(p,2) = \delta(p)$).
For a linear set $L \subseteq \nats^P$, we similarly write $L \odot L \subseteq \nats^{P\times\set{1,2}}$ 
for the ``doubling'' of $L$:
$v\odot v \in L\odot L$ iff $v \in L$.
A representation of $L\odot L$ can be obtained from the representation of $L$ by doubling the base and period vectors.
We also write $\bfz \circ v$ (resp.\ $\bfz \circ L$) to denote the
extension of $v\in \nats^P$ (resp. $L\subseteq \nats^P$)
to $\nats^{P\times\set{1,2}}$: we define $\bfz\circ v(p,1) = 0$ and $\bfz\circ v(p,2) = v(p)$, and $\bfz\circ v\in \bfz\circ L$
iff $v\in L$.
Finally, we write a function $\delta\in\integs^P$ by explicitly mentioning the components that change.

Formally,
$\cV_1 = (Q \cup E \times\set{1,2}, P \times\set{1, 2} \cup \bbq\times
\set{1,2,3,4}, \bbq\times \bbq\cup\bbq, \bbp\times \set{1,2}, E_1)$,
where we add the following edges to $E_1$:
\begin{enumerate}
\item For each $e = q\autstep[\delta] q'$ in $E$ there is $q \autstep
[\delta
\odot \delta] q'$ in $E_1$.
\item For each $e= q \xrightarrow{\newb(\barq, L)} q'$ in $E$ the edges 
$q \xrightarrow{\newb((\barq,\barq), L\odot L)} (e,1)$ and $(e, 1) \xrightarrow{(\barq,1)^+(\barq,2)^+} q'$
are in $E_1$.
The second edge keeps track of the number of balloons in state $\barq$.
Note that the first stage only creates balloons with states in $\bbq\times\bbq$.
\item For each $e= q \xrightarrow{\deflateb(\barq,\barq',\barp,p)} q'$ in $E$, the following edges are in $E_1$:\\
$q \xrightarrow{\deflateb((\barq,\barq),(\barq',\barq), (\barp,1),(p,1))} (e,1)$,
$(e,1) \xrightarrow{\deflateb((\barq',\barq),(\barq',\barq'), (\barp,2),(p,2))}(e,2)$, 
$(e,2) \xrightarrow{(\barq,1)^-,(\barq',1)^+, (\barq,2)^-,
(\barq',2)^+} q'$.
The first two edges transfer the balloon tokens to $(p,1)$ and $(p,2)$, respectively.
The last edge tracks the new number of balloons in each balloon state.
\item For each $e = q \xrightarrow{\burstb(\barq)} q'$ in $E$, the edges $q \xrightarrow{\burstb{(\barq,\barq)}} (e,1)$ and
$(e,1) \xrightarrow{(\barq,1)^-(\barq,2)^{-}} q'$ are in $E'$.
The second edge maintains the count of the number of balloons with balloon state $\barq$.
\end{enumerate}

For an initial (semi-)configuration $c_0=(q_0, \mmap_0,\emptyset)$ of $\cV$, one can construct a configuration of $\cV_1$ that makes
two identical copies on to the places and initializes the places
$\barq\times\set{1,2,3,4}$ to zero.
A run of $\cV_1$ decomposes into two identical runs of $\cV$, and for any reachable configuration, the places $P\times \set{1}$ and
$P\times \set{2}$ have identical number of tokens.
So do the places $\bbq\times \set{1}$ and $\bbq\times \set{2}$ which
track the number of balloons in each copy for a given balloon state.
The places in $\bbq\times \set{3,4}$ remain empty.
Each balloon can be divided into two identical balloons by focusing on the two copies of the places.

\smallskip\noindent
\emph{First Auxilliary Stage: Checking Emptiness of Balloons in $c_1$
.}
The first auxiliary stage uses a $\VASSB$ $\cV_{1 \rightarrow 2}$
which is used to check balloons in $c_1$ for emptiness while at the
same time transferring tokens from the copy $\bbq \times \set{1}$ to
the copy $\bbq \times \set{3}$. The only places used in this stage are
those in $\bbq \times \set{1,3}$. The states used by $\cV_{1
\rightarrow
2}$ are two copies of the states of $\cV$ together with states of the
form $(q,\barq,\barp)$ used to perform the emptiness check. We also
have two marked copies of balloon states $\tilde{\bbq} \times 
\set{1,2}$. The
$\VASSB$ $\cV_{1 \rightarrow 2}$
picks a balloon $b$ from $c_1$ and performs
one deflate operation for
each $\barp \in \bbp$ on
$b$, sending all tokens to a special place $p_{\check}$. During the
check, the balloon is put into the first special marked copy
of its state to
ensure that all of the checks are performed on the balloon $b$ picked.
At
the end of the check, a token is transferred indicating that $b$ has
been checked for emptiness and the balloon state is put into the
second marked copy. The series of checks forms a loop starting
and ending at a state $(q,1)$ while passing through states of the form
$
(q,\barq,\barp)$, with one loop per balloon checked. In the event of a
correct check on all
balloons in $c_1$, the place $p_{\check}$ remains empty, all the
tokens have been transferred and control is passed to the second main
stage. In checking emptiness only the first copy of $Q$ is used. $\cV_
{1\rightarrow2}$
then non-deterministically guesses that all the balloons have been
checked for emptiness and moves from a state $(q,1)$ to the state $
(q,2)$ in the second copy of $Q$. It then converts the balloon states
of all the balloons from the second marked copy to the normal balloon
state. Note that this
conversion cannot be done before all the balloons are checked for
emptiness since we could otherwise make the mistake of checking only
one balloon repeatedly
for emptiness. 

Let $\bbp=\{\barp_1,\barp_2,\cdots,\barp_n\}$ be an
arbitrary enumeration.
Formally, $\cV_{1 \rightarrow 2} = (Q_{1 \rightarrow 2}, p_{\check} \cup
\bbq\times\set{1,2,3,4},((\bbq\times \bbq) \cup (\tilde{\bbq}\times 
\tilde{\bbq}\times\set{1,2})), \bbp\times \set{1,2},
E_{1 \rightarrow 2})$ where $\tilde{\bbq}$ is a decorated copy of
$\bbq$. The set global of states is $Q_{1 \rightarrow 2}=(Q\times
\set{1,2})
 \cup (Q \times \bbq\times 
\bbp)$ and
we add the
following edges to $E_{1 \rightarrow 2}$ for each
$q \in Q$ and $\barq \in \bbq$:
\begin{enumerate}
	\item $(q,1) \xrightarrow{\deflateb((\barq,\barq),
	(\tilde{\barq},\tilde{\barq},1),(\barp_1,1),p_{\check})}
	(q,\barq,\barp_1)$,

	\item for each $k \in [1,n-2]$ add \\
	$(q,\barq,\barp_k) \xrightarrow{\deflateb((
	\tilde{\barq},\tilde{\barq},1),
	(\tilde{\barq},\tilde{\barq},1),(\barp_{k+1},1),p_{\check})} 
	(q,\barq,\barp_
	{k+1})$, 

	\item $(q,\barq,\barp_{n-1}) \xrightarrow{\deflateb((
	\tilde{\barq},\tilde{\barq},1),
	(\tilde{\barq},\tilde{\barq},2),(\barp_n,1),p_{\check})} 
	(q,\barq,\barp_
	n)$ and

	\item $(q,\barq,\barp_n) \xrightarrow{(\barq,1)^-,(\barq,3)^+} 
	(q,1)$. A token is transferred from $(\bbq,1)$ to $(\bbq,3)$.

	\item $(q,1) \autstep[\bfz] (q,2)$,
	\item
	$(q,2) \autstep[\deflateb((\tilde{\barq},
	\tilde{\barq},2),(\barq,\barq),(\barp_n,1),p_{\check})] (q,2)$.

\end{enumerate}

\smallskip\noindent
\emph{Second Main Stage: Simulating $c_1 \xrightarrow{*} c_2$.}
At this point, we note that the first copy of $c_1$ resides in the
first copy of places and the third copy of balloon states.
The second main stage is given by a $\VASSB$ $\cV_2$ which keeps the
first copy of $c_1$ frozen and
simulates the $\VASSB$ $\cV$ on the second copy.
It additionally performs verification of the progress constraints.
Thus, a state consists of a triple $(q, q', M) \in Q\times Q \times 2^{A\cup B}$, where $q$ keeps the state
of $c_1$, $q'$ is the current state of the simulation, and $M$ is a
subset of $A\cup B$.
When control moves from state $q$ in $\cV_{1 \rightarrow 2}$ to $\cV_2$,
we start at $(q, q, \emptyset)$,
where the $\emptyset$ denotes none of the progress constraints have
been checked.

The simulation only updates the second copy of the places.
New balloons now use the states of $\cV$ and we continue to track the
number of balloons for each balloon state
 but only in the second copy i.e. the places $\bbq\times\set{2}$. The
 other copies $\bbq\times\set{1,3,4}$ remain untouched.
A $\deflateb$ or a $\burstb$ operation may be performed on double-balloons with state $\barq\times \barq$ or on normal balloons
with state $\barq$.
On double-balloons, $\deflateb$ works on the second
component only. 

Formally, $\cV_2 = (Q_2, P\times\set{1,2}\cup \bbq\times\set{1,2,3,4},
\bbq\times \bbq \cup \bbq, \bbp\times \set{1,2}, E_2)$
consists of states
$Q_2 = \set{ (q, q', M) \mid q\in Q, q' \in Q \cup E, M \subseteq A\cup B}$, and the following edges in $E_2$:
\begin{enumerate}
 \item 
Let $e= q \xrightarrow{\delta} q'$.
Let $P_{\delta}= \{p \in A \mid \delta(p) <0 \}$. If $\sup(\delta)
\subseteq A$, then
for each $q_1 \in Q$, $M \subseteq A\cup B$, and $p\in P$, add
 			$(q_1,q,M) \xrightarrow{\bfz\circ \delta}(q_1,q',M)$ if $p\in P\setminus P_\delta$ and
 			 $(q_1,q,M) \xrightarrow{}(q_1,q',M\cup P_{\delta})$ if $p\in
 			 P_\delta$.
The $\delta$ edges are restricted to those whose support is in $A$. We
track the place progress condition.
\item For each $e= q \xrightarrow{\newb(\barq, L)} q'$ in $E$, for each
$q_1 \in Q$, $M \subseteq A \cup B$, add
 	$(q_1,q,M) \xrightarrow{\newb(\barq, \bfz\circ L)}(q_1,e,M)$ and 
 	$(q_1,e,M) \xrightarrow{(\barq,2)^+} (q_1,q',M)$.

  \item For each $e= q \xrightarrow{\deflateb(\barq,\barq',\barp,p)} q'$ in $E$, 
for each $q_1 \in Q$, $M \subseteq A\cup B$, $\barq'' \in \bbq$ add:
$(q_1, q, M) \xrightarrow{\deflateb((\barq'',\barq),(\barq'',\barq'), 
(\barp, 2),(p,2))} (q_1,e,M')$,
$(q_1, q,M) \xrightarrow{\deflateb(\barq,\barq',(\barp,2),(p,2))} 
(q_1,e,M)$,
$(q_1,e,M) \xrightarrow{(\barq,2)^-,(\barq',2)^+}(q_1, q', M')$,
where $M' = M\cup\set{\barq}$ if $\barq\in B$ and $M'=M$ otherwise.
The first edge deflates doubled balloons left over from the first main
stage, but only to the second copy of $P$. We
 	track the balloon progress condition only for balloons from the
 	first main stage.

 \item For each $e= q \xrightarrow{\burstb(\barq)} q'$ in $E$, 
 			for each $q_1 \in Q$, $M \subseteq A \cup B$, $\barq' \in \bbq$,
 			add
 			 $(q_1, q, M) \xrightarrow{\burstb((\barq',\barq))}(q_1,e,M')$,
 			 $(q_1, q, M) \xrightarrow{\burstb(\barq)}(q_1,e,M)$, and
 			 $(q_1,e,M) \xrightarrow{(\barq,2)^-}(q_1,q',M)$,
			where $M' = M\cup\set{\barq}$ if $\barq\in B$ and $M' =M$ otherwise.
 		\end{enumerate}

\smallskip\noindent
\emph{Second Auxiliary Stage: Checking Emptiness of Balloons in
$c_2$.} 	
The difference between the first auxiliary stage and second auxiliary
stage is that
the emptiness must be checked for both the balloons produced in the
first main stage as well as the second main stage in the latter, since
there may be
a double-balloon
that continues to exist at $c_2$. The emptiness check transfers
tokens from the copy $\bbq \times \set{2}$ to the copy $\bbq \times 
\set{4}$.	As in the first auxiliary stage, we
pick an enumeration
$\set{\barp_1,\barp_2,\cdots,\barp_n}$ of $\bbp$.
Formally, $\cV_{2 \rightarrow 3} = (Q_{2 \rightarrow 3}, p_{\check} \cup
\bbq\times\set{1,2,3,4},(\bbq\times \bbq \cup \bbq) \cup ((
\tilde{\bbq}\times \tilde{\bbq} \cup \tilde{\bbq})\times\set{1,2}),
\bbp\times
\set{1,2},
E_{2 \rightarrow 3})$ where $\tilde{\bbq}$
is a
decorated
copy of $\bbq$. The set of global states is $Q_{2 \rightarrow
3}=(Q\times
\set{1,2})
 \cup
Q\times\bbp\times(\bbq \cup \bbq\times\bbq)$  and
we add the
following edges to $E_{2 \rightarrow 3}$ for each
$q \in Q$, $(\barq,\barq') \in \bbq \times \bbq$ and $\barq \in \bbq$:
\begin{enumerate}
	\item $(q,1) \xrightarrow{\deflateb((\barq,\barq'),
	(\tilde{\barq},\tilde{\barq'},1),\barp_1,p_{\check})}(
	q,\barp_1,\barq,\barq')$,\\
	 $q \xrightarrow{\deflateb(\barq,
	(\tilde{\barq},1),\barp_1,p_{\check})}(q,\barp_1,\barq)$,

	\item for each $k \in [1,n-2]$ add \\
	$(q,\barp_k,\barq,\barq') \xrightarrow{\deflateb(
	(\tilde{\barq},\tilde{\barq'},1),
	(\tilde{\barq'},\tilde{\barq'},1),\barp_{k+1},p_{\check})} (q,\barp_
	{k+1},\barq,\barq')$,\\
	$(q,\barp_k,\barq) \xrightarrow{\deflateb((\tilde{\barq},1),(\tilde{
	\barq},1),\barp_{k+1},p_{\check})} (q,\barp_{k+1},\barq)$,

	 \item $(q,\barp_{n-1},\barq,\barq') \xrightarrow{\deflateb(
	(\tilde{\barq},\tilde{\barq'},1),
	\tilde{\barq},\tilde{\barq'},2),\barp_n,p_{\check})} (q,\barp_
	n,\barq,\barq')$,\\
	$(q,\barp_{n-1},\barq) \xrightarrow{\deflateb((\tilde{\barq},1)
	(\tilde{\barq},2),\barp_n,p_{\check})} (q,\barp_n,\barq)$ and

	\item $(q,\barp_n,\barq,\barq') \xrightarrow{(\barq',2)^-,
	(\barq',4)^+}
	(q,1)$,
	$(q,\barp_n,\barq) \xrightarrow{(\barq,2)^-,(\barq,4)^+} 
	(q,1)$
	. A token is transferred from $(\bbq,2)$ to $(\bbq,4)$.

	\item $(q,1) \autstep[\bfz] (q,2)$,
	\item $(q,2) \autstep[\deflateb((\tilde{\barq},\tilde{\barq'},2),
	(\barq,\barq'),\barp_n,p_{\check})] (q,2)$,\\
	$(q,2) \autstep[\deflateb((\tilde{\barq},2),
	\barq,\barp_n,p_{\check})] (q,2)$.
\end{enumerate}

\smallskip\noindent
\emph{Third Main Stage: Verification.}
At this point, the first copy of $c_1$ uses balloon places $\bbq
\times \set{3}$ while $c_2$ uses the balloon places $\bbq \times 
\set{4}$.
In the third main stage, verification non-deterministically removes
the same number of tokens from the two copies for each place in $A$
and for each place in $B$ in the copies $\bbq\times \set{3,4}$.
Additionally, all balloons are burst.
Finally, all places in $P \times \set{2}$ and $\bbq\times \set{4}$ are
emptied.
Formally, $\VASSB$ $\cV_3 = (\set{q_\ver, q_f}, P\times\set{1,2} \cup
\bbq\times\set{1,2,3,4}, \bbq \cup \bbq\times \bbq, \bbp\times
\set{1,2},
E_3)$
contains the following edges:
\begin{enumerate}
	\item For each $p \in A$, there is an edge $q_{\ver} \xrightarrow{(p, 1)^- (p, 2)^-} q_{\ver}$ and
 		for each $\barq \in B$, the edge $q_{\ver} \xrightarrow{(\barq,
 		3)^- (\barq, 4)^-} q_{\ver}$.
For each $\barq \in \bbq$, the edge $q_{\ver} 
 	\xrightarrow{\burstb(\barq)} q_{\ver}$ and for each $(\barq,\barq')
 	\in \bbq \times \bbq$ the edge $q_{\ver} 
 	\xrightarrow{\burstb(\barq,\barq')} q_{\ver}$
\item $q_{\ver} \xrightarrow{\mathbf{0}} q_f$ and
for each $p \in A$, the edge $q_f \xrightarrow{ (p,2)^-} q_f$ and  
		for each $\barq \in B$, the edge $q_f \xrightarrow{(\barq,4)^-}
		q_f$.
\end{enumerate}
The edges in (1) ensure $\pc(c_1) \leq_p \pc(c_2)$ in the
places $A\cup B$ and that all remaining balloons are removed.
The edges in (2) remove extra tokens in case $\pc(c_1) <_p \pc (c_2)$.

\smallskip\noindent
\emph{Overall $\VASSB$ and Transitions between Stages.}
The $\VASSB$ $\cV(T)$ is a composition of the five stages $\cV_1$, $\cV_{1
\rightarrow 2}$, $\cV_2$, $\cV_{2 \rightarrow 3}$ and $\cV_3$.
A set of glue transitions connect the five stages: these
non-deterministically guess when to move
from $\cV_1$ to $\cV_{1
\rightarrow 2}$, $\cV_{1
\rightarrow 2}$ to $\cV_2$, $\cV_2$ to $\cV_{2\rightarrow 3}$ and from $\cV_{2\rightarrow 3}$
to $\cV_3$. Transfers between stages do not perform any operations and
use the following edges:
\begin{itemize}
	\item From $\cV_1$ to $\cV_{1 \rightarrow 2}$, for each $q\in Q$ of
	$\cV_1$ to the corresponding first copy $(q,1)$ in $\cV_{1
	\rightarrow 2}$,
	\item from $\cV_{1 \rightarrow 2}$ to $\cV_2$, for each $(q,2) \in Q
	\times\{2\}$ of $\cV_
	{1 \rightarrow 2}$ to the corresponding $(q,q,\emptyset)$ of $\cV_2$,
	\item from $\cV_2$ to $\cV_{2 \rightarrow 3}$, for each $(q,q,A \cup B)$
	of $\cV_2$ to the corresponding $(q,1)$ of $\cV_{2 \rightarrow 3}$,
	and
	\item from $\cV_{2 \rightarrow 3}$, for each $(q,2) \in Q\times\{
	2\}$ of $\cV_{2
	\rightarrow 3}$ to $q_{\ver}$ of $\cV_3$.
\end{itemize}

\smallskip\noindent
\emph{Correctness.}
A correct simulation of the progressiveness witness results in $\cV(T)$
reaching
the configuration $(q_f,\emptyset,\barz)$. 
Conversely, suppose $\cV(T)$ reaches $(q_f,\emptyset,\barz)$.
This implies that it reaches a configuration $(q_{\ver},\mmap,\nu)$
where $\mmap$ only contains tokens in the second copy of $A$ and the
fourth copy of
$B$. This implies that all the balloons of $c_2$ were correctly
checked for emptiness in $\cV_{2 \rightarrow 3}$ since no tokens remain
in the second copy of $\bbq$ and there are no tokens in $p_{\check}$.
Similarly, the emptiness check performed by $\cV_{1 \rightarrow 2}$ is
also correct.
Thus, the configuration $c_1$ at the end of $\cV_1$ and $c_2$ at the end of $\cV_2$ 
satisfy $\pc(c_1) \leq \pc(c_2)$ and they both only contain empty
balloons with state $B$
and have non-zero tokens in places in $A$.
Further, the transition between $\cV_2$ and $\cV_{2 \rightarrow 3}$
ensures the progressivenes conditions on $A$ and $B$ have been
checked.

\section{Proofs for Section \ref{sec:reach2reach}}
The proofs of Lemmas \ref{lem:N-bal-bdd} and  \ref{lem:reach2reach}
both assume that the $\VASSB$ considered is both zero-base (by Lemma
\ref{lem-remove-base}) and typed (by Lemma \ref{lem-typed-vassb}).
Further, the notion of runs-with-id introduced in Appendix
\ref{appendix-sec-five-ndag} is used in the proof of Lemma \ref{lem:N-bal-bdd}.
\subsection{Proof of Lemma \ref{lem:N-bal-bdd}}

In any finite run-with-id $\tau$, let $I_{\barq}(\tau)$ be the set of
id's of non-empty
balloons inflated with state $\barq$. Note that by definition, $M_
{\barq}.i=0$ for any newly created balloon.
Let $\tau'=s_1 \autstep[*] s_2=d'_0 \autstep[\op'_1] d'_1 \autstep
[\op'_2] d'_2 \cdots$ be an arbitrary canonical run-with-id
which is
$N$-balloon bounded i.e. inflates at most $N$ non-empty balloons. We
equivalently assume that the bound applies to each balloon
state since this implies a bound of $|\bbq|N$ on the total number
of balloons. Suppose there exists a state $\barq_0$ with $|I_{\barq_0}
(\tau')| > N$. We will show
that
there exists $\tau=s_1 \autstep[*] s_2$ of $\cV$ with $|I_{\barq}
(\tau)|=|I_{\barq}
(\tau')|$ for $\barq \neq \barq_0$ and $|I_{\barq_0}(\tau)| < |I_
{\barq_0}
(\tau')|$. Clearly, this suffices to prove the lemma.\\
For any deflate sequence $S=(\barp_1,p_1),\cdots,(\barp_n,p_n)$, let
$s(S)$ be the set $\set{(\barp_1,p_1),\cdots,(\barp_n,p_n)}$.
	For id's $i,j \in I_{\barq_0}(\tau')$ with $i <j$, let $\col_
	{i,j}:(s(M_{\barq_0}.S))
	\rightarrow \{
	\lgreen,\lred\}$ be defined by 
	\[ \col_{i,j}(\barp,p)=\begin{cases}
		\ltr{green} &\text{ if } \exists k<k', (\barp,p,k) \in \typeseq_i, 
		(\barp,p,k') \in \typeseq_j\\
		\ltr{red} &\text{ otherwise}
	\end{cases}\]
	The color $\lgreen$ (resp. $\lred$) indicates that a deflate $
	(\barp,p)$ of the
	balloon $i$ occurs before (resp. after) the corresponding deflate of
	balloon $j$.
	We write $k=\typeseq_i(\barp,p)$ if $(\barp,p,k) \in \typeseq_i$.
The set $\cC_{\barq_0}=\set{\col_{i,j} \mid i,j \in I_{\barq_0}
(\tau')}$ of functions is finite since both the domain and range of
the functions is finite. Let $G_{\barq_0}$ be the graph with $I_{\barq_0}$
as the set of vertices and  edges colored by $\cC_{\barq_0}$. Note
that the color $\col_{i,j}$ of the edge between vertices $i$ and $j$
is a finite word from $
\set{\lgreen,\lred}^*$. Let
$r=|\cC_{\barq_0}| \leq 2^{|\bbp|}$
and $n=|\bbq|^{|\bbp|}+1$. Then by Ramsey's theorem~\cite[Theorem
B]{Ramsey1930} there exists $R
(r;n) \in \nats$ such that for any $r$-colored graph with at least $R
(r;n)$ vertices, there exists a monochromatic subgraph of size at
least $n$. Further, by the result of Erd\H{o}s and
  Rado~\cite[Theorem~1]{ErdosRado1952} we know that $R(r;n) \leq r^{r(n-2)+1} \leq O
(\exp_4(|\cV|))$. Choosing
$N=r^{r(n-2)+1}$, this implies that $|I_{\barq_0}| > r^{r(n-2)+1}$ and
therefore there exists a monochromatic subgraph $G'$ of $G_{\barq_0}$
with
at least $n$ many vertices. Let $c$ be the color of every edge in the
graph induced
by $G'$. A
\emph{$\lred$-block} $c[i,j]$ of $c$ is a maximal contiguous
subsequence $
(\barp_i,p_i),(\barp_{i+1},p_{i+1}),\cdots,(\barp_j,p_j)$ of $M_
{\barq_0}.S$ such
that $c$
takes value $\lred$ on each $(\barp_k,p_k)$ for $i \leq k \leq j$ and
value $\lgreen$ on $(\barp_{j+1},p_{j+1})$ (if $j<n$) and $(\barp_
{i-1},p_
{i-1})$ (if $i>1$).
Let
$c[j_{1,1},j_{1,2}],c[j_{2,1},j_{2,2}], \cdots,c[j_{l,1},j_{l,2}]$ be
the $\lred$-blocks of $c$ with $j_{k,2} < j_{k+1,1}$ for each $k$.

 Intuitively, deflates of balloons from $G'$ in a particular red-block
happen in the reverse order as compared to their id's. That is, if
$i_0<j_0<k_0<l_0$ are id's which are present in $G'$, then in a
red-block, $l_0$ is deflated first, followed by $k_0$, then $j_0$ and
finally $i_0$. In Figure \ref{ramsey-fig}, one such red-block (called
$1$-block in the main body) is formed by the second and third deflates
where the magenta outlined balloon corresponding to $l_0$ is deflated
first at $l_2$ and $l_3$, followed by the green, orange and cyan
outlined
balloons.

 Formally, let $\min(G')$ (resp. $\max(G')$) be the
least
(resp. greatest) id in $G'$. For each $i \in G'$ and for each $k$
with $ 1
\leq k \leq l$, we have $\typeseq_{\max(G')}(j_{k,1}) \leq \typeseq_i
(j_{k,1})$ and $\typeseq_i(j_{k,2}) \leq \typeseq_{\min(G')}(j_{k,2})$
by construction. In other words, for any $\lred$-block $c[j_{k,1},j_
{k,2}]$ of $c$,
all non-trivial deflates made by balloons in $G'$ happen in $\tau'
[m_{k,1},m_{k,2}]$ where $m_{k,1}=\typeseq_{\max(G')}(j_{k,1})-1$ and
$m_{k,2}=\typeseq_{\min
(G')}
(j_{k,2})$ and further, no other non-trivial deflates of balloons in
$G'$
other than the ones
in $c[j_{k,1},j_{k,2}]$ occur in $\tau'[m_{k,1},m_{k,2}]$. 
Let $\tau'
[m_
{1,1},m_{1,2}],\tau'[m_{2,1},m_
{2,2}], \cdots, \tau'[m_{l,1},m_{l,2}]$ be the infixes of $\tau'$
corresponding to $c[j_{1,1},j_{1,2}],c[j_{2,1},j_{2,2}], \cdots, c[j_
{l,1},j_{l,2}]$. We observe that the total number of points $j_{1,1},j_{1,2},j_{2,1},
\cdots j_{l,2}$ is at most $|\bbp|$ since the $\barp$ values at all
these points is distinct and therefore, the same bound applies to $m_
{1,1},m_{1,2},m_{2,1},m_
{2,2}, \cdots, m_{l,1},m_{l,2}$. Let $M=\bigcup_{1 \leq k \leq
l}[m_
{k,1},m_
{k,2}]$.

By construction, $|G'|=n \geq |\bbq|^
{|\bbp|}+1$, which implies that there exist $i,j \in G'$ with $i<j$
such that for each $k \geq 1$ and each $l \in \{ 1,2\}$ we have $d'_
{m_{k,l}}.
i.\barq=d'_{m_{k,l}}.j.\barq$ i.e. the balloons $i$ and $j$
have the same balloon state at each of the configurations $d'_{m_
{k,l}}$. Let $\op'_i$ (resp. $\op'_j$) in $\tau'$ create a balloon $
(b',i)$ (resp. $(b'',j)$) with
$b'.\kmap=\kmap_i$ (resp. $b''.\kmap=\kmap_j$). 
We now
construct $\tau=d_0 \autstep[\op_1] d_1 \autstep[\op_2] d_2 \cdots$
from $\tau'$ by an \emph{id-switching surgery} in which all operations
in $\tau$ are retained by $\tau'$ with the exception of those in
$\seq'_i$ and $\seq'_j$, which are replaced by $\seq_i$ and
$\seq_j$
as follows: 
\begin{enumerate}
	\item $\op_i$ creates $(b_i,i)$ where $b_i.\kmap=\kmap_i +
	\kmap_j$ and $b_i.\barq=b'.\barq$
	and $\op_j$ creates $(b_j,j)$ where $b_j.\kmap=\emptyset$ and
	$b_j.\barq=b''.\barq$,
	\item for each $k \in \seq'_i$ (resp. $k \in \seq'_j$) where $k
	\in M$,
	 $k \in \seq_j$ (resp. $k \in \seq_i$) and
	 \item for each $k \in \seq'_i$ (resp. $k \in \seq'_j$) where $k
	\not \in M$,
	 $k \in \seq_i$ (resp. $k \in \seq_j$).
\end{enumerate}
Intuitively, all operations in $\tau'$ belonging to the segments in
$M$
on balloon $i$ are reinterpreted as being
performed on balloon $j$ in $\tau$ and vice versa, while operations
outside of the segments in $M$ are retained as before. First we
observe that this reinterpretation results in a valid sequence of
operations $\seq_i$ and $\seq_j$ in $\tau$ since we have chosen $i,j$
such that the balloon states match at every boundary point of $M$.
This implies
that $i <_t j$ in $\tau$ and hence (1) which is a token-shifting
operation ensures that $\tau$ is valid by Proposition 
\ref{valid-token-shifting} and is a run from $s_1$ to $s_2$. Since
balloon $j$ is empty in $\tau$, this implies $|I_{\barq_0}(\tau)| <
|I_
{\barq_0}(\tau')|$.

\subsection{Proof of Lemma \ref{lem:reach2reach}}

	Given $\VASSB$ $\cV=(Q,P,\balloonQ,\balloonP,E)$ and its semiconfigurations $s_1,s_2$, we
	construct a $\VASS$ $\cV'=(Q',P',E')$ and its configurations
	$s'_1,s'_2$ such
	that there exists an $N$-balloon-bounded run $\rho=s_1 \autstep[*]
	s_2$ of $\cV$ (where $N$ is given by Lemma \ref{lem:N-bal-bdd}) iff $
	(\cV',s'_1,s'_2) \in \REACH$.

We assume that $\cV'$ can have edges of the form $e=q 
\xrightarrow{+\Delta_L}
q'$ where $+\Delta_L$ is an operation which non-determinstically adds
an element from a
linear
set $L$ to the set of places, since reducing such an extension to a
normal VASS
is a standard construction. We similarly have operations
$-\Delta_L$ which remove a vector $\vmap \in L$ from the 
set
of
places.
For sets $\bbp=\{ \barp_1, \cdots, \barp_n\}$ and $\bbp'=\{
\barp'_1,\cdots,\barp'_n\}$, for
$L \subseteq \multiset{\bbp}$, we write $L(\bbp \leftarrow \bbp')
\subseteq
\multiset{\bbp'}$ for the corresponding linear set of vectors
$\vmap'$
where
$\vmap'(\barp'_i)=\vmap(\barp_i)$ for each $i$ where $\vmap \in L$.
For $L \subseteq \multiset{\set{\barp}}$ and $p \in P$, we also define
the multiset $L(\barp \odot (-p)) \subseteq \multiset{\set{\barp,p}}$
consisting of vectors $\vmap^- \odot
\vmap(p)$ obtained from $\vmap \in L$ as $(\vmap^- \odot
\vmap(p))(\barp)=-(\vmap^- \odot
\vmap(p))(p)=-\vmap(\barp)$.
The projection $L~\hspace{-0.1 cm}\downharpoonright_{\chi}$ of $L$ to
a subset $\chi
\subseteq \bbp$ is obtained by retaining the values of vectors for the
$\chi$ component and setting the rest to 0. All of the above defined
operations enable us to simulate balloon operations by addition and
subtraction operations on $\VASS$ places.

Let $H$ be the set of all functions $\eta:\bbq \times 
\set{1,\cdots,N} \rightarrow \bbq \cup \set{\#,\dagger}$. A function
$\eta \in H$ is used in the following construction to track the state
of a non-empty balloon of $\cV$ in the global state of $\cV'$. The
$\#$ symbol indicates that a balloon
has non
yet been inflated and $\dagger$ indicates that it has been burst. Let
$G$ be
the set of vectors $\wmap:\bbq \rightarrow \set{0,1,\cdots, N}$.
A
vector $\wmap \in G$ is used to store the number of non-empty balloons
of a particular balloon state in $\cV$ in the global state of $\cV'$.
Increments $p^+$ (resp. decrements $p^-$) to a place $p$ are used as
before. For a type sequence $S=(\barp_1,p_1), \cdots,(\barp_n,p_n)$,
we write $\sset_1(S)$ for the set $\set{\barp_1,\cdots,\barp_n}$.

$\cV'$ is defined as:
\begin{itemize}
	\item $Q'=Q \times G \times H \cup \set{q_f}$,
	\item $P'=P \cup \bbq\times\set{1,\cdots,N}\times\bbp \cup \bbq$,
	\item $E'$ contains the following edges:
	\begin{enumerate}
		\item for each $q \autstep[\delta] q'$ in $E$, for each $\wmap \in
		G$ and $\eta \in H$ add\\
		$(q,\wmap,\eta) \autstep[\delta] (q',\wmap,\eta)$ to $E'$,
		\item for each $q \autstep[\newb(\barq_1,L)] q'$ in $E$,
		\begin{enumerate}
			\item for each $\wmap \in
				G$ and $\eta \in H$ add $(q,\wmap,\eta) \autstep[\barq_1^+] 
				(q',\wmap,\eta)$ to $E'$,
			\item for each $\eta \in H$ and for each $\wmap \in G$ such that
					$\wmap(\barq_1)<N$, add\\
					$(q,\wmap,\eta) \autstep[\Delta_{\wmap,\barq_1}] 
					(q',\wmap',\eta')$ to $E'$
					where\\
					 $\Delta_{\wmap,\barq_1}$ adds a vector from $L_{\barq_1}
					(\bbp \leftarrow \{\barq_1\}\times\{\wmap
					(\barq_1)+1\}\times\bbp)$,\\
					 the vector
					$\wmap'$ satisfies $\wmap'(\barq_1)=\wmap(\barq_1)+1$ and
					$\wmap'(\barq)=\wmap(\barq)$ for $\barq \neq \barq_1$, and\\
					$\eta'$ satisfies $\eta'(\barq_1,\wmap(\barq_1)+1)=\barq_1$ and
					for all $(\barq,i) \neq (\barq_1,\wmap(\barq_1)+1)$ we have
					$\eta'(\barq,i)=\eta(\barq,i)$,
		\end{enumerate}
			\item for each $q \autstep[\deflateb(\barq,\barq',\barp,p)] q'$ in
			$E$,
				\begin{enumerate}
					\item for each $\wmap \in
				G$ and $\eta \in H$ add $(q,\wmap,\eta) \autstep[\barq'^+,\barq^-] 
				(q',\wmap,\eta)$ to $E'$, 
				\item for each $\wmap \in G$, for each $\eta \in H$ and $
				(\barq_1,i) \in \bbq \times \set{1,\cdots,N}$ such that $\eta
				(\barq_1,i)=\barq$,
				\begin{enumerate}
					\item if $M_{\barq'}.i=M_{\barq}.i+1$ then add\\
					$ (q,\wmap,\eta) \autstep[\Delta_{\barq_1,i,\barp,p}] 
				(q',\wmap,\eta')$
				to $E'$
				where\\
				 $\Delta_{\barq_1,i,\barp,p}$ adds a vector $\vmap^-
				 \odot\vmap(p)$ corresponding to some $\vmap \in L_
				 {\barq_1}~\hspace{-0.2 cm}\downharpoonright_{\barp}(\barp
				 \leftarrow \{\barq_1\} \times \{i\} \times \barp)$
				and\\
				$\eta'$ satisfies $\eta'(\barq_1,i)=\barq'$ and $\eta'
				(\barq,j)=\eta(\barq,j)$ for all $(\barq,j) \neq (\barq_1,i)$ ,
				\item else add\\
				$ (q,\wmap,\eta) \autstep[\bfz] 
				(q',\wmap,\eta')$
				to $E'$
				where\\
				$\eta'$ satisfies $\eta'(\barq_1,i)=\barq'$ and $\eta'
				(\barq_2,j)=\eta(\barq_2,j)$ for all $(\barq_2,j) \neq 
				(\barq_1,i)$ ,
				\end{enumerate}
				
				\end{enumerate}
			\item for each $q \autstep[\burstb(\barq)] q'$ in
			$E$,
				\begin{enumerate}
					\item for each $\wmap \in
				G$ and $\eta \in H$ add $(q,\wmap,\eta) \autstep[\barq^-] 
				(q',\wmap,\eta)$ to $E'$, 
				\item for each $\wmap \in G$, for each $\eta \in H$ and $
				(\barq_1,i) \in \bbq \times \set{1,\cdots,N}$ such that $\eta
				(\barq_1,i)=\barq$, add\\
				$ (q,\wmap,\eta) \autstep[-\Delta_{\barq_1,i,(\bbp \setminus \sset_1
				(M_
				{\barq_1}.S))}]
				(q',\wmap,\eta')$ to $E'$
				where \\
				 $\Delta_{\barq_1,i,(\bbp \setminus \sset_1(M_{\barq_1}.S))}$ adds
				 a
				 vector from $L_{\barq_1}~\hspace{-0.2cm}\downharpoonright_{(\bbp
				 \setminus
				 \sset_1(M_{\barq_1}.S))}((\bbp \setminus \sset_1(M_{\barq_1}.S))
				 \leftarrow \{\barq_1\} \times \{i\} \times (\bbp \setminus \sset_1(M_
				 {\barq_1}.S) ))$
				and\\
				$\eta'$ satisfies $\eta'(\barq_1,i)=\dagger$ and $\eta'
				(\barq_2,j)=\eta(\barq_2,j)$ for all $(\barq_2,j) \neq 
				(\barq_1,i)$,
				\end{enumerate}
			\item for each $\eta$ such that for all $(\barq,i) \in \bbq \times
			\set{1,\cdots,N}$ we have $\eta(\barq,i)=\#$ or $\eta
			(\barq,i)=\dagger$, add\\
			$(s_2.q,\wmap,\eta) \autstep[\bfz] q_f$.
	\end{enumerate}
\end{itemize}
The configuration $s'_1$ is given as $s'_1.q=(s_1.q,\bfz,\eta_0)$
where $\eta_0(\barq,i)=\#$ for all $(\barq,i) \in \bbq \times 
\set{1,\cdots,N}$, and $s'_1.\mmap
(p)=s_1.\mmap(p)$ for $p \in P$ and $s'_1.\mmap(p)=0$ otherwise. The
configuration $s'_2$ is given as $s'_2.q=q_f$, and $s'_2.\mmap
(p)=s_2.\mmap(p)$ for $p \in P$ and $s'_2.\mmap(p)=0$ otherwise.

$\cV'$ contains extra places $\bbq$ which are used to store the number
of balloons of a given state $\barq$ which were created empty and
remain empty throughout a run, as well as places $\bbq\times\set{1,\cdots,N}\times\bbp$
which are used to store the contents of non-empty balloons created
during an $N$-balloon-bounded run of $\cV$. The $G$-component of the
global state is
used to track the total number of non-empty balloons created while the
$H$-component tracks the state changes of every balloon which was
created non-empty. The initial configuration $s'_1$ of $\cV$ contains
$s_1$ in its $P$-places while other places remain empty. The
$G$-component of the global state is set to the vector $\bfz$ since 
no non-empty balloons have been created and the $H$-component is set
to the function $\eta_0$ which is the constant function taking value
$\#$. The edges in (1) are used to simulate the place edges of $\cV$
and hence do not modify either the $G$ or the $H$-component.
Similarly, the simulation of operations on balloons which were created
empty use edges from $(2 a),(3 a)$ and $(4 a)$ which also do not
modify either the $G$ or the $H$-component and simply increment or
decrement places from $\bbq$ as appropriate.

 An inflate
balloon operation in the run $\rho$ of $\cV$ which creates the $i^
{th}$
non-empty balloon (upto the bound $N$) with state $\barq$ is simulated
by using edges
from $(2 b)$ which
increment $\wmap \in G$ from $i-1$ to $i$ and set the value of $\eta
(\barq,i)$ to $\barq$ from its original value of $\#$, while
simultaneously populating the places in $(\barq,i,\bbp)$ with a
vector from $L_{\barq}$. Deflate operations on non-empty balloons
similarly are simulated by updating $\eta$ and removing tokens from
the appropriate place using the edges in $(3 b)$. Here, we make a
distinction between the first time that a place $\barp$ undergoes a
deflate, in which case $(3 b i)$ is used and for every later deflate
on the place $\barp$, we use $(3 b ii)$ which does not transfer any
tokens. This implies that that for every non-empty balloon created,
there is only one opportunity for $\cV'$ to transfer tokens away from
places in $\bbq\times\set{1,\cdots,N}\times\bbp$.

The edges in $(4 b)$ simulate a burst operation by allowing the
removal of tokens from places which have not been transferred away,
while setting the value of $\eta$ to $\dagger$ at the appropriate
place to indicate that the balloon has been burst.

Finally, we only allow a move to the state $q_f$ if the $H$-component
indicates that all non-empty balloons produced have been burst. 
A correct simulation ensures that no tokens remain in any of the
places $\bbq\times\set{1,\cdots,N}\times\bbp \cup \bbq$. Thus
$\cV'$ can reach $s'_2$ from $s'_1$ if $\cV$ can reach $s_2$
from $s_1$. 
Conversely, we note that the sequence of deflates on each non-empty
balloon is tracked in the global state and as mentioned,
there is only one opportunity for $\cV'$ to correctly transfer tokens
from the places $\bbq\times\set{1,\cdots,N}\times\bbp$ for each
non-empty balloon. Thus, from any run $s'_1 \autstep[*] s'_2$ of
$\cV'$ we obtain a run $s_1 \autstep[*] s_2$ of $\cV$.

\section{Proofs from Section~\ref{starvation}}\label{appendix-starvation}

\subsection{Proof of Lemma~\ref{starving-consistent}}\label{appendix-proof-starving-consistent}
\starvingConsistent*
\begin{proof}
  Clearly, a starving run of a DCPS has to be consistent. Suppose a
  DCPS has a consistent run $\rho$.  Then there are configurations
  $c_1,c_2,\ldots$ and thread executions $e_1,e_2,\ldots$ that produce
  $\mmap_1,\mmap_2,\ldots$ so that the consistency conditions are met.
  In particular, for $V_p=\{\mmap_j\mid \text{$e_j$ has type
    $t$}\}$, the tuple $\frakV=(V_t)_{t\in\cT}$ is
  $\frakS_{\cA}$-consistent. This means, there exists a stack content $w\in\Gamma^*$ such that
  we can choose for each $j\ge 1$, a
  thread execution $e'_j$ (not necessarily occurring in $\rho$) that produces a multiset $\mmap'_j$ such that:
  \begin{enumerate}
  \item $e'_j$ has the same type as $e_j$
  \item $\mmap_j\le_1\mmap'_j$, and
  \item $e'_j$ arrives in $w$ after executing $i$ segments.
  \end{enumerate}
  The idea is now to replace executions $e_j$ with $e'_j$ to obtain a
  starving run. Replacing any individual $e_j$ by $e'_j$ would yield
  an infinite run, because $e'_j$ has the same type (so that state
  transition would still be possible), and $\mmap'_j\ge_1\mmap_j$, so
  that the bag contents are supersets of the bag contents of the old
  run.

  However, the resulting run might not be progressive: $e'_j$ might
  spawn an additional thread $(\gamma,0)$ such that $(\gamma,0)$
  becomes active only finitely many times during $\rho$. In that case,
  the extra $(\gamma,0)$ would be in the bag forever, but never
  scheduled.

  To remedy this, we only replace all but finitely many of the
  $e_j$. Let $\Gamma_{\mathsf{inf}}$ be the set of those
  $\gamma\in\Gamma$ such that $(\gamma,0)$ is scheduled infinitely
  many times during $\rho$. Since $\rho$ is progressive, we know that for
  some $k\ge 1$, each $e_j$ with $k\ge j$ only spawns threads from
  $\Gamma_{\mathsf{inf}}$. Now since $\mmap'_j\ge_1\mmap_j$, this
  implies that for $j\ge k$, $\mmap'_j$ only contains threads $\gamma$
  (together with the context-switch number when they are produced)
  such that $\gamma\in\Gamma_{\mathsf{inf}}$.

  Therefore, we obtain a new run $\rho'$ of our DCPS by replacing each
  $e_j$ by $e'_j$ for all $j\ge k$. Then, the resulting run is progressive,
  because the additionally spawned threads all belong to
  $\Gamma_{\mathsf{inf}}$. Moreover, $\rho'$ is starving.
\end{proof}
\subsection{Proof of Lemma~\ref{spawn-bounded-starving}}
\spawnBoundedStarving*
\begin{proof}
  If $\cA$ has a starving run $\rho$, then it also has a consistent
  run by \cref{starving-consistent}. Hence, there are a number
  $i\in[0,K]$, configurations $c_1,c_2,\ldots$ and thread executions
  $e_1,e_2,\ldots$ that produce $\mmap_1,\mmap_2,\ldots$ such that the
  conditions for a consistent run are satisfied.  In particular, with
  $V_t=\{\mmap_j\mid \text{$e_j$ has type $t$}\}$, the tuple
  $\frakV=(V_t)_{t\in\cT}$ is $\frakS_{\cA}$-consistent. Then, by definition of
  $\frakS_{\cA}$-consistency, the tuple $\frakm=\alpha_B(\frakV)$ is also
  $\frakS_{\cA}$-consistent.  Therefore, the run $\rho$ is $(i,\frakm)$-starving.

  Conversely, suppose $\rho$ is a shallow $(i,\frakm)$-starving run
  for some $i\in[0,K]$ and some $\frakS$-consistent tuple
  $\frakm\in\powerset{[0,B]^{\Lambda}}^{\cT}$. Then we have configurations
  $c_1,c_2,\ldots$, and thread executions $e_1,e_2,\ldots$ that
  produce $\mmap_1,\mmap_2,\ldots$ such that the conditions of
  $(i,\frakm)$-starvation are satisfied.

  Let $V_t=\{\mmap_j\mid \text{$e_j$ has type $t$}\}$ and
  $\frakV=(V_t)_{t\in\cT}$.  Since $\rho$ is shallow, we know that
  each $V_t$ is a finite set.  Moreover, since $\frakm=\alpha_B(\frakV)$ is
  $\frakS_{\cA}$-consistent, \cref{abstraction-consistency} tells us that
  $\frakV$ is $\frakS_{\cA}$-consistent as well. Thus, $\rho$ is
  consistent. Therefore $\cA$ also has a starving run by
  \cref{starving-consistent}.
\end{proof}

\subsection{Freezing $\DCPS$}

In this section we prove \cref{liveness-freezing}, which states that it is decidable whether a freezing $\DCPS$ $\cA$ has a $K$-context bounded progressive run, and if such a run exists then we can always assume it to be shallow. To this end we construct a $\DCPS$ $\cA$ that simulates $\cA$ in a way that preserves progressiveness and shallowness, while using at most $2K + 1$ context switches. The stated properties for freezing $\DCPS$ then follow from our results on $\DCPS$ (\cref{thm:fair-nterm}, \cref{lem:progDCPS}, and \cref{dcps-spawn-bounded}).

\subsubsection*{Idea}
The only differences between $\DCPS$ and freezing $\DCPS$ are the presence of frozen threads (marked with a top of stack symbol in $\Gamma^\freeze$) in configurations, and the rule {\sc Unfreeze}. To simulate the former we just add $\Gamma^\freeze$ to the stack alphabet and introduce a few new transition rules that spawn the initial frozen thread at the start of each run. To simulate {\sc Unfreeze} for a rule $\freezerule{g}{g'}{\gamma}{\gamma'}$, we follow these steps starting in global state $g$:
\begin{itemize}
  \item Resume a thread with $\gamma'$ as top of stack symbol, change it to $\gamma'^\freeze$ then swap out this thread.
  \item Resume a thread with $\gamma^\freeze$ as top of stack symbol, change it to $\gamma$ and go to $g'$ with this thread staying active.
\end{itemize}
While these steps change the stack symbols correctly and make the right thread active, there are two things of note here: Firstly, if $\gamma = \gamma'$ then the very thread swapped out at the end of the first step could be resumed at the start of the second step, which is undesired behavior. To remedy this, we add a second copy of $\Gamma^\freeze$ to our stack alphabet, which we call $\bar{\Gamma}^\freeze$. We modify the simulation so that each time a thread with top of stack symbol in $\Gamma^\freeze$ is swapped out in the first step, a thread with top of stack symbol in $\bar{\Gamma}^\freeze$ is resumed in the second step and vice-versa. Thus the two steps can no longer act on the very same thread, which fixes the issue.

Secondly, \emph{freezing} a thread in the first step causes it to make a context switch, which does not match the specification of {\sc Unfreeze}. However, \emph{unfreezing} a thread makes it active and therefore adds a context switch once the thread is swapped out again. Thus, a thread can be frozen and unfrozen a total number of $K + 1$ times each in the freezing $\DCPS$ $\cA$. In the $\DCPS$ $\cA'$ such a thread would then make $2K + 2$ context switches, with the last one happening right after it was active from being unfrozen the $(K + 1)$th time. This means that the highest context switch number it reaches while being active is $2K + 1$, which is exactly what we chose as the context switch bound for $\cA'$.

With this increased context switch bound, we need to make sure that threads do not make more than $K + 1$ context switches in $\cA'$, if they are resumed without being frozen. To this end we just artificially increase the context switch number by forcing an additional context switch every time a non-frozen thread is resumed. This mirrors the extra context switch caused by freezing, and therefore makes these threads behave correctly with a context switch bound of $2K + 1$ as well.

\subsubsection*{Formal construction}
Let $\cA = (G, \Gamma, \Delta, g_0, \gamma_0, \gamma_f)$ with $\Delta = \Deltac \cup \Deltai \cup \Deltar \cup \Deltat \cup \Deltau$ be a freezing $\DCPS$. We construct the $\DCPS$ $\cA' = (G', \Gamma', \Delta', g_0', \gamma_0')$, where:
\begin{itemize}
  \item $\Gamma' = \{\gamma_0'\} \cup \Gamma \cup \bar{\Gamma} \cup \Gamma^\freeze \cup \bar{\Gamma}^\freeze$ with $\bar{\Gamma} = \{\bar{\gamma} | \gamma \in \Gamma\}$, $\Gamma^\freeze = \{\gamma^\freeze | \gamma \in \Gamma\}$, and $\bar{\Gamma}^\freeze = \{\bar{\gamma}^\freeze | \gamma \in \Gamma\}$.
  \item $G' = G \cup \bar{G} \cup (G \times \Gamma^\freeze) \cup (G \times \bar{\Gamma}^\freeze)$ with $\bar{G} = \{\bar{g} | g \in G\}$
  \item $\Delta' = \Deltac' \cup \Deltai' \cup \Deltar' \cup \Deltat'$ consists of the following transition rules:
  \begin{enumerate}
    \item For the initial configuration:
    \begin{enumerate}
      \item $g_0' \mapsto g_0' \lhd \gamma_0' \in \Deltar'$.
      \item $g_0'|\gamma_0' \hookrightarrow g_0'|\gamma_f^\freeze \triangleright \gamma_0 \in \Deltac'$.
      \item $g_0'|\gamma_f^\freeze \mapsto g_0|\gamma_f^\freeze \in \Deltai'$.
    \end{enumerate}
    \item For each rule $g_1 \mapsto g_2 \lhd \gamma \in \Deltar$:
    \begin{enumerate}
      \item $g_1 \mapsto \bar{g}_2 \lhd \gamma \in \Deltar'$.
      \item $\bar{g}_2|\gamma \hookrightarrow \bar{g}_2|\bar{\gamma} \in \Deltac'$.
      \item $\bar{g}_2|\bar{\gamma} \mapsto \bar{g}_2|\bar{\gamma} \in \Deltai'$.
      \item $\bar{g}_2 \mapsto g_2 \lhd \bar{\gamma} \in \Deltar'$.
      \item $g_2|\bar{\gamma} \hookrightarrow g_2|\gamma \in \Deltac'$.
    \end{enumerate}
    \item For each rule $\freezerule{g_1}{g_2}{\gamma_1}{\gamma_2} \in \Deltau$:
    \begin{enumerate}
      \item $g_1 \mapsto (g_2,\gamma_1^\freeze) \lhd \gamma_2 \in \Deltar'$.
      \item $(g_2,\gamma_1^\freeze)|\gamma_2 \hookrightarrow (g_2,\gamma_1^\freeze)|\gamma_2 \in \Deltac'$.
      \item $(g_2,\gamma_1^\freeze)|\gamma_2 \mapsto (g_2,\gamma_1^\freeze)|\bar{\gamma}_2^\freeze \in \Deltai'$.
      \item $(g_2,\gamma_1^\freeze) \mapsto (g_2,\gamma_1^\freeze) \lhd \gamma_1^\freeze \in \Deltar'$.
      \item $(g_2,\gamma_1^\freeze)|\gamma_1^\freeze \hookrightarrow g_2|\gamma_1 \in \Deltac'$.
    \end{enumerate}
    \item Furthermore, for each rule $\freezerule{g_1}{g_2}{\gamma_1}{\gamma_2} \in \Deltau$:
    \begin{enumerate}
      \item $g_1 \mapsto (g_2,\bar{\gamma}_1^\freeze) \lhd \gamma_2 \in \Deltar'$.
      \item $(g_2,\bar{\gamma}_1^\freeze)|\gamma_2 \hookrightarrow (g_2,\bar{\gamma}_1^\freeze)|\gamma_2 \in \Deltac'$.
      \item $(g_2,\bar{\gamma}_1^\freeze)|\gamma_2 \mapsto (g_2,\bar{\gamma}_1^\freeze)|\gamma_2^\freeze \in \Deltai'$.
      \item $(g_2,\bar{\gamma}_1^\freeze) \mapsto (g_2,\bar{\gamma}_1^\freeze) \lhd \bar{\gamma}_1^\freeze \in \Deltar'$.
      \item $(g_2,\bar{\gamma}_1^\freeze)|\bar{\gamma}_1^\freeze \hookrightarrow g_2|\gamma_1 \in \Deltac'$.
    \end{enumerate}
    \item Finally, $\Delta'$ contains some unaltered rules of $\Delta$:
    \begin{enumerate}
      \item $r \in \Deltac'$ for each rule $r \in \Deltac$.
      \item $r \in \Deltai'$ for each rule $r \in \Deltai$.
      \item $r \in \Deltat'$ for each rule $r \in \Deltat$.
    \end{enumerate}
  \end{enumerate}
\end{itemize}

\begin{lemma}
  The freezing $\DCPS$ $\cA$ has a (shallow) progressive run that respects the context switch bound $K$ iff the $\DCPS$ $\cA'$ has a (shallow) progressive run that respects the context switch bound $2K + 1$.
\end{lemma}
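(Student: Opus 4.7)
My plan is to establish both directions by a step-for-step simulation in which each rule of $\cA'$ is identified as belonging to one of the gadgets (1)--(5), and each step of $\cA$ is expanded into a complete gadget invocation in $\cA'$. For the forward direction I would start from a progressive $K$-bounded run $\rho$ of $\cA$ and build $\rho'$ by replacing: the initial configuration setup with gadget (1), each pushdown/swap/terminate step with the inherited rule from gadget (5), each resume in $\Deltar$ with the five-step bar-refresh gadget (2), and each unfreeze in $\Deltau$ with gadget (3) or (4) selected by the current \emph{freezing colour}. The freezing colour is the invariant that the unique frozen thread of $\cA$ is simulated by a unique $\cA'$-thread whose top of stack lies in $\Gamma^\freeze \cup \bar\Gamma^\freeze$, and this colour alternates across consecutive unfreeze gadgets. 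A routine context-switch accounting shows that the extra swap introduced by each bar-refresh and the extra swap attached to each freeze event together contribute at most $K+1$ additional context switches per thread, which together with the at most $K$ swaps inherited from $\rho$ yields the bound $2K+1$ in $\rho'$. Shallowness is preserved because only the initial gadget (1) ever introduces a one-off auxiliary spawn, and progressiveness is preserved because every gadget has bounded length.

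For the backward direction, given a progressive $(2K+1)$-bounded run $\rho'$ of $\cA'$, I would extract the corresponding $\cA$-run by grouping $\cA'$-steps back into complete gadget invocations. The key structural observation is that every global state in $G'\setminus G$ -- namely the barred states $\bar g$ and the decorated states $(g,\gamma^\freeze)$ and $(g,\bar\gamma^\freeze)$ -- is \emph{trapping inside its gadget}: the only transition rules whose left-hand side contains such a state are the remaining rules of the same gadget. Consequently, once $\cA'$ enters a gadget-internal state, only that gadget can continue until control returns to a state in $G$, and so $\rho'$ decomposes uniquely into a sequence of complete gadget invocations, each matching exactly one step of a run $\rho$ of $\cA$. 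The delicate subtlety here is the unfreeze gadget: I must check that the thread freshly frozen by rule (3c)/(4c) and the thread resumed by (3d)/(4d) are distinct threads of $\cA$, even when the underlying rule has $\gamma_1=\gamma_2$. This is exactly what the colour alternation guarantees: the freshly frozen thread acquires top $\bar\gamma_2^\freeze$ while the one being unfrozen has top $\gamma_1^\freeze$, and these live in disjoint copies of $\Gamma^\freeze$, so they cannot be confused. Shallowness and progressiveness of the extracted run transfer directly from $\rho'$ since gadgets have bounded length.

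The main obstacle I expect is making the gadget-atomicity argument rigorous under the progressiveness hypothesis on $\rho'$. While the trapping states force the global-state trajectory to respect gadget boundaries, I still need to rule out that the scheduler is compelled by progressiveness to leave a gadget incomplete or that some unrelated thread is starved during a gadget's brief window. Since each gadget consists of at most five rule applications and involves only the threads whose top of stack is explicitly mentioned, any other ready thread in the bag is postponed by at most a constant number of steps, so progressiveness of the overall run is not jeopardised. Once atomicity is argued cleanly, the bidirectional simulation is tight, the colour invariant discharges the distinctness subtlety in the unfreeze gadgets, and the $K \leftrightarrow 2K+1$ context-switch and shallowness correspondences follow from the per-step accounting.
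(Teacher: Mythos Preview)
Your proposal is correct and follows essentially the same approach as the paper: a gadget-by-gadget simulation in the forward direction and a decomposition of $\rho'$ into complete gadget invocations in the backward direction, with the colour alternation between $\Gamma^\freeze$ and $\bar\Gamma^\freeze$ being exactly the mechanism the paper uses to ensure the freshly frozen and freshly unfrozen threads are distinct. The paper frames the backward decomposition slightly differently---rather than arguing via trapping global states, it notes the invariant that at most one thread ever has top of stack in each of $\bar\Gamma$, $\Gamma^\freeze$, $\bar\Gamma^\freeze$; you will want this invariant too, since your state-trapping claim alone does not pin down \emph{which} thread is resumed by rule (2d) when several resume rules of $\cA$ share the same target state $g_2$, and it also makes the context-switch bookkeeping cleaner (the paper records it as the explicit map $k\mapsto 2k$ for inactive threads and $k\mapsto 2k+1$ for active or frozen ones).
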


\begin{proof}
  From a given infinite run of $\cA$ we construct an infinite run of $\cA'$ in such a way that progressiveness and shallowness are preserved. Then we argue that this construction can also be done backwards, starting with a run of $\cA'$.
  
  Let $\rho$ be an infinite run of $\cA$. We construct the infinite run $\rho'$ of $\cA'$ by induction on the length of a prefix of $\rho$. Say $\rho$ reaches the configuration $c = \langle g, t_a, \mmap \rangle$ after finitely many steps. Intuitively we want $\rho'$ to reach a configuration $c'$ that contains threads with the same stack contents as $c$, except for the top of stack symbol of the frozen thread, but with different context switch numbers. Any thread with cs-number $k$ in $c$ has cs-number $2k + 1$ in $c'$ if it was an active or frozen thread, and cs-number $2k$ otherwise. Formally $c' = \langle g, t_a', \mmap \rangle$, where:
  \begin{itemize}
    \item If $t_a = \#$ then $t_a' = \#$. If $t_a = (w, k)$ then $t_a' = (w, 2k + 1)$.
    \item For all $k \in \nats$, $w \in \Gamma^*$: $\mmap'(w,2k) = \mmap(w,k)$.
    \item For all $k \in \nats$, $w \in \Gamma^*$, $\gamma \in \Gamma$: $\mmap'(\gamma^\freeze w, 2k+1) = \mmap(\gamma^\freeze w, k)$ and $\mmap'(\bar{\gamma}^\freeze w, 2k+1) = 0$, or $\mmap'(\bar{\gamma}^\freeze w, 2k+1) = \mmap(\gamma^\freeze w, k)$ and $\mmap'(\gamma^\freeze w, 2k+1) = 0$.
    \item For all $k \in \nats$, $w \in \Gamma^*$, $\gamma \in \Gamma$: $\mmap'(w, 2k+1) = \mmap'(\gamma^\freeze w, 2k) = \mmap'(\bar{\gamma}^\freeze w, 2k) = 0$.
  \end{itemize}
  The second to last bullet point means that for a frozen thread in $c$ with top of stack symbol $\gamma^\freeze$, the corresponding thread in $c'$ has either $\gamma^\freeze$ or $\bar{\gamma}^\freeze$ as its top of stack symbol. The last bullet point simply ensures that $c'$ contains no additional threads (that do not correspond to a thread of $c$).
  
  Now to construct $\rho'$ inductively. If $c = \langle g_0, \#, \multi{(\gamma_0, 0)} + \multi{(\gamma_f^\freeze, 0)} \rangle$ is the initial configuration of $\cA$ then we use the following transitions in $\rho'$ starting with the initial configuration of $\cA'$:
  \begin{align*}
    \langle g_0', \#, \multi{(\gamma_0', 0)} \rangle &\xmapsto{\,\text{1a}\,} \langle g_0', (\gamma_0',0), \emptyset \rangle\\
    &\xrightarrow{\text{1b}} \langle g_0', (\gamma_0',0), \multi{(\gamma_0, 0)} \rangle\\
    &\xmapsto{\,\text{1c}\,} \langle g_0, \#, \multi{(\gamma_0, 0)} + \multi{(\gamma_f^\freeze, 1)} \rangle = c'.
  \end{align*}
  Here and for all following transitions of $\cA'$ we label each arrow with the corresponding transition rule of $\Delta'$. Since we reach the correct $c'$, this concludes the base case.
  
  For the inductive case assume that we reach $c$ after at least one step on $\rho$. We make a case distinction regarding the transition $\widetilde{c} \rightarrow c$ or $\widetilde{c} \mapsto c$ of $\rho$ that reaches $c$:

  \begin{description}
    \item [Case $\langle g_1, \#, \mmap + \multi{(\gamma w, i)} \rangle \mapsto \langle g_2, (\gamma w, i), \mmap \rangle$ due to $g_1 \mapsto g_2 \lhd \gamma \in \Deltar$:]\ \\
    We add the following transitions to $\rho'$, starting from the configuration obtained by applying the induction hypothesis to $\widetilde{c}$:
    \begin{align*}
      \langle g_1, \#, \mmap' + \multi{(\gamma w, 2i)} \rangle &\xmapsto{\,\text{2a}\,} \langle \bar{g}_2, (\gamma w, 2i), \mmap' \rangle\\
      &\xrightarrow{\text{2b}} \langle \bar{g}_2, (\bar{\gamma} w, 2i), \mmap' \rangle\\
      &\xmapsto{\,\text{2c}\,} \langle \bar{g}_2, \#, \mmap' + \multi{(\bar{\gamma} w, 2i+1))} \rangle\\
      &\xmapsto{\,\text{2d}\,} \langle g_2, (\bar{\gamma} w, 2i+1), \mmap' \rangle\\
      &\xrightarrow{\text{2e}} \langle g_2, (\gamma w, 2i+1), \mmap' \rangle = c'.
    \end{align*}

    \item [Case $\langle g_1, \#, \mmap + \multi{\gamma_1^\freeze w_1, i} + \multi{\gamma_2 w_2, j} \rangle \mapsto \langle g_2, (\gamma_1 w_1, i), \mmap + \multi{\gamma_2^\freeze w_2, j} \rangle$]
    \item [due to $\freezerule{g_1}{g_2}{\gamma_1}{\gamma_2} \in \Deltau$:]\ \\
    We again start from the configuration obtained by applying the induction hypothesis to $\widetilde{c}$. However, we need to make a case distinction based on the top of stack symbol of the thread in this configuration that corresponds to the frozen thread of $\widetilde{c}$. If this symbol is in $\Gamma^\freeze$, we add the following transitions to $\rho'$:
    \begin{align*}
      &\langle g_1, \#, \mmap' + \multi{\gamma_1^\freeze w_1, 2i + 1} + \multi{(\gamma_2 w_2, 2j)} \rangle\\
      &\xmapsto{\,\text{3a}\,} \langle (g_2,\gamma_1^\freeze), (\gamma_2 w_2, 2j), \mmap' + \multi{\gamma_1^\freeze w_1, 2i + 1} \rangle\\
      &\xrightarrow{\text{3b}} \langle (g_2,\gamma_1^\freeze), (\bar{\gamma}_2^\freeze w_2, 2j), \mmap' + \multi{\gamma_1^\freeze w_1, 2i + 1} \rangle\\
      &\xmapsto{\,\text{3c}\,} \langle (g_2,\gamma_1^\freeze), \#, \mmap' + \multi{\gamma_1^\freeze w_1, 2i + 1} + \multi{(\bar{\gamma}_2^\freeze w_2, 2j + 1)} \rangle\\
      &\xmapsto{\,\text{3d}\,} \langle (g_2,\gamma_1^\freeze), (\gamma_1^\freeze w_1, 2i + 1), \mmap' + \multi{(\bar{\gamma}_2^\freeze w_2, 2j + 1)} \rangle\\
      &\xrightarrow{\text{3e}} \langle g_2, (\gamma_1 w_1, 2i + 1), \mmap' + \multi{(\bar{\gamma}_2^\freeze w_2, 2j + 1)} \rangle = c'.
    \end{align*}
    Otherwise, if the symbol is in $\bar{\Gamma}^\freeze$, we add the following transitions to $\rho'$:
    \begin{align*}
      &\langle g_1, \#, \mmap' + \multi{\bar{\gamma}_1^\freeze w_1, 2i + 1} + \multi{(\gamma_2 w_2, 2j)} \rangle\\
      &\xmapsto{\,\text{4a}\,} \langle (g_2,\bar{\gamma}_1^\freeze), (\gamma_2 w_2, 2j), \mmap' + \multi{\bar{\gamma}_1^\freeze w_1, 2i + 1} \rangle\\
      &\xrightarrow{\text{4b}} \langle (g_2,\bar{\gamma}_1^\freeze), (\gamma_2^\freeze w_2, 2j), \mmap' + \multi{\bar{\gamma}_1^\freeze w_1, 2i + 1} \rangle\\
      &\xmapsto{\,\text{4c}\,} \langle (g_2,\bar{\gamma}_1^\freeze), \#, \mmap' + \multi{\bar{\gamma}_1^\freeze w_1, 2i + 1} + \multi{(\gamma_2^\freeze w_2, 2j + 1)} \rangle\\
      &\xmapsto{\,\text{4d}\,} \langle (g_2,\bar{\gamma}_1^\freeze), (\bar{\gamma}_1^\freeze w_1, 2i + 1), \mmap' + \multi{(\gamma_2^\freeze w_2, 2j + 1)} \rangle\\
      &\xrightarrow{\text{4e}} \langle g_2, (\gamma_1 w_1, 2i + 1), \mmap' + \multi{(\gamma_2^\freeze w_2, 2j + 1)} \rangle = c'.
    \end{align*}
  \end{description}
  The remaining four cases use the same transition rules for both $\rho$ and $\rho'$, as the rules defined in (5) are ones already present in $\Delta$:
  \begin{description}
    \item [Case $\langle g_1, (\gamma w,i), \mmap \rangle \rightarrow \langle g_2, (w_2 w_1,i), \mmap \rangle$ due to $g_1|\gamma \hookrightarrow g_2|w_2 \in \Deltac$:]%
    \[\langle g_1, (\gamma w,2i+1), \mmap' \rangle \xrightarrow{\text{5a}} \langle g_2, (w_2 w_1,2i+1), \mmap' \rangle = c'.\]
    
    \item [Case $\langle g_1, (\gamma_1 w_1,i), \mmap \rangle \rightarrow \langle g_2, (w_2 w_1,i), \mmap + \multi{(\gamma_2,0)} \rangle$ due to $g_1|\gamma_1 \hookrightarrow g_2|w_2 \triangleright \gamma_2 \in \Deltac$:]%
    \[\langle g_1, (\gamma_1 w_1,2i+1), \mmap' \rangle \xrightarrow{\text{5a}} \langle g_2, (w_2 w_1,2i+1), \mmap' + \multi{(\gamma_2,0)} \rangle = c'.\]

    \item [Case $\langle g_1, (\gamma w_1, i), \mmap \rangle \mapsto \langle g_2, \#, \mmap + \multi{(w_2 w_1, i+1)} \rangle$ due to $g_1|\gamma \mapsto g_2|w_2 \in \Deltai$:]%
    \[\langle g_1, (\gamma w_1, 2i+1), \mmap' \rangle \xmapsto{\,\text{5b}\,} \langle g_2, \#, \mmap' + \multi{(w_2 w_1, 2i+2)} \rangle = c'.\]
    
    \item [Case $\langle g_1, (\varepsilon, i), \mmap \rangle \mapsto \langle g_2, \#, \mmap \rangle$ due to $g_1 \mapsto g_2 \in \Deltat$:]%
    \[\langle g_1, (\varepsilon, 2i+1), \mmap' \rangle \xmapsto{\,\text{5c}\,} \langle g_2, \#, \mmap' \rangle = c'.\]
  \end{description}
  Each $c'$ reached in any of the cases correctly corresponds to $c$ as described above. Spawn boundedness is equivalent for $\rho$ and $\rho'$ since corresponding segments where a thread is active make the same amount of spawns in both runs. Progressiveness is also equivalent: Every transition that makes a thread active in $\rho$ (either by resuming or unfreezing) results in a corresponding thread becoming active in $\rho'$. Conversely every two resume transitions due to rules from (2) in $\rho'$ correspond to a single resume transition in $\rho$, and each resume transition due to (3d) or (4d) corresponds to an unfreezing transition in $\rho$. Resume transitions due to (3a) or (4a) cause a thread to gain a top of stack symbol from $\Gamma^\freeze \cup \bar{\Gamma}^\freeze$, which can be the case for only one thread at a time. Thus, if $\rho'$ is progressive, this very thread has to be resumed, causing a later resume transition due to (3d) or (4d), which we already discussed. Finally, termination occurs at context switch number $k$ in $\rho$ iff it occurs at $2k + 1$ in $\rho'$, which matches the relationship between the two context switch bounds for these runs.
  
  For the other direction we start with an infinite run $\rho'$ of $\cA'$. Observe that $\rho'$ decomposes into infixes that have the same form as the transition sequences obtained during the construction for the other direction. This is because for each of the alphabets $\bar{\Gamma}$, $\Gamma^\freeze$, and $\bar{\Gamma}^\freeze$ there is always at most one thread that has a top of stack symbol in that alphabet; which causes the transition rules defined for $\cA'$ to not allow for any other types of behavior. Thus, we can do the previous construction backwards and obtain a run $\rho$ of $\cA$ that has equivalent progressiveness and shallowness properties when compared to $\rho'$.
\end{proof}

\subsection{Detailed construction of $\cA_{(i,\fraku)}$}\label{appendix-starvation-detailed-construction}
We construct a freezing DCPS
$\cA_{(i,\fraku)}=(G',\Gamma',\Delta',g'_0,\gamma'_0,\gamma_\dagger)$.  Here,
$g'_0$ is a fresh state and $\gamma'_0$ are fresh stack symbols, which
are both used for initialization. Moreover, $\gamma_\dagger$ is a fresh stack
symbol that will be the top of stack of the initially frozen thread.

In addition to the stack of a thread in $\cA$, a thread in
$\cA_{(i,\fraku)}$ tracks some extra information
$(t,j,\bar{\mmap},\bar{\nmap})$, where
\begin{itemize}
\item $t$ is the type of the current thread execution,
\item $j$ is the number of
  segments that have been completed,
\item $\bar{\mmap}$ is the guess for $\alpha_B(\mmap)$,
  where $\mmap$ is the production of the thread execution,
\item $\bar{\nmap}$ is $\alpha_B(\nmap)$, where
  $\nmap\in\multiset{\Lambda}$ is the multiset that has been produced
  so far.
\end{itemize}
When a thread is inactive, this extra information is stored on its top
of stack symbol.  For this, we need the new alphabet
$\Gamma'=\Gamma\cup\tilde{\Gamma}\cup\{\gamma'_0,\gamma_\dagger\}$, where
\[ \tilde{\Gamma}=\{(\gamma,t,j,\bar{\mmap},\bar{\nmap}) \mid \gamma\in\Gamma,~\text{$t$ is a type},~\bar{\mmap},\bar{\nmap}\in[0,B]^{\Lambda} \}. \]
While a thread is active, this extra information is stored in the
global state.  This makes it easier to update it, e.g.\ when the stack
is popped. Moreover, we need a global state $\widehat{g}$ for each
$g$, which enforce that the information is transfered from the stack
to the global state. In order to execute the initially frozen thread with top-of-stack $\gamma_\dagger$,
we need global states $g^{\dagger,j}$ for $g\in G$ and $j\in[0,K]$.

Finally, we need special global states $g'_j$ for $j\in[0,K]$ to
execute an initial helper thread with top-of-stack $\gamma'_0$. Thus,
we have
$G'=G\cup \tilde{G}\cup \widehat{G}\cup\{g'_j,g^{\dagger,j}\mid
j\in[0,K]\}$, where
\begin{align*}
  \tilde{G}&=\{(g,t,j,\bar{\mmap},\bar{\nmap}) \mid g\in G,~\text{$t$ is a type},~j\in[0,K],~\bar{\mmap},\bar{\nmap}\in[0,B]^{\Lambda}\}, \\
  \widehat{G}&=\{\widehat{g}\mid g\in G\}.
\end{align*}
We now describe the transition rules of $\cA_{(i,\fraku)}$.

\subsubsection*{Rules for initialization}
We begin the description of a few rules that serve to initialize our
freezing DCPS. The initial configuration of our freezing DCPS is
$\langle g'_0,\bot,\multi{(\gamma'_0,0),
  (\gamma_\dagger^\freeze,0)}\rangle$. Since in our simulation, we
need that each thread in the bag is already annotated with its extra
information, we use a helper thread with top-of-stack $\gamma'_0$ to (i)~spawn a thread simulating
$\gamma_0$ and also (ii)~guess its extra information. Thus we have a
resume rule
\[ g'_0\mapsto g'_1\lhd \gamma'_0 \]
to resume $(\gamma'_0,0)$. Then in this new thread, we spawn $(\gamma_0,0)$, but with extra information. Thus, we have a creation rule
\[ g'_0|\gamma'_0\hookrightarrow g'_1|\gamma'_0\rhd (\gamma_0,t,0,\bar{\mmap},\emulti) \]
for every type $t$ and $\bar{\mmap}\in[0,B]^{\Lambda}$. After this, our helper thread has to
complete $K$ context-switches. This means, it has a interrupt rules
\[ g'_j|\gamma'_0\mapsto g'_{j+1}|\gamma'_0 \]
for $j\in[1,K-1]$ and a rule to remove $\gamma'_0$:
\[ g'_K|\gamma'_0\hookrightarrow g'_K|\varepsilon \]
Then, from $g'_K$ with an empty stack, we can only use the termination rule
\[ g'_K\mapsto g_0 \]
which enters the global state that corresponds to the initial global state $g_0$ of $\cA$.
\subsubsection*{Creation rules}
An internal action of a thread is simulated in the obious way. For each rule $g|\gamma\hookrightarrow g'|w'$, we have a rule
\[ (g,t,j,\bar{\mmap},\bar{\nmap})|\gamma\hookrightarrow (g,t,j,\bar{\mmap},\bar{\nmap})|w' \]
for every type $t$, $j\in[0,K]$, and $\bar{\mmap},\bar{\nmap}\in[0,B]^{\Lambda}$.

When our DCPS spawns a new thread, it immediately guesses its type
and production abstraction $\bar{\mmap}$. Moreover, it sets its
segment counter to $0$ and sets $\bar{\nmap}$ to
$\emulti$. Hence, for every creation rule
$g|\gamma\hookrightarrow g'|w'\rhd\gamma'$, we have a rule
\[ (g,t,j,\bar{\mmap},\bar{\nmap})|\gamma\hookrightarrow (g,t,j,\bar{\mmap},\alpha_B(\bar{\nmap}+\multi{(\gamma,j)}))|w'\rhd (\gamma',p',0,\bar{\mmap}',\emulti) \]
for types $t,t'$, $j\in[0,K]$, and
$\bar{\mmap},\bar{\mmap}',\bar{\nmap}\in[0,B]^{\Lambda}$. Note that
changing $\bar{\nmap}$ to $\alpha_B(\bar{\nmap}+\multi{(\gamma,j)})$
records that the current thread has spawned a thread $\gamma$ in segment $j$.
Note that we perform this guessing of extra information with every
newly spawned thread, not just those that will be frozen and hence
yield the executions $e_1,e_2,\ldots$. Therefore, there is no
requirement here that $\bar{\mmap}$ belong to $U_p$ where
$\fraku=(U_t)_{t\in\cT}$.

\subsubsection*{Interruption rules}
When we interrupt a thread, then its extra information is transferred from the global state to the top of stack and the segment counter $j$ is incremented.
Thus, for every interrupt rule $g|\gamma\mapsto g'|w'$ of $\cA$, we write $w'=\gamma''w''$ (recall that $1\le |w'|\le 2$) and include rules
\[ (g,t,j,\bar{\mmap},\bar{\nmap})|\gamma\mapsto g|(\gamma'',t,j+1,\bar{\mmap},\bar{\nmap})w'' \]
for every type $t$, $j\in[0,K-1]$, and $\bar{\mmap},\bar{\nmap}\in[0,B]^{\Lambda}$.

\subsubsection*{Resumption rules}
In order to simulate a resumption rule $g\mapsto g'\lhd\gamma$, we resume some thread with $\gamma$ (and extra information) as top of stack. The transfer of the extra finformation cannot be done in the same step, so we have an additional state $\widehat{g'}$ in which this transfer is carried out. Hence, for every resumption rule $g\mapsto g'\lhd \gamma$, we have rules
\[ g\mapsto \widehat{g'}\lhd (\gamma,p,j,\bar{\mmap},\bar{\nmap}) \]
for each type $t$, $j\in[0,K]$, $\bar{\mmap},\bar{\nmap}\in[0,B]^{\Lambda}$. In $\widehat{g'}$, we then transfer the extra information into the global state. Thus, we have
\[ \widehat{g}|(\gamma,t,j,\bar{\mmap},\bar{\nmap})\hookrightarrow (g,t,j,\bar{\mmap},\bar{\nmap})|\gamma, \]
which exist for each $g\in G$, $\gamma\in\Gamma$, $\bar{\mmap},\bar{\nmap}\in[0,B]^{\Lambda}$, $t\in\cT$, and $j\in[0,K]$.

\subsubsection*{Termination rules}
When we terminate a thread, we check that the two components $\bar{\mmap}$ and $\bar{\nmap}$ in the extra information match. Hence, for each termination rule $g\mapsto g'$, we include a rule
\[ (g,t,j,\bar{\mmap},\bar{\mmap})\mapsto g' \]
for each type $t$, $j\in[0,K]$, and $\bar{\mmap}\in[0,B]^{\Lambda}$.
\subsubsection*{Unfreezing rules}
Using freezing and unfreezing, we make sure that there are thread
executions $e_1,e_2,\ldots$ that satisfy the conditions of a
$(i,\fraku)$-starving run.  This works as follows.  Initially, we have
the frozen thread $\gamma_\dagger$. To satisfy progressiveness, a run of
$\cA_{(i,\fraku)}$ must at some point unfreeze (and switch to)
$\gamma_\dagger$. During this unfreeze, we make sure hat there exists a
thread that can play the role of $e_1$.

Thus, to unfreeze $\gamma_\dagger$, we have to freeze a thread of some type
$t$ where $\bar{\mmap}$ belongs to $U_t$:
\[ g\mapsto g^{\dagger,0}\lhd \gamma_\dagger\mathbin{\freeze} (\gamma,t,i,\bar{\mmap},\bar{\nmap}) \]
for every $g\in G$, $t\in\cT$, $\bar{\mmap}\in U_p$. The state $g^{\dagger,0}$ is a copy of $g$
in which we can only complete the execution of the $\gamma_\dagger$ thread and then return to $g$.
This means, we have interrupt rules
\[ g^{\dagger,j}|\gamma_\dagger\mapsto g^{\dagger,j+1}|\gamma_\dagger \]
for $j\in[0,K-1]$ and $g\in G$, and resume rules
\[ g^{\dagger,j}|\gamma_\dagger\mapsto g^{\dagger,j}|\gamma_\dagger \]
for $j\in[1,K]$ and $g\in G$, a rule to empty the stack in the last segment:
\[ g^{\dagger,K}|\gamma_\dagger\hookrightarrow g^{\dagger,K}|\varepsilon \]
for $g\in G$ and finally a termination rule
\[ g^{\dagger,K}\mapsto g \]
so that the simulation of $\cA$ can continue.

After this, the new frozen thread with top of stack
$(\gamma,t,i,\bar{\mmap},\bar{\nmap})$ has to be resumed (and thus
unfrozen) at some point.  To make sure that at that point, there is a
thread that can play the role of $e_2$.  Therefore, to unfreeze (and
thus resume) a thread with top of stack
$(\gamma,t,i,\bar{\mmap},\bar{\nmap})$, we freeze a thread
$(\gamma',t',i,\bar{\mmap}',\bar{\nmap}')$ with
$\bar{\mmap}'\in U_{t'}$. Note that unfreezing always happens with
context-switch number $i$, because the executions $e_1,e_2,\ldots$
have to be in their $i$-th segment in the configurations
$c_1,c_2,\ldots$. Thus, we have
\[ g\mapsto \widehat{g'}\lhd (\gamma,t,i,\bar{\mmap},\bar{\nmap})\mathbin{\freeze} (\gamma',t',i,\bar{\mmap}',\bar{\nmap}') \]
for each resume rule $g\mapsto g'\lhd \gamma$, type $t$,
$\bar{\mmap}\in U_p$, and $\bar{\nmap}\in[0,B]^{\Lambda}$, provided
that $g$ is the state specified in $t$ to be entered in the $i$-th
segment.

\subsection{Proof of Lemma~\ref{stack-vector-rational}}\label{appendix-starvation-rational}
In this section, we prove \cref{stack-vector-rational}. It will be
convenient to use a slightly modified definition of pushdown automata
for this.

\paragraph{Pushdown automata with output}
If $\Gamma$ is an alphabet, we define
$\bar{\Gamma}=\{\bar{\gamma} \mid \gamma\in\Gamma\}$.  Moreover, if
$x=\bar{\gamma}$, then we define $\bar{x}=\gamma$. For a word
$v\in(\Gamma\cup\bar{\Gamma})^*$, $v=v_1\cdots v_n$,
$v_1,\ldots,v_n\in\Gamma\cup\bar{\Gamma}$, we set
$\bar{v}=\bar{v_n}\cdots\bar{v_1}$.  A \emph{pushdown automaton with
  output} is a tuple $\cA=(Q,\Gamma,\Lambda,E,q_0,q_f)$, where $Q$ is a
finite set of \emph{states}, $\Gamma$ is its \emph{stack alphabet},
$\Lambda$ is its \emph{output alphabet},
$E\subseteq Q\times(\Gamma\cup\bar{\Gamma}\cup\{\varepsilon\})\times
\multiset{\Lambda}\times Q$ is a finite set of \emph{edges}, $q_0\in Q$
is its \emph{initial state}, and $F\subseteq Q$ is its set of
\emph{final states}. A \emph{configuration} of $\cA$ is a triple
$(q,w,\mmap)$ with $q\in Q$, $w\in\Gamma^*$, and
$\mmap\in\multiset{\Lambda}$.  For configurations $(q,w,\mmap)$ and
$(q',w',\mmap')$, we write $(q,w,\mmap)\autstep(q',w',\mmap')$ if
there is an edge $(q,u,\nmap,q')$ in $\cA$ such that
$\mmap'=\mmap+\nmap$ and (i)~if $v=\varepsilon$, then $w'=w$, (ii)~if
$v\in\Gamma$, then $w'=wv$ and (iii)~if $v=\bar{\gamma}$ for
$\gamma\in\Gamma$, then $w=w'\gamma$. By $\autsteps$, we denote the
reflexive transitive closure of $\autstep$.

For a pushdown automata with output $\cA$ and a state $q$, we define
\[ S_{\cA,q} = \{(w,\mmap)\in\Gamma^*\times\multiset{\Lambda} \mid (q_0,\varepsilon,\vec 0) \autsteps (q,w,\mmap')\autsteps(q_f,\varepsilon,\mmap)~\text{for some $\mmap'\in\multiset{\Lambda}$} \} \]
In other words, $S_{\cA,q}$ collects those pairs $(w,\mmap)$ such that
$\cA$ has a run that visits the state $q$ with stack content $w$, and
the whole run outputs $\mmap$. Clearly, \cref{stack-vector-rational}
is a consequence of the following:
\begin{lemma}\label{pushdown-output-rational}
  Given a pushdown automaton with output $\cA$ and a state $q$, the
  set $S_{\cA,q}$ is effectively rational.
\end{lemma}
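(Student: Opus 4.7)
My plan is to generalize B\"uchi's classical construction showing the set of reachable stacks in a pushdown system is regular, extending it to track the cumulative output multiset. The idea is to decompose any accepting run of $\cA$ visiting $(q,w)$ into a build-up phase ending at $(q,w)$ and a tear-down phase going back to $(q_f,\varepsilon)$, and to recognize both phases simultaneously in a finite automaton $\cM$ over $\Gamma^* \times \multiset{\Lambda}$ by pairing each push of a stack symbol in $w$ with its matching pop later in the run.

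First, for every pair of states $p, p' \in Q$, I would compute
\[ T_{p,p'} = \bigl\{\mmap \in \multiset{\Lambda} \bigm| (p,\varepsilon,\vec 0) \autsteps (p',\varepsilon,\mmap)\bigr\}. \]
Because these runs never drop below the empty stack, the set of output strings obtained along them forms a context-free language when $\cA$ is viewed as an ordinary PDA with input alphabet $\Lambda$. Hence $T_{p,p'}$ is semilinear by \cref{th:parikh}, and is therefore a rational subset of $\multiset{\Lambda}$. For each pair $(p,p')$ I would effectively build a finite automaton $\cG_{p,p'}$ over $\multiset{\Lambda}$ accepting exactly $T_{p,p'}$; these summary gadgets will be reused throughout.

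Next, I construct $\cM$ with state set $Q \times Q$, initial state $(q_0, q_f)$, and sole final state $(q, q)$; the second component tracks the tear-down state viewed as running backward from $q_f$. Its edges are of three kinds. For \emph{push-pop pairs}: for every push edge $(p_A, \gamma, \alpha, p'_A)$ and every pop edge $(p_B, \bar\gamma, \beta, p'_B)$ of $\cA$ on the same symbol $\gamma$, I add an edge $(p_A, p'_B) \autstep[\gamma \mid \alpha+\beta] (p'_A, p_B)$ which outputs $\gamma$ to the word component and $\alpha+\beta$ to the multiset. For \emph{build-up summaries}: for every $r \in Q$, I insert a copy of $\cG_{p,p'}$ connecting $(p,r)$ to $(p',r)$, contributing only to the multiset. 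For \emph{tear-down summaries}: symmetrically, for every $p \in Q$, I insert a copy of $\cG_{r',r}$ connecting $(p,r)$ to $(p,r')$. The forward direction of correctness is direct: any accepting $\cA$-run through $(q,w)$ decomposes uniquely along the persistent stack symbols $w=\gamma_1\cdots\gamma_n$ into alternating summaries and pushes (build-up) and summaries and pops in reverse order (tear-down), and reading these off yields an accepting $\cM$-run on $(w,\mmap)$.

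The main obstacle I expect is the reconstruction direction: from an accepting $\cM$-run on $(w,\mmap)$ I must produce an actual, temporally ordered $\cA$-run visiting $(q,w)$ with total output $\mmap$. In $\cM$ the build-up and tear-down summary transitions may be fired in any interleaved order, whereas in an $\cA$-run every build-up summary must precede the visit to $(q,w)$ and every tear-down summary must follow it. Since each summary multiset is realized by a concrete no-net-change $\cA$-computation whose behavior is independent of the stack beneath and of the other summaries, and since $\mmap$ is merely the sum of all contributions, any interleaving in $\cM$ can be rearranged into the canonical temporal order without changing $w$ or $\mmap$. Once this rearrangement is made rigorous, the matching between push-pop edges fired by $\cM$ in the order $\gamma_1, \ldots, \gamma_n$ and the actual pushes and (reverse-order) pops of $\cA$ completes the proof.
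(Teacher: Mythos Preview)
Your proposal is correct and follows essentially the same idea as the paper: both exploit B\"uchi's persistent-push decomposition together with Parikh's theorem to replace stack-neutral subcomputations by semilinear summary gadgets $T_{p,p'}$ (the paper writes $K_{p,p'}$). The only structural difference is that the paper factors the problem as $S_{\cA,q}=I_{\cA,q}\otimes I_{\bar{\cA},q}$, proves each one-sided set $I_{\cA,q}$ rational separately (by gluing in the summary automata and then dropping all pop edges), and takes a product afterward; you instead build the product $Q\times Q$ automaton directly, synchronizing each persistent push with its matching pop in a single edge. The paper's modular route makes the reconstruction direction you worried about essentially trivial (no interleaving issue arises in the one-sided automaton), whereas your combined construction requires the rearrangement argument you sketched---which is fine, since build-up and tear-down summaries act on disjoint components and multiset addition is commutative.
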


In the proof of \cref{pushdown-output-rational}, it will be convenient
to argue about the dual pushdown automaton. If $\cA$ is a pushdown
automaton with output, then its \emph{dual automaton}, denoted
$\bar{\cA}$, is obtained from $\cA$ by changing each edge
$(p,u,\mmap,q)$ into $(q,\bar{u},\mmap,p)$, and switching the initial
and final state, $q_0$, and $q_f$. Moreover, we define
\[ I_{\cA,q}=\{(w,\mmap)\in\Gamma^*\times\multiset{\Lambda} \mid (q_0,\varepsilon)\autsteps (q,w,\mmap) \}, \]
which is a one-sided version of $S_{\cA,q}$: In the right component,
we only collect the multiset output until we reach $q$ and $w$.
However, using the dual automaton, we can construct $S_{\cA,q}$ from the sets
$I_{\cA,q}$. For subsets $S,T\subseteq\Gamma^*\times\multiset{\Lambda}$, we define
\[ S\otimes T = \{ (w,\mmap_1+\mmap_2)\in\Gamma^*\times\multiset{\Lambda} \mid (w,\mmap_1)\in S,~(w,\mmap_2)\in T\}. \]
Then clearly $S_{\cA,q} = I_{\cA,q}\otimes I_{\bar{\cA},q}$. The next lemma follows using
a simple product construction.
\begin{lemma}\label{rational-convolution}
  Given rational subsets
  $S,T\subseteq\Gamma^*\times\multiset{\Lambda}$, the set $S\otimes T$
  is effectively rational.
\end{lemma}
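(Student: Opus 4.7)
The plan is to use a straightforward product construction on automata, after first normalizing them so that every edge reads at most a single letter from $\Gamma$. Concretely, given an automaton $\cM=(Q,E,q_0,q_f)$ over $\Gamma^*\times\multiset{\Lambda}$, any edge $(p,u,\mmap,q)$ with $u=a_1\cdots a_n\in\Gamma^*$ and $n\ge 2$ can be replaced by a chain of fresh intermediate states connected by edges $(p,a_1,\emptyset,r_1),(r_1,a_2,\emptyset,r_2),\ldots,(r_{n-1},a_n,\mmap,q)$, where the full multiset $\mmap$ is emitted on the last edge (and zero on the others). Edges with $n=0$ (i.e.\ $u=\varepsilon$) remain as they are, and edges with $n=1$ are already in the desired form. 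Clearly the resulting automaton accepts exactly the same subset of $\Gamma^*\times\multiset{\Lambda}$ and every edge now carries a label in $(\Gamma\cup\{\varepsilon\})\times\multiset{\Lambda}$.

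Once $\cM_S=(Q_S,E_S,s_0,s_f)$ and $\cM_T=(Q_T,E_T,t_0,t_f)$ are both in this normal form, I define the product automaton $\cM=(Q_S\times Q_T,E,(s_0,t_0),(s_f,t_f))$ with the following edges:
\begin{itemize}
\item for each $a\in\Gamma$, each $(p,a,\mmap,p')\in E_S$ and each $(q,a,\nmap,q')\in E_T$, the synchronous edge $((p,q),a,\mmap+\nmap,(p',q'))$;
\item for each $(p,\varepsilon,\mmap,p')\in E_S$ and each $q\in Q_T$, the edge $((p,q),\varepsilon,\mmap,(p',q))$;
\item symmetrically, for each $(q,\varepsilon,\nmap,q')\in E_T$ and each $p\in Q_S$, the edge $((p,q),\varepsilon,\nmap,(p,q'))$.
\end{itemize}
The construction is clearly effective.

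Correctness is then shown in two directions by straightforward induction on the length of accepting runs. For the inclusion $L(\cM)\subseteq S\otimes T$: given an accepting run of $\cM$ on $(w,\mmap)$, projecting away the second (resp.\ first) component of the product states and keeping only those edges that changed the first (resp.\ second) component yields accepting runs of $\cM_S$ on some $(w,\mmap_1)$ and of $\cM_T$ on some $(w,\mmap_2)$ with $\mmap_1+\mmap_2=\mmap$, because synchronous edges both read the same letter and contribute their respective multiset summands. For the converse, given accepting runs of $\cM_S$ on $(w,\mmap_1)$ and of $\cM_T$ on $(w,\mmap_2)$, I interleave them: whenever both are about to read the same next letter of $w$, use a synchronous edge; otherwise fire the pending $\varepsilon$-edges of either automaton asynchronously. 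This produces an accepting run of $\cM$ on $(w,\mmap_1+\mmap_2)$.

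The only mildly delicate point is the asynchronous treatment of $\varepsilon$-edges in the interleaving argument, since $\varepsilon$-edges of the two automata can be fired in any order without affecting the accepted pair; this is routine once the normal form is in place. I do not expect any real obstacle, and the construction is polynomial in the sizes of $\cM_S$ and $\cM_T$.
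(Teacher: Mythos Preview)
Your proposal is correct and is exactly the ``simple product construction'' the paper alludes to without spelling out; the paper gives no further detail beyond that phrase. Your normalization step and the synchronous/asynchronous edge split are the standard way to make this precise, and nothing more is needed.
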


Because of $S_{\cA,q}=I_{\cA,q}\otimes I_{\bar{\cA},q}$,
\cref{pushdown-output-rational} is a direct consequence of the
following.
\begin{lemma}\label{pushdown-output-rational-oneside}
  Given $\cA$ and $q$, the set $I_{\cA,q}$ is effectively rational.
\end{lemma}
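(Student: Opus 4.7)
The plan is to adapt Büchi's classical proof that the set of reachable stacks of a PDA is regular, augmenting it with bookkeeping of the cumulative output multiset via Parikh's theorem. I will construct an automaton $\cM$ over $\Gamma^* \times \multiset{\Lambda}$ whose accepted language is exactly $I_{\cA,q}$: on the $\Gamma^*$ component it reads off the stack content reached, and on the $\multiset{\Lambda}$ component it accumulates the total output produced along the way.

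The starting observation is that any run of $\cA$ from $(q_0,\varepsilon,\emptyset)$ to $(q,w,\mmap)$ with $w=\gamma_1\cdots\gamma_n$ decomposes uniquely into commit-pushes interleaved with balanced sub-runs: the symbols $\gamma_1,\ldots,\gamma_n$ are pushed in order and never subsequently popped, while the portions of the run between (and after) these commits form balanced sub-runs that do not touch the already-committed symbols. In particular, the final portion, from the push of $\gamma_n$ until reaching $q$, is itself a balanced sub-run above $\gamma_n$ ending at $q$. The key technical lemma is that for each triple $(p,p',\gamma) \in Q \times Q \times (\Gamma \cup \{\bot\})$, the set
\[ R_{p,p',\gamma} = \{\mmap \in \multiset{\Lambda} \mid (p,\gamma)\autsteps(p',\gamma,\mmap)\text{ without popping }\gamma\} \]
is the Parikh image of a context-free language over $\Lambda$, and hence effectively semi-linear by Parikh's theorem. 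This follows from the standard triple construction, which yields a context-free grammar whose derivations correspond to balanced sub-runs above $\gamma$, emitting their output letters as terminals.

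With these semi-linear sets in hand, $\cM$ has state set $Q \times (\Gamma \cup \{\bot\})$, initial state $(q_0,\bot)$, and each state $(q,\gamma)$ is final. For every $(p,p',\gamma)$, I glue in a finite-state sub-automaton (reading $\varepsilon$ on the $\Gamma^*$ side) that moves from $(p,\gamma)$ to $(p',\gamma)$ while outputting an arbitrary element of $R_{p,p',\gamma}$; such a sub-automaton exists because semi-linear sets are realizable by finite automata over $\multiset{\Lambda}$. For every push edge of $\cA$ that pushes $\gamma'$ and moves state $p$ to $p'$, I add a commit edge $(p,\gamma)\to(p',\gamma')$ reading $\gamma'$ on the $\Gamma^*$ side and producing no output, available for every current top $\gamma \in \Gamma \cup \{\bot\}$. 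Correctness is by a direct matching: runs of $\cA$ yield accepting paths of $\cM$ by replacing each maximal balanced sub-run with the corresponding sub-automaton, and accepting paths of $\cM$ expand back into runs of $\cA$ by instantiating each sub-automaton with a concrete balanced sub-run producing the chosen multiset.

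The main obstacle will be to formalize "balanced sub-run above $\gamma$" cleanly so that the triple construction and Parikh's theorem apply verbatim; in particular, one must correctly handle the edge case that the PDA may not pop $\gamma$ exactly when the stack above $\gamma$ is empty, while being free to perform arbitrary transitions (including pops) higher up. Once this is set up, constructing the grammar for $R_{p,p',\gamma}$, invoking Parikh's theorem, and realizing the resulting semi-linear sets by finite automata over $\multiset{\Lambda}$ are routine, and their composition yields the desired effectively rational description of $I_{\cA,q}$.
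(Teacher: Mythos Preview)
Your approach is correct and essentially matches the paper's: Büchi's reachable-stacks argument plus Parikh's theorem for the accumulated output, with semi-linear ``balanced sub-run'' sets glued in as finite sub-automata. The paper streamlines in one place: your sets $R_{p,p',\gamma}$ do not actually depend on $\gamma$ (a run that never pops the bottom symbol is the same as a run from the empty stack to the empty stack), so the paper works with $K_{p,p'}=\{\mmap\mid (p,\varepsilon,\emptyset)\autsteps(p',\varepsilon,\mmap)\}$ and keeps only $Q$ as the automaton's state set rather than $Q\times(\Gamma\cup\{\bot\})$. One small fix to your construction as written: the commit edges must carry the output multiset of the corresponding push edge of $\cA$, not ``no output''---or else first normalise $\cA$ so that every push transition has empty output.
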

\begin{proof}
  Roughly speaking, we do the following. For each pair of states
  $p,p'$, we look at the set $K_{p,p'}\subseteq\multiset{\Lambda}$ of
  outputs that can be produced in a computation that goes from $(p,w)$
  to $(p',w)$ without ever removing a letter from $w$. Note that this
  set does not depend on $w$.  Since $K_{p,p'}$ is semi-linear, there
  is a finite automaton $\cB_{p,p'}$ that can produce $K_{p,p'}$.  We
  glue in between $p$ and $p'$ in $\cA$. In the resulting pushdown
  automaton with output $\cA'$, we can then observe that every
  configuration is reachable without ever performing a pop operation.
  Therefore, removing all pop operations from $\cA'$ and making $q$
  the only final state yields an automaton over
  $\Gamma^*\times\multiset{\Lambda}$ that accepts $I_{\cA,q}$.

  Let us do this in detail. Let $\cA=(Q,\Gamma,\Lambda,E,q_0,q_f)$ be a
  pushdown automaton with output. For each pair of states $p,p'\in Q$,
  we define
  \[ K_{p,p'}=\{\mmap\in\multiset{\Lambda} \mid
    (p,\varepsilon,\emulti)\autsteps (p',\varepsilon,\mmap) \}. \]
  It
  follows from Parikh's theorem that each set $K_{p,p'}$ is
  semi-linear. In particular, we can an automaton
  $\cB_{p,p'}$ over $\Gamma^*\times\multiset{\Lambda}$ that accepts $\{\varepsilon\}\times K_{p,p'}$.

  Let $\cA'$ be the pushdown automaton with output obtained from $\cA$
  by glueing in, between any pair $p,p'$ of states, the automaton
  $\cB_{p,p'}$. Observe that for any reachable configuration
  $(p,w,\mmap)$ of $\cA$, we can reach $(p,w,\mmap)$ in $\cA'$ without
  using pop transitions: If in a run there is a transition that pops
  some $\gamma$, we can replace the part of the run that pushes that
  $\gamma$, then performs other instructions, and finally pops
  $\gamma$, with a run in some $\cB_{p,p'}$. Conversely, any
  configuration reachable in $\cA'$ in a state that already exists in
  $Q$, is also reachable in $\cA$.

  Now let $\cA''$ be the pushdown automaton with output obtained from
  $\cA'$ by removing all pop transitions. According to our
  observation, $\cA''$ has the same set of reachable configurations in
  $Q$ as $\cA$. Since $\cA''$ has no pop transitions, it is in fact an
  automaton over $\Gamma^*\times\multiset{\Lambda}$. Hence, if we make $q$
  the final state, we obtain an automaton over
  $\Gamma^*\times\multiset{\Lambda}$ for the set $I_{\cA,q}$.
\end{proof}

\subsection{Proof of Lemma~\ref{extend-consistency}}\label{appendix-proof-extend-consistency}
\extendConsistency*
\begin{proof}
  Suppose the rational subsets are given by automata
  $\cA_1,\ldots,\cA_k$. By introducing intermediate states, we may
  clearly assume that every edge in these automata either reads a
  letter from $\Gamma$ or a singleton multiset from $\Lambda$, meaning every
  edge either belongs to $Q\times \Gamma\times\{\emulti\}\times Q$ or is of the form
  $(p,\varepsilon,\mmap,q)$ with states $p,q$ and $|\mmap|\le 1$.  Let $n$
  be an upper bound on the number of states of each $\cA_i$,
  $i\in[1,k]$.  Let $M$ be the bound from \cref{pump-dcl} and let
  $B=M(n+1)$.
  
  Now let $\frakV=(V_1,\ldots,V_k)$ be a tuple of subsets of
  $\multiset{\Lambda}$ that admits an $\frakS$-consistency witness
  $w\in\Gamma^*$. Moreover, let $\mmap\in \multiset{\Lambda}$ with
  $\alpha_B(\mmap)\in V_i$.  We prove the
  \lcnamecref{extend-consistency} by constructing a word
  $\bar{w}\in\Gamma^*$ with $w\le_\frakS \bar{w}$ such that
  $\mmap\in \wdcl{S_i}{\bar{w}}$. This implies the
  \lcnamecref{extend-consistency}, because $\bar{w}$ witnesses
  $\frakS$-consistency of the tuple $\frakV'=(V'_1,\ldots,V'_k)$ with
  $V'_i=V_i\cup\{\mmap\}$ and $V'_j=V_j$ for $j\ne i$.
 
  Since $\alpha_B(\mmap)\in V'_i$ and $w$ is a $\frakS$-consistency
  witness, we know that $\alpha_B(\mmap)\in \wdcl{S_i}{w}$. This
  means, there is a multiset $\mmap'\in\multiset{\Lambda}$ with
  $\mmap'\ge_1\alpha_B(\mmap)$ and $(w,\mmap')\in S_i$.

  Observe that if we had $\mmap'\ge_1\mmap$, we could just choose
  $w'=w$.  Moreover, in those coordinates $c\in\Lambda$ where
  $\mmap(c)<B$, we already know that $\mmap'(c)\ge\mmap(c)$,
  because $\mmap'\ge\alpha_B(\mmap)$.  For those coordinates $c$ with
  $\mmap(c)\ge B$, we will obtain $\bar{w}$ by pumping an infix in $w$.

  Let $c\in\Lambda$ with $\mmap(c)\ge B$. Then
  $\mmap'(c)\ge B$ and therefore the pair $(w,\mmap')$ is accepted on
  a run
  \[ q_0\autstep[(u_0,\mmap_0)] q_1\autstep[(u_1,\mmap_1)] \cdots \autstep[(u_B,\mmap_B)] q_B \autstep[(u_{B+1},\mmap_{B+1})] q_{B+1}\]
  where $w=u_0\cdots u_{B+1}$, $\mmap'=\mmap_0+\cdots+\mmap_{B+1}$, and
  $\mmap_j(c)\ge 1$ for each $j=1,\ldots,B$. Since $B=Mn$, there is a state  $p$ of
  $\cA_i$ that appears at least $M$ times in the sequence $q_1,\ldots,q_B$.
  This means, we have a run
  \[ q_0\autstep[(v_0,\nmap_0)] p\autstep[(v_1,\nmap_1)] p \autstep[(v_2,\nmap_2)] \cdots p \autstep[(v_{M},\nmap_{M})] q_{B+1} \]
  with $w=v_0\cdots v_{M}$, $\mmap'=\nmap_0+\cdots+\nmap_{M}$, and $\nmap_j(c)\ge 1$ for each $j=1,\ldots,M-1$.

  We claim that we can write $w=xyz$ such that
  $w=xyz\le_\frakS xy^\ell z$ for every $\ell$ and that there is a run
  $q_0\autstep[(x,\hat{\nmap}_1)] p\autstep[(y,\hat{\nmap}_2)] p
  \autstep[(z,\hat{\nmap}_3)] q_{B+1}$ with $\hat{\nmap}_2(c)\ge 1$.
  We distinguish two cases.
  \begin{enumerate}
  \item First, suppose that there is a $j\in\{1,\ldots,M-1\}$ with
    $v_j=\varepsilon$.  Then we can choose this $v_j$ as the $y$ in
    the decomposition $w=xyz$, which is clearly as desired.
  \item Suppose that $v_j\ne\varepsilon$ for every
    $j\in\{1,\ldots,M-1\}$. Then the $M$ positions in the
    decomposition $w=v_0\cdots v_M$ are pairwise distinct and we can
    apply \cref{pump-dcl}.  It clearly yields a decomposition of $w$
    as desired in our claim.
  \end{enumerate}

  Hence, the claim holds in any case. If we now choose $\ell$ high
  enough, then we find a run of $\cA_i$ on
  $(xy^\ell z,\bar{\mmap}_c)=(xy^\ell
  z,\mmap'+\ell\cdot\bar{\nmap}_2)$ where
  $\bar{\mmap}_c(c)\ge\mmap(c)$ and $\bar{\mmap}_c\ge_1\mmap'$. If we
  repeat this step for each $c\in\Lambda$ with $\mmap(c)>B$, we arrive
  at a word $\bar{w}$ and a multiset $\bar{\mmap}$ with
  $\mmap\le_1\bar{\mmap}$ and $w\le_\frakS \bar{w}$ and
  $(\bar{w},\bar{\mmap})\in S_i$. This implies
  $\mmap\in\wdcl{S_i}{\bar{w}}$.
\end{proof}

\subsection{Proof of Proposition~\ref{abstraction-consistency-rational}}\label{appendix-proof-abstraction-consistency-rational}
\abstractionConsistencyRational*
\begin{proof}
  Suppose $\frakV=(V_1,\ldots,V_k)$ is a tuple of subsets of
  $\multiset{\Lambda}$.  Clearly, if $\frakV$ is $\frakS$-consistent, then
  so is $\alpha_B(\frakV)$. The converse follows from
  \cref{extend-consistency}: Since the sets $V_1,\ldots,V_k$ are
  finite, we can start with $\alpha_B(\frakV)$ and successively add each
  multiset occurring in some $V_i$, without affecting
  $\frakS$-consistency. Then we arrive at a $\frakS$-consistent tuple
  $\frakV'=(V'_1,\ldots,V'_k)$ with $V_i\subseteq V'_i$ for
  $i\in[1,k]$. In particular, $\frakV$ is $\frakS$-consistent.
\end{proof}

\end{document}